\documentclass[12pt]{article}

\usepackage[letterpaper, margin=1in]{geometry}
\usepackage{fullpage}

\usepackage{times}

\usepackage{amsmath,amssymb}
\usepackage{verbatim}
\usepackage{amsthm}
\usepackage{amsfonts}
\usepackage{xspace}
\usepackage{subcaption}
\usepackage{setspace}
\usepackage{url}
\usepackage{mathrsfs}
\usepackage{hyperref}

\usepackage{algorithm}
\usepackage[noend]{algpseudocode}
\usepackage[framemethod=tikz]{mdframed}
\usepackage{soul}
\usepackage{cancel}

\theoremstyle{plain}
\newtheorem{theorem}{Theorem}
\newtheorem{lemma}[theorem]{Lemma}
\newtheorem{claim}[theorem]{Claim}

\newtheorem{corollary}[theorem]{Corollary}

\newtheorem{definition}{Definition}

\newtheorem{hypothesis}{Hypothesis}

\newenvironment{proofof}[1]{ {\noindent \em Proof of #1.}\/}{\hfill\qedsymbol\bigskip}

\newcommand{\remove}[1]{}
\newcommand{\suppress}[1]{}

\newcommand{\NN}{{\mathbb{N}}}
\newcommand{\RR}{{\mathbb{R}}}

\newcommand{\E}{{\mathrm{E}}}
\newcommand{\eps}{\varepsilon}

\DeclareMathOperator{\cost}{cost}

\DeclareMathOperator{\OPT}{OPT}

\DeclareMathOperator{\poly}{poly}

\DeclareMathOperator{\conv}{conv}
\DeclareMathOperator{\core}{core}
\DeclareMathOperator{\Cost}{cost}
\DeclareMathOperator{\Val}{val}
\DeclareMathOperator{\Far}{far}
\DeclareMathOperator{\Close}{cls}
\DeclareMathOperator{\Mid}{mid}

\renewcommand{\tilde}{\widetilde}
\newcommand{\maxcover}{\mbox{\sf MaxCover}\xspace}

\newcommand{\SETH}{\mbox{\sf SETH}\xspace}

\newcommand{\N}{\mathbb{N}}

\usepackage{nicefrac}
\usepackage[utf8]{inputenc}
\usepackage{microtype}

\usepackage{tikz}
\usetikzlibrary{arrows.meta}

\usepackage{comment}

\usepackage{color}

\newif\ifnotes\notestrue 
\ifnotes
\newcommand{\chris}[1]{\textcolor{green}{{\bf (Chris:} {#1}{\bf ) }} \marginpar{\tiny\bf
             \begin{minipage}[t]{0.5in}
               \raggedright S:
            \end{minipage}}}
\newcommand{\karthik}[1]{\textcolor{blue}{{\bf (Karthik:} {#1}{\bf ) }} \marginpar{\tiny\bf
             \begin{minipage}[t]{0.5in}
               \raggedright K:
            \end{minipage}}}
\newcommand{\samson}[1]{\textcolor{purple}{{\bf (Samson:} {#1}{\bf ) }} \marginpar{\tiny\bf
             \begin{minipage}[t]{0.5in}
               \raggedright S:
            \end{minipage}}} 
\newcommand{\euiwoong}[1]{\textcolor{purple}{{\bf (Euiwoong:} {#1}{\bf ) }} \marginpar{\tiny\bf
             \begin{minipage}[t]{0.5in}
               \raggedright E:
            \end{minipage}}} 
\else
\newcommand{\chris}[1]{}
\newcommand{\karthik}[1]{}
\newcommand{\samson}[1]{}
\newcommand{\euiwoong}[1]{}
\fi

\usepackage{graphicx} 
\usepackage{MnSymbol}

\DeclareRobustCommand{\mapsfrom}{\text{\reflectbox{$\mapsto$}}}

\allowdisplaybreaks

\begin{document}

\title{A Fixed Parameter Sublinear Time Approximation Scheme for Metric Min-Sum Clustering}

\author{
 \hspace{0.4in}
 Karthik C. S.\thanks{Rutgers University.
 E-mail: {\tt karthik.cs@rutgers.edu}. This work was supported by the National Science Foundation under Grants CCF-2313372 and CCF-2443697.}
 \hspace{0.4in}
 \and 
 Euiwoong Lee\thanks{University of Michigan. 
 E-mail: {\tt euiwoong@umich.edu}. Work partly supported by NSF grant CCF-2236669.}
 \hspace{0.4in}
 \and 
 Yuval Rabani\thanks{The Hebrew University of Jerusalem. 
 E-mail: {\tt yuval.rabani@mail.huji.ac.il}. Work partly supported by ISF grant 389-22 and 
 by BSF grant 2023607.}
 \hspace{0.4in}
 \vspace{-0.2in}
 \and 
 Chris Schwiegelshohn
 \thanks{Aarhus University. 
 E-mail: {\tt cschwiegelshohn@gmail.com}} 
 \and 
 Samson Zhou\thanks{Texas A\&M University. 
 E-mail: {\tt samsonzhou@gmail.com}}
}

\date{\today}

\maketitle

\begin{abstract}
Metric $k$-minsum clustering is a simple, natural, appealing density-based clustering criterion whose computational complexity remains poorly understood, despite the progress achieved for other clustering criteria in the past three decades. In this paper we give an efficient approximation scheme for this problem that runs in sublinear time for mildly increasing $k$, and thus is fixed parameter tractable with respect to the parameter $k$. 
We complement this result by showing that similar algorithms are unlikely to exist for the closely related problems $k$-median and balanced $k$-median, assuming popular fine-grained complexity conjectures. Specifically, we show that no sublinear time algorithm with small approximation factors can exist even for $2$-median and balanced $2$-median. \vspace{0.1cm}

It was previously known that no efficient PTAS can exist for $k$-median and our results show a strong and perhaps surprising separation between the $k$-minsum clustering problem and $k$-median even in the sublinear regime. Our upper bounds improve qualitatively upon decades old results. Our lower bounds establish the first use of the Distributed PCP framework to clustering problems.
\bigskip\bigskip

\noindent {\em AI apology: AI was used neither to derive these results nor to produce their presentation.}
\end{abstract}

\thispagestyle{empty}
\newpage
\setcounter{page}{1}

\section{Introduction}\label{sec:intro}

\paragraph{Problem statement and motivation.}
The worst-case approximability of many hard clustering criteria, most notably center-based ones 
such as $k$-center, $k$-median, or $k$-means, is quite well-understood. The gaps between upper 
bounds and lower bounds, including the dependence on $k$ or the data dimension, are relatively 
small, definitely with little room for qualitative improvement. This is not the case with $k$-minsum 
clustering, a simple, natural, appealing density-based clustering criterion. The setting is a finite set 
$A$ of $n=|A|$ points, endowed with an explicit or implicit semi-metric $d$.\footnote{A semi-metric 
$d:A\times A\rightarrow\RR$ is non-negative, symmetric, and satisfies the triangle inequality, but 
pairs of points can have distance $0$, indicating that they are co-located.} Given $(A,d)$ and 
$k\in\NN$, the $k$-minsum clustering problem is to compute a clustering 
$\mathcal{C}=\{C_1\sqcup C_2\sqcup\cdots\sqcup C_k\}$ minimizing
$$
\text{cost}(\mathcal{C})=\sum_{i=1}^k \sum_{x,y\in C_i} d(x,y).
$$
In other words, we seek to partition $A$ into $k$ disjoint clusters so as to minimize the sum of 
intra-cluster distances. Fernandez de la Vega and Mathieu~\cite{FK98} showed that this problem 
is NP-hard, even for $k=2$; the complement objective is metric MAX CUT (and more generally for 
$k > 2$, metric MAX $k$-CUT).

\paragraph{Our results.}
Our contributions in this paper are the following results. Firstly, we give, for $k$-minsum clustering 
parametrized by the number of clusters $k$, a fixed parameter\footnote{Fixed parameter tractability 
means here running time bounded by $f(k,\varepsilon)\cdot n^c$, for some function $f$, and an
absolute constant $c$ that is independent of $n$, $k$, and $\varepsilon$.} approximation scheme 
that runs in time nearly linear in the number of points $n$. Note that if the metric $d$ is given explicitly, 
then the input size is $\Omega(n^2)$. Therefore, our algorithm runs in sublinear time, and in particular 
queries a negligible fraction of pairwise distances. More concretely, we prove the following theorem.
\begin{theorem}\label{thm: main}
There exists an algorithm that on every input $(A,d)$ computes a $(1+\varepsilon)$-approximately optimal 
$k$-minsum clustering of $A$ using $n^{1+o(1)}\cdot 2^{2^{O(k\log(k/\varepsilon))}}$ 
arithmetic operations, assuming that for any $x,y\in A$ computing $d(x,y)$ is a single 
operation.
\end{theorem}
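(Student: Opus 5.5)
Our plan is to combine three ingredients: a coarse structural decomposition of an optimal clustering, sampling-based estimation of cluster "centers", and a reassignment step driven by estimated costs. Fix an optimal clustering $\mathcal{C}^*=\{C_1^*,\dots,C_k^*\}$. For a point $x$ and a cluster $C$, write $d(x,C)=\sum_{y\in C}d(x,y)$. Then $\cost(\mathcal{C})=\sum_i\sum_{x\in C_i}d(x,C_i)$. Moving $x$ from $C_i$ to $C_j$ changes the cost by roughly $2(d(x,C_j)-d(x,C_i))$, so an optimal clustering is a kind of ``local equilibrium'' for these quantities.

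\textbf{Step 1: sampling representatives.} The first step approximates the functions $x\mapsto d(x,C_i^*)$ using small uniform samples from each cluster. For a sample $S_i\subseteq C_i^*$ of size $s=\poly(k/\eps)$, the estimator $\frac{|C_i^*|}{s}\sum_{y\in S_i}d(x,y)$ has additive error about $\eps\cdot\frac{|C_i^*|}{|C_i^*|}\sum_{y\in C_i^*}d(y,C_i^*)/|C_i^*|$ plus $\eps|C_i^*|d(x,\mathrm{med}_i)$, by the triangle inequality. We do not know the clusters, so we guess the samples. We draw a uniform sample $T$ of $A$ of size $\poly(k/\eps)$ and enumerate assignments of $T$ to the $k$ clusters. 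Small clusters, which $T$ misses, are handled by a recursive or iterated guessing scheme. Heavy clusters are sampled first. Then we recurse on the points far from the already-identified heavy clusters, in the style of Kumar--Sabharwal--Sen $D^2$-sampling. Each of at most $k$ rounds multiplies the number of guesses by $k^{\poly(k/\eps)}$, and nesting the guesses over rounds gives $2^{2^{O(k\log(k/\eps))}}$ candidates. We also guess the cluster sizes $|C_i^*|$ up to $(1+\eps)$ factors, which adds only $(\log n)^{O(k)}$, absorbed into $n^{o(1)}$.

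\textbf{Step 2: assignment via a transportation problem.} Given the guessed samples and sizes, we assign each point $x$ to the cluster minimizing the estimated marginal cost, subject to the size constraints. This is a min-cost assignment with $k$ capacity classes and costs $c_i(x)=\widehat{d}(x,C_i)$. It can be solved in $n\cdot\poly(k)\cdot n^{o(1)}$ time by a near-linear-time min-cost flow, or by guessing $k^2$ threshold ``prices'' (dual variables) and assigning greedily. The resulting partition is evaluated by sampling pairs, which estimates the cost to within $(1\pm\eps)$. We output the best candidate. Computing every $c_i(x)$ takes $n\cdot k\cdot s$ distance queries per candidate.

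\textbf{Main obstacle.} The main obstacle is the analysis. We must show that the additive estimation errors, summed over all $x$, are at most $\eps\cdot\OPT$, and that the constrained assignment with estimated costs gives a clustering whose true cost is within $1+O(\eps)$ of $\OPT$. The difficulty is that the true cost is quadratic, not linear, in the assignment: $d(x,C_j)$ depends on the final cluster, not on $C_j^*$.

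We would handle this with a consistency argument. If the final cluster $C_j$ has the same size as $C_j^*$ up to $(1\pm\eps)$, and its ``profile'' of distances to the sampled representatives matches, then $\sum_{x\in C_j}d(x,C_j)$ is close to $\sum_{x\in C_j}d(x,C_j^*)$. This would follow from the triangle inequality, using the fact that for minsum, $d(x,C)\approx |C|\,d(x,m)+\cost_1(C)$ around a 1-median $m$ of $C$ up to a factor 2. That factor is too lossy, so we would refine it by using the sample average instead of the median. We would also need to treat separately the regime where a cluster is tiny relative to its diameter contribution. This is exactly where the recursive guessing of Step 1 is needed, and where we expect the doubly exponential dependence on $k$ to come from.
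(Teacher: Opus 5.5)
Your plan has some features in common with the paper: guessing sizes up to $1+\gamma$, a capacitated min-cost flow, distance-based adaptive sampling, and a final cost estimate. But it has a genuine gap. What you call the ``main obstacle'' is the whole content of the theorem, and the two tools you propose for it fail in general metrics.

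\textbf{First, the accuracy claim of Step~1 is false.} Your estimator equals $|C_i^*|\cdot\|e_x-\mu_{S_i}\|_{\mathcal W_1}$. The triangle inequality only bounds the deviation of each summand by $d(y,\mathrm{med}_i)$, whose maximum can be $|C_i^*|$ times its mean, so $s=\poly(k/\eps)$ samples give no concentration. Concretely, let $C$ consist of $N$ coincident points at $a$ and $t\ll N/s$ points at $b$, with $d(a,b)=D$. The sample misses $b$ with probability close to $1$. Then every $x$ at $a$ gets estimate $0$ while $d(x,C)=tD\approx\frac12|C|\Delta_C$, and the summed error over these points is $\cost(C)/2$, not $\eps\cdot\cost(C)$. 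At best the error is nearly a common shift on the core of $C_i^*$; it varies genuinely on the far points of $C_i^*$ and on points of other clusters lying among them.

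This is exactly why the paper says no coreset is available here. A small sample yields only an $O(1)$-approximate center, namely the $(\eps,5)$-cover of Theorem~\ref{thm:approxcenters}. Since $\mathcal W_1$ has no Pythagorean identity, such a center prices assignments accurately only for points at distance $\gg\Delta_C$ from it. The paper therefore uses centers only to classify points and to assign the $\beta^{-7}$-far ones. These are assigned by a min-cost flow with costs $\widehat{|C_j|}\cdot\|e_p-\widehat{\mu_{C_j}}\|_{\mathcal W_1}$ and capacities $\widehat{|B^h\cap D_i|}$ guessed per type class. The justification is a cross-cluster charging argument through the auxiliary clustering $\mathcal D$, where far points of a cluster that are close to a not-much-smaller cluster are handed to it (Lemmas~\ref{lem:smallfartoclose} and~\ref{lm:cost C'}).

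Two smaller points follow from this. Guessing these capacities, together with the MAX $h$-CUT accuracy discussed below, is where $2^{2^{O(k\log(k/\eps))}}$ comes from; your nested guessing gives only $2^{\poly(k/\eps)}$ candidates. Also, distance sampling returns a few points of a small cluster, biased toward points far from the current centers. It cannot supply the uniform samples your estimator needs, and the paper asks only for an approximate center there (Lemmas~\ref{lem:samplehit} and~\ref{lem:sample}).

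\textbf{Second, a single capacitated linear assignment does not control the quadratic objective when clusters are not well separated.} This holds even if every marginal $d(x,C_j^*)$ were estimated within a factor $1\pm\eps$. Write $L(f)=\sum_x d(x,C^*_{f(x)})$. Then $\cost(f)=2L(f)-\OPT+Q(f)$, where $Q(f)$ is a quadratic form in the indicator vectors of the moved points, and the linear costs do not see $Q(f)$.

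Here is an instance. Take $d\in\{1,2\}$ and a balanced optimal $2$-clustering. Let each cluster contain two parts of its own copy of $K_{m,m,m}$, with distance $2$ on its edges and $m=\Theta(n)$, whose third part lies in the other cluster. The points of these four parts have marginals differing by $1$, far below the estimation error $\Theta(\eps n)$. So the assignment may exchange the two pairs of parts. This preserves sizes and changes $L$ by $O(n)$, yet adds $\Theta(n^2)=\Theta(\OPT)$ to the cost. Matching sizes and a few sampled distances (your ``profile'') cannot certify a quadratic form, so your consistency argument does not close this.

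The paper handles this regime with a separate idea:
\begin{itemize}
\item it moves the close points of much smaller clusters into larger ones (Lemma~\ref{lem:localassignsize});
\item it groups clusters into connected components of the conflict graph $K$;
\item it uses Lemma~\ref{lem:uniform} to bound the total pairwise distance inside each component by $8\gamma^{-1}\beta^{-5k+7}k\cdot\widehat{\OPT}$;
\item it solves MAX $h$-CUT on each component, with additive error relative to that total, via the dense 2-CSP reduction (Theorem~\ref{thm:maxcut}).
\end{itemize}

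\textbf{Finally, estimating a candidate's cost from uniformly sampled pairs is not $(1\pm\eps)$-accurate either.} If $N-1$ points coincide and one lies at distance $D$, the entire cost sits on about a $2/N$ fraction of the pairs. The paper instead uses Lemma~\ref{lm: center based} together with importance sampling around a proxy median (Lemma~\ref{lem:estimatingCosts}, Theorem~\ref{thm:costguarantee}).
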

This theorem improves upon the 23 years old result of Fernandez de la Vega et al.~\cite{FKKR03}, who 
gave an approximation scheme with running time $n^{3k}\cdot 2^{O(\eps^{-k^2})}$ (so their polynomial 
dependence on the number of points degrades with $k$). It also extends the 27 years old result of 
Indyk~\cite{Ind99b} that gave an $n^{1+o(1)}\cdot 2^{(1/\eps)^{O(1)}}$ time approximation scheme for the 
special case of $k=2$. In particular, the algorithm of Theorem~\ref{thm: main} runs in sublinear time for 
all $k = o\left(\frac{\log\log n}{\log\log\log n}\right)$, whereas the algorithm 
in~\cite{FKKR03} runs in (far from linear, and quite prohibitive) polynomial time only when $k$ is an 
absolute constant (assuming in both cases that $\varepsilon > 0$ is an absolute constant). Moreover, 
for $k = o\left(\frac{\log\log n}{\log\log\log n}\right)$, our result also improves (qualitatively) upon the $4+\varepsilon$ 
approximation guarantee of the $(k/\varepsilon)^{O(k)}\cdot n^{1+o(1)}$ time algorithm of Czumaj and 
Sohler~\cite{CS07} (see also their earlier work~\cite{CS04}). Note that their sublinear time guarantee 
holds for a larger range of all $k = o\left(\frac{\log n}{\log\log n}\right)$. Finally, our algorithm is also 
easily parallelizable, for instance in the MPC model (see the survey~\cite{IKLMV23}).

We note that as part of the proof of Theorem~\ref{thm: main} we also design a sublinear time
approximation scheme for metric MAX $k$-CUT. This improves upon the linear time approximation
scheme for the same problem, outlined in~\cite{FKKR03}.

We observe that combining ideas in literature, such as the NP-hardness for approximating $k$-minsum 
clustering of~\cite{GI03}, and combining it with the Gap Exponential Time Hypothesis~\cite{D16,MR17}, 
one can prove a conditional parametrized lower bound for $k$-minsum clustering by showing that there 
is no $2^{o(k/\varepsilon)}\cdot 2^{o(n)}$ time $(1+\varepsilon)$-approximation algorithm for metric 
$k$-minsum clustering (see Theorem~\ref{thm:lbminsum} for a formal statement).

Given Theorem \ref{thm: main}, one might ask whether there exist other clustering problems for which a 
sublinear time PTAS is possible? A likely candidate is the balanced $k$-median problem, which has been 
previously used to design approximation algorithms for the $k$-minsum objective. Specifically, given a 
cluster and the center, the cluster cost is the sum of distances of points to the center, weighted by the size 
of the cluster. Unfortunately, both the balanced $k$-median problem and the standard $k$-median problem, 
where the cluster cost is not weighted by its size, likely do not admit algorithms similar to the one presented 
in this work. Specifically, we show that even for (balanced) $2$-median, no subquadratic time algorithm that 
achieves a small approximation factor is possible, under popular conjectures. The connection between these 
problems will be discussed in greater detail later.
\begin{theorem}[Informal version of Theorem~\ref{thm:median}]\label{thm:SETHinformal}
Assuming the Strong Exponential Time Hypothesis, for every $\eps>0$ there exists 
$\delta>0$ such that no $n^{2-\eps}$ time algorithm that takes as input a set $F$ of $n$ facilities, a set 
$C$ of $m=O_\eps(\log n)$ clients, and a metric $d$ on $F\cup C$, can distinguish between the following 
two cases:
\begin{description}
    \item[Completeness:] The $2$-median cost with balanced clusters is $\le m$.
    \item[Soundness:] The same $2$-median cost is $\ge (1+\delta)\cdot m$.
\end{description}
\end{theorem}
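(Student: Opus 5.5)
We reduce from a gap version of an orthogonal-vectors-type problem obtained through the Distributed PCP framework, as the abstract indicates. The starting point is the hardness of approximate bichromatic Max-Inner-Product (equivalently, a gap version of bichromatic closest pair in Hamming space on $O(\log n)$ dimensions). Assuming SETH, the distributed PCP of Abboud, Rubinstein and Williams (and its refinement by Rubinstein and by Chen) shows the following. For every $\eps>0$ there is $\delta'>0$ such that, given two sets $P,Q\subseteq\{0,1\}^m$ of size $n$ with $m=O_\eps(\log n)$, no $n^{2-\eps}$ time algorithm can distinguish two cases. In the completeness case some pair $(p,q)\in P\times Q$ is ``good'', meaning it is at Hamming distance at most $\alpha m$, or equivalently has large overlap. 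In the soundness case every pair is at distance at least $(1+\delta')\alpha m$. For the clustering reduction it is convenient to first use a gadget that makes all vectors of $P$ and of $Q$ have the same weight. With fixed weights, Hamming distance is an affine function of the inner product, so the gap transfers.

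The construction has the facilities $F=P\cup Q$, so $|F|=2n$. The clients $C$ are the $m$ coordinates, possibly duplicated to get a $P$-side copy and a $Q$-side copy, for $O(m)$ clients in total. The distance from facility $p$ to client $j$ is $1$ if $p_j=1$ and $2$ (or a suitably tuned value) otherwise. All other distances are the shortest-path closure, which yields a graph metric with weights in $\{1,2,3,\dots\}$ and hence satisfies the triangle inequality automatically. Opening two centers $f_1,f_2$ and assigning each client to its nearer open center, with the balance constraint that each cluster receives exactly $m/2$ clients, gives a cost that is determined by the number of coordinates served at distance $1$. Forcing one center in $P$ and one in $Q$ can be done by attaching private client blocks, or by making same-side pairs useless through the equal-weight normalization. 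With that in place, the balanced cost equals $m$ plus the number of coordinates not covered at distance $1$, up to normalization. This is exactly the complement of a coverage/overlap quantity for the pair $(p,q)$.

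Completeness then follows directly. A good pair, given for instance as complementary-support vectors each covering half the coordinates, yields a balanced assignment in which every client is at distance $1$, for a total cost of $m$. For soundness we argue the contrapositive: any balanced $2$-median solution of cost below $(1+\delta)m$ pins down two facilities, and hence a pair $(p,q)$, whose supports together cover all but $\delta m$ coordinates with limited overlap. That pair then violates the soundness guarantee of the PCP-based gap problem once $\delta$ is chosen appropriately in terms of $\delta'$. The running time is preserved because the instance has $O(n)$ facilities and $O(\log n)$ clients, and distances are computable in $O(1)$ time, or in $\mathrm{poly}(m)$ time when done by shortest paths through the clients.

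I expect the main obstacle to be choosing the gap problem and the distance values so that the balanced $2$-median cost precisely tracks a quantity for which SETH-hardness of approximation is known. Plain orthogonality is not directly gapped, so I would first invoke the Distributed PCP theorem to get a constant-factor gap for the \emph{set cover with two sets} / Max-Cover variant, where one asks whether two sets from different families cover the universe. The soundness side must rule out mixed solutions that pick two facilities from the same side or exploit the balance constraint unevenly, and I would handle this with padding clients and the equal-size normalization.
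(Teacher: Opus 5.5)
There is a genuine gap. The reduction you write down does not track the gap problem you start from, and the step that actually carries the proof is missing.

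Take your construction: facilities $P\cup Q$, and $d(x,j)=1$ iff $x_j=1$ (else $2$). Opening $p,q$ costs, even without balance, $m+|\{j: p_j=q_j=0\}|$. For a common weight $w$ this equals $2m-w-\mathrm{Ham}(p,q)/2$.

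So cheap solutions come from \emph{far}, near-complementary pairs, whereas your ``good'' pairs are the close ones. In completeness, a close pair yields a large cost. In soundness, the pair you extract (supports jointly covering all but $\delta m$ coordinates) is nearly complementary, which is entirely consistent with ``every cross pair is far''.

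The completeness claim ``every client at distance $1$, cost exactly $m$'' also has no source. If, as in your example, all vectors have weight $m/2$, a pair whose supports cover $[m]$ is an exact complementary pair. That case is then just $P\cap\bar Q\neq\emptyset$, decidable in $O(nm)$ time by hashing.

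Equal weights also do nothing against same-side pairs: two far-apart vectors of $P$ are cheap. The closest-pair route could perhaps be repaired by complementing one side, adding side-private client blocks, accepting imperfect completeness and rescaling the distances. But that is not the argument you give.

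What the proof really needs is the object your last paragraph only gestures at. You need sets $S_1,\dots,S_n$ over a universe of size $O_\eps(\log n)$ such that:
\begin{itemize}
\item in completeness, there is an equipartition $U=U_1\sqcup U_2$ with $U_1\subseteq S_i$ and $U_2\subseteq S_j$;
\item in soundness, $|S_i\cup S_j|\le(1-\delta)|U|$ for \emph{all} pairs.
\end{itemize}

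The distributed PCP does not give this directly. It gives $2$-MaxCover, where covering a right super-node $W_i$ means having a \emph{joint neighbor} there, which is an intersection condition. Converting that into union-coverage is the key step.

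The paper does this with Feige's hypercube partition gadget. Set $U=\{(i,f): i\in[\ell],\ f:W_i\to\{1,2\}\}$, and put $(i,f)\in S_{j,v}$ iff some $w\in W_i$ with $(v,w)\in E$ has $f(w)=j$.

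For completeness, let $w_i\in W_i$ be joint neighbors of $v_1,v_2$. Then $U_j=\{(i,f): f(w_i)=j\}$ is an exact equipartition with $U_j\subseteq S_{j,v_j}$. So balance comes for free, and the sets may overlap and have unequal sizes.

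For soundness, a same-side pair misses every $(i,f\equiv 3-j)$. A cross pair misses $(i,f_i)$ for each uncovered $W_i$, where $f_i(w)$ names a side not adjacent to $w$. Since $|W_i|\le D(\eps)$, this is a constant fraction of $|U|=\ell\cdot 2^{|W_i|}$.

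Finally, set distances to $1$ when the element is in the set and $3$ otherwise. This gives cost exactly $|U|$ in completeness, and at least $(1+2\delta)|U|$ for every pair of facilities in soundness, balanced or not.
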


Theorem \ref{thm:median} complements Manurangsi's lower bound~\cite{Man20} for balanced $k$-median 
when $k$ is large. It implies in particular that even for $2$-median, very accurate approximation algorithms 
likely cannot run in sublinear time. A particular notable feature of this lower bound is that if the metric space 
is given by distances between $m =\text{polylog}(n)$ clients and $n$ facilities, then the input consists of 
$mn = n\text{polylog}(n)$ distances. Thus, for certain ranges of parameters, the lower bound therefore implies 
that not only linear time, but even quadratic running times can be necessary.

\paragraph{Technical contribution for the PTAS.}
In this discussion it is worth drawing a parallel between the metric case that is the focus of this paper and 
the $\ell_2^2$ variant in which $A$ is realized as a set of real-valued vectors and $d$ is the squared-Euclidean 
distance (hence, the triangle inequality is violated). The aforementioned~\cite{FKKR03} gave, for fixed $k$, a 
polynomial time approximation scheme for $\ell_2^2$ $k$-minsum clustering, and this was very recently improved 
to give running time similar to Theorem~\ref{thm: main} by Karthik et al.~\cite{KLRSZ25}. These and other results 
on the $\ell_2^2$ variant rely on a fundamental connection in the $\ell_2^2$ setting between the minsum criterion 
and the center-based $k$-means criterion, so we must discuss briefly $k$-means clustering. 

In $k$-means clustering, we must choose a center for each cluster, the cost of a cluster is the sum over the cluster 
points of squared Euclidean distance to the center, and the objective is to minimize the sum of cluster costs. If we 
know the cluster centers, the partition is induced by their Voronoi diagram; each data point connects to the closest 
center. Algorithmically, it helps immensely that, given a cluster, its optimal cluster center is uniquely the mean (center 
of mass) of the cluster. This is often exploited through a good approximation of the location of each cluster center 
using a small sample of points from the cluster's {\em core}.\footnote{The core is the subset of points within a constant 
factor times the average distance to the center. Depending on the constant, it contains most of the points of the cluster, 
thanks to Markov's inequality.} To guarantee that every cluster core is ``hit'' (they may differ substantially in size) a 
distribution ensuring diversification is required, for instance sampling in proportion to the distance to previously sampled 
points. Put in simpler terms, for small $k$ there is an efficient construction of a {\em coreset} for 
$k$-means~\cite{Cohen-AddadLSS22,FMS07}.

The relation to $\ell_2^2$ $k$-minsum clustering stems from the fact that the cost of an $\ell_2^2$ $k$-minsum cluster 
can be expressed exactly as the $k$-means cost of the same cluster times its size. Thus, the starting idea of results 
such as the aforementioned~\cite{FKKR03,KLRSZ25} is to approximate the cluster centers in much the same way as 
$k$-means results do. This is more challenging than in the $k$-means case, because the convex hulls of clusters can 
intersect, and cluster centers may even be co-located. In fact, this difficulty is resolved very differently in the two 
above-mentioned papers, and more efficiently in the latter of the two. Another issue is that even if we have accurate 
cluster centers, assigning points to clusters is non-trivial, and in fact is likely to be computationally hard. The solution 
here is to enumerate over (approximate) cluster sizes. If center-size pairs are known, partitioning optimally into clusters 
becomes a simple assignment problem. The enumeration incurs a factor proportional to $\log^k n$ in the running 
time.\footnote{This is tolerable, as for small $k$ it is $n^{o(1)}$, and for large $k$ it is $2^{O(k\log k)}$.}
In essence, the state-of-the-art~\cite{KLRSZ25} relies, as with $k$-means, on the efficient construction of a coreset.

Now, in the case of metric $k$-minsum clustering, these ideas are insufficient. In particular, the input data is discrete 
and there is no natural definition of a cluster center. Of course, one may resort to the isometric Fr\'echet embedding 
into $\ell_\infty^n$. However, this offers little geometric insight. Even if the dataset is representable in, say, Euclidean 
space, there is no cluster center that would satisfy with respect to the metric objective all properties that the center of 
mass satisfies with respect to the $\ell_2^2$ objective: ($i$) the cost of the cluster is its size times the sum of distances
to the center (or some other nice and easily approximated formula); ($ii$) the center is a minimizer of this formula;
($iii$) a good approximate center can be computed within the desired efficiency. The total cluster distance to its center 
can be optimized in this case by the median, the solution to the Fermat-Weber location problem. However, the median 
neither is approximated with ease, nor approximates well the minsum objective.

Our main technical contribution in the proof of Theorem~\ref{thm: main} is a solution to overcome this obstacle,
applying a toolkit that includes known and new techniques. We will explain briefly how this is achieved, comparing 
our approach to that leading to the (non-parametric) result of~\cite{FKKR03}, and to the $\ell_2^2$ setting 
that was solved in~\cite{KLRSZ25}.

In the natural embedding of $(A,d)$ into the Wasserstein-$1$ space ${\cal W}_1(A,d)$, the set of measures 
on $A$ with total measure $0$, endowed with the $d$-induced transportation norm, the sum of (transportation) 
distances to a cluster's center of mass, scaled by the cluster size, is exactly the cluster cost. (The distance of
a datapoint to this ${\cal W}_1$ center of mass is simply its average distance to all the points in the cluster; 
we use mostly this property and the fact that ${\cal W}_1(A,d)$ is a metric space that includes an isometric copy 
of $(A,d)$.) Unfortunately, the center of mass of the cluster's core does not well-approximate the entire cluster's 
center of mass. The center of mass is not a minimizer of the sum of distances, so the error can go either way.  
If we do not focus on the core, we cannot bound the probability of large deviation of a uniformly-at-random 
sample. Hence, the machinery that works in the $\ell_2^2$ case fails here; the assignment of points to cluster 
centers may produce a clustering of cost far from optimal. In simpler terms, we currently do not know how to 
compute a coreset in fixed-parameter tractable time.

But not all is lost. Even a relatively poor approximation of the ${\cal W}_1$ center of mass suffices to give a
good approximation of the cost due to cluster outliers (cluster points that are far from the ${\cal W}_1$ center 
of mass of the cluster, relative to the average). Moreover, it is also sufficient to identify well separated clusters, 
but not all clusters. Luckily within each union of non-well separated clusters, the clusters can be identified if 
we have an efficient approximation scheme for MAX $h$-CUT (where $h$ is the number of clusters in the 
union). The idea of assigning points to cluster unions is the core of the algorithm in~\cite{FKKR03}, who use 
uniform sampling to ``hit" proxy centers. The issue with uniform sampling is that it helps only with large clusters. 
Therefore, the algorithm in~\cite{FKKR03} iterates over decreasing size scales, constructing the clusters by 
order of size. This is an approach whose time complexity we cannot afford. In particular, it requires enumeration 
over an exact guess of the cluster sizes, and this is a major reason for the non-parametric time complexity. On 
the other hand, guessing exact cluster sizes simplifies the assignment of outlier points to clusters, because
alternative choices are exactly interchangeable.

In contrast, we design an adaptive non-uniform sampling distribution that generates a relatively small number 
of candidates, at least one of which ``hits" all clusters, large and small alike, appropriately. Armed with the
adaptive sampling process and enumeration to identify cluster center proxies, and an enumeration over 
approximate cluster sizes and costs, we can solve for every option the problem of assigning points to cluster
unions, and then apply a MAX $h$-CUT approximation scheme to each cluster union. Cluster cores (points
relatively close to their cluster's ${\cal W}_1$ center of mass) are quite easily assigned to cluster unions. 
However, the assignment of cluster outliers poses a serious challenge that requires a novel cross-cluster 
charging argument, as they can get mixed with the cores of smaller clusters. As we do not enumerate over 
exact cluster sizes, we cannot locally optimize our choices. In fact, we need to enumerate over alternative 
assignments, each leading to a solution that can be checked only after the entire clustering is computed. On 
the other hand, we simplify the task of partitioning the clusters into cluster unions, which in~\cite{FKKR03} is 
done by complete enumeration. We enumerate over approximate cluster costs, also for other purposes, and 
this coupled with the other parameters simplifies the detection of cluster separation. Finally, for every candidate 
clustering thus computed we compute its approximate cost, and return the (nearly) cheapest candidate. 

The stated running time is the result of ($i$) carefully controlling the number of candidate clusterings; ($ii$) efficiently 
computing the assignment of points (in particular outliers) to clusters (or cluster unions); ($iii$) designing new 
approximation schemes for metric MAX $k$-CUT and for estimating the minsum cost of a clustering, both in 
sublinear time. A linear time approximation scheme was previously known for metric MAX $k$-CUT, and is trivial 
for clustering cost estimation. The new metric MAX $k$-CUT sublinear time approximation scheme is inspired 
by results for dense graphs of Andoni et al.~\cite{ANOY14} and of Yaroslavtsev~\cite{Yaroslavtsev14}.

\paragraph{Technical contribution for the lower bounds.}
The proof of Theorem~\ref{thm:SETHinformal} builds on results stemming from the {\em distributed PCP} 
framework of Abboud et al.~\cite{ARW17}, Rubinstein~\cite{Rub18}, and Karthik et al.~\cite{KLM18}. While 
this framework has been used in literature to prove conditional lower bounds against subquadratic approximation 
algorithms for nearest-neighbor search type problems~\cite{ARW17,Rub18}, we show that when interpreted 
along the lines of set systems as in~\cite{KLM18}, we can actually obtain conditional lower bounds against 
subquadratic approximation algorithms for various center-based clustering objectives such as $2$-median, 
balanced $2$-median, $2$-means, and so on. The proof proceeds in two steps. Firstly, we start from the 
inapproximability under SETH against specialized label cover instances proven in~\cite{KLM18} and compose 
it with the ``hypercube partition gadgets'' given by Feige~\cite{Fei98} to obtain gap maxcoverage instances 
where in the completeness case only two sets suffice to cover the entire universe. However, we need to be 
careful in our analysis of this reduction to obtain the additional guarantees as specified in Theorem~\ref{thm:maxcoverage}. 
Secondly, we reduce this maxcoverage instance using the now standard reduction of Guha and 
Khuller~\cite{GK98} to obtain the clustering instances of Theorem~\ref{thm:SETHinformal}.

\paragraph{Further motivation.}
Aside from improving our theoretical understanding of a fundamental problem in unsupervised learning, minsum clustering also exhibits strong practical and conceptual advantages over more widely studied center-based objectives such as $k$-means and $k$-median. 
Whereas classical centroid-based clustering problems identify clusters by minimizing distances to centroids, thereby favoring convex and roughly spherical shapes, minsum clustering minimizes pairwise intra-cluster distances. 
Consequently, it can capture clusters with irregular, non-convex, or even overlapping shapes that are not well represented by a single centroid. 

A simple geometric example highlights this distinction. Consider two concentric rings of points in the plane. 
The optimal minsum clustering partitions the dataset into the two rings, reflecting their true underlying structure. 
In contrast, both $k$-means and $k$-median attempt to separate the points by a hyperplane, producing an unnatural division that mixes points from different rings. 
See Figure~\ref{fig:cluster} for an illustration of this difference. 

\begin{figure*}[!htb]
\centering
\begin{subfigure}[b]{0.45\textwidth} 
\centering
\begin{tikzpicture}[scale=0.5]
\begin{scope}[rotate=45]
\filldraw[green!95!black, opacity=0.3]  (0,0) circle (3.5);
\filldraw[white]  (0,0) circle (2.2);
\draw[black, thick] (0,0) circle (2.2); 
\filldraw[blue!95!black, opacity=0.3]  (0,0) circle (1.5);
\draw[black, thick] (0,0) circle (1.5); 

\filldraw[blue!5!black] (0,0) circle (0.1);
\filldraw[blue!5!black] (1.2,0) circle (0.1);
\filldraw[blue!5!black] (-1.2,0) circle (0.1);
\filldraw[blue!5!black] (0,1.2) circle (0.1);
\filldraw[blue!5!black] (0,-1.2) circle (0.1);
\filldraw[blue!5!black] (0.6,0.7) circle (0.1);
\filldraw[blue!5!black] (-0.6,-0.7) circle (0.1);
\filldraw[blue!5!black] (0.6,-0.7) circle (0.1);
\filldraw[blue!5!black] (-0.6,0.7) circle (0.1);
\filldraw[blue!5!black] (0.4,0.3) circle (0.1);
\filldraw[blue!5!black] (0.4,-0.3) circle (0.1);
\filldraw[blue!5!black] (-0.4,-0.3) circle (0.1);
\filldraw[blue!5!black] (-0.4,0.3) circle (0.1);
\filldraw[blue!5!black] (0.8,0.15) circle (0.1);
\filldraw[blue!5!black] (0.8,-0.15) circle (0.1);
\filldraw[blue!5!black] (-0.8,0.15) circle (0.1);
\filldraw[blue!5!black] (-0.8,-0.15) circle (0.1);
\filldraw[blue!5!black] (0.15,0.8) circle (0.1);
\filldraw[blue!5!black] (0.15,-0.8) circle (0.1);
\filldraw[blue!5!black] (-0.15,0.8) circle (0.1);
\filldraw[blue!5!black] (-0.15,-0.8) circle (0.1);

\filldraw[green!5!black] (2.5,0.3) circle (0.1);
\filldraw[green!5!black] (-2.5,0.3) circle (0.1);
\filldraw[green!5!black] (2.5,-0.3) circle (0.1);
\filldraw[green!5!black] (-2.5,-0.3) circle (0.1);
\filldraw[green!5!black] (3,0) circle (0.1);
\filldraw[green!5!black] (-3,0) circle (0.1);
\filldraw[green!5!black] (0.3,2.5) circle (0.1);
\filldraw[green!5!black] (0.3,-2.5) circle (0.1);
\filldraw[green!5!black] (-0.3,2.5) circle (0.1);
\filldraw[green!5!black] (-0.3,-2.5) circle (0.1);
\filldraw[green!5!black] (0,3) circle (0.1);
\filldraw[green!5!black] (0,-3) circle (0.1);
\end{scope}
\end{tikzpicture}
\caption{Minsum clustering.}
\label{fig:cluster:a}
\end{subfigure}
\hfill
\begin{subfigure}[b]{0.45\textwidth} 
\centering
\begin{tikzpicture}[scale=0.5]
\begin{scope}[rotate=45]

\filldraw[green!95!black, opacity=0.3] (0,-3.5) rectangle +(3.5,7);
\filldraw[blue!95!black, opacity=0.3] (0,-3.5) rectangle +(-3.5,7);

\filldraw[blue!5!black] (0,0) circle (0.1);
\filldraw[green!5!black] (1.2,0) circle (0.1);
\filldraw[blue!5!black] (-1.2,0) circle (0.1);
\filldraw[green!5!black] (0,1.2) circle (0.1);
\filldraw[blue!5!black] (0,-1.2) circle (0.1);
\filldraw[green!5!black] (0.6,0.7) circle (0.1);
\filldraw[blue!5!black] (-0.6,-0.7) circle (0.1);
\filldraw[green!5!black] (0.6,-0.7) circle (0.1);
\filldraw[blue!5!black] (-0.6,0.7) circle (0.1);
\filldraw[green!5!black] (0.4,0.3) circle (0.1);
\filldraw[green!5!black] (0.4,-0.3) circle (0.1);
\filldraw[blue!5!black] (-0.4,-0.3) circle (0.1);
\filldraw[blue!5!black] (-0.4,0.3) circle (0.1);
\filldraw[green!5!black] (0.8,0.15) circle (0.1);
\filldraw[green!5!black] (0.8,-0.15) circle (0.1);
\filldraw[blue!5!black] (-0.8,0.15) circle (0.1);
\filldraw[blue!5!black] (-0.8,-0.15) circle (0.1);
\filldraw[green!5!black] (0.15,0.8) circle (0.1);
\filldraw[green!5!black] (0.15,-0.8) circle (0.1);
\filldraw[blue!5!black] (-0.15,0.8) circle (0.1);
\filldraw[blue!5!black] (-0.15,-0.8) circle (0.1);
\filldraw[green!5!black] (2.5,0.3) circle (0.1);
\filldraw[blue!5!black] (-2.5,0.3) circle (0.1);
\filldraw[green!5!black] (2.5,-0.3) circle (0.1);
\filldraw[blue!5!black] (-2.5,-0.3) circle (0.1);
\filldraw[green!5!black] (3,0) circle (0.1);
\filldraw[blue!5!black] (-3,0) circle (0.1);
\filldraw[green!5!black] (0.3,2.5) circle (0.1);
\filldraw[green!5!black] (0.3,-2.5) circle (0.1);
\filldraw[blue!5!black] (-0.3,2.5) circle (0.1);
\filldraw[blue!5!black] (-0.3,-2.5) circle (0.1);
\filldraw[green!5!black] (0,3) circle (0.1);
\filldraw[blue!5!black] (0,-3) circle (0.1);

\draw (0,3.5)[thick] -- (0,-3.5);
\end{scope}
\end{tikzpicture}
\caption{Center-based clustering.}
\label{fig:cluster:b}
\end{subfigure}
\caption{Minsum clustering in Figure~\ref{fig:cluster:a} captures the natural structure of the dataset where centroid-based clustering in Figure~\ref{fig:cluster:b} fails.}
\label{fig:cluster}
\end{figure*}

From an axiomatic perspective, minsum clustering satisfies Kleinberg’s \emph{consistency} axiom~\cite{Kle02}, which informally requires that the optimal clustering remain unchanged when intra-cluster distances are contracted and inter-cluster distances are expanded. 
Most center-based criteria, including $k$-means and $k$-median, fail to satisfy this property~\cite{ZB09}. 
Hence, the partition induced by minsum clustering is provably more robust to perturbations of the metric and more faithful to the underlying structure of the data.

Beyond the theoretical appeal of minsum clustering, the references herein document several notable applications of the problem in domains where robustness and density-awareness are essential, further motivating the need for efficient and provably accurate algorithms for this fundamental problem.

\paragraph{Related work.}
Guttman-Beck and Hassin~\cite{GH98} pioneered approximation algorithms for metric $k$-minsum clustering. 
Behsaz et al.~\cite{BFSS15}, improving slightly the seminal result of Bartal et al.~\cite{BCR01}, showed that the general case, for arbitrary $k$, can be approximated to within a factor of $O(\log n)$. 
Guruswami and Indyk~\cite{GI03} showed that the problem is NP-hard to approximate within a constant factor very close to $1$, and Cohen-Addad et al.~\cite{CKL21} showed that it is hard to approximate within a factor of $1.415$ under the Johnson Coverage Hypothesis. 

Naderi et al.~\cite{NRS23} recently gave a quasi-polynomial time approximation scheme for low-dimensional instances, for various notions metric dimension. 
Banerjee et al.~\cite{BOR21} showed how to cluster most of the points, excluding a small constant fraction of outliers, at cost within a (large) constant factor times the optimal cost to cluster all the points. 
Hassin and Or~\cite{HO10} considered a version of the problem with exclusion penalties. 
Balcan et al.~\cite{BBG09,BB09,BL12} gave algorithms with provable guarantees for approximation-stable and perturbation-resilient instances. 
In particular, they showed that if $k = o(\log n/\log\log n)$ and the instance is approximation-stable, then there is a polynomial time algorithm to compute a partition that misclassifies a small fraction of the points, in comparison with the optimal partition. 
Voevodski et al.~\cite{VBRTX11} discussed some applications and experimental results in this vein.

The above technical discussion reflects on a loose connection between metric $k$-minsum clustering and metric $k$-median clustering. 
Most of the state-of-the-art for the latter problem can be found
in~\cite{MS84,GK98,FL11,CS17,CFS19,NCN23,CGLSS25}. 
The (non-uniform) capacitated version of metric $k$-median (see, for instance~\cite{CR05,DL16}) was used as inspiration to design metric 
$k$-minsum clustering algorithms; see~\cite{BOR21}. 
The balanced version of $k$-median was used explicitly for the same purpose; see~\cite{BFSS15} and the references therein. 
It is worth noting that Manurangsi~\cite{Man20} showed that it is gap-ETH-hard to $\left(1+\frac{2}{e}-\eps\right)$-approximate metric $k$-median in time better than $n^{o(k)}$ (see also~\cite{CGKLL19}). 
This hardness result applies to balanced $k$-median as well.\footnote{This is because the inapproximability for $k$-median in~\cite{Man20} is proven via a standard reduction from the $k$-maxcoverage problem (similar to our proof of Theorem~\ref{thm:median}). In the hard instances constructed in~\cite{Man20} for the $k$-maxcoverage problem, the completeness case solution consists of $k$ disjoint sets that cover the universe, which inherently maps to a balanced solution in the resulting $k$-median instance.} 
Therefore, the $k$-median path to fixed parameter tractable approximation schemes, let alone sublinear time ones, seems challenging, and has discouraged researchers in the past from attempting our result.\footnote{In fact, the aforementioned result for low dimensional instances~\cite{NRS23} used a generalization of balanced $k$-median, which they called min-hub clustering, to design their algorithms.} 
It is a pleasant surprise that in our particular context, $k$-minsum clustering appears to be computationally easier than its natural center-based ``relatives".\footnote{Note that, in contrast with the hardness result of~\cite{CGKLL19}, in the special case that $d$ is Euclidean distance, represented implicitly through $X\subset\RR^m$, then Feldman and Langberg~\cite{FL11} gave an $O(nm k) + m(k/\eps)^{O(1)} + 2^{\tilde O(k/\eps)}$ time approximation scheme for $k$-median (allowing medians outside $X$, in the ambient space).}

Sublinear time algorithms (as well as fixed parameter tractability) is too vast an area to do justice to within the scope of this paper. 
The seminal work of Indyk~\cite{Ind99a} deals with the more relevant context of computations on finite metric spaces. 
Esfandiari and Mitzenmacher~\cite{EM18} and Abbasi et al.~\cite{ABBCGKMSS23} are examples of recent work in this vein; see also the the references therein and~\cite{CS24}. 
Metric MAX $k$-CUT is the problem of maximizing the complement of metric $k$-minsum clustering, namely, the sum of pairwise inter-cluster distances. 
While metric MAX $k$-CUT had an $O(n^2) + nk2^{O(1/\eps^3)}$ time approximation scheme (see the aforementioned~\cite{FKKR03}), no sublinear time approximation scheme was known, except for Indyk's result~\cite{Ind99b} that applies to the special case of $k=2$ (metric MAX CUT). 
Here we adapt the aforementioned result of~\cite{Yaroslavtsev14} on dense graphs to derive a sublinear time approximation guarantee for larger $k$, as one of the building blocks of our algorithm.

It is not known if the general case of the non-metric $\ell_2^2$ variant admits any non-trivial approximation guarantee. 
Past work and results can be found in some of the above-referenced papers, as well as in~\cite{IKI94,Mat00,Sch00,ADHP09}.

\setcounter{section}{1}
\section{The Approximation Scheme}\label{sec:PTAS}
\subsection{Preliminaries}\label{sec:prelim}

Let ${\cal M} = (A,d)$ be the input finite metric space. 
It will be convenient to consider the Banach space ${\cal W}_1(A,d) = \left(\RR_0^A,\|\cdot\|_{{\cal W}_1(A,d)}\right)$, where $\RR_0^A = \{f:A\rightarrow\RR\hbox{ such that }\sum_{x\in A} f(x) = 0\}$, and the norm $\|\cdot\|_{{\cal W}_1(A,d)}$ (we will abbreviate $\|\cdot\|_{{\cal W}_1}$, when the context $(A,d)$ is clear) is the $L^1$ transportation norm whose unit ball is $\conv\left\{\frac{e_x - e_y}{d(x,y)}\colon x,y\in A \wedge x\ne y\right\}$. 
Note that the mapping $\phi:A\rightarrow\RR_0^A$ given by $\phi(x) = e_x - \frac{1}{|A|} \cdot {\mathbf 1}$, where ${\mathbf 1} = \sum_{y\in A} e_y$, is an isometry, as for $x,y\in A$ we have that $\|e_x - e_y\|_{{\cal W}_1} = d(x,y)$. 
More generally, if $x\in A$ and $\pi\in \RR^A$ is a probability distribution on $A$, then for $\phi(\pi) = \pi-\frac{1}{|A|} \cdot {\mathbf 1} = \sum_{y\in A} \pi(y)\cdot\phi(y)$, we have $\|e_x - \pi\|_{{\cal W}_1} = \sum_{y\in A} \pi(y)\cdot d(x,y)$.

\begin{lemma}\label{lm: center based}
Consider a cluster $C\subset A$. Let $\mu_C\in\RR^A$ be the uniform distribution on $C$, and let $\Delta_C = \frac{1}{|C|} \sum_{x\in C} \| e_x - \mu_C \|_{{\cal W}_1}$. 
Then
$$
\cost(C) = \sum_{x,y\in C} d(x,y) = |C|\cdot\sum_{x\in C} \| e_x - \mu_C \|_{{\cal W}_1} =
|C|^2\cdot \Delta_C.
$$
\end{lemma}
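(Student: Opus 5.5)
The plan is to reduce everything to the identity stated just before the lemma. For $x\in A$ and a probability distribution $\pi$ on $A$ we have $\|e_x-\pi\|_{{\cal W}_1}=\sum_{y\in A}\pi(y)\,d(x,y)$. I would take this as given from the preliminaries. If it needs justification, here is the argument. The upper bound follows by writing $e_x-\pi=\sum_y \pi(y)(e_x-e_y)$, applying the triangle inequality for the norm, and using $\|e_x-e_y\|_{{\cal W}_1}=d(x,y)$. The lower bound uses the dual (Kantorovich--Rubinstein) characterization: the $1$-Lipschitz function $f=d(x,\cdot)$ pairs with $\pi-e_x$ to give $\sum_y\pi(y)d(x,y)$, and any $1$-Lipschitz $f$ pairs with each generator $\frac{e_u-e_v}{d(u,v)}$ to at most $1$, so the pairing is at most the norm.

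With this identity, I would specialize to $\pi=\mu_C$, so that $\pi(y)=1/|C|$ for $y\in C$ and $0$ otherwise. For each $x\in C$ this gives
\[
\|e_x-\mu_C\|_{{\cal W}_1}=\frac{1}{|C|}\sum_{y\in C}d(x,y).
\]
Summing over $x\in C$ and multiplying by $|C|$ yields
\[
|C|\cdot\sum_{x\in C}\|e_x-\mu_C\|_{{\cal W}_1}=\sum_{x,y\in C}d(x,y)=\cost(C),
\]
which is the second equality. The third equality is immediate from the definition of $\Delta_C$: we have $\sum_{x\in C}\|e_x-\mu_C\|_{{\cal W}_1}=|C|\Delta_C$, so the expression equals $|C|^2\Delta_C$.

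The only step with real content is the identity $\|e_x-\pi\|_{{\cal W}_1}=\sum_y\pi(y)d(x,y)$, and specifically its lower bound. That direction requires the duality between the transportation norm (defined through its unit ball) and Lipschitz functions. Since the paper already asserts this identity, I would cite it. Beyond that, the computation is a direct summation, with no issues of double counting: the cost sums over ordered pairs, consistent with the definition of $\cost$.
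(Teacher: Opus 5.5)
Your proposal is correct and is essentially the paper's proof: specialize the preliminary identity $\|e_x-\pi\|_{{\cal W}_1}=\sum_{y}\pi(y)\,d(x,y)$ to $\pi=\mu_C$, sum over $x\in C$, and read off the last equality from the definition of $\Delta_C$. Your justification of the identity itself is sound and fills in a step the paper asserts without proof: the upper bound comes from the norm's triangle inequality, and the lower bound from pairing with the $1$-Lipschitz function $d(x,\cdot)$.
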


\begin{proof}
By definition, $\sum_{x\in C} \| e_x - \mu_C \|_{{\cal W}_1} = 
\sum_{x\in C}\sum_{y\in X} \mu_C(y) \cdot d(x,y) = \sum_{x\in C}\sum_{y\in C} \frac{1}{|C|}\cdot d(x,y)$,
where the last equation follows from the definition of $\mu_C$.
\end{proof}

\noindent
For $\alpha > 1$, let $\core_{\alpha}(C) = \left\{x\in C\colon \| e_x - \mu_C \|_{{\cal W}_1}\le \alpha\cdot\Delta_C\right\}$.
Note that by Markov's Inequality, $|\core_{\alpha}(C)| > \left(1 - \alpha^{-1}\right)\cdot |C|$.
%
\subsection{Overview}\label{sec:overview}

Let ${\cal C} = \{C_1,C_2,\dots,C_k\}$ denote an optimal clustering of $A$. For convenience, we assume 
that the clusters are sorted by non-increasing size, that is $|C_1|\geq |C_2|\geq \ldots |C_k|$. Throughout 
this section, we will assume that we can enumerate efficiently over estimates $\widehat{|C_i|}$, for all 
$i=1,2,\dots, k$ of the optimal cluster cardinalities $|C_i|$, and similar estimates $\widehat{\Delta_{C_i}}$ 
of the optimal cluster cost parameters $\Delta_{C_i}$, roughly up to factors $1+\gamma$ for $\gamma\in (0,1)$ 
(we'll use $\gamma^{-1} = \poly(k,\varepsilon^{-1})$). The former can be obtained by enumerating 
for each cluster over a value of the form $\lceil(1+\gamma)^j\rceil$ in the range $\{1,2,\dots n\}$. There are 
at most $O(\gamma^{-1} \log n)$ such values. The latter can be obtained, for example, by computing a constant 
factor approximation to $k$-median in time $2^{O(k)}\cdot n$ via~\cite{Ind99a}. As the optimal $k$-minsum 
cost $\OPT$ is lower bounded by the value of an optimal $k$-median clustering and is upper bounded by $2n$ 
times the value of an optimal $k$-median clustering, we can then enumerate over all possible values of the 
cluster costs in a similar manner as the enumeration over cluster cardinalities. We do not have a lower bound 
for cluster costs. However, the cost of clusters contributing less than 
$\frac{\varepsilon}{k}\cdot \OPT$ to the total cost can be rounded up to 
$\frac{\varepsilon}{k}\cdot \OPT$, without affecting the total cost significantly. Therefore, we only need to
consider $O(\gamma^{-1} \log(nk/\varepsilon))$ cost values per cluster.

Our algorithm works in four stages that we now outline. The first stage produces a relatively small
collection of candidate sets of cluster centers to check. At least one of the candidates is guaranteed
to enable a good approximation. The second and third stage compute, given candidate centers
and approximate cluster sizes and costs (also enumerated upon), a clustering. If the input parameters
are good approximations, the computed clustering is guaranteed to be a good approximation. The
last stage approximates quickly the cost of each computed clustering, enabling the selection of
the (nearly) best result.

\paragraph{Stage 1: center approximation.}
In this stage, we determine an approximate center $\widehat{\mu_C}$ for every optimal 
cluster $C\in {\cal C}$. To formalize the guarantee of this stage, we require the following 
definition.
\begin{definition}\label{def:cover}
Let $M\subset A$ be a set of points. A cluster $C$ is ($\varepsilon,\alpha$)-covered (or
simply covered, if the parameters are obvious from the context) by $M$ if there exists a
 point $p\in M$ such that 
\begin{equation}\label{eq:cover}
|C|\cdot \sum_{x\in C}\|e_x-e_p\|_{{\cal W}_1}\leq 
    \frac{\varepsilon}{4\alpha}\cdot \frac{\OPT}{k} + \alpha\cdot\Cost(C).
\end{equation}
A point $p\in M$ that satisfies Equation~\eqref{eq:cover} is called an 
$(\varepsilon,\alpha)$-approximate center of $C$.
\end{definition}

\begin{claim}\label{cl:approx dist}
If $p$ is an $(\varepsilon,\alpha)$-approximate center of $C$, then
$$
\|e_p-\mu_C\|_{{\cal W}_1}\le 
\frac{\varepsilon}{4\alpha}\cdot \frac{\OPT}{k\cdot |C|^2} + \alpha\cdot\Delta_C.
$$
\end{claim}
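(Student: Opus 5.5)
The plan is to bound $\|e_p-\mu_C\|_{{\cal W}_1}$ by averaging the triangle inequality over the points of $C$. Then we convert the resulting sum into the left-hand side of \eqref{eq:cover} and the cluster cost via Lemma~\ref{lm: center based}.

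For every $x\in C$, the triangle inequality in the normed space ${\cal W}_1(A,d)$ gives
$$
\|e_p-\mu_C\|_{{\cal W}_1}\le \|e_p-e_x\|_{{\cal W}_1}+\|e_x-\mu_C\|_{{\cal W}_1}.
$$
Averaging this over $x\in C$ (multiplying by $1/|C|$ and summing), the second term becomes exactly $\Delta_C$ by its definition. This yields
$$
\|e_p-\mu_C\|_{{\cal W}_1}\le \frac{1}{|C|}\sum_{x\in C}\|e_x-e_p\|_{{\cal W}_1}+\Delta_C .
$$
Next I would use that $p$ is an $(\varepsilon,\alpha)$-approximate center. Dividing \eqref{eq:cover} by $|C|^2$ and using $\cost(C)=|C|^2\Delta_C$ from Lemma~\ref{lm: center based} gives
$$
\frac{1}{|C|}\sum_{x\in C}\|e_x-e_p\|_{{\cal W}_1}\le \frac{\varepsilon}{4\alpha}\cdot\frac{\OPT}{k|C|^2}+\alpha\Delta_C .
$$

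Combining the two displays gives the stated bound, except with $(\alpha+1)\Delta_C$ in place of $\alpha\Delta_C$. That extra additive $\Delta_C$ is the main obstacle, since the claim as stated has coefficient exactly $\alpha$. I would first try to obtain it directly, without passing through the averaged triangle inequality. Note that $\|e_p-\mu_C\|_{{\cal W}_1}=\frac1{|C|}\sum_{x\in C}d(p,x)$ exactly, because $\mu_C$ is a probability distribution; this is the identity recorded in the preliminaries. So
$$
\|e_p-\mu_C\|_{{\cal W}_1}=\frac{1}{|C|}\sum_{x\in C}\|e_x-e_p\|_{{\cal W}_1},
$$
and no triangle inequality is needed. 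Dividing \eqref{eq:cover} by $|C|^2$ then gives the claim exactly, with coefficient $\alpha$. The final proof is therefore this one-line identity followed by the division, with the cost rewritten via Lemma~\ref{lm: center based}.
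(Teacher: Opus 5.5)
Your final argument is correct and is exactly the paper's proof: use the identity $\|e_p-\mu_C\|_{{\cal W}_1}=\frac{1}{|C|}\sum_{x\in C}\|e_p-e_x\|_{{\cal W}_1}$ (valid because $\mu_C$ is a probability distribution), then divide Equation~\eqref{eq:cover} by $|C|^2$ and rewrite $\cost(C)=|C|^2\Delta_C$. Drop the averaged-triangle-inequality detour, which loses an extra $\Delta_C$, from the write-up.
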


\begin{proof}
Note that $\|e_p-\mu_C\|_{{\cal W}_1} = \frac{1}{|C|}\cdot\sum_{x\in C} \|e_p-e_x\|_{{\cal W}_1}$,
and use Equation~\eqref{eq:cover}.
\end{proof}

Notice that if $C$ is $(\varepsilon,\alpha)$-covered by $M$, then for $\alpha'\ge \alpha$ and
$\varepsilon'\ge \varepsilon\cdot\frac{\alpha'}{\alpha}$, it is also $(\varepsilon', \alpha')$-covered by $M$. 
Stage 1 realizes the algorithm of the following theorem.\footnote{Note that Theorem~\ref{thm:approxcenters}
is stated with respect to any clustering, not just an optimal clustering $\mathcal{C}$. This will become 
important at Stage 3 of the algorithm, where we shall need to invoke the guarantee of 
Theorem~\ref{thm:approxcenters} for a non-optimal clustering.}
\begin{theorem}\label{thm:approxcenters}
There exist absolute constants $\alpha, \eta > 0$ such that for every $\delta, \varepsilon > 0$, there
is a randomized $\left(n^{1+o(1)}+2^{O(k\log k)}\right)\cdot ((k/\varepsilon)\cdot\log(1/\delta))^{O(1)}$-time
algorithm that outputs a collection of at most $(n/\varepsilon)^{o(1)}+2^{O(k\log k)}$ sets, each containing 
at most $\eta\cdot (k^2/\varepsilon) \cdot \log(k/\delta)$ points, such that for any partition of $A$ into $k$
clusters, with probability at least $1-\delta$, at least one of these sets ($\varepsilon,\alpha$)-covers all 
clusters.
\end{theorem}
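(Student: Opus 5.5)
The plan is a $D^2$-style adaptive sampling scheme in the spirit of $k$-means++ seeding, combined with enumeration over which sampled points serve as centers. For a point $p$ and cluster $C$, Lemma~\ref{lm: center based} and the triangle inequality in ${\cal W}_1$ give
$$
\sum_{x\in C}\|e_x-e_p\|_{{\cal W}_1}\le |C|\cdot\|e_p-\mu_C\|_{{\cal W}_1}+|C|\cdot\Delta_C .
$$
So $p$ is an $(\varepsilon,\alpha)$-approximate center as soon as $\|e_p-\mu_C\|_{{\cal W}_1}\le (\alpha-1)\Delta_C$. By Markov, any $p\in\core_{\alpha-1}(C)$ qualifies, and this core contains a constant fraction of $C$. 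The task therefore reduces to hitting a core of every cluster, using a small number of candidate sets. The additive term $\frac{\varepsilon}{4\alpha}\frac{\OPT}{k}$ is what lets clusters with negligible cost be skipped.

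I would build the candidate sets in $k$ rounds, maintaining a current set $S$ of chosen centers. In each round I sample a multiset $T$ of $\eta (k/\varepsilon)\log(k/\delta)$ points, each drawn with probability proportional to a weight that reflects the cluster-minsum cost structure. The natural choice is $w(x)=\min_{s\in S} d(x,s)$, possibly mixed with a uniform component so that the first round and zero-distance cases are handled. I then guess which element of $T$ lands in the core of an uncovered cluster, or guess that none is needed, and append it to $S$.

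The analysis is the standard one. Let $U$ be the set of uncovered clusters. Either
$$
\sum_{C\in U}|C|\sum_{x\in C}\min_{s\in S} d(x,s)\le \frac{\varepsilon}{4\alpha}\cdot \frac{\OPT}{k}\cdot|U| ,
$$
in which case the current $S$ already covers the clusters in $U$ in the sense of Definition~\ref{def:cover}, or some uncovered cluster carries at least an $\varepsilon/k$ fraction of the sampling mass. In the latter case I show that, conditioned on landing in that cluster, a constant fraction of the mass falls on its core. The argument is the $k$-means++ one: points far from the core pay at least $\Theta(\Delta_C)$ more distance, so the total mass on $C$ relative to the core mass is bounded. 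Hence each round succeeds with probability $\Omega(\varepsilon/k)$ per sample, and $O((k/\varepsilon)\log(k/\delta))$ samples per round suffice. A union bound over the $k$ rounds gives failure probability $\delta$. The sets $S$ have size at most $k$, and taking the union of all samples gives the stated size bound $\eta(k^2/\varepsilon)\log(k/\delta)$.

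The enumeration over guesses gives $|T|^k=2^{O(k\log k)}\cdot\poly$ branches. I expect the main obstacle to be that the minsum objective weights each cluster by $|C|$. Sampling proportional to distance favors large clusters, while the covering condition for a small cluster $C$ scales as $\OPT/(k|C|)$. To handle this I would make the weights size-aware: enumerate the approximate sizes $\widehat{|C_i|}$ (at most $(\gamma^{-1}\log n)^k=(n/\varepsilon)^{o(1)}+2^{O(k\log k)}$ options) and sample from the points still far from $S$ with probability proportional to $d(x,S)$ times a guessed size scale. Alternatively, I would run the sampling separately at each of the $O(\log n)$ size scales. Either way, the mass of a cluster in the sampling distribution becomes proportional to its contribution $|C|\sum_x d(x,S)$. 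Finally, I would implement each sampling round in $n^{1+o(1)}$ time by maintaining the distances $d(x,S)$ incrementally, which yields the claimed running time.
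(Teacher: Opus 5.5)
\textbf{The gap.} Your reduction to hitting a core is correct, and so is the bound that an uncovered cluster's distance-mass is at most a constant times its core's mass. The argument breaks at the next step: ``otherwise some uncovered cluster carries at least an $\varepsilon/k$ fraction of the sampling mass.'' For sampling proportional to $d(x,S)$, even mixed with a uniform component, this is false. An uncovered cluster $C$ is only guaranteed distance-mass of order $\varepsilon\OPT/(k|C|)$. A \emph{covered} but smaller cluster $C'$ can carry mass of order $\cost(C')/|C'|$, which may be far larger.

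Concretely, take:
\begin{itemize}
\item $n/2$ points at $y_0$;
\item $\sqrt n$ points at $y_1$, with $d(y_0,y_1)=R=Dn^{-0.7}$;
\item a group $F$ of $m=n^{0.1}$ points at pairwise distance $D$, each at distance $D$ from $y_0$ and $y_1$.
\end{itemize}
The natural $3$-clustering is optimal, with $\OPT\approx n^{0.2}D$. Consider any $S$ that avoids the middle cluster and contains at most $m/2$ points of $F$. This always holds for your sets, which have $\poly(k/\varepsilon)\log(1/\delta)$ points. Then the middle cluster gets an $O(n^{-0.3})$ share of the distance-mass: when $y_0\in S$ it has $\sqrt n R=Dn^{-0.2}$, against at least $n^{0.1}D/2$ for $F$. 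It also gets only an $O(n^{-1/2})$ share of the uniform mass. Yet it is badly uncovered, since $|C|\sum_{x\in C}d(x,S)=nR=n^{0.3}D\gg\OPT$. So with high probability no branch of your enumeration ever contains one of its points.

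You rightly name the $|C|$-weighting as the main obstacle, but your remedy is not an algorithm. Multiplying every weight by one guessed size scale leaves the distribution unchanged. Weighting $x$ by the size of \emph{its} cluster needs exactly the membership information that is unknown. ``Sampling separately per size scale'' has no content without that membership either.

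\textbf{The missing idea.} The paper prunes by distance rather than reweighting by size.
\begin{itemize}
\item Sample proportionally to $d(x,S)$, but only among points with $d(x,S)\le\Gamma_i=2^{-i}\sum_q d(q,S)$, branching over $i\le 10\log(nk/\varepsilon)$.
\item Always target the \emph{largest} uncovered cluster $C_j$. The root holds a uniform sample, which covers $C_1$.
\item The root also holds an $O(1)$-approximate $k$-median solution. This gives $\sum_q d(q,S)=O(\OPT)$, so the right threshold lies in range.
\end{itemize}
For the largest $i$ that keeps $\core_2(C_j)$, every surviving point has distance $<2\max_{p\in\core_2(C_j)}d(p,S)$. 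Hence clusters no larger than $C_j$ contribute only $O(k)$ times the core mass of $C_j$. Larger clusters are already covered and contribute $O(\OPT/|C_j|)$, which is $O(k/\varepsilon)$ times that core mass because $C_j$ is uncovered (Lemmas~\ref{lem:samplehit} and~\ref{lem:sample}). In the example, a threshold $\Gamma_i\approx R$ removes all of $F\setminus S$, so only the middle cluster keeps positive weight.

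Each node keeps all its samples, so no guess of which sample to keep is needed, and there are $(10\log(nk/\varepsilon))^{k-1}$ candidate sets. You should drop that guess in any case. Your $|T|^k$ branches come to $2^{O(k\log(k/\varepsilon))}(\log(k/\delta))^k$, not $2^{O(k\log k)}\cdot\poly$ as you claim.
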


The following stages are applied to each candidate set, to each possible choice of $k$
points in the set as approximate centers $\widehat{\mu_{C_i}}$, and to each choice of
approximate cluster sizes $\widehat{|C_i|}$, approximate cluster costs $\widehat{\Delta_{C_i}}$,
and approximate optimal cost $\widehat{\OPT}$. Note that the total number of options to 
enumerate over is $(n/\varepsilon)^{o(1)}\cdot \gamma^{-2}\cdot 2^{O(k\log k)}$ (see 
the definitions and details below).

\paragraph{Stage 2: assignment of far points.}
In this stage, we assume to have, for every cluster $C\in {\cal C}$, an approximate 
center $\widehat{\mu_C}$, an approximation $\widehat{|C|}$ of $|C|$, an 
approximation $\widehat{\Delta_C}$ of $\Delta_C$, and an approximation $\widehat{\OPT}$
of $\OPT$. We require the following definitions.
\begin{definition}\label{def:card}
We say that $\widehat{|C|}$ $\gamma$-approximates $|C|$ iff
\begin{equation}\label{eq:card}
|C|\leq \widehat{|C|}\leq (1+\gamma)\cdot |C|.
\end{equation}
Similarly, $\widehat{\OPT}$ $\gamma$-approximates $\OPT$ iff
\begin{equation}\label{eq:opt}
\OPT\le \widehat{\OPT}\le (1+\gamma)\cdot \OPT.
\end{equation}
\end{definition}
\begin{definition}\label{def:size}
We say that $\widehat{\Delta_{C}}$ $\gamma$-approximates $\Delta_C$ iff
\begin{equation}\label{eq:size}
\Delta_{C}\leq \widehat{\Delta_{C}}\leq (1+\gamma)\cdot \Delta_{C}\ \ \vee\ \ 
\Delta_{C}\le \gamma\cdot \frac{\OPT}{k\cdot |C|^2}\le \widehat{\Delta_{C}}\le 
(1+\gamma)\cdot\gamma\cdot\frac{\OPT}{k\cdot |C|^2}.
\end{equation}
\end{definition}
\begin{definition}\label{def:far}
For $C\in {\cal C}$ and $\beta\in (0,1)$, given $\varepsilon$, $\alpha$, $\gamma$, 
$\widehat{\mu_C}$, $\widehat{\Delta_C}$, and $\widehat{\OPT}$, we say that a point 
$p$ is $\beta^{-1}$-far for $C$ iff
\begin{equation}\label{eq:far}
\|e_p-\widehat{\mu_C}\|_{{\cal W}_1}\geq \beta^{-1}\cdot \left(\frac{\varepsilon\cdot (1+\gamma)^2}{4\alpha}\cdot \frac{\widehat{\OPT}}{k\cdot \left(\widehat{|C|}\right)^2} + \alpha\cdot\widehat{\Delta_C}\right).
\end{equation}
A point $p$ is $\beta^{-1}$-close for $C$ iff it is not $\beta^{-1}$-far for $C$.
\end{definition}
\begin{hypothesis}\label{hyp:approx}
We have, for every $C\in {\cal C}$, an approximate center $\widehat{\mu_{C}}$ that satisfies
Equation~\eqref{eq:cover}, an approximation $\widehat{|C|}$ of $|C|$ that satisfies 
Equation~\eqref{eq:card}, and an approximation $\widehat{\Delta_{C}}$ of $\Delta_C$ that
satisfies Equation~\eqref{eq:size}. We also have an approximate value $\widehat{\OPT}$ that
satisfies Equation~\eqref{eq:opt}.
\end{hypothesis}
\begin{definition}\label{def:partial clustering}
A set $\mathcal{P} = \{P_1,P_2,\dots,P_k\}$ is called a {\em partial clustering} of $A$
iff $P_1,P_2,\dots,P_k$ are mutually disjoint, and $P_1\sqcup P_2\sqcup\cdots\sqcup P_k\subseteq A$.
\end{definition}
Stage 2 computes an assignment of far points to clusters, providing the guarantee of
the following theorem.
\begin{theorem}\label{thm:farassignment}
Let $\gamma^{-1} = \poly(k,\eps^{-1})$ (to be specified later).
There exists an algorithm that computes $n^{o(1)}\cdot 2^{2^{O(k\log(k/\varepsilon))}}$ partial 
clustering such that, assuming Hypothesis~\ref{hyp:approx}, at least one of the partial 
assignments satisfies that it can be extended to a clustering ${\cal C}'$ for which the following
assertions hold: 
($i$) All $\gamma^{-1}$-far points for all $C\in {\cal C}'$ have been assigned in the partial
      clustering; and,
($ii$) the cost of ${\cal C}'$ is at most $\left(1+\frac{\varepsilon}{2k}\right)\cdot \OPT$.
The algorithm runs in time $n^{1+o(1)}\cdot 2^{2^{O(k\log(k/\varepsilon))}}$ .
\end{theorem}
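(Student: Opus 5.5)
The plan is to reduce the problem to a bounded enumeration over \emph{types} of points. All the work is done against the $k$ approximate centers $\widehat{\mu_{C_1}},\dots,\widehat{\mu_{C_k}}$ supplied by Hypothesis~\ref{hyp:approx}.

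\textbf{Computing types.} For every $p\in A$ compute the $k$ values $\|e_p-\widehat{\mu_{C_i}}\|_{{\cal W}_1}=d(p,\widehat{\mu_{C_i}})$, which costs $O(nk)$ distance queries. From these we get the set $F(p)\subseteq[k]$ of indices $i$ for which $p$ is $\gamma^{-1}$-far for $C_i$, in the sense of Definition~\ref{def:far}. We also get the normalized marginal costs $m_i(p)=\widehat{|C_i|}\cdot d(p,\widehat{\mu_{C_i}})$, which by Lemma~\ref{lm: center based} estimate the increase in $\cost(C_i)$ from inserting $p$. The type $\tau(p)$ consists of:
\begin{itemize}
\item the set $F(p)$;
\item the index $i^*(p)=\arg\min_{i} m_i(p)$;
\item for each $i$, the ratio $m_i(p)/m_{i^*}(p)$ rounded to a power of $(1+\varepsilon/k)$ and truncated at $\poly(k/\varepsilon)$, with everything above the cap marked ``$\infty$''.
\end{itemize}
This gives $2^{O(k\log(k/\varepsilon))}$ types. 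The algorithm guesses, for every type $\tau$ with $F\ne\emptyset$, a target cluster $j(\tau)\in[k]$. That is $k^{2^{O(k\log(k/\varepsilon))}}=2^{2^{O(k\log(k/\varepsilon))}}$ guesses, and each produces, in linear time, the partial clustering that puts every point of type $\tau$ into $P_{j(\tau)}$. The $n^{o(1)}$ factor only reflects the surrounding enumeration of $\widehat{|C_i|},\widehat{\Delta_{C_i}}$.

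\textbf{Correctness.} Start from the optimal ${\cal C}$ and build ${\cal C}'$ by moving points. First we show the far points are few and costly. By Claim~\ref{cl:approx dist}, the triangle inequality, and the rounding in Hypothesis~\ref{hyp:approx}, the threshold in Equation~\eqref{eq:far} is within a constant of $\frac{\varepsilon}{4\alpha}\frac{\OPT}{k|C|^2}+\alpha\Delta_C$. So a point of $C$ that is $\gamma^{-1}$-far for $C$ has $d(p,\mu_C)\gtrsim\gamma^{-1}\Delta_C$. By Markov there are at most $O(\gamma)|C|$ such points, and each one's contribution $|C|\,d(p,\mu_C)$ dominates $|C|\Delta_C$ by a $\gamma^{-1}$ factor. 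Consequently, inserting or removing $O(\gamma)$-fractions of points changes every $|C_i|$ by a factor $1\pm O(\gamma)$. It also moves each $\mu_{C_i}$ only by an amount chargeable to the moved points' own distances. Hence the true marginal cost of a far point $p$ in cluster $C'$ is $(1\pm O(\gamma+\varepsilon/k))\,m_{C'}(p)$, up to an additive $\gamma^{-1}$-fraction of its own contribution.

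Now define ${\cal C}'$ type by type. Among the optimal clusters containing far points of type $\tau$, pick the one with smallest rounded $m_i$ for that type, and move all far points of type $\tau$ there. Every point in $\tau$ has the same rounded ratio profile, so each moved point pays at most $(1+O(\varepsilon/k))$ times what it paid in ${\cal C}$. Points whose ratio is capped at ``$\infty$'' never move into that cluster. Since the choice is one of the $k$ guesses, some guess $j(\cdot)$ reproduces ${\cal C}'$ exactly on far points, giving ($i$).

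\textbf{Main obstacle.} The hard step is bounding the total cost change for ($ii$). The cross-cluster charging must handle two effects: far points of a large cluster landing among the core of a smaller cluster, and the shift of each center induced by the moves. These must be charged to the far points themselves, not to $\OPT/k$ per cluster. I would first try to show that the pairwise cost between the moved sets $M_i\to C_j$ is bounded by $\sum_{p\in M_i}|C_j|\,d(p,\mu_{C_j})$ plus $|M_i|^2\cdot(\text{diam of }M_i)$-type terms. The latter are negligible because $|M_i|\le O(\gamma)|C_i|$ and, by the type rounding, such points are comparably far from $\mu_{C_j}$. Choosing $\gamma^{-1}=\poly(k,\varepsilon^{-1})$ large enough then makes the total increase at most $\frac{\varepsilon}{2k}\OPT$. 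If the iterative effect of moves on sizes breaks this, the fallback is to process types in decreasing order of $m_{i^*}$ and use a potential argument.
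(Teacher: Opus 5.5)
There is a genuine gap at the commitment step. Your partial clustering puts \emph{every} point of a type $\tau$ with $F\neq\emptyset$ into $P_{j(\tau)}$. Your $\mathcal{C}'$, however, moves only the points of type $\tau$ that are far for their own cluster; points of the same type lying in a cluster $C_i$ with $i\notin F$ stay where they are. So the partial clustering extends to $\mathcal{C}'$ only if every type class is pure, and in general type classes are not pure.

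Here is a concrete instance. Let $C_2=\{a_1,\dots,a_s,c_2\}$ and $C_3=\{b_1,\dots,b_s,c_3\}$, with
\begin{itemize}
\item $d(a_i,a_{i'})=d(b_i,b_{i'})=1$,
\item $d(a_i,b_{i'})=d(c_2,c_3)=2$,
\item every $a_i$ and $b_{i'}$ at distance $1$ from both $c_2$ and $c_3$,
\end{itemize}
and add a single-point cluster $C_1$ very far away. Then $c_2,c_3$ are valid approximate centers, and every $a_i$ and $b_{i'}$ has $F=\{1\}$, the same $i^*$, and the same ratio profile. Every guess $j(\cdot)$ therefore forces all $2s$ of them into one cluster, so every extension costs roughly $3\cdot\OPT$ or more.

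Restricting commitments to $j(\tau)\in F$ does not fix this. A class can still contain far points of some $C_j$, which (i) obliges you to commit to $j$, together with points of another cluster for which they are not far, which must not be sent to $j$. The algorithm cannot tell these apart, so no all-or-nothing choice per type works.

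The missing idea is to decide \emph{how many} points of each type go to each cluster, not where the whole type goes. The paper does this in the following steps.
\begin{itemize}
\item It first passes to a near-optimal clustering $\mathcal{D}$ in which far points of $C_j$ that are $\beta^{-2}$-close for a cluster at least $\beta^4$ times as large are moved into that cluster (Lemma~\ref{lem:smallfartoclose}).
\item It types points by a profile $h_p$ that records, for each cluster, whether $p$ is $\beta^{-7}$-far, in one of the geometric $\Mid_s$ shells, or $\beta^{-2}$-close (refined by the index $j_p$). There is a large gap between the close and far thresholds.
\item It guesses $\gamma^2$-approximations of $|B^h\cap D_i|$ for every type $h$ and cluster $i$.
\item It solves a capacitated min-cost assignment. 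The cost is $\widehat{|C_j|}\cdot\|e_p-\widehat{\mu_{C_j}}\|_{\mathcal{W}_1}$ for far points, $0$ for non-far points, and $\infty$ for far points that are close for a comparable cluster.
\item Only points far for their assigned cluster are committed ($P_j=C_j^{\Far}\cap F_j$); the rest merely reserve capacity.
\end{itemize}
Because the flow may select different far points than $\mathcal{C}$ does, bound (ii) then needs the cross-cluster charging behind Lemma~\ref{lm:cost C'}. For example, a far point of $C_{i_p}$ that lands in the close region of a cluster $\beta^4$ times smaller is charged at most $\beta$ times its own far cost. Your sketch explicitly leaves this charging as the ``main obstacle.'' Moreover, your single $\gamma^{-1}$ threshold gives no close/far gap with which to carry it out.
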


In the next stage, we need approximate centers, cardinalities, and cluster costs for the
clustering $\mathcal{C}'$ rather than for $\mathcal{C}$. This requires a further enumeration
over $(n/\varepsilon)^{o(1)}\cdot \gamma^{-2}\cdot 2^{O(k\log k)}$ options. Keep in mind
that we do not need to invoke the algorithm in Theorem~\ref{thm:approxcenters} more than
twice.

\paragraph{Stage 3: assignment of close points.}
This stage completes the partition of the input dataset into $k$ clusters. Assuming for
$\mathcal{C}'$ the same assumptions that are specified in Stage 2 for $\mathcal{C}$,
Stage 3 computes a near-optimal partition. The implementation of this stage requires 
grouping ``neighboring" clusters, defined roughly as clusters for which their sets of 
$\gamma^{-1}$-close points intersect. The neighbors binary relation defines a graph 
over the node set ${\cal C}'$. Stage 3 first assigns points to each connected component 
of this graph. It then applies in each connected component with more than one cluster a 
sublinear time approximation scheme for metric MAX $h$-CUT which we design and 
analyze (here $h$ is the number of clusters in the connected component). The outcome 
of Stage 3 is summarized in the following theorem.
\begin{theorem}\label{thm:finalguarantee}
There exists a randomized algorithm that, given a partial assignment satisfying the conclusion 
of Theorem~\ref{thm:farassignment} and a failure probability $\delta\in (0,1)$, and assuming 
Hypothesis~\ref{hyp:approx} for $\mathcal{C}'$, computes a $(1+\varepsilon)$-approximate 
clustering in time 
$n\cdot 2^{O(k\log(k/\varepsilon))}\cdot \log(n/\delta) + 2^{2^{O(k\log(k/\varepsilon))}}$.
\end{theorem}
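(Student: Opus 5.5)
We would prove Theorem~\ref{thm:finalguarantee} in three steps: group the clusters of $\mathcal{C}'$ into connected components of a ``neighbor'' graph, assign every unassigned (necessarily $\gamma^{-1}$-close) point to a component, and then split each component with a MAX $h$-CUT approximation scheme. Throughout we work with the partial clustering from Theorem~\ref{thm:farassignment}, whose extension $\mathcal{C}'$ costs at most $(1+\frac{\varepsilon}{2k})\OPT$. By Hypothesis~\ref{hyp:approx} for $\mathcal{C}'$ we know, for each $C\in\mathcal{C}'$, an approximate center $\widehat{\mu_C}$, a size $\widehat{|C|}$ and a scale $\widehat{\Delta_C}$. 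Write $r_C$ for the right-hand side of Equation~\eqref{eq:far} with $\beta=\gamma$. Every unassigned point $x$ lies in some $C\in\mathcal{C}'$ and satisfies $\|e_x-\widehat{\mu_C}\|_{{\cal W}_1}<r_C$.

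\textbf{Step 1: the neighbor graph.} We declare $C,C'$ neighbors if $\|e_{\widehat{\mu_C}}-e_{\widehat{\mu_{C'}}}\|_{{\cal W}_1}\le r_C+r_{C'}$. This is a slight relaxation of ``close balls intersect'' that can be tested with $k^2$ distance queries. Let $G_1,\dots,G_m$ be the connected components. Each unassigned point $x$ is sent to the component containing a cluster $C$ minimizing $\|e_x-\widehat{\mu_C}\|_{{\cal W}_1}/r_C$; this takes $O(nk)$ time. Correctness: if $x\in C$ and the chosen cluster $C''$ lies in another component, then $\|e_x-\widehat{\mu_{C''}}\|<r_{C''}$. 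By the triangle inequality, $C$ and $C''$ would then be neighbors, a contradiction. Hence every close point lands in the component of its own cluster in $\mathcal{C}'$, and the component-level partition agrees with $\mathcal{C}'$ exactly. Assigned far points already sit in their clusters, so they also sit in the correct components.

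\textbf{Step 2: splitting a component.} Inside a component $G$ with $h\le k$ clusters, the set $U_G$ and the pre-assigned far points are fixed. We must split $U_G$ into $h$ parts with near-minimum intra-cluster cost, which is the complement of MAX $h$-CUT on $U_G$ with the pre-assigned points as fixed-label vertices.

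The key bound is that the total pairwise distance of $U_G$ is at most $\poly(k,\gamma^{-1})\cdot\cost(\mathcal{C}'|_G)$. The clusters are chained by overlapping radii, so every pair of points in $U_G$ is within $\sum_{C\in G}2r_C$ of each other. Each $r_C^2|C|^2$-type term is bounded by $\gamma^{-1}\alpha(\cost(C)+\varepsilon\OPT/k)$. The sizes are comparable because of the additive $\OPT/(k|C|^2)$ term, and we handle this by charging via the enumerated $\widehat{|C|}$. In this regime an additive-error MAX $h$-CUT scheme, with error $\varepsilon'\cdot\sum_{x,y\in U_G}d(x,y)$ and $\varepsilon'=\varepsilon/\poly(k,\gamma^{-1})$, yields a multiplicative $(1+\varepsilon/2)$ guarantee on the minsum cost of $G$. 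Summing over components then gives a $(1+\varepsilon)$-approximation.

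\textbf{Step 3: a sublinear additive MAX $h$-CUT with fixed vertices.} Following Yaroslavtsev/Andoni et al., we sample $s=\poly(h/\varepsilon')\log(1/\delta)$ points of $U_G$. Sampling is proportional to $\sum_y d(x,y)$, estimated via the approximate centers, which is the standard metric-to-dense-graph reduction of Indyk / Fernandez de la Vega et al. We enumerate all $h^s=2^{2^{O(k\log(k/\varepsilon))}}$ labelings of the sample; this accounts for the additive term in the running time. Each labeling induces a greedy rule: every remaining point goes to the label minimizing its estimated weighted distance to the sample and to the fixed far points. This costs $n\cdot 2^{O(k\log(k/\varepsilon))}\log(n/\delta)$ over all components. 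The best candidate is selected by sampling-based cost estimation.

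\textbf{Main obstacle.} The hardest step is the Step 2 bound relating the diameter-type mass of $U_G$ to the component's optimal cost when the cluster sizes in $G$ differ wildly. Small clusters have huge $r_C$, which could make the additive error swamp the cost of the large clusters. If the direct bound fails, we would first try weighting the sampling by the cluster scales $\widehat{\Delta_C}$ so that the error is charged per cluster rather than to the whole component. The remaining ingredient is the importance-sampling concentration with fixed vertices, which we expect to follow standard arguments.
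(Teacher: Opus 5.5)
Your proof breaks at Step~2, exactly where you flag the main obstacle, and nothing in the proposal repairs it. Your neighbor relation ignores cluster sizes. A component can therefore contain a huge cluster together with a tiny one whose radius $r_C$ is inflated by the additive term proportional to $\widehat{\OPT}/(k\,\widehat{|C|}^2)$. Then sizes in a component are not comparable, and the pairwise mass of $U_G$ is not bounded by any $f(k,\varepsilon)\cdot\OPT$.

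A concrete instance: take $k=2$, let $C_1$ be $N$ points at mutual distance $1$, and let $C_2=\{z\}$ with $d(z,x)=D$ for all $x\in C_1$, where $D\ge N$. Then $\{C_1,C_2\}$ is optimal with $\OPT=N(N-1)$, no point is far for its own cluster, and $r_{C_2}\ge\gamma^{-1}\varepsilon\OPT/40$. For $D=N^{3/2}$, large $N$, and $\widehat{\mu_{C_2}}=z$, the clusters $C_1$ and $C_2$ are neighbors. Yet $\sum_{x,y\in U_G}d(x,y)=N(N-1)+2ND=\Theta(\sqrt N)\cdot\OPT$, so an additive error $\varepsilon'\sum_{x,y\in U_G}d(x,y)$ with $\varepsilon'$ independent of $n$ certifies nothing. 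If you take $k=3$ and add a second block of $N$ points at distance $2$ from $C_1$, that error exceeds the $\Theta(N^2)$ price of mixing the two blocks, so the guarantee genuinely permits a constant-factor-bad clustering. Your fallback of reweighting the sampling by $\widehat{\Delta_C}$ is not a proof, and it misses the real issue. The cut objective of such a component is dominated by cross terms of order $ND\gg\OPT$, and any additive cut guarantee is measured against that.

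The missing idea is a size-separation step performed \emph{before} the components are formed (Lemma~\ref{lem:localassignsize}). For every pair with $\widehat{|C_j|}<\beta\cdot\widehat{|C_i|}$, the points that are $\gamma^{-1}$-close to both are handed to the larger cluster, i.e.\ $L_i\cap L_j$ is deleted from $L_j$. This is cheap: at most $k\beta(1+\gamma)|C_i|$ points enter $C_i$, all within $\gamma^{-1}\widehat{\widehat{\Delta_{C_i}}}$ of $\widehat{\mu_{C_i}}$. So $\Cost(C_i)$ grows by $O(k^2\beta\gamma^{-1})\cdot(\Cost(C_i)+\varepsilon\OPT/k)$, which totals at most $\frac{\varepsilon}{2k}\OPT$ for $\beta=\varepsilon\gamma/(100k^3)$.

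Only after this is the overlap graph built. Each edge now joins clusters whose estimated sizes are within a factor $\beta^{-1}$. Hence sizes in a component are within $\beta^{-k+1}$, and centers are within $7\gamma^{-1}\beta^{-4k+6}\widehat{\OPT}/\widehat{|C|}^2$ (Lemma~\ref{lem:uniform}). This yields $\sum_{p,q\in A_{K'}}d(p,q)\le 8\gamma^{-1}\beta^{-5k+7}k\cdot\widehat{\OPT}$.

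Note two things about this bound. It is against the global $\OPT$; the component's own cost can be far smaller, again because of the additive term in the radii. Its factor is also exponential in $k$, not polynomial as you claim. Accordingly the paper calls Theorem~\ref{thm:maxcut} on each $A_{K'}$ with $\varepsilon'=\varepsilon\gamma\beta^{5k-7}/k^2=(\varepsilon/k)^{O(k)}$, paying an additive $\frac{\varepsilon}{k}\OPT$ per component. This exponentially small $\varepsilon'$ is exactly the source of the $2^{2^{O(k\log(k/\varepsilon))}}$ term. The paper also re-partitions all of $A_{K'}$, far points included, since the optimal minsum clustering of $A_{K'}$ costs no more than $\mathcal{C}'$ restricted to $K'$. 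So no fixed-vertex variant of MAX $h$-CUT is needed.
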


\paragraph{Stage 4: clustering cost estimation.}
In this stage we evaluate the cost of the clustering computed by Stage 3, to enable
selecting the (nearly) best partition in the enumeration over the choice of parameters
feeding the loop over Stages 2 and 3. Computing the cost of a clustering exactly is
trivial in linear time, but we require a good sub-linear time approximation. Our
guarantee in Stage 4 is the following theorem.

\begin{theorem}\label{thm:costguarantee}
There exists a randomized algorithm with the following guarantee. Given explicitly 
an $n$ point metric space $(X,d)$, failure probability $\delta\in (0,1)$, approximation
guarantee $\varepsilon\in (0,1)$, the algorithm, using $O(\varepsilon^{-2}\cdot n\cdot \log(1/\delta))$ 
operations, outputs an estimate $\widehat{\Cost(X)}$ that satisfies with probability at least $1-\delta$
\[(1-\varepsilon)\cdot \Cost(X) \leq\widehat{\Cost(X)} \leq (1+\varepsilon)\cdot \Cost(X).\]
\end{theorem}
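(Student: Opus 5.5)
Our plan is to estimate $\Cost(X)=\sum_{x,y\in X}d(x,y)$ by first computing a crude ``center'' and then importance sampling pairs with probabilities proportional to a tractable upper bound on $d(x,y)$. The triangle inequality gives such a bound in terms of distances to a fixed point.

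\textbf{Step 1: a good anchor.} Pick a point $c\in X$ uniformly at random. By Lemma~\ref{lm: center based}, $\sum_{c}\sum_{x} d(x,c)=\Cost(X)$, so $\E_c\left[\sum_x d(x,c)\right]=\Cost(X)/n$. By Markov's inequality, with probability at least $1/2$ we have $D_c:=\sum_x d(x,c)\le 2\Cost(X)/n$. We repeat $O(\log(1/\delta))$ times at $O(n)$ cost each and keep the $c$ with the smallest $D_c$. Let $a_x=d(x,c)$, computed exactly for all $x$ in $O(n)$ operations. The triangle inequality also yields the lower bound $\Cost(X)\ge \sum_{x,y}|a_x-a_y|$. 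More usefully, $\Cost(X)=\sum_x n\|e_x-\mu_X\|_{{\cal W}_1}\ge n\sum_x \|e_x-\mu_X\|_{{\cal W}_1}$, and $a_x\le \|e_x-\mu_X\|+\|e_c-\mu_X\|$, so $\Cost(X)\ge \Omega(n\sum_x a_x)-O(n^2\|e_c-\mu_X\|)$. We will not need this bound, because the upper bound on $D_c$ suffices.

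\textbf{Step 2: importance sampling.} Since $d(x,y)\le a_x+a_y$, define the distribution $q(x,y)=(a_x+a_y)/Z$ with $Z=\sum_{x,y}(a_x+a_y)=2nD_c\le 4\Cost(X)$. Sample $(x,y)$ as follows: pick $x$ with probability $a_x/D_c$ and $y$ uniformly, or symmetrically, each with probability $1/2$. This realizes $q$ exactly, and after $O(n)$ preprocessing each sample costs $O(1)$ via an alias table. The estimator $Y=d(x,y)/q(x,y)$ is unbiased for $\Cost(X)$ and satisfies $0\le Y\le Z\le 4\Cost(X)$. Hence $\E[Y^2]\le 4\Cost(X)^2$, and averaging $O(\varepsilon^{-2})$ samples gives a $(1\pm\varepsilon)$-approximation with constant probability. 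A median of $O(\log(1/\delta))$ such means then gives failure probability $\delta$. The degenerate case $D_c=0$ means that every point coincides with $c$, so the cost is $0$ and we output $0$.

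\textbf{Step 3: accounting and main obstacle.} The total work is $O(n\log(1/\delta))$ for the anchors, plus $O(n)$ for the sampling structures, plus $O(\varepsilon^{-2}\log(1/\delta))$ for the samples, which is well within $O(\varepsilon^{-2} n\log(1/\delta))$. The main point to get right is Step 1: the ratio $Z/\Cost(X)$ must be bounded by a constant, since this controls the variance. We obtain this by comparing $D_c$ to its expectation $\Cost(X)/n$, where the expectation identity follows from Lemma~\ref{lm: center based}. We also need the anchor selection and the sampling rounds to succeed jointly, which a union bound over the two failure events handles.
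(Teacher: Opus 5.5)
Your argument is correct, and it takes a different route from the paper's.

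\textbf{The paper's route.} It writes $\Cost(X)=|X|\sum_{x\in X}\|e_x-\mu_X\|_{\mathcal{W}_1}$ via Lemma~\ref{lm: center based} and invokes Lemma~\ref{lem:estimatingCosts} with $B=\{\mu_X\}$. That lemma picks an approximate median $c'$ and samples $O(\varepsilon^{-2}\log(1/\delta))$ \emph{single points} $x$ from the mixture $\frac12\bigl(d(x,c')/\sum_{x'}d(x',c')+1/|X|\bigr)$. The paper then computes each sampled point's exact row sum $\sum_y d(x,y)$ at cost $O(|X|)$. This exact evaluation is where the product $\varepsilon^{-2}n$ in the stated bound comes from.

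\textbf{How yours relates.} Your pair distribution $q(x,y)\propto a_x+a_y$ has exactly this mixture (with $c'=c$) as its $x$-marginal. So the paper's per-sample estimator $\sum_y d(x,y)/\Pr[x]$ is the conditional expectation of your $Y$ given $x$: you replace the exact inner sum by one sampled partner $y$. This costs nothing in concentration, because the symmetrized weight gives the uniform bound $d(x,y)/q(x,y)\le Z\le 4\Cost(X)$.

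\textbf{What each route buys.} Yours removes the $O(n)$ work per sample, giving $O((n+\varepsilon^{-2})\log(1/\delta))$ operations, which is better than the claimed $O(\varepsilon^{-2}n\log(1/\delta))$. Your anchor criterion $D_c\le 2\Cost(X)/n$, obtained by Markov on the average row sum, is also simpler than the approximate-median condition in Lemma~\ref{lem:estimatingCosts}. Your argument needs no $\mathcal{W}_1$ machinery. The paper's route buys modularity: Lemma~\ref{lem:estimatingCosts} estimates many row sums simultaneously and is reused in Section~\ref{sec:maxcut}.

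\textbf{Minor cleanups.}
\begin{itemize}
\item Drop the digression at the end of Step~1. The relation $\Cost(X)=n\sum_x\|e_x-\mu_X\|_{\mathcal{W}_1}$ is an equality, not an inequality, and the lower bound you derive from it is never used.
\item $\sum_c D_c=\Cost(X)$ is just the definition of $\Cost$, not a consequence of Lemma~\ref{lm: center based}.
\item When arguing unbiasedness, say explicitly that $d(x,y)>0$ forces $a_x+a_y>0$, so every positive-distance pair is in the support of $q$.
\item Since $0\le Y\le 4\Cost(X)$, Hoeffding gives the $\log(1/\delta)$ dependence directly, without median-of-means.
\end{itemize}
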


In what follows, we present and analyze efficient algorithms for all four steps, thus proving
Theorems~\ref{thm:approxcenters},~\ref{thm:farassignment},~\ref{thm:finalguarantee},
and~\ref{thm:costguarantee}. Before doing that, we put together the four ingredients to 
derive Theorem~\ref{thm: main}.
\medskip

\begin{proofof}{Theorem~\ref{thm: main}}
Given the input dataset $(A,d)$ and the desired approximation parameter $\varepsilon$,
we first run the center approximation algorithm of Theorem~\ref{thm:approxcenters} twice.
One run is intended to approximate an optimal clustering $\mathcal{C}$, and the other
run is intended to approximate a near-optimal clustering $\mathcal{C}'$, an extension
of the partial clustering computed by the algorithm of Theorem~\ref{thm:farassignment}.
We shall run the latter algorithm several times, but we need to guarantee good approximate
values only for a single partial clustering that will eventually yield a near-optimal clustering.
Poor approximations will yield some clustering, but we shall evaluate the cost of all the
clusterings we generate, and we shall output the best of them.

After invoking Theorem~\ref{thm:approxcenters} twice, we enumerate over the 
$(n/\varepsilon)^{o(1)}\cdot 2^{O(k\log k+\log(k/\varepsilon))}$ choices of approximate 
centers, approximate cluster sizes, and approximate cluster costs for both $\mathcal{C}$ 
and $\mathcal{C}'$, and approximate optimal value. The rest of the algorithm is executed 
for each such choice, each yielding a set of candidate clusterings, each endowed with an 
approximate cost. We output a clustering of minimum approximate cost, among all the 
candidates.\footnote{Note that not all steps need to be executed for every choice of
approximate values for both $\mathcal{C}$ and $\mathcal{C}'$; however the double
enumeration does not affect the asymptotic bounds as stated here, so in order to simplify 
the presentation, we ignore this possible improvement.}

In the outer enumeration loop, we first invoke the algorithm assigning far points
of Theorem~\ref{thm:farassignment} (this is where we need the approximate values
for $\mathcal{C}$ to get the guarantees stated in the theorem). We get
$n^{o(1)}\cdot 2^{2^{O(k\log(k/\varepsilon))}}$ candidate partial assignments that we 
need to enumerate over in an inner loop. Thus, the next two steps will be executed over 
a total of $(n/\varepsilon)^{o(1)}\cdot 2^{2^{O(k\log(k/\varepsilon))}}$ candidate partial 
clusterings (taking into account the enumeration over approximate values, and the
enumeration over the resulting list of partial clusterings). For each of those, we execute 
the algorithm of Theorem~\ref{thm:finalguarantee} that completes the clustering (this is
where we need the approximate values for $\mathcal{C}'$, for the correct partial clustering,
to achieve the guarantees of the theorem; we can take $1-\delta$ to be a constant as we
only need it to succeed for the one run on the right choice of partial clustering). Finally, we 
compute the cost of each clustering thus computed using the algorithm of 
Theorem~\ref{thm:costguarantee}. Here, we need success in all choices (otherwise
we may underestimate the cost of a poor clustering), so we need to cap the failure
probability at $\delta' \le \frac{\delta}{(n/\varepsilon)^{o(1)}\cdot 2^{2^{O(k\log(k/\varepsilon))}}}$.

The algorithm's correctness follows immediately from the correctness of the four 
algorithms that compose it. The total running time, for a fixed target failure 
probability $\delta$, is as follows. Theorem~\ref{thm:approxcenters} is invoked
twice. Theorem~\ref{thm:farassignment} is invoked 
$(n/\varepsilon)^{o(1)}\cdot 2^{O(k\log k+\log(k/\varepsilon))}$ times (in fact, we
only need to invoke it $(n/\varepsilon)^{o(1)} + 2^{O(k\log k+\log(k/\varepsilon))}$
times, as it doesn't require the approximate values for $\mathcal{C}'$).
Theorems~\ref{thm:finalguarantee} and~\ref{thm:costguarantee} are invoked
$(n/\varepsilon)^{o(1)}\cdot 2^{2^{O(k\log(k/\varepsilon))}}$ times, where in
Theorem~\ref{thm:costguarantee} we require failure probability
$O((n/\varepsilon)^{-o(1)}\cdot 2^{-2^{O(k\log(k/\varepsilon))}})$. This gives 
(assuming we're aiming for fixed failure probability $\delta\in (0,1)$) a total 
running time of:
$$
\left(n^{1+o(1)}+2^{O(k\log k)}\right)\cdot (k/\varepsilon)^{O(1)} +
$$
$$
\left((n/\varepsilon)^{o(1)} + 2^{O(k\log k+\log(k/\varepsilon))}\right)\cdot 
n^{1+o(1)}\cdot 2^{2^{O(k\log(k/\varepsilon))}} +
$$
$$
(n/\varepsilon)^{o(1)}\cdot 2^{2^{O(k\log(k/\varepsilon))}}\cdot 
\left((n\log n)\cdot 2^{O(k\log(k/\varepsilon))} + 2^{2^{O(k\log(k/\varepsilon))}}\right) +
$$
$$
(n/\varepsilon)^{o(1)}\cdot 2^{2^{O(k\log(k/\varepsilon))}}\cdot 
\varepsilon^{-2}\cdot n\cdot \left(o(\log(n/\varepsilon)) + 2^{O(k\log(k/\varepsilon))}\right).
$$
The running time is clearly dominated by $n^{1+o(1)}\cdot 2^{2^{O(k\log(k/\varepsilon))}}$.
\end{proofof}


\subsection{Center Approximation}
\label{sec:center:approx}

This section is devoted to the proof of Theorem~\ref{thm:approxcenters}. We first define
the algorithm, and then we analyze it. In this section, to simplify the notation, we use
$\mathcal{C}$ to denote an arbitrary target clustering, not necessarily an optimal one.

\paragraph{The algorithm.}
We construct a rooted tree. The tree has depth $k-1$ and is $10\log(nk/\varepsilon)$-ary, meaning that 
we have exactly $\sum_{t=0}^{k-1} \left(10\log(nk/\varepsilon)\right)^t$ nodes. The label $v$ of a node
is a sequence that encodes the path to it from the root, where the root is labeled $[0]$. Thus, the label $v$
of a depth $t$ node is a sequence of length $t+1$, where the first entry is $0$, and the other entries are
all in $\{1,2,\dots, 10\log(nk/\varepsilon)\}$. A node $v$ is associated with a set of points $M_v$. The
root is associated with $M_{[0]}$, initialized with a $k$-median approximation plus a uniformly random
sample (see the details below). Given $M_v$ and $i\in \{1,\ldots ,10\log(nk/\varepsilon)\}$, we construct 
$M_{[v,i]}$ for its child $[v,i]$ by sampling a set of points $S$ according to a distribution that depends on 
$M_v$ and $i$, then setting $M_{[v,i]} = S\cup M_v$. The sampled points are used as proxies for cores of 
the clusters of an optimal solution. Our goal is to show that there exists at least one leaf $M_v$, with 
$v\in [0,\{1,\ldots ,10\log(nk/\varepsilon)\}^{k-1}]$, such that the core of every cluster of an optimal 
solution is in some sense well approximated by $M_v$. Ideally for cluster $C_i$, we would like to show that 
there exists a point $p\in M_v$ such that $\|e_p-\mu_{C_i}\|_{{\cal W}_i}\in O(1)\cdot \Delta_{C_i}$, but this 
is too strong a property to enforce, hence the weaker guarantee in Definition~\ref{def:cover}. The following
pseudocode defines the algorithm.
\begin{algorithm}[H]
\begin{algorithmic}[1]
        \State Initialize tree with $[0]$ and root node  represented by points $M_{[0]} = \emptyset$
        \State Compute a $k$-median $O(1)$-approximation and add centers to $M_{[0]}$
        \State Sample $m\geq \eta\cdot (k/\varepsilon)\cdot \log(k/\delta)$ i.i.d. point u.a.r. on $A$ 
                   and add them to $M_{[0]}$
        \While{$\exists$ a node $v$ at level $< k$ with no children}
            \For{$i=1$ to $10\log(nk/\varepsilon)$} 
                \State Add all the points in $M_{v}$ to $M_{[v,i]}$
                \State Sample $\eta \cdot (k/\varepsilon)\cdot \log(k/\delta)$ points i.i.d. from $\mathbb{P}_{i,M_{v}}$ 
                           and add them to $M_{[v,i]}$
            \EndFor
        \EndWhile
    \end{algorithmic}
    \caption{Core Generation via Pruned Distance Sampling}
\end{algorithm}

We now elaborate:
\begin{description}
\item[Initialization:] We initialize the points in the root node via uniform sampling, as well as a 
    constant factor approximation to the $k$-median problem. The probability that we sample points 
    from a core of a cluster of size $n/k$ is, at this stage of the algorithm, large.
\item[Adaptive Sampling:] Consider a node of the tree labeled with a non-empty set of candidate 
    core proxies $M$. For any point $q$, denote $m_q = \arg\min_{m\in M} d(q,m)$.
    For $i\in \{0\}\cup\NN$, let $\Gamma_i = 2^{-i}\cdot\sum_{q\in A} d(q,m_q)$,
    and denote $A_{i,M} = \{q\in A\colon d(q,m_q)\le \Gamma_i\}$.  
    Let $\mathbb{P}_{i,M}$ denote the probability distribution on $A_{i,M}$ induced by setting, for 
    each $p\in A_{i,M}$ (assuming $A_{i,M}\ne\emptyset$), 
    $$
    \mathbb{P}_{i,M}[p]=\frac{d(p,m_p)}{\sum_{q\in A_{i,M}} d(q,m_q)}.
    $$ 
    We will write $\mathbb{P}_i$ if $M$ is clear from the context.
    For each $i\le 10\log(nk/\varepsilon)$, we sample a sufficient (polynomial in $k$ and $\varepsilon$, 
    but independent of $n$) number of points i.i.d. from $\mathbb{P}_{i}$. Let $S$ denote the set of sampled 
    points. 
\end{description}

We shall now build our way towards a proof of Theorem~\ref{thm:approxcenters}.

\paragraph{Approximate covers via uniform sampling.}
For the initialization of the root, namely $M_{[0]}$, we only have to analyze uniform sampling:
\begin{lemma}
\label{lem:uniform2}
Let $M$ be a set of $m$ points sampled independently and uniformly at random with replacement. Then if 
$m\geq 2k\cdot \ln(k/\delta)$, we have with probability at least $1-\frac{\delta}{k}$ that $M$ $(0,3)$-covers 
$C_1$.
\end{lemma}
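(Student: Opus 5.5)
We need to show: with probability at least $1-\delta/k$, some sampled point $p$ satisfies $|C_1|\sum_{x\in C_1} d(x,p)\le 3\cost(C_1)$. This is the $(0,3)$-cover condition of Definition~\ref{def:cover}, since with $\varepsilon=0$ the additive term vanishes.

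The plan is to show that a constant fraction of $C_1$ qualifies as such a $p$, and that uniform sampling hits this fraction. Since $C_1$ is the largest cluster, $|C_1|\ge n/k$.

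First, rewrite the covering condition via the Wasserstein center. For any $p\in A$, the triangle inequality in ${\cal W}_1$ gives
$$
\sum_{x\in C_1}\|e_x-e_p\|_{{\cal W}_1}\le \sum_{x\in C_1}\left(\|e_x-\mu_{C_1}\|_{{\cal W}_1}+\|e_p-\mu_{C_1}\|_{{\cal W}_1}\right)=|C_1|\Delta_{C_1}+|C_1|\cdot\|e_p-\mu_{C_1}\|_{{\cal W}_1}.
$$
Multiplying by $|C_1|$ and using $\cost(C_1)=|C_1|^2\Delta_{C_1}$ from Lemma~\ref{lm: center based}, we get
$$
|C_1|\sum_{x\in C_1}\|e_x-e_p\|_{{\cal W}_1}\le \cost(C_1)\left(1+\frac{\|e_p-\mu_{C_1}\|_{{\cal W}_1}}{\Delta_{C_1}}\right).
$$
Hence every $p\in\core_2(C_1)$ is a $(0,3)$-approximate center. (If $\Delta_{C_1}=0$, all points of $C_1$ coincide and any point of $C_1$ works.)

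Second, bound the size of this set. By the Markov remark after Lemma~\ref{lm: center based}, $|\core_2(C_1)|\ge |C_1|/2\ge n/(2k)$. So a single uniform sample lands in $\core_2(C_1)$ with probability at least $1/(2k)$.

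Third, the $m$ samples are independent, so the probability that none lands in the core is at most
$$
(1-1/(2k))^m\le e^{-m/(2k)}\le \delta/k,
$$
using $m\ge 2k\ln(k/\delta)$. This completes the argument.

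There is no real obstacle here. The only care needed is checking the constant: the triangle inequality through $\mu_{C_1}$ gives exactly $1+\alpha'$ with $\alpha'=2$, so the cover constant is $3$, and the Markov bound at $\alpha'=2$ gives the factor $1/2$ that matches the $2k$ in the sample size.
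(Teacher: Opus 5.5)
Your proof is correct and is essentially the paper's argument. Both use $|C_1|\ge n/k$ and the Markov bound on $\core_2(C_1)$ to get a per-sample hitting probability of at least $1/(2k)$, then bound the cover cost by $\cost(C_1)+2\cost(C_1)=3\cost(C_1)$ via the triangle inequality through $\mu_{C_1}$.
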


\begin{proof}
Recall that $C_1$ is a largest cluster, so $|C_1|\ge \frac{n}{k}$. Thus, the probability that a point is sampled 
from $C_1$ is at least $\frac 1 k$, and the probability that a point is sampled from the $C'=\core_{2}(C_1)$ 
is at least $\frac{1}{2k}$.

We have $M\cap C' \neq \emptyset$ with probability at least $1-(1-1/(2k))^m$, which by our choice of $m$ 
is at least $1-\delta/k$. By definition, for any point $p\in M\cap C'$ it holds that 
$\| \mu_{C_1} - e_p \|_{{\cal W}_1}\le 2\Delta_{C_1}$. Therefore, conditioned on $M\cap C'\ne\emptyset$,
there exists $p\in M\cap C'$, and $p$ satisfies
\begin{align*}
            |C_1|\cdot \sum_{x\in C_1} \| e_x - e_p\|_{{\cal W}_1} 
&\leq |C_1|\cdot \sum_{x\in C_1}\left(\|e_x - \mu_{C_1}\|_{{\cal W}_1} + \| \mu_{C_1} - e_p \|_{{\cal W}_1}\right) \\
&\leq  \Cost(C_1)   + |C_1|^2 \cdot 2 \Delta_{C_1} = 3 \Cost(C_1).\qedhere
\end{align*}
\end{proof}

\paragraph{Approximate covers via non-uniform sampling.}
In this section, we consider the subsequent nodes of the tree, where the points are sampled 
proportionate to distances. The following two lemmas prove that the probability of sampling 
points from $\core_{2}(C_j)$ of the largest uncovered cluster $C_j$ is large.
\begin{lemma}\label{lem:samplehit}
There exists a small absolute constant $\zeta > 0$ such that the following holds. Let $M$ be a 
set of points, and let $\varepsilon > 0$ be sufficiently small ($\varepsilon < 1$ definitely suffices). 
Suppose that there exists $j\in \{1,2,\dots, k\}$ such that $C_j$ is not $(\varepsilon,5)$-covered 
by $M$, and consider the smallest such $j$ (a largest uncovered cluster). Then, for every set 
$A_{i,M}$ containing $\core_{2}(C_j)$, we have
$$
\mathbb{P}_i\left[p\in \core_{2}(C_j)\colon p\in C_{< j}\cup C_j\right] \geq \frac{\varepsilon\cdot\zeta}{k},
$$
where $C_{< j}\triangleq \bigcup_{h < j} C_h$.
\end{lemma}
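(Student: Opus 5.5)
The plan is to read the left-hand side as a conditional probability under $\mathbb{P}_i$. The claim is then equivalent to
$$
\sum_{p\in\core_2(C_j)} d(p,m_p)\ \ge\ \frac{\varepsilon\zeta}{k}\sum_{q\in (C_{<j}\cup C_j)\cap A_{i,M}} d(q,m_q).
$$
The normalisation of $\mathbb{P}_i$ cancels, and the hypothesis $\core_2(C_j)\subseteq A_{i,M}$ ensures that every core point really carries mass $d(p,m_p)$. Restricting the right-hand side to $A_{i,M}$ only decreases it, so I will simply drop that restriction. I will lower bound the numerator using that $C_j$ is uncovered, and upper bound the denominator using that every $C_h$ with $h<j$ is covered and has $|C_h|\ge |C_j|$.

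\emph{Numerator.} Since $C_j$ is not $(\varepsilon,5)$-covered, every $m\in M$ satisfies
$$
|C_j|\sum_{x\in C_j} d(x,m)>\frac{\varepsilon}{20}\cdot\frac{\OPT}{k}+5\Cost(C_j).
$$
Fix $p\in\core_2(C_j)$ and apply this to $m=m_p$. By the triangle inequality, $\sum_x d(x,m_p)\le |C_j|\,d(p,m_p)+\sum_x d(x,p)$. Moreover $\sum_x d(x,p)=|C_j|\,\|e_p-\mu_{C_j}\|_{{\cal W}_1}\le 2|C_j|\Delta_{C_j}$, using the isometry facts of Section~\ref{sec:prelim}. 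Combining with Lemma~\ref{lm: center based} gives
$$
|C_j|^2 d(p,m_p)\ >\ \frac{\varepsilon\OPT}{20k}+3\Cost(C_j).
$$
By Markov, $|\core_2(C_j)|\ge |C_j|/2$, so summing over the core yields
$$
\sum_{p\in\core_2(C_j)} d(p,m_p)\ \ge\ \frac{1}{2|C_j|}\left(\frac{\varepsilon\OPT}{20k}+3\Cost(C_j)\right).
$$

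\emph{Denominator, clusters $C_h$ with $h<j$.} By minimality of $j$, each such $C_h$ has an $(\varepsilon,5)$-approximate center $p_h\in M$. Then
$$
\sum_{x\in C_h} d(x,m_x)\le\sum_{x\in C_h}d(x,p_h)\le \frac{1}{|C_h|}\left(\frac{\varepsilon\OPT}{20k}+5\Cost(C_h)\right)\le\frac{1}{|C_j|}\left(\frac{\varepsilon\OPT}{20k}+5\Cost(C_h)\right),
$$
where the last step uses the sorting $|C_h|\ge|C_j|$. Summing over $h<j$ bounds this part by $(\varepsilon/20+5)\OPT/|C_j|$. This step needs $\sum_h\Cost(C_h)=O(\OPT)$ for the target clustering. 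For an optimal clustering this is immediate; for the arbitrary clustering allowed in Theorem~\ref{thm:approxcenters}, I would interpret $\OPT$ in Definition~\ref{def:cover} as, or at least as bounded below by a constant times, the cost of the target clustering. This is harmless, since it is only applied to near-optimal clusterings.

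\emph{Denominator, cluster $C_j$.} For $x\in C_j$ and any core point $p$, $d(x,m_x)\le d(x,m_p)\le d(x,p)+d(p,m_p)$. Averaging over $p\in\core_2(C_j)$ and summing over $x$ gives
$$
\sum_{x\in C_j}d(x,m_x)\le 2\sum_{p\in\core_2(C_j)}d(p,m_p)+\frac{2\Cost(C_j)}{|C_j|},
$$
and the numerator bound shows that $\Cost(C_j)/|C_j|$ is at most a constant times the numerator. So the $C_j$ part of the denominator is at most a constant multiple of the numerator.

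\emph{Conclusion and main obstacle.} Combining, the denominator is at most $O(1)\cdot\text{numerator}+O(\OPT/|C_j|)$. The numerator is at least $\varepsilon\OPT/(40k|C_j|)$, so the ratio is at least $\varepsilon\zeta/k$ for a suitable absolute constant $\zeta$ and $\varepsilon<1$. The main obstacle is the $h<j$ term: the scales must match. This works only because the additive slack $\frac{\varepsilon}{20}\cdot\frac{\OPT}{k}$ in the cover definition is divided by $|C_h|\ge|C_j|$, which puts it on the same $1/|C_j|$ scale as the numerator. This is where the ordering by size and the choice of $j$ as the largest uncovered cluster are essential.
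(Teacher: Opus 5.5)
Your proof is correct, and it is more direct than the paper's.

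\textbf{The paper's route.} The paper argues by contradiction. Assuming the conditional probability is below $\beta=\varepsilon\zeta/k$, it first bounds the $C_j$ self-contribution by matching non-core points to core points. From non-coverage it extracts only the weak per-point bound $d(q,m)>2\Delta_{C_j}$, using just that $C_j$ is not $(0,5)$-covered, which gives \eqref{eq:delta-core}. It then rearranges to $|C_j|\sum_{p\in\core_2(C_j)}d(p,m_p)\le 12\beta\OPT$. Only at that point does it bring back the additive $\frac{\varepsilon}{20k}\OPT$ term: averaging over the core bounds the cost of the best candidate $m'$ (Equation~\eqref{eq:covering}), contradicting non-coverage.

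\textbf{Your route.} You keep the additive term in the per-point bound from the start, getting $|C_j|^2 d(p,m_p)>\frac{\varepsilon\OPT}{20k}+3\Cost(C_j)$ for every core point. This one inequality does two jobs. It dominates the $C_j$ self-term, and it sits on the same $\OPT/|C_j|$ scale as the covered clusters $C_h$, $h<j$. So you need neither the contradiction framing nor the $m'$ step, and you get an explicit constant (your bounds give roughly $\zeta=1/206$).

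\textbf{What each buys.} The paper's organisation produces \eqref{eq:delta-core} and \eqref{eq:covering}, which it reuses in the proof of Lemma~\ref{lem:sample}. Your per-point bound covers both of those uses as well. It gives $d(p,m_p)>3\Delta_{C_j}$ on the core. It also shows directly that the core cannot lie in $A_{i,M}$ once $\Gamma_i\le\frac{\varepsilon\OPT}{20k|C_j|^2}$.

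\textbf{On your caveat.} Your concern about needing $\sum_h\Cost(C_h)=O(\OPT)$ for an arbitrary target clustering applies equally to the paper's proof. The paper silently bounds $5\sum_{h<j}|C_h|^2\Delta_{C_h}$ by a multiple of $\OPT$. As you say, this is harmless: in the applications the target clustering has cost at most $(1+\frac{\varepsilon}{2k})\OPT$.
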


\begin{proof}
Throughout this proof, we use $\beta = \frac{\varepsilon\cdot\zeta}{k}$ to simplify the notation.
Consider any $i$ such that all points in $\core_{2}(C_j)$ are in the support of $\mathbb{P}_i$.
Assume for contradiction that $M\cap \core_{2}(C_j) = \emptyset$
(otherwise, by definition, already $M$ covers $C_j$) and 
$\mathbb{P}_i\left[p\in \core_{2}(C_j)\colon p\in C_{<j}\cup C_j\right] < \beta$. 
We have:
\begin{eqnarray}
\nonumber
& & \sum_{p\in \core_{2}(C_j)} d(p,m_p) \\
\nonumber
    & < & \beta\cdot \left(\sum_{p\in C_{<j}} d(p,m_p) + \sum_{p\in C_j} d(p,m_p)\right) \\
\nonumber
    &\leq & \beta  \cdot \left(\sum_{h<j}\left(\frac{\varepsilon}{20k\cdot |C_h|}\cdot \OPT + 5\cdot|C_h|\cdot \Delta_{C_h}\right) + 
                 \sum_{p\in C_j} d(p,m_p)\right) \\ 
\label{eq: core}
    &\leq & \beta  \cdot \left(\sum_{h<j}\left(\frac{\varepsilon}{20k\cdot |C_h|}\cdot \OPT + 5\cdot |C_h|\cdot \Delta_{C_h}\right) + 
                |C_j|\cdot \Delta_{C_j} + 2\sum_{p\in \core_{2}(C_j)} d(p,m_p)\right),
\end{eqnarray}
where the final inequality is explained as follows. Define, for every $p\in C_j$ a point $q_p\in \core_{2}(C_j)$
as follows. If $p\in \core_{2}(C_j)$, then $q_p = p$. Otherwise, for the points in $C_j\setminus \core_{2}(C_j)$,
the $q_p$-s form a matching. This is possible as $|\core_{2}(C_j)|\geq \frac{1}{2}\cdot |C_j|$. Now, by
the minimality of $m_p$ and the triangle inequality, $d(p,m_p) \leq d(p,m_{q_p})\leq d(p,q_p) + d(q_p,m_{q_p})$. 
Using the triangle inequality again, $d(p,q_p)\le \|e_p-\mu_{C_j}\|_{{\cal W}_1} + \|\mu_{C_j}-e_{q_p}\|_{{\cal W}_1}$.
We have
\begin{eqnarray*}
\sum_{p\in C_j} d(p,q_p) & = & \sum_{p\in C_j\setminus \core_{2}(C_j)} d(p,q_p) \\
& \le & \sum_{p\in C_j\setminus \core_{2}(C_j)} \left(\|e_p-\mu_{C_j}\|_{{\cal W}_1} + \|\mu_{C_j}-e_{q_p}\|_{{\cal W}_1}\right) \\
& \le & \sum_{p\in C_j} \|e_p-\mu_{C_j}\|_{{\cal W}_1} \\
& = & |C_j|\cdot \Delta_{C_j},
\end{eqnarray*}
and,
\begin{eqnarray*}
\sum_{p\in C_j} d(q_p,m_{q_p}) & = & 
\sum_{p\in\core_2(C_j)} d(p,m_p) + \sum_{p\in C_j\setminus\core_2(C_j)} d(q_p,m_{q_p}) \\
&\le& 2\cdot \sum_{p\in \core_{2}(C_j)} d(p,m_p).
\end{eqnarray*}

Now, given that $C_j$ is not $(\varepsilon,5)$-covered by $M$ (and therefore also not
$(0,5)$-covered by $M$), we have that for every
$q\in\core_2(C_j)$ and for every $m\in M$,
\begin{eqnarray*}
5\cdot\Delta_{C_j} & < & \frac{1}{|C_j|}\sum_{p\in C_j} d(p,m) \\
&\le& d(q,m) + \frac{1}{|C_j|}\sum_{p\in C_j} d(p,q) \\
&\le& d(q,m) + \frac{1}{|C_j|}\sum_{p\in C_j} \left(\|e_p-\mu_{C_j}\|_{{\cal W}_1} + \|e_q-\mu_{C_j}\|_{{\cal W}_1}\right) \\
&\le& d(q,m) + \Delta_{C_j} + 2\cdot\Delta_{C_j}.
\end{eqnarray*}
Therefore, $2\cdot\Delta_{C_j} < d(q,m)$. Hence,
\begin{equation}\label{eq:delta-core}
|C_j|\cdot \Delta_{C_j}\le 2\cdot |\core_2(C_j)|\cdot\Delta_{C_j} < \sum_{p\in \core_{2}(C_j)} d(p,m_p).
\end{equation}

Rearranging the terms in Equation~\eqref{eq: core}, we get
$$
    (1-3\beta)\cdot \sum_{p\in \core_{2}(C_j)} d(p,m_p) \le 
    \beta  \cdot \sum_{h<j}\left(\frac{\varepsilon}{20k\cdot |C_h|}\cdot \OPT + 5\cdot |C_h|\cdot \Delta_{C_h} \right).
$$
This implies, as $|C_j|\le |C_h|$ for all $h < j$ and $j\le k$,
\begin{eqnarray}
              |C_j|\cdot \sum_{p\in \core_{2}(C_j)} d(p,m_p) 
&\leq &  \frac{\beta}{1-3\beta}  \cdot 
              \sum_{h<j}\left(\frac{\varepsilon}{20k}\cdot \frac{|C_j|}{|C_h|}\cdot \OPT + 
              5\cdot |C_j|\cdot |C_h|\cdot \Delta_{C_h} \right)\nonumber \\
&\leq & \frac{\beta}{1-3\beta}  \cdot \frac{\varepsilon}{20}\cdot \OPT + 5\cdot \sum_{h<j} |C_h|^2 \cdot \Delta_{C_h} \nonumber\\
&\leq & \frac{\beta}{1-3\beta}  \cdot 6\cdot \OPT \nonumber \\
&\leq & 12\cdot\beta\cdot  \OPT. \label{eq:sum-dist-core}
\end{eqnarray}

Let $m' = \arg\min_{m\in M} \sum_{p\in C_j} d(p,m)$. Note that for every point $q$,
$$
\sum_{p\in C_j} d(p,m')\le \sum_{p\in C_j} d(p,m_q)\le \sum_{p\in C_j} (d(p,q) + d(q,m_q)).
$$
Therefore,  we have
\begin{eqnarray*}
|C_j|\cdot \sum_{p\in C_j} d(p,m')
   &\leq& |C_j|\cdot \sum_{p\in C_j} \frac{1}{|\core_{2}(C_j)|}\cdot \sum_{q\in \core_{2}(C_j)} (d(p,q) + d(q,m_q)) \\
   &\leq& 2\cdot \sum_{p\in C_j} \sum_{q\in \core_{2}(C_j)} d(p,q) + 2\cdot \sum_{p\in C_j} \sum_{q\in \core_{2}(C_j)} d(q,m_q) \\
   &\leq& 4\cdot \cost(C_j) + 2\cdot |C_j|\cdot \sum_{q\in \core_{2}(C_j)} d(q,m_q)  \\
   & <  & \frac{1}{5}\cdot |C_j|\cdot \sum_{p\in C_j} d(p,m') + 2\cdot |C_j|\cdot \sum_{q\in \core_{2}(C_j)} d(q,m_q),
\end{eqnarray*}
where the last inequality uses the fact that $M$ does not $(0,5)$-cover $C_j$. This implies that
\begin{equation}\label{eq:covering}
|C_j|\cdot \sum_{p\in C_j} d(p,m') < \frac{5}{2}\cdot |C_j|\cdot \sum_{q\in \core_{2}(C_j)} d(q,m_q),
\end{equation}
Therefore, using Equation~\eqref{eq:sum-dist-core} and taking $\zeta\le \frac{1}{600}$,
$$
|C_j|\cdot \sum_{p\in C_j} d(p,m') < \frac{5}{2}\cdot |C_j|\cdot \sum_{q\in \core_{2}(C_j)} d(q,m_q) 
\leq 30\cdot\beta\cdot  \OPT
\leq \frac{\varepsilon}{20k}\cdot \OPT,
$$
in contradiction with the assumption that $M$ does not $(\varepsilon,5)$-cover $C_j$.
\end{proof}

Our next step is to show that sampling from one of the distributions $\mathbb{P}_i$ unconditionally
picks points from the core of the largest uncovered cluster $C_j$ with substantial probability.
\begin{lemma}\label{lem:sample}
Using the same notation and assumptions of Lemma~\ref{lem:samplehit}, the following assertion
holds. Let $c\le n^9$ be a constant.
If $\sum_{p\in A} d(p,m_p) \leq c\cdot \OPT$, then there exists an $i\le 10\log(nk/\varepsilon)$ 
such that sampling a point $p$ from $\mathbb{P}_i$ satisfies
$$
\mathbb{P}_i\left[p\in \core_{2}(C_j)\right] \geq \frac{\varepsilon\cdot\zeta}{2 k}.
$$
\end{lemma}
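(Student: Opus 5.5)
We pick the threshold index $i$ so that $A_{i,M}$ contains $\core_2(C_j)$ while excluding points whose distance to $M$ is so large that they would dominate the normalizing mass. We then combine the conditional bound of Lemma~\ref{lem:samplehit} with a bound on the mass that $\mathbb{P}_i$ places on points of $C_{>j}\triangleq\bigcup_{h>j}C_h$.

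\textbf{Choice of $i$.} Let $D=\max_{p\in\core_2(C_j)} d(p,m_p)$ and take $i$ largest such that $\Gamma_i\ge D$, i.e.\ $\Gamma_{i+1}<D\le\Gamma_i$. Then $\core_2(C_j)\subseteq A_{i,M}$ and every point of $A_{i,M}$ has $d(q,m_q)\le \Gamma_i< 2D$. We must check that $i\le 10\log(nk/\varepsilon)$. Since $C_j$ is uncovered, equation~\eqref{eq:delta-core} and the arguments in the proof of Lemma~\ref{lem:samplehit} give $|C_j|\sum_{p\in\core_2(C_j)}d(p,m_p)\gtrsim \frac{\varepsilon}{k}\OPT$. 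Hence $D\ge \frac{\varepsilon\,\OPT}{\poly(k)\,n^2}$. Combined with $\Gamma_0=\sum_p d(p,m_p)\le c\,\OPT\le n^9\OPT$, this gives $\Gamma_0/D\le (nk/\varepsilon)^{O(1)}$, so $i=O(\log(nk/\varepsilon))$ with the constant fitting under $10$.

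\textbf{Controlling $C_{>j}$.} By Lemma~\ref{lem:samplehit}, the core carries at least a $\beta=\varepsilon\zeta/k$ fraction of the mass of $(C_{<j}\cup C_j)\cap A_{i,M}$. It therefore suffices to show that the mass of $C_{>j}\cap A_{i,M}$ is at most comparable to the mass of $\core_2(C_j)$ divided by roughly $\beta$. Concretely, we aim for
\[
\sum_{q\in C_{>j}\cap A_{i,M}} d(q,m_q)\ \le\ \beta^{-1}\sum_{p\in\core_2(C_j)}d(p,m_p).
\]
Then the total mass is at most $2\beta^{-1}$ times the core mass, which yields probability at least $\beta/2$.

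To prove the displayed bound, we use $d(q,m_q)<2D$ and $|C_{>j}|\le k|C_j|$, since clusters are sorted by size. This gives a left side of at most $2Dk|C_j|$. On the right, we need the core mass to be $\Omega(D|C_j|)$. This is the main obstacle, because a single far core point could make $D$ large while the core mass stays small.

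To handle this, we redefine $D$ as a robust threshold rather than the maximum. Every core point $q$ satisfies $d(q,m)\ge 2\Delta_{C_j}$ for all $m\in M$, as shown in the proof of Lemma~\ref{lem:samplehit}. Any two core points are within $4\Delta_{C_j}$ of each other, so by the triangle inequality $d(p,m_p)\le d(q,m_q)+4\Delta_{C_j}\le 3d(q,m_q)$ for all $p,q\in\core_2(C_j)$. So all core distances to $M$ are within a factor $3$ of each other. Hence the core mass is at least $|\core_2(C_j)|\cdot D/3\ge |C_j|D/6$. This gives the displayed bound with constants absorbed into $\zeta$, and the final probability is at least $\varepsilon\zeta/(2k)$ after adjusting $\zeta$.
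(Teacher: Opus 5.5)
Your proof is correct and is essentially the paper's argument. You use the same $i$ (the largest with $\core_2(C_j)\subseteq A_{i,M}$, so $\Gamma_i<2D$) and bound it via Equation~\eqref{eq:covering}. Your factor-$3$ comparison of core distances plays the role of Equation~\eqref{eq:sample}, showing that the clusters after $C_j$ carry at most $12k$ times the core mass in $A_{i,M}$, and you combine this with Lemma~\ref{lem:samplehit}. Adding masses directly, instead of the paper's total-probability step, gives exactly $\varepsilon\zeta/(2k)$, with no need to adjust $\zeta$ since $12k\le k/(\varepsilon\zeta)$ already holds for $\zeta\le 1/600$.

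One caveat, shared with the paper's own computation: your bound is $2^i\le\Gamma_0/D<50kcn^2/\varepsilon$, and this stays below $(nk/\varepsilon)^{10}$ only for $c$ up to about $n^8$. So ``the constant fits under $10$'' is not justified at the top of the stated range $c\le n^9$, though this does not matter in the application, where $c=O(1)$.
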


\begin{proof}
Let $i$ be the largest integer such that $\core_{2}(C_j)\subset A_{i,M}$. Note that if we assume for
contradiction that $i>10\log(nk/\varepsilon)$, 
then Equation~\eqref{eq:covering} (which is derived using only the assumptions of Lemma~\ref{lem:samplehit})
gives
\begin{eqnarray*}
          \sum_{p\in C_j} d(p,m') 
& < & \frac{5}{2}\cdot |C_j|\cdot \sum_{q\in \core_{2}(C_j)} d(q,m_q) \\
& < & \frac{5}{2}\cdot |C_j|\cdot \left(\frac{\varepsilon}{nk}\right)^{10} \cdot \sum_{q\in A} d(q,m_q) \\
&\le& \frac{5}{2}\cdot n\cdot\left(\frac{\varepsilon}{nk}\right)^{10} \cdot n^2 \cdot\OPT \\
&\le& \frac{\varepsilon}{20 k} \cdot \OPT,
\end{eqnarray*}
implying that $C_j$ is $(\varepsilon,5)$-covered by $M$, a contradiction.

Let $p^* = \arg\max_{p\in\core_{2}(C_j)} d(p,m_p)$. Then we must have $d(p^*,m_{p^*}) > \frac 1 2\Gamma_i$,
otherwise $\core_2(C_j)\subset A_{(i+1),M}$, in contradiction to the maximality of $i$. Thus, for any point $q\in A_{i,M}$, 
we have 
$d(q,m_q) \leq \Gamma_i < 2\cdot d(p^*,m_{p^*})$. Therefore, using $|C_j|\leq 2|\core_{2}(C_j)|$,
\begin{eqnarray}
            \sum_{q\in C_{j}\cap A_{i,M}} d(q,m_q) 
&\leq& 4\cdot |\core_{2}(C_j)|\cdot d(p^*,m_{p^*}) \nonumber \\
&\leq& 4\cdot \sum_{q\in \core_{2}(C_j)} (d(p^*,q) + d(q,m_q)) \nonumber \\
&\leq& 4\cdot \sum_{q\in \core_{2}(C_j)} (2\Delta_{C_j} + d(q,m_q)) \nonumber \\
&  =  & 8\cdot |\core_2(C_j)|\cdot\Delta_{C_j} + 4\cdot \sum_{q\in \core_{2}(C_j)} d(q,m_q) \nonumber \\
&\leq& 8\cdot \sum_{q\in \core_{2}(C_j)} d(q,m_q), \label{eq:sample}
\end{eqnarray}
where the last inequality uses Equation~\eqref{eq:delta-core}.

Define $C_{\le j} = \bigcup_{h\leq j} C_j$ and $C_{\ge j} = \bigcup_{h\geq j}C_j$. By Lemma~\ref{lem:samplehit},
\begin{equation}
\label{eq:samplehit2}
    \mathbb{P}_i[p\in \core_2(C_j)\colon p\in C_{\le j} \cap A_{i,M}] \geq \frac{\varepsilon}{600 k},
\end{equation}
and Equation~\eqref{eq:sample} implies
\begin{equation}
\label{eq:samplehit3}
    \mathbb{P}_i[p\in \core_2(C_j)\colon p\in C_{\ge j} \cap A_{i,M}] \geq \frac{1}{8 k}.
\end{equation}
As 
$$
\max\left\{\mathbb{P}_i[p\in C_{\le j} \cap A_{i,M}],\mathbb{P}_i[p\in C_{\ge j} \cap A_{i,M}]\right\} \geq \frac{1}{2},
$$
we have by the law of total probability
\begin{eqnarray*}
           \mathbb{P}_i[p\in \core_2(C_j)] 
&\ge& \max_{C\in \{C_{\le j},C_{\ge j}\}} \mathbb{P}_i[p\in \core_2(C_j)\colon C \cap A_{i,M}]\cdot  \mathbb{P}_i[p\in C \cap A_{i,M}]\\
&\ge& \frac{1}{2}\cdot \min_{C\in \{C_{\le j},C_{\ge j}\}} \mathbb{P}_i[p\in \core_2(C_j)\colon C \cap A_{i,M}] \\
&\ge& \frac{\varepsilon}{1200 k},
\end{eqnarray*}
where the last inequality follows from Equations~\eqref{eq:samplehit2} and~\eqref{eq:samplehit3}.
\end{proof}

We are now ready to prove our Theorem~\ref{thm:approxcenters}.
\begin{proof}[Proof of Theorem~\ref{thm:approxcenters}]
Consider the rooted tree constructed by the algorithm, sampling $1200\cdot (k/\varepsilon)\cdot \ln(k/\delta)$
points at each node. By Lemma~\ref{lem:uniform2}, $C_1$ is $(\varepsilon,5)$-covered by $M_{[0]}$ with probability
at least $1-\frac{\delta}{k}$. We will show by induction on $j\le k$ that with probability at least $1 - \frac{j\delta}{k}$ there
exists a level $j-1$ node $v$ for which $C_1,C_2,\dots,C_j$ are $(\varepsilon,5)$-covered by $M_v$. The base of
the induction is the above implication of Lemma~\ref{lem:uniform2}. Now, assume the inductive hypothesis. With
probability at least $1 - \frac{(j-1)\delta}{k}$ there exists a node $u$ in level $j-2$ for which $C_1,C_2,\dots,C_{j-1}$ 
are $(\varepsilon,5)$-covered by $M_u$. Let's condition on this event happening. If $C_j$ is $(\varepsilon,5)$-covered 
by $M_u$, then we are done as every child $v$ of $u$ satisfies $M_u\subset M_v$. Otherwise, $j$ is the smallest
index of a cluster not $(\varepsilon,5)$-covered by $M_u$. Moreover, as $M_{[0]}\subset M_u$ and $M_{[0]}$ contains
an approximate $k$-median solution, the condition $\sum_{p\in A} \min_{m\in M_u} d(p,m) \leq O(1)\cdot \OPT$ holds.
Thus, by Lemma~\ref{lem:sample}, there exists $i\le 10\log(nk/\varepsilon)$ such that the probability that $C_j$ 
is not $(\varepsilon,5)$-covered by $M_{[u,i]}$ is at most
$\left(1 - \frac{\varepsilon}{1200 k}\right)^{1200\cdot(k/\varepsilon)\cdot\ln(k/\delta)}\le \frac{\delta}{k}$.
If this bad event does not happen, we can take $v = [u,i]$. By the union bound, the probability that $u$ does not exist 
or that $u$ exists but $v$ does not exist is at most $\frac{(j-1)\delta}{k} + \frac{\delta}{k} = \frac{j\delta}{k}$.

For $j=k$ we get that with probability at least $1-\delta$ there exists a leaf $v$ of the tree such that all the clusters
in $\mathcal{C}$ are $(\varepsilon,5)$-covered by $M_v$. We also have 
$|M_v| = k + 1200\cdot (k^2/\varepsilon)\cdot\ln(k/\delta)$ (the first term is on account of the $k$-median 
approximation). The degree of each internal node of the tree is $10\log(nk/\varepsilon)$, and its depth 
is $k-1$, so the number of leaves (which is the number of sets we construct) is $\left(10\log(nk/\varepsilon)\right)^{k-1}$, 
and the total number of nodes (affecting the running time) is less than $\left(10\log(nk/\varepsilon)\right)^{k}$. If 
$k = o(\log(n/\varepsilon)/\log\log(n/\varepsilon))$, this is $\left(n/\varepsilon\right)^{o(1)}$. Otherwise, this is 
$2^{O(k\log k)}$.

It remains to bound the running time. A suitable approximation of $k$-median can be found in time $O(2^{O(k)}\cdot n)$ 
(for example, by invoking the bicriteria approximation in~\cite{Ind99a} and then enumerating over the choice of $k$
out of $O(k)$ centers). Computing the random sampling distribution is trivial at the root, and takes
$O(n\cdot |M_u|)$ operations at any other node $v = [u,i]$ (this also dominates generating the sample and
$M_v$ after computing the sampling distribution). We've already established that
$|M_u| = O((k^2/\varepsilon)\cdot\ln(k/\delta))$ and that the size of the tree is at most
$\max\left\{\left(n/\varepsilon\right)^{o(1)},2^{O(k\log k)}\right\}$. Therefore, the total number of operations
needed to construct the entire tree is at most 
$$
O(2^{O(k)}\cdot n) + \left(n/\varepsilon\right)^{1+o(1)}\cdot k^2\cdot \ln(k/\delta) + 
2^{O(k\log k)}\cdot (k^2/\varepsilon)\cdot \ln(k/\delta). 
$$
This is at most
$$
\left(n^{1+o(1)}+2^{O(k\log k)}\right)\cdot ((k/\varepsilon)\cdot\log(1/\delta))^{O(1)},
$$
as claimed.
\end{proof}

\subsection{Assignment of Far Points}\label{sec:farassignment}

Here we assume that we are given, for every cluster $C\in\mathcal{C}$, 
an $(\varepsilon,5)$-approximate center $\widehat{\mu_{C}}$ (see Definition~\ref{def:cover}),  
$\widehat{\Delta_{C}}$ that $\gamma$-approximates $\Delta_{C}$ (see Definition~\ref{def:size}),
and $\widehat{|C|}$ that $\gamma$-approximates $|C|$ (see Definition~\ref{def:card}),
for $\gamma = (\varepsilon/k)^{O(1)}$. For $i\in\{1,2,\dots,k\}$, put 
\begin{equation}\label{eq:Delta hat hat}
\widehat{\widehat{\Delta_{C_i}}} = 
\frac{\varepsilon\cdot(1+\gamma)^2\cdot\widehat{\OPT}}{20k\cdot \left(\widehat{|C_i|}\right)^2}
+ 5\cdot\widehat{\Delta_{C_i}}.
\end{equation}
Thus, by Definition~\ref{def:far} a point $p\in A$ is $\beta^{-1}$-far for $C_i$ iff 
$\|e_p - \widehat{\mu_{C_i}}\|_{\mathcal{W}_1}\ge \beta^{-1}\cdot\widehat{\widehat{\Delta_{C_i}}}$.

Also, let $z$ be the cost of the $c$-approximate $k$-median solution computed in the previous step.
Then, $\OPT\in [z/c,2nz]$, so we can enumerate over powers of $(1+\gamma)$ in this interval to get a
$\gamma$-approximation $\widehat{\OPT}$ of $\OPT$.

Let $0 < \beta\ll 1$ be a parameter to be determined. For any cluster $C$, let $C^{\Close}$ denote 
the set of points that are $\beta^{-2}$-close for $C$, let $C^{\Far}$ denote the set 
of points that are $\beta^{-7}$-far for $C$, let $C^{\Mid} = A\setminus (C^{\Close}\cup C^{\Far})$, 
let $C^{\neg\Far} = C^{\Close}\cup C^{\Mid}$, and let $C^{\neg\Close} = C^{\Mid}\cup C^{\Far}$.
For a cluster $j$, let 
$$
C_{\mapsfrom j} = \left\{p\in C_j\cap C_j^{\Far}\colon
  \exists i,\ \widehat{|C_i|}\ge \beta^4\cdot\widehat{|C_j|}\wedge p\in C_i^{\Close}\right\}.
$$
Let $g: \bigcup_j C_{\mapsfrom j} \rightarrow \{1,2,\dots,k\}$ be defined by choosing, for 
each source point $p$, an arbitrary ``blocking" target cluster $i$ (any one for which
$\widehat{|C_j|}\le\beta^{-4}\cdot\widehat{|C_i|}$ and $p\in C_i^{\Close}$). For all 
$i$, let $C_{i\mapsfrom}$ denote the set of points assigned by $g$ to $i$.
Put $D_i = (C_i\setminus C_{\mapsfrom i}) \cup C_{i\mapsfrom}$, and let
$\mathcal{D} = \{D_1,D_2,\dots,D_k\}$ be this clustering.\footnote{Note that the sets
of the form $C_i^{\Close}$, $C_i^{\Mid}$, $C_i^{\Far}$ are known to the
algorithm, but the sets of the form $C_{\mapsfrom j}$, $C_{i\mapsfrom}$, $D_i$,
and the function $g$, are not known to the algorithm. They will be used only in the 
analysis.}

Let $R = \lfloor 5\cdot \log_{1+\gamma} \frac 1 \beta \rfloor$.
For every point $p\in A$, let 
$\displaystyle j_p = \arg\min_{j\colon p\in C_j^{\Far}} \|e_p - \widehat{\mu_{C_j}}\|_{\mathcal{W}_1}$
(so if no such $j$ exists, then $j_p = 0$). Let $L = \{0,1,2,\dots,k,\Close,\Mid_1,\dots,\Mid_R,\Far\}$.
Consider for every point $p\in A$ the function $h_p:\{1,2,\dots,k\}\rightarrow L$ defined as follows.
$$
h_p(i) = \begin{cases}
             j_p       & \hbox{if }p\in C_i^{\Close}\wedge \left(j_p = 0\vee
                             \|e_p - \widehat{\mu_{C_{j_p}}}\|_{\mathcal{W}_1} < \beta\cdot \widehat{\widehat{\Delta_{C_i}}}\right)\\
             \Close & \hbox{if }p\in C_i^{\Close}\wedge j_p > 0\wedge
                            \|e_p - \widehat{\mu_{C_{j_p}}}\|_{\mathcal{W}_1}\ge \beta\cdot \widehat{\widehat{\Delta_{C_i}}}\\
             \Mid_s & \hbox{if }p\in C_i^{\Mid}\wedge s = \arg\min\{j\colon p\hbox{ is } 
                               \beta^{-2}(1+\gamma)^j\hbox{-close for }C_i\} \\
             \Far     & \hbox{if }p\in C_i^{\Far}.
             \end{cases}
$$
For every $h$ and $L'\subset L$, let 
$h^{-1}(L') = \{i\colon h(i)\in L'\}$, and let $B^h = \{p\in A\colon h_p = h\}$.\footnote{Clearly, the functions
$h_p$ and the sets of the form $B^h$ are known to the algorithm.}
We shall need $\gamma^2$-approximate cardinalities $\widehat{|B^h\cap D_i|}$ for all
$i$ and $h$. This requires enumeration over $\left(\gamma^{-2}\cdot\log n\right)^{k (R+k+1)^k}$ options.

To compute the desired assignment of far points, we set, for every point $p\in A$ and for every
$j\in\{1,2,\dots,k\}$,
$$
d_p(j) = \begin{cases}
             \infty & \hbox{if }p\in C_i^{\Close}\cap C_j^{\Far}\hbox{ for }i\hbox{ satisfying }
                         \widehat{|C_i|}\ge\beta^4\cdot\widehat{|C_j|}, \\
             \widehat{|C_j|}\cdot \|e_p-\widehat{\mu_{C_j}}\|_{\mathcal{W}_1} &
                                     \hbox{if otherwise }p\in C_j^{\Far},\\
             0 & \hbox{if }p\in C_j^{\neg\Far}.
             \end{cases}
$$
We next compute an optimal assignment of the points in $A$ to clusters, with assignment costs 
$\{d_p\}_{p\in A}$, and subject to the constraints that for every $h\in L^k$
and for every $i\in\{1,2,\dots,k\}$, we assign to $i$ at most $\widehat{|B^h\cap D_i|}$ points 
$p\in B^h$. Notice that this assignment can be computed by solving (integrally) a network 
flow problem. We will discuss later how to solve this flow problem efficiently using bucketing. 
Let $f:A\rightarrow\{1,2,\dots,k\}$ denote the computed assignment, and for all $i$ let
$F_i = f^{-1}(i)$ denote the set of points assigned to $i$. To complete the description 
of the algorithm in this section, for all $j$ let $P_j = C_j^{\Far}\cap F_j$.
We return the partial clustering $\mathcal{P} = \{P_1,P_2,\dots,P_k\}$.\footnote{It is crucial
that the assignment costs are defined using estimates for $\mathcal{C}$, whereas the
cardinality constraints are defined using estimates for $\mathcal{D}$. The $\infty$
cost factors in the definition of $\mathcal{D}$, and we want to avoid defining $\mathcal{D}$
using its own estimates.}

The analysis of $\mathcal{P}$ proceeds as follows. Let $\mathcal{F} = \{F_1,F_2,\dots,F_j\}$
denote the clustering induced by $f$. Let $\mathcal{D}'$ denote the partial assignment
derived from $\mathcal{D}$ by removing all of points in $\bigcup_i \left(C_i^{\Far}\cap (D_i\cup F_i)\right)$. 
Let $\mathcal{P}'$ denote the partial clustering that extends $\mathcal{D}'$ by assigning the 
points in $\bigcup_i \left(C_i^{\Far}\cap F_i\right)$ according to $f$ (so all the points added to 
$D'_i$ are in $C_i^{\Far}$). Finally, let $\mathcal{C}'$ denote the clustering that extends $\mathcal{P}'$ 
by choosing for every unassigned point $p\in \bigcup_i (C_i^{\Far}\cap D_i)$ a cluster 
$i\in h_p^{-1}(L')$, for $L' = L\setminus\{\Far\}$, without violating the capacity constraints 
$\widehat{|B^h\cap D_i|}$ for all $i,h$. Note that such an assignment must exist, as for every $h$, 
the set of points $B^h\setminus \bigcup_i \left(C_i^{\Far}\cap F_i\right)$ were assigned by $f$ to 
clusters indexed by $h_p^{-1}(L')$, so their total number is at most
$\sum_{i\in h_p^{-1}(L')} \widehat{|B^h\cap D_i|}$. The only points assigned in $\mathcal{P}'$
to clusters indexed by $h_p^{-1}(L')$ are from this set, so there's room for the rest.\footnote{Note 
that the assignment according to $f$ of these points is used only to reserve sufficient capacity.
Furthermore, as these assignment costs are all $0$, any such extension of the assignment of
$\bigcup_i \left(C_i^{\Far}\cap F_i\right)$ is a minimum cost assignment.} To summarize, the
points in $\bigcup_i \left(C_i^{\Far}\cap F_i\right)$ are assigned according to $\mathcal{F}$. The 
points in $\bigcup_i \left(C_i^{\neg\Far}\cap D_i\right)\setminus \bigcup_i \left(C_i^{\Far}\cap F_i\right)$
are assigned according to $\mathcal{D}$. The points in the remaining set of points
$\bigcup_i \left(C_i^{\Far}\cap D_i\right)\setminus \bigcup_i \left(C_i^{\Far}\cap F_i\right)$ are
assigned arbitrarily (but overall subject to the capacity constraints) to clusters for which they
are not $\beta^{-7}$-far. The output partial clustering $\mathcal{P}$ reflects a commitment to
the assignment of the first set of points, and a commitment to assign the rest of the points only
to clusters for which they are not $\beta^{-7}$-far. Denote the assignment used in the last step 
of the construction by 
$l:\left(\bigcup_i (C_i^{\Far}\cap D_i)\right)\setminus\left(\bigcup_i (C_i^{\Far}\cap F_i)\right)
\rightarrow\{1,2,\dots,k\}$.\footnote{Note that $l$ maps points to clusters for which they are 
not $\beta^{-7}$-far.} Note that we only compute $\mathcal{P}$, not its extension $\mathcal{C}'$
to a full clustering (as it depends on the optimal clustering $\mathcal{C}$). We use
$\mathcal{C}'$ in the analysis of $\mathcal{P}$, to argue that the partial clustering can be
extended to a near-optimal clustering.

We will show in Lemma~\ref{lem:smallfartoclose} that $\mathcal{D}$ is a near-optimal 
clustering. Then we will show in Lemma~\ref{lm:cost C'} that the cost of the
clustering $\mathcal{C}'$ is close to the cost of $\mathcal{D}$. 
We will make use of the following claim.
\begin{claim}\label{claim:markov}
For any cluster  $C_i$, there exist at most $\beta\cdot |C_i|$ many $\beta^{-1}$-far points. 
\end{claim}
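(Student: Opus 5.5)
The plan is a Markov-type argument run in the $\mathcal{W}_1$ geometry, comparing every point's distance to the approximate center $\widehat{\mu_{C_i}}$ against the far-threshold $\beta^{-1}\cdot\widehat{\widehat{\Delta_{C_i}}}$. Under Hypothesis~\ref{hyp:approx}, the point $\widehat{\mu_{C_i}}$ is an $(\varepsilon,5)$-approximate center of $C_i$. By Definition~\ref{def:cover} this gives
$$
\sum_{x\in C_i}\|e_x-\widehat{\mu_{C_i}}\|_{\mathcal{W}_1}\le \frac{\varepsilon}{20}\cdot\frac{\OPT}{k\cdot|C_i|}+5\cdot|C_i|\cdot\Delta_{C_i}
= |C_i|\cdot\left(\frac{\varepsilon\cdot\OPT}{20k\cdot|C_i|^2}+5\Delta_{C_i}\right).
$$
So the average distance of the points of $C_i$ to $\widehat{\mu_{C_i}}$ is at most $\frac{\varepsilon\OPT}{20k|C_i|^2}+5\Delta_{C_i}$.

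The second step is to show this average is at most $\widehat{\widehat{\Delta_{C_i}}}$ as defined in Equation~\eqref{eq:Delta hat hat}, which I will do term by term. For the first term, I use $\widehat{|C_i|}\le(1+\gamma)|C_i|$ and $\widehat{\OPT}\ge\OPT$. These give $\frac{(1+\gamma)^2\widehat{\OPT}}{(\widehat{|C_i|})^2}\ge\frac{\OPT}{|C_i|^2}$, which is exactly why the factor $(1+\gamma)^2$ appears in the definition. For the second term, either disjunct of Definition~\ref{def:size} gives $\widehat{\Delta_{C_i}}\ge\Delta_{C_i}$. In the first disjunct this is immediate. In the second, $\Delta_{C_i}\le\gamma\frac{\OPT}{k|C_i|^2}\le\widehat{\Delta_{C_i}}$.

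Finally, apply Markov's inequality to the nonnegative quantities $\|e_x-\widehat{\mu_{C_i}}\|_{\mathcal{W}_1}$ for $x\in C_i$. Their average is at most $\widehat{\widehat{\Delta_{C_i}}}$, so the number of $x\in C_i$ with $\|e_x-\widehat{\mu_{C_i}}\|_{\mathcal{W}_1}\ge\beta^{-1}\widehat{\widehat{\Delta_{C_i}}}$ is at most $\beta|C_i|$. By Definition~\ref{def:far}, these are exactly the $\beta^{-1}$-far points in $C_i$, which proves the claim.

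The claim should be read as bounding the far points belonging to $C_i$. Points outside $C_i$ can be far for $C_i$ in unlimited number, so the count must be restricted to $C_i$. The only care needed is the bookkeeping of the $(1+\gamma)$ factors and the two disjuncts of Definition~\ref{def:size}. I expect no real obstacle; the constants $20$ and $5$ in $\widehat{\widehat{\Delta}}$ match $\frac{\varepsilon}{4\alpha}$ and $\alpha$ with $\alpha=5$.
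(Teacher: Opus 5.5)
Your proposal is correct and follows the paper's own proof. The paper also bounds the average $\frac{1}{|C_i|}\sum_{x\in C_i}\|e_x-\widehat{\mu_{C_i}}\|_{\mathcal{W}_1}$ using the $(\varepsilon,5)$-approximate center condition, then bounds this by $\widehat{\widehat{\Delta_{C_i}}}$ via the approximations of $|C_i|$, $\Delta_{C_i}$ and $\OPT$, and finishes with Markov's inequality. Your separate treatment of the two disjuncts of Definition~\ref{def:size}, and your reading of the count as restricted to points of $C_i$ (as it is used in Lemma~\ref{lem:smallfartoclose}), make explicit details the paper leaves unstated.
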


\begin{proof}
We have
$$
\frac{1}{|C_i|}\cdot\sum_{x\in C_i} \|e_x-\widehat{\mu_{C_i}}\|_{\cal W_1} \le 
\frac{\varepsilon}{20}\cdot\frac{\OPT}{k\cdot |C_i|^2}+5\cdot\frac{\Cost(C_i)}{|C_i|^2}\le
\frac{\varepsilon\cdot(1+\gamma)^2}{20}\cdot
\frac{\widehat{\OPT}}{k\cdot \left|\widehat{C_i}\right|^2}+5\cdot\widehat{\Delta_{C_i}},
$$
where the inequalities follow from Definitions~\ref{def:cover} through~\ref{def:size}.
Using Definition~\ref{def:far} and Markov's Inequality, the claim follows.
\end{proof}

We first prove that the clustering $\mathcal{D}$ is near-optimal.
\begin{lemma}\label{lem:smallfartoclose}
For every $i=1,2,\dots,k$ we have that
\begin{enumerate}
\item $\displaystyle \Cost(D_i) < \Cost(C_i) + 36\cdot (k-1)\cdot\beta\cdot \OPT$; and,
\item $\displaystyle \Cost(C_i^{\neg\Far}\cap D_i) <  \Cost(C_i^{\neg\Far}\cap C_i) + 36\cdot (k-1)\cdot\beta\cdot \OPT$.
\end{enumerate}
\end{lemma}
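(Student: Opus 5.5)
We write $D_i = (C_i\setminus C_{\mapsfrom i})\cup C_{i\mapsfrom}$ and split $\Cost(D_i)$ into three pieces:
\[
\Cost(D_i)\le \Cost(C_i\setminus C_{\mapsfrom i}) + 2\sum_{p\in C_{i\mapsfrom}}\sum_{x\in C_i\setminus C_{\mapsfrom i}} d(p,x) + \Cost(C_{i\mapsfrom}).
\]
The first term is at most $\Cost(C_i)$, since removing points only decreases the cost. This monotonicity also gives item 2 from the same bound restricted to $C_i^{\neg\Far}$: intersecting both $D_i$ and $C_i$ with $C_i^{\neg\Far}$ keeps the first term below $\Cost(C_i^{\neg\Far}\cap C_i)$, and the added points are a subset of $C_{i\mapsfrom}$. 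So it suffices to bound the cross term and the internal term of the incoming points $C_{i\mapsfrom}$ by $36(k-1)\beta\,\OPT$ in total.

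The plan is to charge these incoming points to their source clusters. For each $j\ne i$, let $S_{j}=C_{\mapsfrom j}\cap g^{-1}(i)$. Every point $p\in S_j$ is $\beta^{-2}$-close for $C_i$, so
\[
\|e_p-\widehat{\mu_{C_i}}\|_{\mathcal W_1}\le \beta^{-2}\widehat{\widehat{\Delta_{C_i}}} .
\]
By Claim~\ref{cl:approx dist} and Hypothesis~\ref{hyp:approx}, $\widehat{\mu_{C_i}}$ is within $O(\widehat{\widehat{\Delta_{C_i}}})$ of $\mu_{C_i}$. Hence
\[
\sum_{x\in C_i} d(p,x)=|C_i|\,\|e_p-\mu_{C_i}\|_{\mathcal W_1}\le |C_i|\,(\beta^{-2}+O(1))\,\widehat{\widehat{\Delta_{C_i}}} .
\]
By Claim~\ref{claim:markov} applied with parameter $\beta^{7}$, we have $|S_j|\le \beta^7|C_j|$. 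The blocking condition gives $|C_j|\le(1+\gamma)\beta^{-4}|C_i|$. Therefore $|S_j|\le 2\beta^3|C_i|$, and the cross term contributed by $S_j$ is at most $O(\beta)\,|C_i|^2\,\widehat{\widehat{\Delta_{C_i}}}$. Unpacking Equation~\eqref{eq:Delta hat hat} with the $\gamma$-approximations gives
\[
|C_i|^2\,\widehat{\widehat{\Delta_{C_i}}}\le \frac{\varepsilon}{10k}\,\OPT+6\Cost(C_i)+6\gamma\frac{\OPT}{k}=O(\OPT).
\]
Thus each $j$ contributes $O(\beta)\,\OPT$, and summing over the at most $k-1$ values $j\ne i$ gives the $(k-1)$ factor. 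For the internal term, any two points of $C_{i\mapsfrom}$ are each within $\beta^{-2}\widehat{\widehat{\Delta_{C_i}}}$ of $\widehat{\mu_{C_i}}$. So
\[
\Cost(C_{i\mapsfrom})\le |C_{i\mapsfrom}|^2\cdot 2\beta^{-2}\,\widehat{\widehat{\Delta_{C_i}}}\le \bigl(2(k-1)\beta^3|C_i|\bigr)^2\,2\beta^{-2}\,\widehat{\widehat{\Delta_{C_i}}},
\]
which is $O(k^2\beta^4)\,\OPT$. This is dominated by the cross term once $\beta\ll 1/k$, and otherwise we can absorb it more carefully by charging each pair to the $j$ with the larger $S_j$.

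The main obstacle is tracking the constants to land exactly on $36$. The factor $(1+\gamma)$ from the size estimates, the slack between $\widehat{\mu_{C_i}}$ and $\mu_{C_i}$, and the additive $\frac{\varepsilon}{20k}\OPT$ term all have to be combined carefully. I would bound $\|e_p-\mu_{C_i}\|_{\mathcal W_1}\le 2\beta^{-2}\widehat{\widehat{\Delta_{C_i}}}$ for small $\beta$ and $|C_i|^2\widehat{\widehat{\Delta_{C_i}}}\le 6\,\OPT$. Multiplying by the factor $2$ in the cross term and $|S_j|\le 2\beta^3|C_i|$ gives $2\cdot 2\beta^3\cdot 2\beta^{-2}\cdot 6 = 48\beta$ per source cluster. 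This is close to $36\beta$, so the estimates must be tightened slightly, for instance by using $(1+\gamma)\le 1.01$ and $\Cost(C_i)\le\OPT$ rather than crude bounds. A second subtlety is to use the true sizes $|C_i|,|C_j|$ rather than the estimates, converting via Definition~\ref{def:card}.
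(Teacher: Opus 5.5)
Your proof is correct and is essentially the paper's argument. It uses the same three-term decomposition of $\Cost(D_i)$, the same bound on $|C_{i\mapsfrom}|$ from Claim~\ref{claim:markov} plus the blocking condition (you correctly keep the $1+\gamma$ factor from the size estimates, which the paper suppresses), and your detour through $\mu_{C_i}$ and Claim~\ref{cl:approx dist} yields exactly the paper's term $2|C_{i\mapsfrom}|\sum_{q\in C_i}\|e_q-\widehat{\mu_{C_i}}\|_{\mathcal{W}_1}$. The constant is not a real obstacle: your $48\beta$ comes only from rounding $1+\gamma$ and $1+\beta^2$ up to $2$, and keeping them gives $12(1+\gamma)(\beta+\beta^3)\OPT$ per source cluster, i.e.\ (up to $1+\gamma$ factors) the paper's total $\bigl(12(k-1)\beta+12(k-1)\beta^3+12(k-1)^2\beta^4\bigr)\OPT$, comfortably below $36(k-1)\beta\OPT$.
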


\begin{proof}
Notice that due to Claim~\ref{claim:markov}, for every $j$ we have
$\left|C_j^{\Far}\cap  C_j\right| \leq \beta^7\cdot |C_j|$. As 
$C_{i\mapsfrom}$ contains only points from $C_j^{\Far}\cap  C_j$
for which $|C_j|\le\beta^{-4}\cdot |C_i|$, we have
$|C_{i\mapsfrom}|\le (k-1)\cdot\beta^3\cdot |C_i|$. Therefore
\begin{equation}\label{eq:D card}
\left(1-\beta^7\right)\cdot |C_i|\le |D_i|\le 
\left(1+(k-1)\cdot\beta^3\right)\cdot |C_i|.
\end{equation}
Moreover, for all $p\in C_i^{\Close}$ we have 
\begin{equation}\label{eq:close dist}
\|p-\widehat{\mu_{C_i}}\|_{\mathcal{W}_1} <
\beta^{-2}\cdot\widehat{\widehat{\Delta_{C_i}}}\le 6\cdot\beta^{-2}\cdot \frac{\OPT}{|C_i|^2}.
\end{equation}
We also have
\begin{equation}\label{eq:C to D}
\sum_{p\in D_i}\sum_{q\in D_i} d(p,q)  \le   \sum_{p\in C_i}\sum_{q\in C_i} d(p,q) + 
           2\sum_{p\in C_{i\mapsfrom}}\sum_{q\in C_i} d(p,q) +
           \sum_{p\in C_{i\mapsfrom}}\sum_{q\in C_{i\mapsfrom}} d(p,q).
\end{equation}
The first term is simply $\Cost(C_i)$. To bound the second and third terms, note that
by the triangle inequality, 
$d(p,q)\le \|p-\widehat{\mu_{C_i}}\|_{\mathcal{W}_1} + \|q-\widehat{\mu_{C_i}}\|_{\mathcal{W}_1}$.
Therefore, for the second term we have
\begin{eqnarray*}
        2\sum_{p\in C_{i\mapsfrom}}\sum_{q\in C_i} d(p,q)
& \le & 2\cdot |C_i|\cdot\sum_{p\in C_{i\mapsfrom}} \|p-\widehat{\mu_{C_i}}\|_{\mathcal{W}_1} +
        2\cdot |C_{i\mapsfrom}|\cdot \sum_{q\in C_i} \|q-\widehat{\mu_{C_i}}\|_{\mathcal{W}_1} \\
& \le & 12\cdot (k-1)\cdot\beta\cdot\OPT +
        12\cdot (k-1)\cdot\beta^3\cdot\OPT,
\end{eqnarray*}
where in the last inequality we have used the assumption that $\widehat{\mu_{C_i}}$ is
an $(\varepsilon,5)$-approximate center of $C_i$, so
$$
\sum_{q\in C_i} \|q-\widehat{\mu_{C_i}}\|_{\mathcal{W}_1} \le 
\frac{\varepsilon}{20}\cdot \frac{\OPT}{k\cdot |C_i|} + 5\cdot\frac{\Cost(C_i)}{|C_i|}
\le 6\cdot\frac{\OPT}{|C_i|}.
$$
For the third term we get, using Equation~\eqref{eq:close dist},
\begin{eqnarray*}
        \sum_{p\in C_{i\mapsfrom}}\sum_{q\in C_{i\mapsfrom}} d(p,q)
& \le & 2\cdot |C_{i\mapsfrom}|\cdot\sum_{p\in C_{i\mapsfrom}} \|p-\widehat{\mu_{C_i}}\|_{\mathcal{W}_1}\\
& \le & 12\cdot \beta^{-2}\cdot |C_{i\mapsfrom}|^2\cdot \frac{\OPT}{|C_i|^2} \\
& \le & 12\cdot (k-1)^2\cdot\beta^4\cdot\OPT.
\end{eqnarray*}
Summing up the terms (and recalling that $\beta\ll \frac 1 k$) proves 
the first assertion of the lemma. 

The proof of the second assertion of the lemma is identical, except that in
Equation~\eqref{eq:C to D} the summations on points in $C_i$ and $D_i$, respectively,
need to be replaced with summations on $C_i^{\neg\Far}\cap C_i$ and $C_i^{\neg\Far}\cap D_i$,
respectively.
\end{proof}

Let $f_{\mathcal D}$ denote the solution to the assignment problem that is induced by the
clustering $\mathcal{D}$. Notice that this is a feasible solution of finite cost. We next bound 
the cost of this assignment. We do it through comparison with the assignment $f_{\mathcal{C}}$
that is induced by the optimal clustering $\mathcal{C}$. Notice that this assignment may
have infinite cost, so we introduce an auxiliary assignment cost function to aid the comparison.
Specifically, for all $j$ let
$$
d'_p(j) = \begin{cases}
              \widehat{|C_j|}\cdot \|e_p-\widehat{\mu_{C_j}}\|_{\mathcal{W}_1} & 
                                     \hbox{if }p\in C_j^{\Far},\\
             0 & \hbox{if }p\in C_j^{\neg\Far}.
             \end{cases}
$$
\begin{claim}\label{lem:assgnuppbnd}
The cost of the assignment $f_{\mathcal{D}}$ satisfies that $2\cdot \sum_{p\in A} d_p(f_{\mathcal{D}}(p))$
is at most
$$
\left(1 + 2\beta^2 + \gamma\right)\cdot \sum_{i=1}^k \left(\sum_{p,q\in C_i} d(p,q)
               - \sum_{p,q\in C_i^{\neg\Far}\cap C_i} d(p,q) 
               - \sum_{p\in C_i^{\Far}\cap C_i} \sum_{q\in C_i^{\Mid}\cap C_i} d(p,q) \right) 
 + 6\cdot\beta^7\cdot\OPT.
$$
\end{claim}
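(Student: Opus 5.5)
The plan is to bound the cost of $f_{\mathcal D}$ point by point and charge each contribution to the intra-cluster distances of $C_i$ that involve at least one $\beta^{-7}$-far point of $C_i$. By definition $d_p(f_{\mathcal D}(p))$ vanishes unless $p\in D_j\cap C_j^{\Far}$. A point in $C_{j\mapsfrom}$ is $\beta^{-2}$-close for $j$, so it is not far for $j$ and pays $0$. The same holds for points of $C_j\setminus C_{\mapsfrom j}$ that lie in $C_j^{\neg\Far}$. The infinite case never occurs for $f_{\mathcal D}$, since the points that would trigger it are exactly those moved out of $C_j$ through $C_{\mapsfrom j}$. Hence
\[
\sum_{p\in A} d_p(f_{\mathcal D}(p)) \le \sum_{j}\widehat{|C_j|}\sum_{p\in C_j^{\Far}\cap C_j}\|e_p-\widehat{\mu_{C_j}}\|_{\mathcal W_1}.
\]
The factor $\widehat{|C_j|}\le(1+\gamma)|C_j|$ produces the $(1+\gamma)$ in the bound.

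Next, fix $j$ and a far point $p\in C_j^{\Far}\cap C_j$, and compare $|C_j|\cdot\|e_p-\widehat{\mu_{C_j}}\|_{\mathcal W_1}$ with the distances from $p$ to the points $q\in C_j^{\Close}\cap C_j$. For such $q$, the triangle inequality gives $\|e_p-\widehat{\mu_{C_j}}\|_{\mathcal W_1}\le d(p,q)+\|e_q-\widehat{\mu_{C_j}}\|_{\mathcal W_1}$. Moreover $\|e_q-\widehat{\mu_{C_j}}\|_{\mathcal W_1}<\beta^{-2}\widehat{\widehat{\Delta_{C_j}}}\le\beta^{5}\|e_p-\widehat{\mu_{C_j}}\|_{\mathcal W_1}$. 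Therefore $d(p,q)\ge(1-\beta^5)\|e_p-\widehat{\mu_{C_j}}\|_{\mathcal W_1}$ for every close $q$. By Claim~\ref{claim:markov} with parameter $\beta^{2}$, at most a $\beta^2$ fraction of $C_j$ is non-close. Summing over close $q$ then gives
\[
(1-\beta^2)(1-\beta^5)\,|C_j|\,\|e_p-\widehat{\mu_{C_j}}\|_{\mathcal W_1}\le\sum_{q\in C_j^{\Close}\cap C_j}d(p,q).
\]
Inverting the prefactor yields $(1+2\beta^2)$ after absorbing lower order terms.

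Summing over far $p$ bounds $|C_j|\sum_{p\in C_j^{\Far}\cap C_j}\|e_p-\widehat{\mu_{C_j}}\|_{\mathcal W_1}$ by $(1+2\beta^2)\sum_{p\in C_j^{\Far}\cap C_j}\sum_{q\in C_j^{\Close}\cap C_j}d(p,q)$. The factor $2$ on the left of the claim is handled by counting ordered pairs. The expression $\sum_{p,q\in C_i}d(p,q)-\sum_{p,q\in C_i^{\neg\Far}\cap C_i}d(p,q)$ counts the ordered pairs with at least one far endpoint. It therefore equals $2\sum_{\Far,\Close}+2\sum_{\Far,\Mid}+\sum_{\Far,\Far}$. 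Subtracting one copy of the far–mid term leaves at least $2\sum_{\Far,\Close}$, which is what we need. The remaining far–mid mass (one copy) and the far–far mass are nonnegative and can be dropped, so no further work is needed there.

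The main obstacle I expect is the degenerate regime, which is where the additive $6\beta^7\OPT$ comes from. When $C_j^{\Close}\cap C_j$ is small or the estimates fall in the second branch of Definition~\ref{def:size}, the clean multiplicative comparison may lose mass. I would handle this case by bounding the total far contribution directly: use the $(\varepsilon,5)$-approximate center guarantee $\sum_{q\in C_j}\|e_q-\widehat{\mu_{C_j}}\|_{\mathcal W_1}\le 6\OPT/|C_j|$ together with $|C_j^{\Far}\cap C_j|\le\beta^7|C_j|$. This gives an additive error of order $\beta^7\OPT$. One must take care that this error is charged only once overall, rather than once per cluster. If it cannot be, I would absorb the extra factor of $k$ into the choice $\beta\ll 1/k$.
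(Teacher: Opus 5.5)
Your core argument is correct and proves the claim, but it charges the cost differently from the paper, and more simply. Both proofs start the same way. $f_{\mathcal D}$ never pays $\infty$, and points of $C_{j\mapsfrom}$ pay $0$, so the cost is at most $(1+\gamma)\sum_j|C_j|\sum_{p\in C_j^{\Far}\cap C_j}\|e_p-\widehat{\mu_{C_j}}\|_{\mathcal W_1}$. (The paper phrases this step via the auxiliary costs $d'_p$ and $f_{\mathcal C}$.)

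The paper then writes $2|C_i|=(|C_i|+|C_i^{\neg\Close}\cap C_i|)+|C_i^{\Close}\cap C_i|$. It charges the second part to far--close pairs exactly as you do. The first part, at most $(1+\beta^2)|C_i|\cdot\|e_p-\widehat{\mu_{C_i}}\|_{\mathcal W_1}$, it charges to $\sum_{q\in C_i}d(p,q)=|C_i|\cdot\|e_p-\mu_{C_i}\|_{\mathcal W_1}$ through the true centroid. That detour costs $|C_i|\cdot\|\mu_{C_i}-\widehat{\mu_{C_i}}\|_{\mathcal W_1}$ per far point. Via Claim~\ref{cl:approx dist} and $|C_i^{\Far}\cap C_i|\le\beta^7|C_i|$, this sums to $\sum_i\beta^7\bigl(\frac{\varepsilon}{20k}\OPT+5\Cost(C_i)\bigr)$. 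After the factor $(1+\gamma)(1+\beta^2)$ it is at most $6\beta^7\OPT$. This is the only source of the additive term, and it is paid once overall because $\sum_i\Cost(C_i)=\OPT$.

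You charge both halves to far--close pairs. This is legitimate: writing $\Sigma_{X,Y}$ for $\sum_{p\in C_i^{X}\cap C_i}\sum_{q\in C_i^{Y}\cap C_i}d(p,q)$, the bracketed expression equals $2\Sigma_{\Far,\Close}+\Sigma_{\Far,\Mid}+\Sigma_{\Far,\Far}\ge 2\Sigma_{\Far,\Close}$. Your route gives the claim with no additive term and without Claim~\ref{cl:approx dist}; the paper's detour gains nothing here. For the constant, use $\frac{1+\gamma}{(1-\beta^2)(1-\beta^5)}\le 1+2\beta^2+\gamma$ directly. Rounding first to $(1+\gamma)(1+2\beta^2)$ overshoots the stated factor by $2\gamma\beta^2$.

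Drop your final paragraph; its reasoning is off in two ways.
\begin{itemize}
\item There is no degenerate regime. Claim~\ref{claim:markov} holds in both branches of Definition~\ref{def:size}, since in each $\Delta_C\le\widehat{\Delta_C}$. Hence $|C_j^{\Close}\cap C_j|\ge(1-\beta^2)|C_j|$ always, which is exactly what your main step already uses.
\item The fallback you sketch would fail. Bounding $|C_j|\sum_{p\in C_j^{\Far}\cap C_j}\|e_p-\widehat{\mu_{C_j}}\|_{\mathcal W_1}$ by $|C_j|\sum_{q\in C_j}\|e_q-\widehat{\mu_{C_j}}\|_{\mathcal W_1}\le 6\OPT$ gives $O(\OPT)$, not $O(\beta^7\OPT)$, because far points are exactly the ones carrying the large distances. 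A factor $\beta^7$ appears only when $|C_j^{\Far}\cap C_j|$ multiplies the center error, as in the paper.
\end{itemize}
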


\begin{proof}
Note that $\sum_{p\in A} d_p(f_{\mathcal{D}}(p)) = \sum_{p\in A} d'_p(f_{\mathcal{D}}(p))\le 
\sum_{p\in A} d'_p(f_{\mathcal{C}}(p))$. The equality follows as $f_{\mathcal{D}}$ uses no
point assignment of $d_p$-cost $\infty$, and the inequality follows as the change between 
$\mathcal{C}$ and $\mathcal{D}$ only switches point assignments of positive $d'_p$-cost
to point assignments of zero $d'_p$-cost. Now,
\begin{eqnarray}
\nonumber
            2\cdot \sum_{p\in A} d'_p(f_{\mathcal{C}}(p))
&  =  & 2\cdot \sum_{i=1}^k \sum_{p\in C_i^{\Far}\cap C_i} \widehat{|C_i|}\cdot 
            \|e_p-\widehat{\mu_{C_i}}\|_{\mathcal{W}_1} \\
\nonumber
& \le  & 2\cdot (1+\gamma)\cdot \sum_{i=1}^k \sum_{p\in C_i^{\Far}\cap C_i} |C_i|\cdot 
            \|e_p-\widehat{\mu_{C_i}}\|_{\mathcal{W}_1} \\
\nonumber
&  =  & (1+\gamma)\cdot\sum_{i=1}^k \left(\sum_{p\in C_i^{\Far}\cap C_i} 
                       \left(|C_i|+|C_i^{\neg\Close}\cap C_i|\right)\cdot 
                       \|e_p-\widehat{\mu_{C_i}}\|_{\mathcal{W}_1} \right. \\
\nonumber
&      & + \left.\sum_{p\in C_i^{\Far}\cap C_i} |C_i^{\Close}\cap C_i|\cdot 
                    \|e_p-\widehat{\mu_{C_i}}\|_{\mathcal{W}_1} \right) \\
\nonumber
& \le  & (1+\gamma)\cdot\left(1+\beta^2\right)\cdot\sum_{i=1}^k \sum_{p\in C_i^{\Far}\cap C_i} 
                     |C_i|\cdot \|e_p-\widehat{\mu_{C_i}}\|_{\mathcal{W}_1} \\
\nonumber
&      & + (1+\gamma)\cdot\sum_{i=1}^k \sum_{p\in C_i^{\Far}\cap C_i} |C_i^{\Close}\cap C_i|\cdot 
                    \|e_p-\widehat{\mu_{C_i}}\|_{\mathcal{W}_1}  \\
\nonumber
& \le  & (1+\gamma)\cdot\left(1+\beta^2\right)\cdot\sum_{i=1}^k \sum_{p\in C_i^{\Far}\cap C_i} 
                     |C_i|\cdot \left(\|e_p-\mu_{C_i}\|_{\mathcal{W}_1} + 
                            \|\mu_{C_i}-\widehat{\mu_{C_i}}\|_{\mathcal{W}_1}\right) \\
\nonumber
&      & + (1+\gamma)\cdot\sum_{i=1}^k \sum_{p\in C_i^{\Far}\cap C_i} |C_i^{\Close}\cap C_i|\cdot 
                    \|e_p-\widehat{\mu_{C_i}}\|_{\mathcal{W}_1}  \\
\nonumber
& \le  & (1+\gamma)\cdot\left(1+\beta^2\right)\cdot \sum_{i=1}^k \sum_{p\in C_i^{\Far}\cap C_i} 
                     \sum_{q\in C_i} d(p,q) + 6\cdot\beta^7\cdot\OPT\\
&      & + (1+\gamma)\cdot\sum_{i=1}^k \sum_{p\in C_i^{\Far}\cap C_i} |C_i^{\Close}\cap C_i|\cdot 
                    \|e_p-\widehat{\mu_{C_i}}\|_{\mathcal{W}_1},
\label{eq:sum f expr}
\end{eqnarray}
where we have used Claim~\ref{cl:approx dist} to get
\begin{eqnarray*}
\sum_{i=1}^k \sum_{p\in C_i^{\Far}\cap C_i}  |C_i|\cdot 
             \|\mu_{C_i}-\widehat{\mu_{C_i}}\|_{\mathcal{W}_1} 
&  =  & \sum_{i=1}^k |C_i^{\Far}\cap C_i|\cdot |C_i|\cdot 
             \|\mu_{C_i}-\widehat{\mu_{C_i}}\|_{\mathcal{W}_1} \\
& \le &  \sum_{i=1}^k\beta^7\cdot |C_i|^2\cdot\left(\frac{\varepsilon\cdot\OPT}{20k\cdot |C_i|^2} +
                5\cdot\Delta_{C_i}\right) \\
&  =  & \frac{\varepsilon\beta^7}{20}\cdot\OPT + 5\beta^7\cdot \sum_{i=1}^k \Cost(C_i) \\
&  =  & \left(5 + \frac{\varepsilon}{20}\right)\cdot\beta^7\cdot \OPT.
\end{eqnarray*}
Bounding 
$(1+\gamma)\cdot \left(1+\beta^2\right)\cdot \left(5 + \frac{\varepsilon}{20}\right)\le 6$
gives the stated inequality. 

To complete the proof, we need to bound the last term of the
right-hand side of Equation~\eqref{eq:sum f expr}, namely
$$
(1+\gamma)\cdot\sum_{i=1}^k \sum_{p\in C_i^{\Far}\cap C_i} |C_i^{\Close}\cap C_i|\cdot 
                    \|e_p-\widehat{\mu_{C_i}}\|_{\mathcal{W}_1}.
$$
To do that, note that if $p\in C_i^{\Far}\cap C_i$ and $q\in C_i^{\Close}\cap C_i$, then
$\|e_q-\widehat{\mu_{C_i}}\|_{\mathcal{W}_1} \le \beta^5\cdot \|e_p-\widehat{\mu_{C_i}}\|_{\mathcal{W}_1}$,
and therefore
$d(p,q)\ge \|e_p-\widehat{\mu_{C_i}}\|_{\mathcal{W}_1} - \|e_q-\widehat{\mu_{C_i}}\|_{\mathcal{W}_1}\ge
\left(1 - \beta^5\right)\cdot \|e_p-\widehat{\mu_{C_i}}\|_{\mathcal{W}_1}$.
This gives
$$
(1+\gamma)\cdot\sum_{i=1}^k \sum_{p\in C_i^{\Far}\cap C_i} |C_i^{\Close}\cap C_i|\cdot 
                    \|e_p-\widehat{\mu_{C_i}}\|_{\mathcal{W}_1}
 \le \frac{1+\gamma}{1-\beta^5}\cdot\sum_{i=1}^k \sum_{p\in C_i^{\Far}\cap C_i} 
             \sum_{q\in C_i^{\Close}\cap C_i} d(p,q).
$$
Plugging this bound into Equation~\eqref{eq:sum f expr}, we get
\begin{eqnarray*}
&      & 2\cdot \sum_{p\in A} d_p(f_{\mathcal{D}}(p)) \\
& \le & 2\cdot \sum_{p\in A} d'_p(f_{\mathcal{C}}(p)) \\
& \le & (1+\gamma)\cdot\left(1+\beta^2\right)\cdot \sum_{i=1}^k \sum_{p\in C_i^{\Far}\cap C_i} 
                     \sum_{q\in C_i} d(p,q) \\
&      & + \frac{1+\gamma}{1-\beta^5}\cdot\sum_{i=1}^k \sum_{p\in C_i^{\Far}\cap C_i} 
             \sum_{q\in C_i^{\Close}\cap C_i} d(p,q) + 
             6\cdot\beta^7\cdot\OPT\\
& \le & \left(1 + 2\beta^2 + \gamma\right)\cdot \sum_{i=1}^k \left(\sum_{p,q\in C_i} d(p,q)
               - \sum_{p,q\in C_i^{\neg\Far}\cap C_i} d(p,q) 
               - \sum_{p\in C_i^{\Far}\cap C_i} \sum_{q\in C_i^{\Mid}\cap C_i} d(p,q) \right) \\
&      &  + 6\cdot\beta^7\cdot\OPT,
\end{eqnarray*}
where the last inequality assumes that 
$(1+\gamma)\cdot\frac{\beta^5}{1-\beta^5}\le (1+\gamma)\cdot\beta^2\le 2\beta^2$.
\end{proof}

We will compare the cost of the clustering $\mathcal{C}'$ to the baseline of the
cost of the clustering $\mathcal{D}$, which is close to $\OPT$.
For $i\in\{1,2,\dots,k\}$ and $r\in\{0,1,2,\dots,k,\Close,\Mid_1,\dots,\Mid_R\}$, 
let $H_{i,r} = \{h\colon h(i) = r\}$. Also let $H_{i,\#} = \bigcup_{j=1}^k H_{i,j}$,
and let $H_{i,\Mid} = \bigcup_{s=1}^R H_{i,\Mid_s}$. We further put 
$C_{i,r} = \bigcup_{h\in H_{i,r}} B^h$,
for $i\in\{1,2,\dots,k\}$ and $r\in \{0,1,2,\dots,k,\#,\Close,\Mid_1,\dots,\Mid_R\}$.
Note that the points in $\bigcup_{i=1}^k C_{i,0}$ are not far for any cluster, 
so they are assigned in $\mathcal{C}'$ according to $\mathcal{C}$.
\begin{lemma}\label{lm:cost C'}
The clustering $\mathcal{C}'$ satisfies
$$
\Cost(\mathcal{C}')\le \left(1 + 44k^2\beta + \frac{14\beta^2}{\gamma} + 5\gamma\right)\cdot\OPT.
$$
\end{lemma}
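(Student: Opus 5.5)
We compare $\mathcal{C}'$ with $\mathcal{D}$, since by Lemma~\ref{lem:smallfartoclose} we have $\Cost(\mathcal{D})\le (1+36k(k-1)\beta)\cdot\OPT$. The two clusterings differ only on far points. In $\mathcal{C}'$, the points of $\bigcup_i (C_i^{\Far}\cap F_i)$ are placed according to $f$. The remaining points of $\bigcup_i (C_i^{\Far}\cap D_i)$ are placed by $l$ into clusters for which they are not $\beta^{-7}$-far. Every other point sits where it sits in $\mathcal{D}$.

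\textbf{Decomposing the cost.} For each $i$ we split $\Cost(C'_i)$ into three parts:
\begin{itemize}
\item pairs inside $C_i^{\neg\Far}\cap C'_i$;
\item pairs with one point in $C_i^{\Far}\cap C'_i$ and the other anywhere in $C'_i$;
\item pairs of two far points, which are dominated by the second part.
\end{itemize}
The first part we compare directly with $\Cost(C_i^{\neg\Far}\cap D_i)$, using item 2 of Lemma~\ref{lem:smallfartoclose}. The sets $C_i^{\neg\Far}\cap C'_i$ and $C_i^{\neg\Far}\cap D_i$ can differ only through the $l$-reassigned points and the points that $f$ moved away. We control these differences with the capacity constraints $\widehat{|B^h\cap D_i|}$.

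Within each class $B^h$, every point has the same pattern of close, mid-shell, far and ``nearest far cluster'' labels. So $\mathcal{C}'$ and $\mathcal{D}$ place, up to a $(1+\gamma^2)$ factor, the same number of points of each type into each cluster. Inside a type, distances to $\widehat{\mu_{C_i}}$ agree up to a factor $(1+\gamma)$, thanks to the $\Mid_s$ shells. Swapping points within a type therefore changes pairwise costs by $O(\gamma)\cdot\OPT$ plus an additive term. We bound that term by the triangle inequality through $\widehat{\mu_{C_i}}$, using \eqref{eq:close dist} and Claim~\ref{claim:markov}. The count error of $\gamma^2|D_i|$ points at distance up to $\beta^{-7}\widehat{\widehat{\Delta_{C_i}}}$ is what I expect to produce the $14\beta^2/\gamma$ term, once the parameters are balanced.

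\textbf{Far points.} For a point $p\in C_i^{\Far}\cap C'_i$, every $q\in C'_i$ that is $\beta^{-2}$-close satisfies $d(p,q)\le(1+\beta^5)\cdot\|e_p-\widehat{\mu_{C_i}}\|_{\mathcal{W}_1}$. Hence the far-point part of the cost is at most $(2+O(\beta^2))\sum_p \widehat{|C_i|}\cdot\|e_p-\widehat{\mu_{C_i}}\|_{\mathcal{W}_1}$, plus a contribution from mid points that we again bound by shell counting. This is essentially $2\sum_p d_p(f(p))$. Optimality of $f$ gives $\sum_p d_p(f(p))\le \sum_p d_p(f_{\mathcal{D}}(p))$. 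Claim~\ref{lem:assgnuppbnd} then bounds twice this by exactly the portion of $\Cost(\mathcal{C})$ not already charged in the first part, namely the far--far, far--close and far--mid pairs, up to a factor $(1+2\beta^2+\gamma)$ and an additive $6\beta^7\OPT$. Adding the two parts gives a quantity close to $\Cost(\mathcal{D})$, and summing the errors yields the stated $44k^2\beta+14\beta^2/\gamma+5\gamma$.

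\textbf{Main obstacle.} The hardest step is the cross-cluster charging. A far point $p$ assigned by $f$ or $l$ into cluster $i$ may lie among the close points of a much smaller cluster $j$, and its cost against those points is not captured by $d_p(i)$. This is precisely what the labels $h_p(i)\in\{j_p,\Close\}$ and the $\infty$ costs are designed to handle. I would argue as follows. If $\widehat{|C_i|}\ge\beta^4\widehat{|C_j|}$ the assignment is forbidden. Otherwise the label $j_p$ records whether $p$ is within $\beta\widehat{\widehat{\Delta_{C_i}}}$ of $\widehat{\mu_{C_{j_p}}}$. Using the size ratio $\beta^4$, the extra cost is then charged to $|C_i|\cdot\|e_p-\widehat{\mu_{C_i}}\|_{\mathcal{W}_1}$ with a loss of $O(k\beta)$ per cluster, which gives the $k^2\beta$ term. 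I would first try to make this charging precise for $l$-assigned points, since they have zero $d_p$-cost and are controlled only by the class counts.
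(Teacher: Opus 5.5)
\textbf{There is a genuine gap.} Your architecture matches the paper's:
\begin{itemize}
\item bound the pairs touching $C_i^{\Far}\cap C'_i$ by $2(1+\beta^5)\sum_p d_p(f(p))$ plus shell terms;
\item pass to $f_{\mathcal{D}}$ by optimality of $f$ and invoke Claim~\ref{lem:assgnuppbnd};
\item compare the non-far part with $C_i^{\neg\Far}\cap D_i$ via the class capacities and item~2 of Lemma~\ref{lem:smallfartoclose}.
\end{itemize}
The gap is the step you defer, and the mechanism you sketch for it fails. Swapping within a type is fine for classes with $h(i)=j$ (tightly packed around $\widehat{\mu_{C_j}}$, Claim~\ref{cl:H-isharp}), for $h(i)=0$ (never moved), and for $h(i)=\Mid_s$ (few points). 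It breaks for $h(i)=\Close$. Those points are spread over the whole $\beta^{-2}\widehat{\widehat{\Delta_{C_i}}}$-ball around $\widehat{\mu_{C_i}}$. Write $G_i=C_{i,\Close}\cap D_i$ and $G'_i=C_{i,\Close}\cap C'_i$. The capacities bound only the \emph{net} surplus, $|G'_i|-|G_i|\le\gamma^2|G_i|$, not the gross exchange: $f$ may evict most of $G_i$ into much larger clusters, and $l$ refills the slots. Replacing each $p\in G'_i\setminus G_i$ by an average point of $G_i$ then costs an additive error of order $|C'_i|\,\beta^{-2}\widehat{\widehat{\Delta_{C_i}}}$ per point. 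In total this is up to order $\beta^{-2}$ times the cluster's cost.

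Your proposed charge to $|C_i|\cdot\|e_p-\widehat{\mu_{C_i}}\|_{\mathcal{W}_1}$ pays for nothing. The point $p$ is close for the receiving cluster $i$, so $d_p(i)=0$, and that quantity is exactly what has to be bounded.

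\textbf{Missing idea.} As in Eq.~\eqref{eq:cls cost}, charge to $p$'s \emph{old} cluster in $\mathcal{D}$:
\begin{itemize}
\item Such a $p$ lies in $C_{i_p}^{\Far}\cap D_{i_p}$ for some $i_p\ne i$.
\item Since $p\notin C_{\mapsfrom i_p}$ while $p\in C_i^{\Close}$, the definition of $\mathcal{D}$ forces $\widehat{|C_i|}<\beta^4\widehat{|C_{i_p}|}$. The same fact makes $d_p(i_p)$ finite.
\item The $\Close$ label together with minimality of $j_p$ gives $\|e_p-\widehat{\mu_{C_{i_p}}}\|_{\mathcal{W}_1}\ge\beta\widehat{\widehat{\Delta_{C_i}}}$.
\end{itemize}
Hence $\widehat{|C_i|}\,\beta^{-2}\widehat{\widehat{\Delta_{C_i}}}\le\beta\, d_p(f_{\mathcal{D}}(p))$. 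The whole excess becomes $4\beta\sum_p d_p(f_{\mathcal{D}}(p))$, which Claim~\ref{lem:assgnuppbnd} already controls. This, plus the two-case averaging of Claim~\ref{cl:C' close cost}, closes the step: average over $G_i\setminus G'_i$ if it is at least a $\gamma$-fraction of $G_i$, and over $G_i$ otherwise.

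\textbf{The error accounting is also off.} A surplus of $\gamma^2|D_i|$ points at distance up to $\beta^{-7}\widehat{\widehat{\Delta_{C_i}}}$ would cost order $\gamma^2\beta^{-7}\cdot\OPT$. That is not dominated by $44k^2\beta+14\beta^2/\gamma+5\gamma$, and it is huge in the regime $\beta\ll\gamma$ needed for $\beta^2/\gamma$ to be small. In the paper, mid points are not treated as perturbations of $\mathcal{D}$; their cost is bounded absolutely:
\begin{itemize}
\item Since $C_i^{\Mid}\cap D_i\subseteq C_i$, Claim~\ref{claim:markov} caps shell $s$ at $\beta^2|C_i|(1+\gamma)^{-(s-1)}$ points.
\item The capacities transfer this cap to $C'_i$ up to a factor $1+\gamma^2$.
\item The geometric sum over shells yields the $\beta^2/\gamma$ term.
\end{itemize}
Similarly, the $k^2\beta$ term comes from Lemma~\ref{lem:smallfartoclose}, not from the cross-cluster charging, which loses only $O(\beta)$.
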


\begin{proof}
We write $\Cost(\mathcal{C}')$ as
\begin{eqnarray*}
            \sum_{i=1}^k \sum_{p\in C'_i}\sum_{q\in C'_i} d(p,q)
&  =  & 2\cdot\sum_{i=1}^k \sum_{p\in C_i^{\Far}\cap C'_i} \sum_{q\in C'_i\setminus C_i^{\Far}}  d(p,q)
            + \sum_{i=1}^k \sum_{p\in C_i^{\Far}\cap C'_i} \sum_{q\in C_i^{\Far}\cap C'_i} d(p,q) \\         
&      & + \sum_{i=1}^k \sum_{p\in C_i^{\neg\Far}\cap C'_i} \sum_{q\in C_i^{\neg\Far}\cap C'_i} d(p,q)
\end{eqnarray*}
Claim~\ref{cl:C' far cost} upper bounds the first two terms. We further expand the last term as
follows.
\begin{eqnarray*}
&      & \sum_{i=1}^k \sum_{p\in C_i^{\neg\Far}\cap C'_i} \sum_{q\in C_i^{\neg\Far}\cap C'_i} d(p,q) \\
&  =  & 2\cdot \sum_{i=1}^k \sum_{p\in C_{i,\Close}\cap C'_i} \sum_{q\in (C_{i,\Mid}\cup C_{i,\#}\cup C_{i,0})\cap C'_i} d(p,q)
            +  \sum_{i=1}^k \sum_{p\in C_{i,\Close}\cap C'_i}  \sum_{q\in C_{i,\Close}\cap C'_i} d(p,q) \\
&      & + 2\cdot\sum_{i=1}^k \sum_{p\in C_i^{\Mid}\cap C'_i} \sum_{q\in (C_{i,\#}\cup C_{i,0})\cap C'_i} d(p,q)
            + \sum_{i=1}^k \sum_{p\in C_i^{\Mid}\cap C'_i} \sum_{q\in C_i^{\Mid}\cap C'_i} d(p,q) \\
&      & +  \sum_{i=1}^k \sum_{p\in (C_{i,\#}\cup C_{i,0})\cap C'_i} \sum_{q\in (C_{i,\#}\cup C_{i,0})\cap C'_i} d(p,q).            
\end{eqnarray*}
The first two terms are the tricky part of the argument. Claim~\ref{cl:C' close cost} upper bounds these
terms. Claim~\ref{cl:C' mid cost} upper bounds the next two terms. Finally, Claim~\ref{cl:C' sharp cost}
upper bounds the last term. Summing up these upper bounds, we get
\begin{eqnarray*}
            \Cost(\mathcal{C}')
& \le & (1+3\beta^2+2\gamma)\cdot \sum_{i=1}^k \left(\sum_{p,q\in C_i} d(p,q)
               - \sum_{p,q\in C_i^{\neg\Far}\cap C_i} d(p,q) \right) + 18\beta^7\cdot\OPT \\
&      & + 2(1+2\gamma)\cdot\sum_{i=1}^k \sum_{p\in C_{i,\Close}\cap D_i} \sum_{q\in C_{i,\Mid}\cap C'_i} d(p,q) \\
&      & + \sum_{i=1}^k \sum_{p\in C_{i,\Close}\cap D_i}  \sum_{q\in C_{i,\Close}\cap D_i} d(p,q)
             + 6\beta\cdot \OPT \\
&      & + 2(1+3k\beta^3+5\gamma)\cdot \sum_{i=1}^k \sum_{p\in C_i^{\Mid}\cap D_i} \sum_{q\in C_i^{\Close}\cap D_i} d(p,q)
           + \left(\frac{13\beta^2}{\gamma} + 6\beta\right)\cdot \OPT \\
&      & - 2(1+2\gamma)\cdot\sum_{i=1}^k \sum_{p\in C_{i,\Mid}\cap C'_i} \sum_{q\in C_{i,\Close}\cap D_i} d(p,q) \\
&      & + (1+\gamma^2)^2\cdot \sum_{i=1}^k \sum_{p\in (C_{i,\#}\cup C_{i,0})\cap D_i} 
               \sum_{q\in (C_{i,\#}\cup C_{i,0})\cap D_i} d(p,q) + 25\beta\cdot\OPT \\
& \le & (1+3\beta^2+5\gamma)\cdot \sum_{i=1}^k \sum_{p\in C_i} \sum_{q\in C_i} d(p,q)
            + \left(\frac{13\beta^2}{\gamma} + 44k^2\beta\right)\cdot\OPT \\
& \le &  \left(1 + 44k^2\beta + \frac{14\beta^2}{\gamma} + 5\gamma\right)\cdot\OPT.
\end{eqnarray*}
where the penultimate inequality uses the second part of Lemma~\ref{lem:smallfartoclose}.
\end{proof}

\begin{claim}\label{cl:avg}
Let $A_1,A_2\subset A$, and let $p,q\in A$. Then,
$$
d(p,q)\le \frac{1}{|A_1|\cdot |A_2|}\cdot \sum_{p'\in A_1}\sum_{q'\in A_2} \left(d(p,p')+d(p',q')+d(q',q)\right).
$$
\end{claim}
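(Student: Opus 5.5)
The bound will follow from the triangle inequality applied pointwise and then averaged. Fix arbitrary $p'\in A_1$ and $q'\in A_2$. Since $d$ is a semi-metric, applying the triangle inequality twice along the path $p\to p'\to q'\to q$ gives
$$
d(p,q)\le d(p,p')+d(p',q')+d(q',q).
$$
This holds for every one of the $|A_1|\cdot|A_2|$ pairs $(p',q')\in A_1\times A_2$. The left-hand side does not depend on the pair.

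Next I would sum this inequality over all pairs $(p',q')\in A_1\times A_2$. The left-hand side becomes $|A_1|\cdot|A_2|\cdot d(p,q)$. Dividing by $|A_1|\cdot|A_2|$ yields exactly the claimed inequality: $d(p,q)$ is at most the average of $d(p,p')+d(p',q')+d(q',q)$ over the uniform product distribution on $A_1\times A_2$.

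The only point needing care is that $A_1$ and $A_2$ must be nonempty for the normalization to make sense, and I would assume this implicitly. There is no real obstacle. The claim is the averaged form of the triangle inequality, and it will presumably be used later to bound a distance between two points through the average distances to two reference sets, such as cluster cores.
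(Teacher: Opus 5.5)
Your proof is correct and follows the paper's argument exactly: apply the triangle inequality along $p\to p'\to q'\to q$ for each pair $(p',q')\in A_1\times A_2$, then average over all pairs. Your remark that $A_1$ and $A_2$ must be nonempty is a sensible point that the paper leaves implicit.
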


\begin{proof}
For every $p'\in A_1$ and $q'\in A_2$, the triangle inequality implies that
$d(p,q) \le d(p,p')+d(p',q')+d(q',q)$. Averaging over both sides yields the claim.
\end{proof}

\begin{claim}\label{cl:H-isharp}
Let $i\in\{1,2\dots,k\}$, let $j,j'\in \{1,2,\dots,k\}\setminus\{i\}$, and let $h\in H_{i,j}$ 
and $h'\in H_{i,j'}$. Then
\begin{enumerate}
\item If $p,q\in B^h$, then 
         $$
           d(p,q)\le 2\cdot\beta\cdot\widehat{\widehat{\Delta_{C_i}}}.
         $$
\item If $p\in B^h$ and $q\in B^{h'}$, then
         $$
           d(p,q)\le \frac{1}{|B^h\cap D_i|\cdot |B^{h'}\cap D_i|}\cdot
           \sum_{p'\in B^h\cap D_i}\sum_{q'\in B^{h'}\cap D_i} \left(d(p',q') + 
           4\cdot\beta\cdot\widehat{\widehat{\Delta_{C_i}}}\right).
          $$
\item For any $q\in A$, if $p\in B^h$, then
         $$
           d(p,q)\le \frac{1}{|B^h\cap D_i|}\cdot \sum_{p'\in B^h\cap D_i} \left(d(p',q) + 
            2\cdot\beta\cdot\widehat{\widehat{\Delta_{C_i}}}\right).
         $$
\end{enumerate}
\end{claim}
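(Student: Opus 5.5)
The whole claim should follow from one observation: every point of a class $B^h$ with $h\in H_{i,j}$ is very close to the single reference point $\widehat{\mu_{C_j}}$. With that in hand, the three items are applications of the triangle inequality (item 1) and of the averaging Claim~\ref{cl:avg} (items 2 and 3).

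First I unpack the definition of $h_p$. Fix $i$ and $j\ne i$, and let $h\in H_{i,j}$, so that $h(i)=j$ with $j\ge 1$. The only case of $h_p(i)$ producing an integer value $j\ge 1$ is the first case, with $j_p=j>0$. For every $p\in B^h$ we have $h_p=h$, hence $h_p(i)=j$, and the defining condition of that case gives
$$\|e_p-\widehat{\mu_{C_j}}\|_{\mathcal{W}_1}<\beta\cdot\widehat{\widehat{\Delta_{C_i}}}.$$
The key point is that the label $j$ is a function of $h$ alone, so all points of $B^h$ share the same reference point $\widehat{\mu_{C_j}}$, an element of $A$. Item 1 is then immediate: for $p,q\in B^h$,
$$d(p,q)\le \|e_p-\widehat{\mu_{C_j}}\|_{\mathcal{W}_1}+\|e_q-\widehat{\mu_{C_j}}\|_{\mathcal{W}_1}\le 2\beta\cdot\widehat{\widehat{\Delta_{C_i}}}.$$
This uses that $\phi$ is an isometry, so $\mathcal{W}_1$ distances between the points $e_x$ are exactly $d$-distances and the triangle inequality applies.

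For item 3, I apply Claim~\ref{cl:avg} with $A_1=B^h\cap D_i$ and $A_2=\{q\}$. This gives
$$d(p,q)\le \frac{1}{|B^h\cap D_i|}\sum_{p'\in B^h\cap D_i}\bigl(d(p,p')+d(p',q)\bigr).$$
Since $p,p'\in B^h$, item 1 bounds $d(p,p')\le 2\beta\cdot\widehat{\widehat{\Delta_{C_i}}}$, which is exactly the stated inequality.

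For item 2, I apply Claim~\ref{cl:avg} with $A_1=B^h\cap D_i$ and $A_2=B^{h'}\cap D_i$. I then bound $d(p,p')$ by item 1 applied to $h$, and $d(q',q)$ by item 1 applied to $h'$. Each contributes $2\beta\cdot\widehat{\widehat{\Delta_{C_i}}}$; note that both bounds are in terms of $\widehat{\widehat{\Delta_{C_i}}}$, because $h'(i)=j'$ is also defined with respect to cluster $i$. Together they give the $4\beta\cdot\widehat{\widehat{\Delta_{C_i}}}$ term.

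There is no real obstacle, only one caveat. Items 2 and 3 require $B^h\cap D_i$ and $B^{h'}\cap D_i$ to be nonempty so that the averages are defined. I will state this as an implicit assumption: the claim is vacuous or unused otherwise, since it is only invoked when these intersections contain points to average over.
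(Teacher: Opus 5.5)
Your proposal is correct and matches the paper's proof: item 1 is the triangle inequality through the common reference point $\widehat{\mu_{C_j}}$, which, as you correctly read off the first case of the definition of $h_p(i)$, lies within $\beta\cdot\widehat{\widehat{\Delta_{C_i}}}$ of every point of $B^h$, and items 2 and 3 combine item 1 with averaging. The only cosmetic difference is that the paper first bounds $d(p,q)$ by the minimum of $d(p',q')$ over $p'\in B^h\cap D_i$, $q'\in B^{h'}\cap D_i$, plus $4\beta\cdot\widehat{\widehat{\Delta_{C_i}}}$, and then uses that a minimum is at most an average, whereas you average the triangle inequality directly via Claim~\ref{cl:avg}; this is the same computation, and the paper also leaves your nonemptiness assumption implicit.
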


\begin{proof}
The first claim is a simple application of the triangle inequality:
$$
d(p,q)\le \|e_p - \widehat{\mu_{C_j}}\|_{\mathcal{W}_1(A,d)} + 
\|e_q - \widehat{\mu_{C_j}}\|_{\mathcal{W}_1(A,d)} \le 2\cdot\beta\cdot\widehat{\widehat{\Delta_{C_i}}}.
$$
The second and third claims apply the first claim and the mean-value property. For the second
claim:
\begin{eqnarray*}
           d(p,q)
& \le & \min\{d(p',q')\colon p'\in B^h\cap D_i\wedge q'\in B^{h'}\cap D_i\} + 
           4\cdot\beta\cdot\widehat{\widehat{\Delta_{C_i}}} \\
& \le & \frac{1}{|B^h\cap D_i|\cdot |B^{h'}\cap D_i|}\cdot
           \sum_{p'\in B^h\cap D_i}\sum_{q'\in B^{h'}\cap D_i} \left(d(p',q') + 
           4\cdot\beta\cdot\widehat{\widehat{\Delta_{C_i}}}\right).
\end{eqnarray*}
A similar argument proves the third claim.
\end{proof}

\begin{claim}\label{cl:C' far cost}
The contribution to $\Cost(\mathcal{C}')$ of distances involving at least one point in 
$C_i^{\Far}\cap C'_i$ satisfies
$$
       2\cdot\sum_{i=1}^k \sum_{p\in C_i^{\Far}\cap C'_i} \sum_{q\in C'_i\setminus C_i^{\Far}}  d(p,q)
           + \sum_{i=1}^k \sum_{p\in C_i^{\Far}\cap C'_i} \sum_{q\in C_i^{\Far}\cap C'_i} d(p,q) \\
\le 2(1+\beta^5)\cdot \sum_{p\in A} d_p(f_{\mathcal{D}}(p)) + 11\beta^7\cdot\OPT.
$$
\end{claim}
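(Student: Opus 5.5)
We bound each pairwise distance with one endpoint far by routing it through the approximate center $\widehat{\mu_{C_i}}$. The terms then collapse to $|C'_i|$ times the distance of the far point to that center. This is essentially $d_p(i)/\widehat{|C_i|}$ times $|C'_i|$.

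Fix $i$ and a point $p\in C_i^{\Far}\cap C'_i$. By construction of $\mathcal{C}'$, every point of $C_i^{\Far}\cap C'_i$ was assigned to $i$ by $f$, since points reassigned by $l$ go only to clusters for which they are not $\beta^{-7}$-far. Moreover $f=f_{\mathcal D}$ in cost terms, since $f$ is optimal and $f_{\mathcal D}$ is feasible, so $\sum_p d_p(f(p))\le \sum_p d_p(f_{\mathcal D}(p))$. For $q\in C'_i\setminus C_i^{\Far}$ we have $\|e_q-\widehat{\mu_{C_i}}\|_{\mathcal W_1}<\beta^{-7}\widehat{\widehat{\Delta_{C_i}}}\le \|e_p-\widehat{\mu_{C_i}}\|_{\mathcal W_1}$. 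I will split $C'_i\setminus C_i^{\Far}$ into $\beta^{-2}$-close points and mid points.

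For close points, $\|e_q-\widehat{\mu_{C_i}}\|\le\beta^{5}\|e_p-\widehat{\mu_{C_i}}\|$, so $d(p,q)\le(1+\beta^5)\|e_p-\widehat{\mu_{C_i}}\|$. For mid points I only get $d(p,q)\le 2\|e_p-\widehat{\mu_{C_i}}\|$. However, the number of mid points in $C'_i$ is small. They come from $D_i$ (non-far points keep their $\mathcal D$-assignment) and from capacity-constrained reassignments, so their count is at most about $(\beta^2+k\beta^3)|C_i|$ by Claim~\ref{claim:markov} applied with $\beta^{-2}$ and by \eqref{eq:D card}. Since $|C'_i|\le(1+O(k\beta^3)+\gamma^2)\,\widehat{|C_i|}$, using the capacity bounds $\widehat{|B^h\cap D_i|}$ and $|D_i|$, the first term is at most
$$2\sum_i\sum_{p\in C_i^{\Far}\cap F_i}(1+\beta^5)\,\widehat{|C_i|}\,\|e_p-\widehat{\mu_{C_i}}\|_{\mathcal W_1}\cdot(1+\text{small}).$$
The second term is handled by the same triangle inequality: $\sum_{p,q\in C_i^{\Far}\cap C'_i}d(p,q)\le 2|C_i^{\Far}\cap C'_i|\sum_{p}\|e_p-\widehat{\mu_{C_i}}\|$.

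The main obstacle is making the factor exactly $2(1+\beta^5)$ rather than $2(1+O(\beta^2))$, plus the additive $11\beta^7\OPT$. I expect this to require a finer accounting. I would split $|C'_i|$ as $|C'_i\cap C_i^{\Close}|+|C'_i\cap C_i^{\neg\Close}|$, and compare against $\widehat{|C_i|}\ge|C_i|$. Then $|C'_i\cap C_i^{\Close}|\le|C_i|$ up to the $\gamma$-slack absorbed into $\widehat{|C_i|}$. The count $|C'_i\cap C_i^{\neg\Close}|$ I would bound by the number of far-for-$C_i$ and mid points. That gives at most roughly $2\beta^2|C_i|$ points, each contributing at most $2\|e_p-\widehat{\mu_{C_i}}\|$. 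This extra part must then be converted into an additive error. I would do so using $\sum_{p\in F_i\cap C_i^{\Far}}\|e_p-\widehat{\mu_{C_i}}\|\cdot|C_i|\lesssim \sum_p d_p(f_{\mathcal D}(p))$ and Claim~\ref{lem:assgnuppbnd}, together with Claim~\ref{cl:approx dist} for the $\beta^7\OPT$-type error coming from $\|\mu_{C_i}-\widehat{\mu_{C_i}}\|$. If the exact constant $1+\beta^5$ does not come out, I would first check whether the mid-point contribution can be charged to the $11\beta^7\OPT$ term via $|C_i^{\Far}\cap C'_i|\le\beta^7|C_i|$-type bounds, with $|C_i^{\Far}\cap C'_i|$ controlled by the capacity constraints, and the $\beta^{-2}$-close radius bound \eqref{eq:close dist}.
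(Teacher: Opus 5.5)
Your skeleton is the paper's. You route each pair through $\widehat{\mu_{C_i}}$, use $C_i^{\Far}\cap C'_i=C_i^{\Far}\cap F_i$, take the factor $1+\beta^5$ from far--close pairs, bound $|C'_i|$ through the capacities, and finish with optimality of $f$ against the feasible $f_{\mathcal{D}}$.

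\textbf{Where it fails: the far--mid pairs.} Bounding $d(p,q)\le 2\|e_p-\widehat{\mu_{C_i}}\|_{\mathcal{W}_1}$ for $q\in C_i^{\Mid}\cap C'_i$ leaves an excess $2|C_i^{\Mid}\cap C'_i|\sum_{p\in C_i^{\Far}\cap C'_i}\|e_p-\widehat{\mu_{C_i}}\|_{\mathcal{W}_1}$. This is up to about $2\beta^2$ times the far assignment cost, hence of order $\beta^2\cdot\OPT$ whenever far points carry a constant fraction of $\OPT$. That is neither a $\beta^5$ relative error nor $O(\beta^7)\cdot\OPT$, and your proposed conversions cannot change this. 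Routing it through Claim~\ref{lem:assgnuppbnd} yields $O(\beta^2)\cdot\OPT$, and \eqref{eq:close dist} controls $\beta^{-2}$-close points, not mid ones. As written, your argument gives only the weaker factor $2(1+O(\beta^2))$ in front of $\sum_{p}d_p(f_{\mathcal{D}}(p))$. That would probably still suffice for Lemma~\ref{lm:cost C'}, but it is not the claim.

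\textbf{The missing idea.} Do not collapse the triangle inequality for mid points. Keep $d(p,q)\le\|e_p-\widehat{\mu_{C_i}}\|_{\mathcal{W}_1}+\|e_q-\widehat{\mu_{C_i}}\|_{\mathcal{W}_1}$ and sum the two halves separately.

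The $p$-halves combine with the far--close bound and the far--far term (at most $2|C_i^{\Far}\cap C'_i|\sum_{p}\|e_p-\widehat{\mu_{C_i}}\|_{\mathcal{W}_1}$). Each far $p$ then gets total coefficient $2\bigl((1+\beta^5)|C_i^{\Close}\cap C'_i|+|C_i^{\Mid}\cap C'_i|+|C_i^{\Far}\cap C'_i|\bigr)\le 2(1+\beta^5)|C'_i|$, so mid points cost nothing extra there.

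The $q$-halves give $2|C_i^{\Far}\cap C'_i|\sum_{q\in C_i^{\Mid}\cap C'_i}\|e_q-\widehat{\mu_{C_i}}\|_{\mathcal{W}_1}$, which is where the additive term comes from:
\begin{itemize}
\item By the capacities, Claim~\ref{claim:markov}, and $C_{i\mapsfrom}\subseteq C_i^{\Close}$, we get $|C_i^{\Far}\cap C'_i|\le(1+\gamma^2)\beta^7|C_i|$.
\item Within each bucket $B^h$ with $h(i)=\Mid_s$, the distances to $\widehat{\mu_{C_i}}$ agree up to a factor $1+\gamma$. Also $D_i\cap C_i^{\Mid}\subseteq C_i$. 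Hence the mid sum is at most $(1+\gamma)(1+\gamma^2)\sum_{q\in C_i}\|e_q-\widehat{\mu_{C_i}}\|_{\mathcal{W}_1}\le(1+\gamma)(1+\gamma^2)|C_i|\,\widehat{\widehat{\Delta_{C_i}}}$.
\item Since $\sum_i|C_i|^2\,\widehat{\widehat{\Delta_{C_i}}}\le(5+O(\varepsilon+\gamma))\cdot\OPT$, the total is at most $(10+O(\varepsilon+\gamma))\beta^7\cdot\OPT\le 11\beta^7\cdot\OPT$.
\end{itemize}
So your last fallback, exploiting $|C_i^{\Far}\cap C'_i|=O(\beta^7|C_i|)$, is the right instinct. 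But the small factor must multiply the mid points' own distances, which is impossible once you have replaced them by $\|e_p-\widehat{\mu_{C_i}}\|_{\mathcal{W}_1}$.

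\textbf{A separate, minor point.} Your step $|C'_i\cap C_i^{\Close}|\le|C_i|$ up to slack in $\widehat{|C_i|}$ is false in general. $C'_i$ can gain close points of other clusters (e.g.\ from $C_{i\mapsfrom}$), and $\widehat{|C_i|}$ may equal $|C_i|$. However, the resulting $1+O(k\beta^3+\gamma^2)$ gap between $|C'_i|\le\widehat{|D_i|}$ and the weight $\widehat{|C_i|}$ in $d_p$ is also passed over silently in the paper's proof. It is harmless downstream and is not the real obstacle.
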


\begin{proof}
We use the triangle inequality to write:
\begin{eqnarray*}
&      & 2\cdot\sum_{i=1}^k \sum_{p\in C_i^{\Far}\cap C'_i} \sum_{q\in C'_i\setminus C_i^{\Far}}  d(p,q)
           +  \sum_{i=1}^k \sum_{p\in C_i^{\Far}\cap C'_i} \sum_{q\in C_i^{\Far}\cap C'_i} d(p,q) \\
& \le & 2\cdot\sum_{i=1}^k \sum_{p\in C_i^{\Far}\cap C'_i} \sum_{q\in C'_i\setminus C_i^{\Far}}  
                    \left(\|e_p - \widehat{\mu_{C_i}}\|_{\mathcal{W}_1} + \|\widehat{\mu_{C_i}} - e_q\|_{\mathcal{W}_1}\right) \\
&      & +  \sum_{i=1}^k \sum_{p\in C_i^{\Far}\cap C'_i} \sum_{q\in C_i^{\Far}\cap C'_i} 
                   \left(\|e_p - \widehat{\mu_{C_i}}\|_{\mathcal{W}_1} + \|\widehat{\mu_{C_i}} - e_q\|_{\mathcal{W}_1}\right) \\
&  =  & 2\cdot\sum_{i=1}^k \sum_{p\in C_i^{\Far}\cap C'_i} \sum_{q\in C_i^{\Mid}\cap C'_i}  
                    \left(\|e_p - \widehat{\mu_{C_i}}\|_{\mathcal{W}_1} + \|\widehat{\mu_{C_i}} - e_q\|_{\mathcal{W}_1}\right) \\
&      & +  2\cdot\sum_{i=1}^k \sum_{p\in C_i^{\Far}\cap C'_i} \sum_{q\in C_i^{\Close}\cap C'_i}  
                    \left(\|e_p - \widehat{\mu_{C_i}}\|_{\mathcal{W}_1} + \|\widehat{\mu_{C_i}} - e_q\|_{\mathcal{W}_1}\right) \\
&      & +  \sum_{i=1}^k \sum_{p\in C_i^{\Far}\cap C'_i} \sum_{q\in C_i^{\Far}\cap C'_i} 
                   \left(\|e_p - \widehat{\mu_{C_i}}\|_{\mathcal{W}_1} + \|\widehat{\mu_{C_i}} - e_q\|_{\mathcal{W}_1}\right) \\
& \le & 2\cdot\sum_{i=1}^k \sum_{s=1}^R \sum_{h\in H_{i,\Mid_s}} \sum_{q\in B^{h}\cap C'_i}
                |C_i^{\Far}\cap C'_i|\cdot \|\widehat{\mu_{C_i}} - e_q\|_{\mathcal{W}_1} \\
&      & + 2\cdot\sum_{i=1}^k \sum_{p\in C_i^{\Far}\cap C'_i} |C_i^{\Mid}\cap C'_i|\cdot
                    \|e_p - \widehat{\mu_{C_i}}\|_{\mathcal{W}_1} \\
&     & + 2\cdot\sum_{i=1}^k \sum_{p\in C_i^{\Far}\cap C'_i} 
                    (1+ \beta^5)\cdot |C_i^{\Close}\cap C'_i|\cdot \|e_p - \widehat{\mu_{C_i}}\|_{\mathcal{W}_1} \\
&      & +  \sum_{i=1}^k \sum_{p\in C_i^{\Far}\cap C'_i} \sum_{q\in C_i^{\Far}\cap C'_i} 
                   \left(\|e_p - \widehat{\mu_{C_i}}\|_{\mathcal{W}_1} + \|\widehat{\mu_{C_i}} - e_q\|_{\mathcal{W}_1}\right) \\
& \le & 2\cdot \widehat{|C_i^{\Far}\cap D_i|}\cdot 
               \sum_{s=1}^R \frac{\beta^2}{(1+\gamma)^{s-1}}\cdot \widehat{|D_i|}\cdot 
                  \beta^{-2}(1+\gamma)^s\cdot\widehat{\widehat{\Delta_{C_i}}} \\
&      & +  2(1+\beta^5)\cdot \sum_{i=1}^k \sum_{p\in C_i^{\Far}\cap C'_i}  \widehat{|D_i|}\cdot 
                    \|e_p - \widehat{\mu_{C_i}}\|_{\mathcal{W}_1} \\
& \le & 11\beta^7\cdot\OPT + 2(1+\beta^5)\cdot\sum_{p\in A} d_p(f_{\mathcal{C}'}(p)) \\
& \le & 11\beta^7\cdot\OPT + 2(1+\beta^5)\cdot \sum_{p\in A} d_p(f_{\mathcal{D}}(p)),        
\end{eqnarray*} 
where the second inequality uses the fact that if $p$ is $\beta^{-7}$-far for $C_i$ and
$q$ is $\beta^{-2}$-close for $C_i$, then 
$\|\widehat{\mu_{C_i}} - e_q\|_{\mathcal{W}_1} \le \beta^5 \cdot \|e_p - \widehat{\mu_{C_i}}\|_{\mathcal{W}_1}$,
the third inequality uses the fact that $|C_i^{\Far}\cap C'_i|\le \widehat{|C_i^{\Far}\cap D_i|}$
and $|C'_i|\le \widehat{|D_i|}$, and the last inequality follows from the optimality of the assignment 
$f$ with respect to the assignment costs $d_p$ (note that for all $i$ we have 
$f(p) = f_{\mathcal{C}'}(p) = i$ for $p\in C_i^{\Far}\cap C'_i$).
\end{proof}

\begin{claim}\label{cl:C' close cost}
The contribution to $\Cost(\mathcal{C}')$ of distances involving at least one point in 
$C_{i,\Close}\cap C'_i$ and no point in $C_i^{\Far}\cap C'_i$ satisfies
\begin{eqnarray*}
&      & 2\cdot \sum_{i=1}^k \sum_{p\in C_{i,\Close}\cap C'_i} 
                             \sum_{q\in (C_{i,\Mid}\cup C_{i,\#}\cup C_{i,0})\cap C'_i} d(p,q)
           + \sum_{i=1}^k \sum_{p\in C_{i,\Close}\cap C'_i} 
                  \sum_{q\in C_{i,\Close}\cap C'_i} d(p,q) \\
& \le & 4\beta\cdot\sum_{p\in A} d_p(f_{\mathcal{D}}(p))
             + 2(1+2\gamma)\cdot\sum_{i=1}^k \sum_{p\in C_{i,\Close}\cap D_i}\sum_{q\in (C_{i,\#}\cup C_{i,0})\cap D_i} d(p,q) \\
&      & + 2(1+2\gamma)\cdot\sum_{i=1}^k \sum_{p\in C_{i,\Close}\cap D_i} \sum_{q\in C_{i,\Mid}\cap C'_i} d(p,q) 
             + \sum_{i=1}^k \sum_{p\in C_{i,\Close}\cap D_i}  \sum_{q\in C_{i,\Close}\cap D_i} d(p,q)
             + 6\beta\cdot \OPT.
\end{eqnarray*}
\end{claim}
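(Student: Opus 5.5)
The plan is to split the points of $C_{i,\Close}\cap C'_i$ according to how they were placed in $\mathcal{C}'$. The first kind are points assigned exactly as in $\mathcal{D}$; these lie in $C_i^{\neg\Far}\cap D_i$. The second kind are points of $\bigcup_j (C_j^{\Far}\cap F_j)$ that $f$ reassigned to $i$, together with points of $\bigcup_j(C_j^{\Far}\cap D_j)$ placed in $i$ by $l$. By definition, every $p\in C_{i,\Close}$ is $\beta^{-2}$-close for $C_i$. It also has $j_p>0$ and $\|e_p-\widehat{\mu_{C_{j_p}}}\|_{\mathcal{W}_1}\ge \beta\cdot\widehat{\widehat{\Delta_{C_i}}}$. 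So whenever $p$ is far for some cluster $j$ that it is moved out of or into, its assignment cost $d_p(j)\ge \widehat{|C_{j_p}|}\cdot\beta\widehat{\widehat{\Delta_{C_i}}}$ is large relative to the scale of cluster $i$. That is the cross-cluster charging I would use. The contribution of each moved point $p$ to the sums on the left is at most
\begin{equation*}
|C'_i|\cdot \max_{q\in C'_i\setminus C_i^{\Far}} d(p,q)\le \widehat{|D_i|}\cdot O(\beta^{-2}\cdots)\cdot\widehat{\widehat{\Delta_{C_i}}},
\end{equation*}
taken through $\widehat{\mu_{C_i}}$ using the triangle inequality, since $q$ is not $\beta^{-7}$-far. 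I would charge this to a $\beta$-fraction of $d_p(f_{\mathcal{D}}(p))$ or of $d_p(f(p))$, relying on the blocking rule in $d_p$ (cost $\infty$) to ensure that $\widehat{|C_i|}$ and $\widehat{|C_{j_p}|}$ are comparable up to $\beta^{4}$. Summing then gives the $4\beta\sum_p d_p(f_{\mathcal{D}}(p))$ term. Here I use that $f$ is optimal, so $\sum_p d_p(f(p))\le\sum_p d_p(f_{\mathcal{D}}(p))$.

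Next I handle the points of $C_{i,\Close}$ in $C'_i$ that are not moved, i.e. those in $C_{i,\Close}\cap D_i\cap C'_i$. For pairs with $q\in C_{i,\Close}$ or with $q\in C_{i,\#}\cup C_{i,0}$, I compare $C'_i$ with $D_i$ cell by cell, over the sets $B^h$. Within each $B^h$ the capacity constraints $\widehat{|B^h\cap D_i|}$, which are $\gamma^2$-approximate, bound $|B^h\cap C'_i|$. For $h\in H_{i,\#}$ I apply Claim~\ref{cl:H-isharp} (parts 2 and 3) to replace the distances to points of $B^h\cap C'_i$ by averages over $B^h\cap D_i$. Each such replacement costs an additive $O(\beta\widehat{\widehat{\Delta_{C_i}}})$ per pair. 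Multiplying by $|D_i|^2$ and summing over $i$ contributes $O(\beta)\cdot\OPT$, using $\widehat{\widehat{\Delta_{C_i}}}|C_i|^2\le 6\OPT$ from Equation~\eqref{eq:close dist}. The same replacement gives the multiplicative factor $(1+2\gamma)$ from the capacity slack. For $C_{i,0}$ points the assignment is unchanged, so there is nothing to do. For $C_{i,\Mid}$ I keep $C'_i$ on the $q$ side, as the target bound does, and only replace $p$ by $D_i$ membership.

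The main obstacle is the first step. A moved point $p\in C_{i,\Close}$ may sit inside the core of a much smaller cluster $i$ while being an outlier of a larger cluster $j_p$. Its pairwise distances inside $C'_i$ are then measured at scale $\widehat{\widehat{\Delta_{C_i}}}$, while its removal savings are measured at scale $\widehat{|C_{j_p}|}\|e_p-\widehat{\mu_{C_{j_p}}}\|_{\mathcal{W}_1}$. To get the factor $\beta$ I will need to use carefully both the threshold $\beta\cdot\widehat{\widehat{\Delta_{C_i}}}$ in the definition of $h_p$ and the size-ratio $\beta^4$ enforced by the $\infty$ entries of $d_p$. This is where I expect the argument to be most delicate, possibly requiring a case split on whether $\widehat{|C_{j_p}|}\ge\widehat{|C_i|}$.
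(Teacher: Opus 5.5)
Your split of $C_{i,\Close}\cap C'_i$ matches the paper, and your treatment of the unmoved points is fine. Write $G_i=C_{i,\Close}\cap D_i$ and $G'_i=C_{i,\Close}\cap C'_i$. The unmoved points are $G'_i\cap G_i$; the moved-in points are $G'_i\setminus G_i$, all placed by $l$. (The $f$-placed points you list cannot lie in $C_{i,\Close}\cap C'_i$, since they would be in $C_i^{\Far}$.)

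\textbf{The gap is in charging the moved-in points.} For such a $p$, let $i_p$ be the cluster with $p\in C_{i_p}^{\Far}\cap D_{i_p}$. Finiteness of $d_p(i_p)$ gives directly $\widehat{|C_i|}<\beta^4\,\widehat{|C_{i_p}|}$. No case split is needed, and the comparison is with $i_p$, not $j_p$. Minimality of $j_p$ gives $\|e_p-\widehat{\mu_{C_{i_p}}}\|_{\mathcal{W}_1}\ge\beta\,\widehat{\widehat{\Delta_{C_i}}}$. Together these give exactly $\widehat{|C_i|}\cdot\beta^{-2}\,\widehat{\widehat{\Delta_{C_i}}}\le\beta\, d_p(f_{\mathcal{D}}(p))$, and that is all the slack the definitions provide. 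Also note $d_p(f(p))=0$ for these points, so optimality of $f$ plays no role here.

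Your per-point bound $|C'_i|\cdot\max_{q\in C'_i\setminus C_i^{\Far}}d(p,q)$ ranges over $q\in C_{i,\Mid}$, which are only $\beta^{-7}$-close. So the maximum is of order $\beta^{-7}\widehat{\widehat{\Delta_{C_i}}}$, not $\beta^{-2}\widehat{\widehat{\Delta_{C_i}}}$. The resulting charge is about $\beta^{-4}\,d_p(f_{\mathcal{D}}(p))$, not $\beta\,d_p(f_{\mathcal{D}}(p))$. The real obstacle is the mid points, not the size/threshold interplay you flagged.

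\textbf{The paper's fix is an exchange argument} via Claim~\ref{cl:avg}. A moved-in $p$ is replaced by an average over $p'\in G_i\setminus G'_i$ when $|G_i\setminus G'_i|\ge\gamma|G_i|$; the $\gamma^2$-approximate capacities then give $|G'_i\setminus G_i|\le(1+\gamma)|G_i\setminus G'_i|$. Otherwise it averages over $p'\in G_i$. Only $d(p,p')\le 2\beta^{-2}\widehat{\widehat{\Delta_{C_i}}}$ is charged to $d_p(f_{\mathcal{D}}(p))$, which yields the $4\beta$. The terms $d(p',q)$, including those with $q$ mid, are absorbed into the $G_i$-based sums; that is where the $(1+2\gamma)$ factors on the right-hand side come from.

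\textbf{Alternatively, keep your direct charging but sum rather than maximize over the mid points.} The capacity constraints, together with Claim~\ref{claim:markov} applied to each bucket $\Mid_s$, give $\sum_{q\in C_{i,\Mid}\cap C'_i}\|e_q-\widehat{\mu_{C_i}}\|_{\mathcal{W}_1}=O(R)\cdot|C_i|\cdot\widehat{\widehat{\Delta_{C_i}}}$. This costs only an extra $O(R\beta^{3})\,d_p(f_{\mathcal{D}}(p))$ per moved point. It works because $\beta^2R\ll 1$, the same regime in which the $\beta^2/\gamma$ term of Lemma~\ref{lm:cost C'} is small. The coefficient then becomes $4\beta(1+o(1))$ rather than exactly $4\beta$.
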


\begin{proof}
Denote $G_i = C_{i,\Close}\cap D_i$ and $G'_i = C_{i,\Close}\cap C'_i$, consider 
$p\in G'_i\setminus G_i$ and let $i_p\ne i$ (not to be confused with $j_p$) be such 
that $p\in C_{i_p}^{\Far}\cap D_{i_p}$. Note that $i_p$ must exist by the construction 
of $\mathcal{C}'$, as $p\not\in D_i$. By the definition of $C_{i,\Close}$, we have that
\begin{equation}\label{eq:cls cost}
\widehat{|C_i|}\cdot \|e_p - \widehat{\mu_{C_i}}\|_{\mathcal{W}_1}
\le \widehat{|C_i|}\cdot \beta^{-2}\cdot\widehat{\widehat{\Delta_{C_i}}}
\le \beta^4\cdot \widehat{|C_{i_p}|}\cdot \beta^{-2}\cdot\beta^{-1}\cdot 
     \|e_p - \widehat{\mu_{C_{i_p}}}\|_{\mathcal{W}_1}
= \beta\cdot \widehat{|C_{i_p}|}\cdot \|e_p - \widehat{\mu_{C_{i_p}}}\|_{\mathcal{W}_1}.
\end{equation}
Further note that for every $i=1,2,\dots,k$, $C_{i,0}\cap C'_i = C_{i,0}\cap D_i$ (or
equivalently, for every $h\in H_{i,0}$, $B^h\cap C'_i = B^h\cap D_i$).

We need to consider two cases:

\paragraph{\boldmath Case I: $|G_i\setminus G'_i|\ge \gamma\cdot |G_i|$.}
Notice that in this case, 
$|G'_i\setminus G_i|\le |G_i\setminus G'_i| + \gamma^2\cdot |G_i|\le (1+\gamma)\cdot |G_i\setminus G'_i|$.
We first bound.
\begin{eqnarray}
\nonumber
&      & 2\cdot\sum_{p\in G'_i} \sum_{q\in (C_{i,\#}\cup C_{i,0})\cap C'_i} d(p,q) \\
\nonumber
& \le & 2\cdot\sum_{p\in G'_i\setminus G_i} \sum_{h'\in H_{i,\#}} \sum_{q\in B^{h'}\cap C'_i} 
            \frac{1}{|G_i\setminus G'_i|\cdot |B^{h'}\cap D_i|}\cdot \sum_{p'\in G_i\setminus G'_i}\sum_{q'\in B^{h'}\cap D_i} 
            \left(d(p,p') + d(p',q') + d(q',q)\right) \\
\nonumber
&     & + 2\cdot\sum_{p\in G'_i\setminus G_i} \sum_{q\in C_{i,0}\cap C'_i} 
            \frac{1}{|G_i\setminus G'_i|}\cdot \sum_{p'\in G_i\setminus G'_i} \left(d(p,p') + d(p',q)\right) \\
\nonumber
&      & + 2\cdot\sum_{p\in G'_i\cap G_i}
            \sum_{h'\in H_{i,\#}} \sum_{q\in B^{h'}\cap C'_i} 
            \frac{1}{|B^{h'}\cap D_i|}\cdot \sum_{q'\in B^{h'}\cap D_i} 
            \left(d(p,q') + d(q',q)\right) + 2\cdot\sum_{p\in G'_i\cap G_i} \sum_{q\in C_{i,0}\cap C'_i} d(p,q) \\
\nonumber
& \le & 2\cdot\sum_{p\in G'_i\setminus G_i} |(C_{i,\#}\cup C_{i,0})\cap C'_i|\cdot
              2\cdot\beta^{-2}\cdot\widehat{\widehat{\Delta_{C_i}}}
           + 2\cdot\frac{|G'_i\setminus G_i|}{|G_i\setminus G'_i|}\cdot \sum_{h\in H_{i,\#}\cup H_{i,0}} 
               \frac{|B^h\cap C'_i|}{|B^h\cap D_i|}\cdot \sum_{p\in G_i}\sum_{q\in B^h\cap D_i} d(p,q) \\
\nonumber
&      & + 2\cdot\sum_{h\in H_{i,\#}\cup H_{i,0}} \frac{|B^h\cap C'_i|}{|B^h\cap D_i|}\cdot
                \sum_{p\in G'_i\cap G_i}\sum_{q\in B^h\cap D_i} d(p,q)
            + 2\cdot\sum_{p\in G'_i\cap G_i} |C_{i,\#}\cap C'_i|\cdot 2\cdot\beta\cdot \widehat{\widehat{\Delta_{C_i}}} \\            
\nonumber
& \le & 4\cdot\sum_{p\in G'_i\setminus G_i} |(C_{i,\#}\cup C_{i,0})\cap C'_i|\cdot\beta^{-2}\cdot\widehat{\widehat{\Delta_{C_i}}}
           + 2(1+\gamma^2)^2\cdot\sum_{p\in G_i}\sum_{q\in (C_{i,\#}\cup C_{i,0})\cap D_i} d(p,q) 
           + \beta\cdot |C'_i|^2\cdot \widehat{\widehat{\Delta_{C_i}}} \\   
\label{eq:terms dist cls sharp}
& \le & 4\cdot\sum_{p\in G'_i\setminus G_i} |(C_{i,\#}\cup C_{i,0})\cap C'_i|\cdot\beta^{-2}\cdot\widehat{\widehat{\Delta_{C_i}}}
           + 2(1+\gamma)\cdot\sum_{p\in G_i}\sum_{q\in (C_{i,\#}\cup C_{i,0})\cap D_i} d(p,q) 
           + \beta\cdot |C'_i|^2\cdot \widehat{\widehat{\Delta_{C_i}}}, 
\end{eqnarray}
where the first inequality uses Claim~\ref{cl:avg}, the second inequality uses the fact that
$p,p'\in C_i^{\Close}$ and $q,q'\in C_{i,j}$ for some $j$, and the third inequality uses bounds 
on the cardinalities of the sets.

Next, we bound
\begin{eqnarray}
\nonumber
&      & 2\cdot\sum_{p\in G'_i} \sum_{q\in C_i^{\Mid}\cap C'_i} d(p,q) \\
\nonumber
& \le & 2\cdot\sum_{p\in G'_i\setminus G_i} \frac{1}{|G_i\setminus G'_i|}\cdot\sum_{p'\in G_i\setminus G'_i} 
                    \sum_{q\in C_i^{\Mid}\cap C'_i} \left(d(p,p') + d(p',q)\right) 
                  + 2\cdot\sum_{p\in G'_i\cap G_i} \sum_{q\in C_i^{\Mid}\cap C'_i} d(p,q) \\
\nonumber
&  =  & 2\cdot\sum_{p\in G'_i\setminus G_i} \frac{|C_i^{\Mid}\cap C'_i|}{|G_i\setminus G'_i|}\cdot
                     \sum_{p'\in G_i\setminus G'_i} d(p,p')
               + 2\cdot\frac{|G'_i\setminus G_i|}{|G_i\setminus G'_i|} \cdot 
                       \sum_{p\in G_i\setminus G'_i}  \sum_{q\in C_i^{\Mid}\cap C'_i} d(p,q) \\
\nonumber
&      &  + 2\cdot\sum_{p\in G_i\cap G'_i} \sum_{q\in C_i^{\Mid}\cap C'_i} d(p,q) \\
\label{eq:terms dist cls mid}
& \le & 4\cdot\sum_{p\in G'_i\setminus G_i} |C_i^{\Mid}\cap C'_i|\cdot\beta^{-2}\cdot\widehat{\widehat{\Delta_{C_i}}}
                + 2(1+\gamma)\cdot\sum_{p\in G_i}  \sum_{q\in C_i^{\Mid}\cap C'_i} d(p,q).
\end{eqnarray}

\paragraph{\boldmath Case II: $|G_i\setminus G'_i| < \gamma\cdot |G_i|$.} 
In this case, 
$|G'_i\setminus G_i|\le |G_i\setminus G'_i| + \gamma^2\cdot |G_i| < (\gamma+\gamma^2)\cdot |G_i|$.
We repeat the above arguments, averaging over $G_i$ instead of $G_i\setminus G'_i$
(which may be empty). We get
\begin{eqnarray*}
&      & 2\cdot\sum_{p\in G'_i} \sum_{q\in (C_{i,\#}\cup C_{i,0}\cap C'_i} d(p,q) \\
& \le & 2\cdot\sum_{p\in G'_i\setminus G_i} \sum_{h'\in H_{i,\#}} \sum_{q\in B^{h'}\cap C'_i} 
            \frac{1}{|G_i|\cdot |B^{h'}\cap D_i|}\cdot \sum_{p'\in G_i}\sum_{q'\in B^{h'}\cap D_i} 
            \left(d(p,p') + d(p',q') + d(q',q)\right) \\
&     & + 2\cdot\sum_{p\in G'_i\setminus G_i} \sum_{q\in C_{i,0}\cap C'_i} 
            \frac{1}{|G_i|}\cdot \sum_{p'\in G_i} \left(d(p,p') + d(p',q)\right) \\
&      & + 2\cdot\sum_{p\in G'_i\cap G_i}
            \sum_{h'\in H_{i,\#}} \sum_{q\in B^{h'}\cap C'_i} 
            \frac{1}{|B^{h'}\cap D_i|}\cdot \sum_{q'\in B^{h'}\cap D_i} 
            \left(d(p,q') + d(q',q)\right) 
            + 2\cdot\sum_{p\in G'_i\cap G_i} \sum_{q\in C_{i,0}\cap C'_i} d(p,q)\\
& \le & 2\cdot\sum_{p\in G'_i\setminus G_i} |(C_{i,\#}\cup C_{i,0})\cap C'_i|\cdot
              2\cdot\beta^{-2}\cdot\widehat{\widehat{\Delta_{C_i}}}
           + 2\cdot\frac{|G'_i\setminus G_i|}{|G_i|}\cdot \sum_{h\in H_{i,\#}\cup H_{i,0}} 
               \frac{|B^h\cap C'_i|}{|B^h\cap D_i|}\cdot \sum_{p\in G_i}\sum_{q\in B^h\cap D_i} d(p,q) \\
&      & + 2\cdot\sum_{h\in H_{i,\#}\cup H_{i,0}} \frac{|B^h\cap C'_i|}{|B^h\cap D_i|}\cdot
                \sum_{p\in G'_i\cap G_i}\sum_{q\in B^h\cap D_i} d(p,q)
            + 2\cdot\sum_{p\in G'_i\cap G_i} |C_{i,\#}\cap C'_i|\cdot 2\cdot\beta\cdot \widehat{\widehat{\Delta_{C_i}}} \\            
& \le & 4\cdot\sum_{p\in G'_i} |(C_{i,\#}\cup C_{i,0})\cap C'_i|\cdot\beta^{-2}\cdot\widehat{\widehat{\Delta_{C_i}}}
           + 2(1+2\gamma)\cdot\sum_{p\in G_i}\sum_{q\in (C_{i,\#}\cup C_{i,0})\cap D_i} d(p,q) 
           + \beta\cdot\sum_{i=1}^k  |C'_i|^2\cdot \widehat{\widehat{\Delta_{C_i}}},
\end{eqnarray*}
similarly to Equation~\eqref{eq:terms dist cls sharp}. Also
\begin{eqnarray}
\nonumber
&      & 2\cdot\sum_{p\in G'_i} \sum_{q\in C_i^{\Mid}\cap C'_i} d(p,q) \\
\nonumber
& \le & 2\cdot\sum_{p\in G'_i\setminus G_i} \frac{1}{|G_i|}\cdot\sum_{p'\in G_i} 
                    \sum_{q\in C_i^{\Mid}\cap C'_i} \left(d(p,p') + d(p',q)\right) 
                  + 2\cdot\sum_{p\in G'_i\cap G_i} \sum_{q\in C_i^{\Mid}\cap C'_i} d(p,q) \\
\nonumber
&  =  & 2\cdot\sum_{p\in G'_i\setminus G_i} \frac{|C_i^{\Mid}\cap C'_i|}{|G_i|}\cdot
                     \sum_{p'\in G_i} d(p,p')
               + 2\cdot\frac{|G'_i\setminus G_i|}{|G_i|} \cdot 
                       \sum_{p\in G_i}  \sum_{q\in C_i^{\Mid}\cap C'_i} d(p,q) \\
\nonumber
&      &  + 2\cdot\sum_{p\in G_i\cap G'_i} \sum_{q\in C_i^{\Mid}\cap C'_i} d(p,q) \\
\nonumber
& \le & 4\cdot\sum_{p\in G'_i\setminus G_i} |C_i^{\Mid}\cap C'_i|\cdot\beta^{-2}\cdot\widehat{\widehat{\Delta_{C_i}}}
                + 2(1+2\gamma)\cdot\sum_{p\in G_i}  \sum_{q\in C_i^{\Mid}\cap C'_i} d(p,q),
\end{eqnarray}
similarly to Equation~\eqref{eq:terms dist cls mid}.

\paragraph{Both cases.}
Finally, we bound
\begin{eqnarray}
\nonumber
&      & \sum_{p\in G'_i} \sum_{q\in G'_i} d(p,q) = 
            2\cdot\sum_{p\in G'_i\setminus G_i} \sum_{q\in G'_i\cap G_i} d(p,q) 
            + \sum_{p\in G'_i\setminus G_i} \sum_{q\in G'_i\setminus G_i} d(p,q) 
            + \sum_{p\in G'_i\cap G_i} \sum_{q\in G'_i\cap G_i} d(p,q) \\
\label{eq:terms dist cls cls}
& \le & 2\cdot \sum_{p\in G'_i\setminus G_i} |G'_i\cap G_i|\cdot\beta^{-2}\cdot\widehat{\widehat{\Delta_{C_i}}}
            + \sum_{p\in G'_i\setminus G_i} |G'_i\setminus G_i|\cdot\beta^{-2}\cdot\widehat{\widehat{\Delta_{C_i}}}
            + \sum_{p\in G'_i\cap G_i} \sum_{q\in G'_i\cap G_i} d(p,q).
\end{eqnarray}

To complete the proof, we sum the right-hand sides of 
Equations~\eqref{eq:terms dist cls sharp} and~\eqref{eq:terms dist cls mid} (or the corresponding Case II
equations), and Equation~\eqref{eq:terms dist cls cls} on all $k$ clusters to get an upper bound of
\begin{eqnarray*}
\nonumber
&      & 4\cdot\sum_{i=1}^k \sum_{p\in G'_i\setminus G_i} |C'_i|\cdot\beta^{-2}\cdot\widehat{\widehat{\Delta_{C_i}}}
            + 2(1+2\gamma)\cdot\sum_{i=1}^k \sum_{p\in G_i}\sum_{q\in (C_{i,\#}\cup C_{i,0})\cap D_i} d(p,q) \\
&      & + 2(1+2\gamma)\cdot\sum_{i=1}^k \sum_{p\in G_i}\sum_{q\in C_i^{\Mid}\cap C'_i} d(p,q)
            + \sum_{i=1}^k \sum_{p\in G_i}\sum_{q\in G_i} d(p,q)
            + \beta\cdot \sum_{i=1}^k |C'_i|^2\cdot \widehat{\widehat{\Delta_{C_i}}} \\
& \le & 4\beta \cdot\sum_{i=1}^k \sum_{p\in G'_i\setminus G_i} |C'_{i_p}|\cdot\|e_p - \widehat{\mu_{C_{i_p}}}\|_{\mathcal{W}_1}
            + 2(1+2\gamma)\cdot\sum_{i=1}^k \sum_{p\in G_i}\sum_{q\in (C_{i,\#}\cup C_{i,0})\cap D_i} d(p,q) \\
&      & + 2(1+2\gamma)\cdot\sum_{i=1}^k \sum_{p\in G_i}\sum_{q\in C_i^{\Mid}\cap C'_i} d(p,q)
            + \sum_{i=1}^k \sum_{p\in G_i}\sum_{q\in G_i} d(p,q)
            + 6\beta\cdot\OPT \\
& \le & 4\beta\cdot\sum_{p\in A} d_p(f_{\mathcal{D}}(p))
             + 2(1+2\gamma)\cdot\sum_{i=1}^k \sum_{p\in G_i}\sum_{q\in (C_{i,\#}\cup C_{i,0})\cap D_i} d(p,q) \\
&      & + 2(1+2\gamma)\cdot\sum_{i=1}^k \sum_{p\in G_i}\sum_{q\in C_i^{\Mid}\cap C'_i} d(p,q)
             + \sum_{i=1}^k \sum_{p\in G_i}\sum_{q\in G_i} d(p,q)
             + 6\beta\cdot\OPT,
\end{eqnarray*}
where the first inequality uses Equations~\eqref{eq:cls cost} and~\eqref{eq:Delta hat hat}.
\end{proof}

\begin{claim}\label{cl:C' mid cost}
The contribution to $\Cost(\mathcal{C}')$ of distances involving at least one point in 
$C_i^{\Mid}\cap C'_i$ and no point in $(C_i^{\Far}\cup C_{i,\Close})\cap C'_i$ satisfies
\begin{eqnarray*}
&      & 2\cdot\sum_{i=1}^k \sum_{p\in C_i^{\Mid}\cap C'_i} \sum_{q\in (C_{i,\#}\cup C_{i,0})\cap C'_i} d(p,q)
            + \sum_{i=1}^k \sum_{p\in C_i^{\Mid}\cap C'_i} \sum_{q\in C_i^{\Mid}\cap C'_i} d(p,q) \\
& \le & 2(1+3k\beta^3+5\gamma)\cdot \sum_{i=1}^k \sum_{p\in C_i^{\Mid}\cap D_i} \sum_{q\in C_i^{\Close}\cap D_i} d(p,q)
           + \left(\frac{13\beta^2}{\gamma} + 6\beta\right)\cdot \OPT \\
&      & - 2(1+2\gamma)\cdot\sum_{i=1}^k \sum_{p\in C_{i,\Mid}\cap C'_i} \sum_{q\in C_{i,\Close}\cap D_i} d(p,q).
\end{eqnarray*}
\end{claim}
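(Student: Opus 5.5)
We follow the template of Claim~\ref{cl:C' close cost}, replacing the role of $G_i=C_{i,\Close}\cap D_i$ by the level sets of the mid region. For each $i$ and $s\in\{1,\dots,R\}$ write $M_{i,s}=C_{i,\Mid_s}\cap D_i$ and $M'_{i,s}=C_{i,\Mid_s}\cap C'_i$. Every point of $C_{i,\Mid_s}$ lies at $\mathcal{W}_1$-distance between $\beta^{-2}(1+\gamma)^{s-1}\widehat{\widehat{\Delta_{C_i}}}$ and $\beta^{-2}(1+\gamma)^{s}\widehat{\widehat{\Delta_{C_i}}}$ from $\widehat{\mu_{C_i}}$. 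Hence two points in the same shell have essentially the same ``radius'', and the points of $C_i^{\Close}$ are at distance at most a $\beta^{0}(1+\gamma)^{-(s-1)}$-fraction of that radius from the center. The capacity constraints give $|B^h\cap C'_i|\le \widehat{|B^h\cap D_i|}\le(1+\gamma^2)|B^h\cap D_i|$ for every $h$, so $|M'_{i,s}|\le(1+\gamma^2)|M_{i,s}|$ up to rounding.

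\textbf{The cross term.} For $p\in C_i^{\Mid}\cap C'_i$ and $q\in(C_{i,\#}\cup C_{i,0})\cap C'_i$, we apply Claim~\ref{cl:avg} and part~3 of Claim~\ref{cl:H-isharp} to replace $q$ by an average over $B^{h'}\cap D_i$ at additive cost $2\beta\widehat{\widehat{\Delta_{C_i}}}$. We replace $p$, when $p\notin D_i$, by an average over $M_{i,s}$ (or over $M_{i,s}\setminus M'_{i,s}$, using the same Case~I/II split by $\gamma|M_{i,s}|$ as before). The key geometric fact is that $q$ is close to some center $\widehat{\mu_{C_{j}}}$ at scale $\beta\widehat{\widehat{\Delta_{C_i}}}$, while it is $\beta^{-2}$-close to $C_i$. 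Therefore $d(p,q)$ and $\|e_p-\widehat{\mu_{C_i}}\|$ agree up to $O(\beta^{-2})\widehat{\widehat{\Delta_{C_i}}}$, and swapping $p$ for an average $p'$ in the same shell changes the distance by at most a $(1+\gamma)$ factor plus $O(\beta^{-2}\widehat{\widehat{\Delta_{C_i}}})$.

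The point is to write $d(p,q)\le \|e_p-\widehat{\mu_{C_i}}\|+\|e_q-\widehat{\mu_{C_i}}\|$ and compare it to $d(p',q')$ for $q'\in C_i^{\Close}\cap D_i$, rather than to $(C_{i,\#}\cup C_{i,0})\cap D_i$. This is why the right-hand side is expressed through $\sum_{p\in C_i^{\Mid}\cap D_i}\sum_{q\in C_i^{\Close}\cap D_i}d(p,q)$: since $d(p',q')\ge(1-O(\beta^2))\|e_{p'}-\widehat{\mu_{C_i}}\|$ for close $q'$, we get
\[
\sum_{p\in M'_{i,s}}\sum_q d(p,q)\lesssim (1+5\gamma)\,\frac{|C'_i\cap(C_{i,\#}\cup C_{i,0})|}{|C_i^{\Close}\cap D_i|}\sum_{p\in M_{i,s}}\sum_{q\in C_i^{\Close}\cap D_i}d(p,q).
\]
We then need the pieces $C_{i,\#}\cup C_{i,0}$ and $C_{i,\Close}$ of $C_i^{\Close}\cap D_i$ to account separately for the two halves of the close region. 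This is exactly what produces the subtracted term $-2(1+2\gamma)\sum_{p\in C_{i,\Mid}\cap C'_i}\sum_{q\in C_{i,\Close}\cap D_i}d(p,q)$: the full close-mid interaction of $\mathcal{D}$ is charged, minus the part already paid in Claim~\ref{cl:C' close cost}. The factor $3k\beta^3$ comes from $|D_i|$ versus $|C_i|$ via~\eqref{eq:D card}.

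\textbf{The mid--mid term.} For the term $\sum_{p,q\in C_i^{\Mid}\cap C'_i}d(p,q)$ we bound $d(p,q)\le\|e_p-\widehat{\mu_{C_i}}\|+\|e_q-\widehat{\mu_{C_i}}\|$. Using $|C_i^{\Mid}\cap C'_i|\le(1+\gamma^2)\sum_s|M_{i,s}|$, together with Claim~\ref{claim:markov} (at most $\beta^2|C_i|$ points are $\beta^{-2}$-far), we get $|C_i^{\Mid}\cap C'_i|=O(\beta^2|C_i|)$. The mid--mid sum is then at most $O(\beta^2)|C_i|\sum_{p\in C_i^{\Mid}\cap C'_i}\|e_p-\widehat{\mu_{C_i}}\|$, which is a $\beta^2$-fraction of the cross term with the close region. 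The residual shell-rounding losses sum over $s\le R$; bounding the geometric series $\sum_s(1+\gamma)^{s}$ against the Markov count $\beta^{2}(1+\gamma)^{-(s-1)}$ per shell, exactly as in the first sum of Claim~\ref{cl:C' far cost}, gives $R\cdot\beta^2\cdot O(\OPT)\le 13\beta^2\gamma^{-1}\OPT$. The additive $O(\beta\widehat{\widehat{\Delta_{C_i}}})$ terms sum to $\beta\sum_i|C'_i|^2\widehat{\widehat{\Delta_{C_i}}}\le 6\beta\OPT$, as before.

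\textbf{Main obstacle.} The hardest step is the bookkeeping that matches the mid--$(\#\cup0)$ interaction of $\mathcal{C}'$ against the mid--close interaction of $\mathcal{D}$ with the subtracted $C_{i,\Close}$ piece. This requires the relative sizes of $C'_i\cap C_{i,r}$ and $D_i\cap C_{i,r}$ to match within $1+\gamma^2$ for each $r$ separately. That comes from the capacity constraints on each $B^h$, since $C_{i,r}$ is a union of $B^h$'s. I would first try to do the averaging class-by-class over $h$, as in~\eqref{eq:terms dist cls sharp}, so that every ratio $|B^h\cap C'_i|/|B^h\cap D_i|$ appears explicitly and can be bounded by $1+\gamma^2$.
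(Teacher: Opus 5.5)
The proposal has a genuine gap: the step that moves $p$ from $C'_i$ to $D_i$ in the cross term loses a constant factor, where the claim needs $1+O(\gamma)$.

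\textbf{Why your transport of $p$ fails.} Replacing $q$ via part~3 of Claim~\ref{cl:H-isharp} is fine, and so is your handling of the mid--mid term. In fact $2\,|C_i^{\Mid}\cap C'_i|\sum_{p\in C_i^{\Mid}\cap C'_i}\|e_p-\widehat{\mu_{C_i}}\|_{\mathcal{W}_1}=O(\beta^2)\,|C_i|^2\widehat{\widehat{\Delta_{C_i}}}$ needs no geometric series over shells; note that $R\beta^2$ is not $O(\beta^2/\gamma)$, since $R$ carries a $\log\frac{1}{\beta}$.

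Both of your mechanisms for moving $p$ lose, per pair, an amount at the scale of the mid shell:
\begin{itemize}
\item Averaging over $M_{i,s}$ via Claim~\ref{cl:avg} pays $d(p,p')\le 2(1+\gamma)\|e_p-\widehat{\mu_{C_i}}\|_{\mathcal{W}_1}$.
\item ``Same shell up to $(1+\gamma)$ plus $O(\beta^{-2}\widehat{\widehat{\Delta_{C_i}}})$'' pays $O(\beta^{-2}\widehat{\widehat{\Delta_{C_i}}})$.
\end{itemize}
Sum either loss over up to $|C'_i|$ close points $q$ and at most $(1+\gamma^2)\beta^2|C_i|$ mid points $p$ (Claim~\ref{claim:markov}). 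The result is $\Theta(|C_i|^2\widehat{\widehat{\Delta_{C_i}}})$, the same order as the mid--close interaction itself and as $\Cost(C_i)+\frac{\varepsilon}{k}\OPT$. The $\beta^{-2}$ cancels the Markov $\beta^{2}$.

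In Claim~\ref{cl:C' close cost} the analogous loss for $p\in G'_i\setminus G_i$ is affordable only because of Equation~\eqref{eq:cls cost}. There, $p\in C_{i_p}^{\Far}\cap D_{i_p}$ together with $p\in C_i^{\Close}$ forces $\widehat{|C_i|}<\beta^4\widehat{|C_{i_p}|}$, so the loss is charged to $\beta\, d_p(f_{\mathcal D}(p))$. Since $C_{\mapsfrom j}$ is defined through close points only, a point of $(C_i^{\Mid}\cap C'_i)\setminus D_i$ satisfies no such size relation. So the Case~I/II template does not transfer.

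Likewise, $d(p',q')\ge(1-O(\beta^2))\|e_{p'}-\widehat{\mu_{C_i}}\|_{\mathcal{W}_1}$ is false pointwise for a mid point $p'$: a shell-$1$ mid point and a close point just inside the $\beta^{-2}$ threshold may be arbitrarily near. That bound holds only for $\beta^{-7}$-far versus $\beta^{-2}$-close points. Your displayed proportional inequality is therefore not justified by the argument you give.

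\textbf{The missing idea.} Sum over $q$ \emph{before} comparing, so the slack is the average distance of close points to $\widehat{\mu_{C_i}}$ rather than the maximum.
\begin{itemize}
\item With $S=C_i^{\Close}\cap C_i$ one has $\sum_{q\in S}d(p,q)=|S|\cdot\|e_p-\mu_S\|_{\mathcal{W}_1}$.
\item Also $\|\mu_S-\widehat{\mu_{C_i}}\|_{\mathcal{W}_1}\le\frac{1}{|S|}\sum_{x\in C_i}\|e_x-\widehat{\mu_{C_i}}\|_{\mathcal{W}_1}=O(\widehat{\widehat{\Delta_{C_i}}})$, a factor $\beta^2$ better than the pointwise slack.
\end{itemize}
The paper then moves $p$ from $C'_i$ to $D_i$ only inside $\sum_p\|e_p-\widehat{\mu_{C_i}}\|_{\mathcal{W}_1}$. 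It does this shell by shell via the capacity constraints, at loss $(1+\gamma)(1+\gamma^2)$, with no $d(p,p')$ terms and no case split. It then converts back through $\mu_S$, and handles $q\in D_i\setminus C_i$ (at most $k\beta^3|C_i|$ close points) separately. All additive losses become $O(\beta^2)\bigl(\Cost(C_i)+\frac{\varepsilon}{k}\OPT\bigr)$, as in Equation~\eqref{eq:mid-close cluster cost}.

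The subtracted term also does not require matching $|C'_i\cap C_{i,r}|$ with $|D_i\cap C_{i,r}|$ class by class. After replacing $q$ by points of $D_i$, use the exact disjoint decomposition $C_i^{\Close}=C_{i,\#}\sqcup C_{i,0}\sqcup C_{i,\Close}$, with $p$ still ranging over $C_i^{\Mid}\cap C'_i$. Only the full mid--close piece is transported to $D_i$. The piece with $q\in C_{i,\Close}\cap D_i$ is kept verbatim, and that is exactly what cancels the corresponding term in Claim~\ref{cl:C' close cost}.
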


\begin{proof}
Note that for $s,t\in \{1,\dots,R\}$, for $h\in H_{i,\Mid_s}$ and $h'\in H_{i,\Mid_t}$, and for $p\in B^h$ 
and $q\in B^{h'}$, we have that
\begin{equation}\label{eq:mid dist}
d(p,q)\le \beta^{-2}\cdot \left((1+\gamma)^s + (1+\gamma)^t\right)\cdot\widehat{\widehat{\Delta_{C_i}}}.
\end{equation}
Therefore, 
\begin{eqnarray}
\nonumber
&      & \sum_{i=1}^k \sum_{p\in C_i^{\Mid}\cap C'_i} \sum_{q\in C_i^{\Mid}\cap C'_i} d(p,q) \\
\nonumber
&  =  & \sum_{i=1}^k \sum_{s=1}^R \sum_{t=1}^R \sum_{h\in H_{i,\Mid_s}} \sum_{h'\in H_{i,\Mid_t}}
                  \sum_{p\in B^h\cap C'_i} \sum_{q\in B^{h'}\cap C'_i} d(p,q) \\
\nonumber
& \le & \beta^{-2}\cdot\sum_{i=1}^k \widehat{\widehat{\Delta_{C_i}}}\cdot 
                 \sum_{s=1}^R \sum_{t=1}^R  \left((1+\gamma)^s + (1+\gamma)^t\right)
                \sum_{h\in H_{i,\Mid_s}} |B^h\cap C'_i|\cdot \sum_{h'\in H_{i,\Mid_t}} |B^{h'}\cap C'_i| \\
\nonumber
& \le & \beta^{-2}\cdot\sum_{i=1}^k \widehat{\widehat{\Delta_{C_i}}}\cdot 
                 \sum_{s=1}^R \sum_{t=1}^R  \left((1+\gamma)^s + (1+\gamma)^t\right)\cdot
                 \frac{\beta^2\cdot |\widehat{C_i}|}{(1+ \gamma)^{s-1}}\cdot
                 \frac{\beta^2\cdot |\widehat{C_i}|}{(1+ \gamma)^{t-1}} \\
\label{eq:mid-mid}
& \le & 2(1+\gamma)\cdot \frac{\beta^2}{\gamma}\cdot 
               \sum_{i=1}^k |\widehat{C_i}|^2\cdot \widehat{\widehat{\Delta_{C_i}}}\le \frac{12\beta^2}{\gamma}\cdot \OPT,
\end{eqnarray}
where the first inequality uses Equation~\eqref{eq:mid dist}, the second inequality uses 
Claim~\ref{claim:markov} along with the fact that $p$ is $\beta^{-2}\cdot (1+\gamma)^{s-1}$-far 
for $C_i$ (and the similar fact for $q$), and the third inequality uses the bound 
$\sum_{s=0}^{\infty} (1+\gamma)^{-s} = \frac{1}{\gamma}$.

Further note that for every $i\in\{1,2,\dots,k\}$,
\begin{eqnarray}
\nonumber
&      & \sum_{p\in C_i^{\Mid}\cap C'_i} \sum_{q\in (C_{i,\#}\cup C_{i,0})\cap C'_i} d(p,q) \\
\nonumber
& \le & \sum_{p\in C_i^{\Mid}\cap C'_i} \sum_{h\in H_{i,\#}}
              \sum_{q\in B^h\cap C'_i} \frac{1}{|B^h\cap D_i|}\cdot\sum_{q'\in B^h\cap D_i} 
              \left(d(p,q') + 2\beta\cdot \widehat{\widehat{\Delta_{C_i}}}\right) \\
\nonumber
&     & \sum_{p\in C_i^{\Mid}\cap C'_i} \sum_{q\in C_{i,0}\cap C'_i} d(p,q) \\
\label{eq:mid-sharp}
& \le & (1+\gamma^2)\cdot\sum_{p\in C_i^{\Mid}\cap C'_i} \sum_{q\in (C_{i,\#}\cup C_{i,0})\cap D_i} d(p,q)
             + \frac{\beta}{2}\cdot |C'_i|^2\cdot \widehat{\widehat{\Delta_{C_i}}}
\end{eqnarray}
where the first inequality uses Claim~\ref{cl:H-isharp}, and the second inequality
uses bounds on the cardinalities of the relevant sets.

In order to complete the proof, we need to upper bound the distances involving a point in 
$C_i^{\Close}\cap D_i$. To do this, we first argue that for every $i\in\{1,2,\dots,k\}$,
\begin{eqnarray*}
            \|\mu_{C_i^{\Close}\cap C_i} - \widehat{\mu_{C_i}}\|_{\mathcal{W}_1} 
&  =  & \frac{1}{|C_i^{\Close}\cap C_i|}\cdot \sum_{p\in C_i^{\Close}\cap C_i} \|e_p - \widehat{\mu_{C_i}}\|_{\mathcal{W}_1} \\
& \le & \frac{1}{|C_i^{\Close}\cap C_i|}\cdot \sum_{p\in C_i} \|e_p - \widehat{\mu_{C_i}}\|_{\mathcal{W}_1} \\
& \le & \frac{\varepsilon\cdot\OPT}{20k\cdot |C_i^{\Close}\cap C_i|\cdot |C_i|} + 
                   \frac{5\cdot \Cost(C_i)}{|C_i^{\Close}\cap C_i|\cdot |C_i|},
\end{eqnarray*}
where we have used the fact that $\widehat{\mu_{C_i}}$ is an $(\varepsilon,5)$-approximate center for $C_i$.
Thus, we have that for every $i=,1,2\dots,k$,
\begin{eqnarray}
\nonumber
&      & \sum_{p\in C_i^{\Mid}\cap C'_i} \sum_{q\in C_i^{\Close}\cap D_i} d(p,q) \\
\nonumber
&  =  & \sum_{s=1}^R \sum_{h\in H_{i,\Mid_s}} \sum_{p\in B^h\cap C'_i} \sum_{q\in C_i^{\Close}\cap D_i} d(p,q) \\
\nonumber
& \le & \sum_{s=1}^R \sum_{h\in H_{i,\Mid_s}} \sum_{p\in B^h\cap C'_i} 
            \left(\sum_{q\in C_i^{\Close}\cap C_i} d(p,q) + \sum_{q\in D_i\setminus C_i} d(p,q)\right) \\
\nonumber
& \le & \sum_{s=1}^R \sum_{h\in H_{i,\Mid_s}} \sum_{p\in B^h\cap C'_i}
              |C_i^{\Close}\cap C_i|\cdot \|e_p - \mu_{C_i^{\Close}\cap C_i}\|_{\mathcal{W}_1} \\
\nonumber
&      & + \sum_{s=1}^R \sum_{h\in H_{i,\Mid_s}} \sum_{p\in B^h\cap C'_i}
                \sum_{q\in D_i\setminus C_i} \left(\|e_p - \widehat{\mu_{C_i}}\|_{\mathcal{W}_1} + 
                   \|\widehat{\mu_{C_i}} -e_q\|_{\mathcal{W}_1}\right) \\
\nonumber
& \le & \sum_{s=1}^R \sum_{h\in H_{i,\Mid_s}} \sum_{p\in B^h\cap C'_i}
              |C_i^{\Close}\cap C_i|\cdot \left(\|e_p - \widehat{\mu_{C_i}}\|_{\mathcal{W}_1} +
               \|\widehat{\mu_{C_i}} - \mu_{C_i^{\Close}\cap C_i}\|_{\mathcal{W}_1} \right) \\
\nonumber
&      & + \sum_{s=1}^R \sum_{h\in H_{i,\Mid_s}} \sum_{p\in B^h\cap C'_i}
                \sum_{q\in D_i\setminus C_i} \left(\|e_p - \widehat{\mu_{C_i}}\|_{\mathcal{W}_1} + 
                \beta^{-2}\cdot \widehat{\widehat{\Delta_{C_i}}}\right) \\
\nonumber
& \le & \left(|C_i| + |D_i\setminus C_i|\right)\cdot \sum_{s=1}^R \sum_{h\in H_{i,\Mid_s}} \sum_{p\in B^h\cap C'_i}
                  \|e_p - \widehat{\mu_{C_i}}\|_{\mathcal{W}_1} \\
\nonumber
&      & + |C_i^{\Mid}\cap C'_i|\cdot\left(\frac{\varepsilon\OPT}{20k\cdot |C_i|} + \frac{5\cdot\Cost(C_i)}{|C_i|}\right)
            + \beta^{-2}\cdot |C_i^{\Mid}\cap C'_i|\cdot |D_i\setminus C_i|\cdot\widehat{\widehat{\Delta_{C_i}}} \\
\nonumber
& \le & (1+\gamma)(1+\gamma^2)(1+k\beta^3)\cdot |C_i| \cdot \sum_{s=1}^R \sum_{h\in H_{i,\Mid_s}} \sum_{p\in B^h\cap D_i}
                  \|e_p - \widehat{\mu_{C_i}}\|_{\mathcal{W}_1} \\
\nonumber
&      & + (1+\gamma^2)\beta^2\cdot\left(\frac{\varepsilon\OPT}{20k} + 5\cdot\Cost(C_i)\right)
            + (1+\gamma^2)k\beta^3\cdot |C_i|^2\cdot\widehat{\widehat{\Delta_{C_i}}} \\
\nonumber
& \le & (1+\gamma)(1+\gamma^2)(1+k\beta^3)\cdot |C_i| \cdot \sum_{p\in C_i^{\Mid}\cap D_i}
                  \|e_p - \mu_{C_i^{\Close}\cap C_i}\|_{\mathcal{W}_1} \\
\nonumber
&      &  + (1+\gamma)(1+\gamma^2)(1+k\beta^3)\cdot 
                  |C_i^{\Mid}\cap D_i|\cdot |C_i|\cdot \|\mu_{C_i^{\Close}\cap C_i} - \widehat{\mu_{C_i}}\|_{\mathcal{W}_1}\\
\nonumber
&      & + (1+\gamma^2)\beta^2\cdot\left(\frac{\varepsilon\OPT}{20k} + 5\cdot\Cost(C_i)\right)
            + (1+\gamma^2)k\beta^3\cdot |C_i|^2\cdot\widehat{\widehat{\Delta_{C_i}}} \\
\label{eq:mid-close cluster cost}
& \le & (1+2k\beta^3+2\gamma)\cdot \sum_{p\in C_i^{\Mid}\cap D_i} \sum_{q\in C_i^{\Close}\cap C_i} d(p,q)
              + \beta^2\cdot\left(\frac{\varepsilon\OPT}{9k} + 11\cdot\Cost(C_i)\right).
\end{eqnarray}
Summing up Equation~\eqref{eq:mid-close cluster cost} over the $k$ clusters, we get that
\begin{eqnarray}
\nonumber
&      & 2\cdot \sum_{i=1}^k \sum_{p\in C_i^{\Mid}\cap C'_i} \sum_{q\in (C_{i,\#}\cup C_{i,0})\cap D_i} d(p,q) \\
\nonumber
& \le & 2(1+2\gamma)\cdot \sum_{i=1}^k \sum_{p\in C_i^{\Mid}\cap C'_i} \sum_{q\in C_i^{\Close}\cap D_i} d(p,q)
            - 2(1+2\gamma)\cdot \sum_{i=1}^k \sum_{p\in C_i^{\Mid}\cap C'_i} \sum_{q\in C_{i,\Close}\cap D_i} d(p,q) \\
\nonumber
& \le & 2(1+3k\beta^3+5\gamma)\cdot \sum_{i=1}^k \sum_{p\in C_i^{\Mid}\cap D_i} \sum_{q\in C_i^{\Close}\cap D_i} d(p,q)
            + 23\beta^2\cdot\OPT \\
\label{eq:mid-sharp aux}
&      & - 2(1+2\gamma)\cdot \sum_{i=1}^k \sum_{p\in C_i^{\Mid}\cap C'_i} \sum_{q\in C_{i,\Close}\cap D_i} d(p,q).
\end{eqnarray}
Plugging Equation~\eqref{eq:mid-sharp aux} into Equation~\eqref{eq:mid-sharp}, then summing up 
Equations~\eqref{eq:mid-mid} ($\times 2$) and~\eqref{eq:mid-sharp} completes the proof.
\end{proof}

\begin{claim}\label{cl:C' sharp cost}
The contribution to $\Cost(\mathcal{C}')$ of distances between the pairs of points in
$(C_{i,\#}\cup C_{i,0})\cap C'_i$ satisfies
\begin{eqnarray*}
&      & \sum_{i=1}^k \sum_{p\in (C_{i,\#}\cup C_{i,0})\cap C'_i} \sum_{q\in (C_{i,\#}\cup C_{i,0})\cap C'_i} d(p,q) \\
& \le & (1+\gamma^2)^2\cdot \sum_{i=1}^k \sum_{p\in (C_{i,\#}\cup C_{i,0})\cap D_i} 
                 \sum_{q\in (C_{i,\#}\cup C_{i,0})\cap D_i} d(p,q) + 25\beta\cdot\OPT.
\end{eqnarray*}
\end{claim}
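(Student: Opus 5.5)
We would prove Claim~\ref{cl:C' sharp cost} cluster by cluster, exploiting the fact that every block $B^h$ with $h\in H_{i,\#}\cup H_{i,0}$ has controlled cardinality in $C'_i$ relative to $D_i$. Fix $i$. The points of $(C_{i,\#}\cup C_{i,0})\cap C'_i$ split into the blocks $B^h\cap C'_i$. We would first record two facts.
\begin{enumerate}
\item For $h\in H_{i,0}$ we have $B^h\cap C'_i = B^h\cap D_i$; this was noted in the proof of Claim~\ref{cl:C' close cost}, because such points are far for no cluster.
\item For $h\in H_{i,j}$ with $j\ge 1$, the capacity constraints give $|B^h\cap C'_i|\le \widehat{|B^h\cap D_i|}\le (1+\gamma^2)\,|B^h\cap D_i|$.
\end{enumerate}
The double sum is then written as $\sum_{h,h'}\sum_{p\in B^h\cap C'_i}\sum_{q\in B^{h'}\cap C'_i} d(p,q)$ over pairs $h,h'\in H_{i,\#}\cup H_{i,0}$.

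For each pair of blocks we apply Claim~\ref{cl:H-isharp}. If both $h,h'\in H_{i,\#}$, part~2 replaces $d(p,q)$ by the average of $d(p',q')$ over $p'\in B^h\cap D_i$ and $q'\in B^{h'}\cap D_i$, plus an additive $4\beta\widehat{\widehat{\Delta_{C_i}}}$. If exactly one of them lies in $H_{i,\#}$, part~3 replaces only that point, with additive error $2\beta\widehat{\widehat{\Delta_{C_i}}}$. If both lie in $H_{i,0}$, there is nothing to do. Summing $p,q$ over $C'_i$-blocks multiplies each averaged $D_i$-block sum by $\frac{|B^h\cap C'_i|}{|B^h\cap D_i|}\cdot\frac{|B^{h'}\cap C'_i|}{|B^{h'}\cap D_i|}\le (1+\gamma^2)^2$. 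This yields the main term $(1+\gamma^2)^2\sum_{p,q\in (C_{i,\#}\cup C_{i,0})\cap D_i} d(p,q)$, with the convention that the ratio is $1$ for $H_{i,0}$ blocks.

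The additive error is at most $4\beta\,\widehat{\widehat{\Delta_{C_i}}}\,|C'_i|^2$. We bound $|C'_i|\le \widehat{|D_i|}\le (1+\gamma^2)(1+(k-1)\beta^3)|C_i|$ using Equation~\eqref{eq:D card}. For $\widehat{\widehat{\Delta_{C_i}}}$ we use Equation~\eqref{eq:Delta hat hat} together with Definitions~\ref{def:card}--\ref{def:size}, which give $|C_i|^2\widehat{\widehat{\Delta_{C_i}}}\le \frac{\varepsilon(1+\gamma)^2}{20k}\widehat{\OPT}+5(1+\gamma)\bigl(\Cost(C_i)+\gamma\OPT/k\bigr)$. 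Summing over $i$ gives at most roughly $6\,\OPT$, so the total error is at most $24\beta(1+O(\gamma+k\beta^3))\OPT\le 25\beta\,\OPT$.

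The main obstacle is a degenerate case: a block $B^h$ with $h\in H_{i,\#}$ may meet $C'_i$ while $B^h\cap D_i=\emptyset$, so the averaging in Claim~\ref{cl:H-isharp} is undefined. We would argue that capacity forbids this. If $|B^h\cap D_i|=0$, then $\widehat{|B^h\cap D_i|}=0$, since it is a $\gamma^2$-approximation from above, so no point of $B^h$ is placed in $C'_i$ and the block contributes nothing. A second, bookkeeping point is that the sum over pairs of blocks must be organized so that the multiplicative factor is applied only once per block. Taking the product of the two ratios, as above, gives exactly $(1+\gamma^2)^2$.
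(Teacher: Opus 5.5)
Your proposal is correct and follows essentially the same route as the paper. You decompose $(C_{i,\#}\cup C_{i,0})\cap C'_i$ into the blocks $B^h\cap C'_i$. You apply parts 2 and 3 of Claim~\ref{cl:H-isharp} to the $H_{i,\#}$ blocks and use $C_{i,0}\cap C'_i=C_{i,0}\cap D_i$ for the rest. You bound each ratio $|B^h\cap C'_i|/|B^h\cap D_i|$ by $1+\gamma^2$ via the capacity constraints, and absorb the additive $4\beta\,\widehat{\widehat{\Delta_{C_i}}}\,|C'_i|^2$ terms into $25\beta\cdot\OPT$. Your handling of the degenerate case $B^h\cap D_i=\emptyset$ (which forces $\widehat{|B^h\cap D_i|}=0$) and your derivation of the error constant fill in details the paper leaves implicit.
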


\begin{proof}
This is a simple application of Claim~\ref{cl:H-isharp}, which gives
\begin{eqnarray*}
&      & \sum_{i=1}^k \sum_{p\in (C_{i,\#}\cup C_{i,0})\cap C'_i} \sum_{q\in (C_{i,\#}\cup C_{i,0})\cap C'_i} d(p,q) \\
& \le & \sum_{i=1}^k \sum_{h,h'\in H_{i,\#}} 
               \frac{|B^h\cap C'_i|\cdot |B^{h'}\cap C'_i|}{|B^h\cap D_i|\cdot |B^{h'}\cap D_i|}\cdot 
               \sum_{p\in B^h\cap D_i} \sum_{q\in B^{h'}\cap D_i}
            \left(d(p,q) + 4\beta\cdot\widehat{\widehat{\Delta_{C_i}}}\right) \\
&      & + 2\cdot \sum_{i=1}^k \sum_{h\in H_{i,\#}} \frac{|B^h\cap C'_i|}{|B^h\cap D_i|}\cdot 
                    \sum_{p\in B^h\cap D_i} \sum_{q\in C_{i,0}\cap D_i} \left(d(p,q) + 2\beta\cdot\widehat{\widehat{\Delta_{C_i}}}\right) \\
&      & + \sum_{i=1}^k \sum_{p\in C_{i,0}\cap D_i} \sum_{q\in C_{i,0}\cap D_i} d(p,q) \\
& \le & (1+\gamma^2)^2\cdot \sum_{i=1}^k \sum_{p\in (C_{i,\#}\cup C_{i,0})\cap D_i} 
                 \sum_{q\in (C_{i,\#}\cup C_{i,0})\cap D_i} d(p,q) + 25\beta\cdot \OPT,
\end{eqnarray*}
thus completing the proof.
\end{proof}

\begin{lemma}\label{lem:computelargefartoclose}
Given approximate cluster centers, cluster sizes, cluster costs, and optimum value that
satisfy Hypothesis~\ref{hyp:approx}, there exists an algorithm that computes in time 
$n^{1+o(1)}\cdot 2^{2^{O(k\log(k/\varepsilon))}}$ a set of at most
$n^{o(1)}\cdot 2^{2^{O(k\log(k/\varepsilon))}}$ partial clusterings with the 
following property. At least one of these partial clusterings 
$\mathcal{P} = \{P_1,P_2,\dots,P_k\}$ can be extended to a clustering 
$\mathcal{C}'= \{C'_1,C'_2,\dots,C'_k\}$ that satisfies Lemma~\ref{lm:cost C'},
and for every $i=1,2,\dots,k$, we have that $C_i^{\Far}\cap C'_i = P_i$.
\end{lemma}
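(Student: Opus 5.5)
The proof is essentially algorithmic bookkeeping on top of the preceding analysis. Lemma~\ref{lm:cost C'} and Claims~\ref{cl:C' far cost}--\ref{cl:C' sharp cost} already show the following: if the algorithm is fed capacities $\widehat{|B^h\cap D_i|}$ that are correct $\gamma^2$-approximations of $|B^h\cap D_i|$, then the partial clustering $\mathcal{P}=\{C_j^{\Far}\cap F_j\}_j$ obtained from a minimum-cost capacitated assignment $f$ extends to $\mathcal{C}'$ with the stated cost. By construction, $\mathcal{C}'$ agrees with $f$ on $\bigcup_i (C_i^{\Far}\cap F_i)$, and every remaining point is sent by $l$ to a cluster for which it is not $\beta^{-7}$-far. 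Hence $C_i^{\Far}\cap C'_i=P_i$ for all $i$. So what remains is to (a) enumerate the capacity vectors so that one of them is correct, (b) solve the capacitated assignment fast for each vector, and (c) count.

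For (a), the first step is to compute, for every $p$, the label vector $h_p\in L^k$. This needs $j_p$ and the close/mid/far status of $p$ with respect to each $C_i$, i.e.\ the $k$ distances $\|e_p-\widehat{\mu_{C_i}}\|_{\mathcal{W}_1}$. Since each $\widehat{\mu_{C_i}}$ is a point of $A$, these are just $k$ distance queries, so bucketing all points into the sets $B^h$ takes $O(nk(k+R))$ time. Only the $h$ with $B^h\neq\emptyset$ matter, and there are at most $\min\{n,(R+k+3)^k\}$ of them. With $R=O(\gamma^{-1}\log(1/\beta))$ and $\beta,\gamma=(\varepsilon/k)^{O(1)}$, this is $2^{O(k\log(k/\varepsilon))}$. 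For each such $h$ and each $i$, enumerate $\widehat{|B^h\cap D_i|}$ over $\{0\}\cup\{\lceil(1+\gamma^2)^t\rceil\}$, which gives $O(\gamma^{-2}\log n)$ options. The total number of capacity vectors is therefore $(\gamma^{-2}\log n)^{k\cdot 2^{O(k\log(k/\varepsilon))}}$. Writing $(\log n)^{N}$ as $n^{o(1)}$ when $N=o(\log n/\log\log n)$ and as $2^{O(N\log N)}$ otherwise, this is $n^{o(1)}\cdot 2^{2^{O(k\log(k/\varepsilon))}}$. By definition, one of these vectors is the correct one, and for it the analysis above applies verbatim.

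For (b), given a capacity vector, the flow problem has $n$ source nodes, one node per pair $(h,i)$ with capacity $\widehat{|B^h\cap D_i|}$, and $k$ sinks. Every point of $B^h$ has the same finite/infinite pattern and the same zero/nonzero pattern of costs $d_p(\cdot)$, but its nonzero costs $\widehat{|C_j|}\cdot\|e_p-\widehat{\mu_{C_j}}\|_{\mathcal{W}_1}$ vary. To make the problem small, I would bucket these costs, rounding each nonzero $d_p(j)$ up to a power of $(1+\gamma)$. This loses a factor $1+\gamma$ in $\sum_p d_p(f(p))$, which is absorbed in the constants of Claim~\ref{cl:C' far cost} (the optimality of $f$ is only used through that inequality, so approximate optimality suffices). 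Points are then grouped into types given by $h_p$ together with the rounded cost vector. Costs below $\frac{\varepsilon}{k n}\widehat{\OPT}$ can be rounded to $0$ at negligible loss, and the rest lie in an $O(\gamma^{-1}\log(nk/\varepsilon))$-size range, so there are $(\log n)^{O(k)}\cdot 2^{O(k\log(k/\varepsilon))}$ types. The resulting transportation problem has that many supply nodes with integer supplies. It can be solved by any polynomial-time min-cost flow algorithm, and integrality gives an integral assignment, which is realized by distributing the actual points of each type arbitrarily. The cost is $O(n)$ for bucketing plus $\poly(\#\text{types}, 2^{O(k\log(k/\varepsilon))})=n^{o(1)}2^{2^{O(k\log(k/\varepsilon))}}$ per vector. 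Multiplying by the number of vectors gives the claimed total of $n^{1+o(1)}\cdot 2^{2^{O(k\log(k/\varepsilon))}}$. The bucketing into $B^h$ is shared across all vectors, and only forming $\mathcal{P}$ costs $O(n)$ per output.

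The main obstacle I expect is (b): making sure the cost rounding does not break the chain of inequalities. In particular, the last step of Claim~\ref{cl:C' far cost} compares $\sum_p d_p(f_{\mathcal{C}'}(p))$ against $f_{\mathcal{D}}$. I would redo that step with a $(1+\gamma)$-approximately optimal $f$ and check that the extra $\gamma\cdot 2\sum_p d_p(f_{\mathcal{D}}(p))\le O(\gamma)\OPT$ (using Claim~\ref{lem:assgnuppbnd}) fits inside the $5\gamma$ slack of Lemma~\ref{lm:cost C'}. If it does not fit, I would fall back to rounding with $\gamma^2$ instead of $\gamma$, which only changes constants in the exponent.
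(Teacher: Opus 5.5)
This is essentially the paper's own proof: enumerate $\gamma^2$-approximate capacities $\widehat{|B^h\cap D_i|}$, compress the min-cost flow by bucketing points by $h_p$ refined with $(1+\gamma)$-rounded far costs, solve it once per guess, and read off $\mathcal{P}$ in $O(n)$ time; your explicit absorption of the rounding loss through Claim~\ref{cl:C' far cost} is somewhat more careful than the paper's remark that it merely inflates the guarantee by $1+\gamma$. One detail both you and the paper leave implicit is the upper end of the cost range, where costs above $\widehat{\OPT}$ must be set to $\infty$ (harmless, since each $d_p(f_{\mathcal{D}}(p))\le\OPT$ by Claim~\ref{lem:assgnuppbnd}), whereas at the lower end your zeroing step is vacuous rather than ``negligible'': by the far threshold every finite nonzero $d_p(j)$ is already at least $\beta^{-7}\varepsilon\widehat{\OPT}/(20kn)$, which is fortunate, because the obvious bound $n\cdot\frac{\varepsilon}{kn}\widehat{\OPT}$ on the zeroing loss would exceed the $\frac{\varepsilon}{2k}\OPT$ budget of Theorem~\ref{thm:farassignment}.
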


\begin{proof}
Note that the number of possible values of $h$ is at most $(R+k+3)^k$, with 
$R\le 5\gamma^{-1}\log \beta^{-1}$. For every $h$ and $i$ we need
the estimate $\widehat{|B^h\cap D_i|}$. This requires enumerating over
$$
(\gamma^{-2}\cdot\log n)^{k\cdot(R+k+3)^k} = n^{o(1)}\cdot 2^{2^{O(k\log(k/\varepsilon))}}
$$ 
values, assuming that both $\beta$ and $\gamma$ are $\left(\frac{\varepsilon}{k}\right)^{O(1)}$.

To compute the desired partial clustering, we need to compute, for each of the above guesses,
an optimal assignment $f$. This can be done by computing a minimum cost flow in a flow network.
Constructing a flow network with each individual point being a source node would result in a
flow computation with prohibitive time complexity (if we're aiming at sub-linear time). Therefore,
we must partition the points into buckets of interchangeable options. The set of values $h$ is
insufficient as points in $B^h$ where $h(i) = \Far$ are not interchangeable with respect to an
assignment to $i$ because the costs could vary widely. So we need to refine the bucketing to
indicate approximate distances (just for far points; rounding to powers of $1+\gamma$ which
would inflate the entire approximation guarantee by this factor). The number of buckets is at
most the order of
$$
(R+k+3)^k\cdot (\gamma^{-1}\log n)^k = n^{o(1)}\cdot 2^{O(k\log(k/\varepsilon))},
$$
under all of the above assumptions.

Assigning the points to buckets is done once, and requires $O(nk)$ operations. Solving the
assignment problem requires time polynomial in the size of the flow network. Extracting
the solution $\mathcal{P}$ from the assignment takes $O(n)$ time. These two steps are
executed for each of the $n^{o(1)}\cdot 2^{2^{O(k\log(k/\varepsilon))}}$ cardinality guesses.
\end{proof}

Lemma~\ref{lem:computelargefartoclose} implies Theorem~\ref{thm:farassignment}.
\begin{proof}[Proof of Theorem \ref{thm:farassignment}]
Use Lemma~\ref{lem:computelargefartoclose} to compute a set of partial clusterings,
one of which can be extended to satisfy Lemma~\ref{lm:cost C'}. To complete the proof, 
assume that $\gamma\le \frac{\varepsilon}{30 k}$, and use $\beta\le \frac{\varepsilon}{300 k^3}$
(leaving plenty of room to spare).
\end{proof}
\subsection{Assignment of Close Points}
\label{sec:closeassignment}

Following Theorem~\ref{thm:farassignment}, we now have to deal with assigning
to clusters all the points that were not assigned by the (correct) partial 
clustering $\mathcal{P}$. For simplicity, we abuse notation slightly and denote 
by $\mathcal{C}$ (instead of $\mathcal{C}'$) the extension of $\mathcal{P}$ to a 
full clustering (which is w.h.p. near-optimal for at least one of the partial
clustering options enumerated upon). We also assume that we have approximate values 
$\widehat{\mu_{C}}$, $\widehat{\Delta_{C}}$, and $\widehat{|C|}$, as per
Hypothesis~\ref{hyp:approx} applied to the extension $\mathcal{C}$ of
$\mathcal{P}$ (with $\gamma = (\varepsilon/k)^{O(1)}$).

Notice that the set of points unassigned by $\mathcal{P}$ excludes, in particular,
for every cluster $i$, all the points in $C_i$ that are $\gamma^{-1}$-far for $C_i$.
Thus, define for every $i$ the set 
\begin{equation}\label{eq:L_i}
L_i = \left\{p\in A\colon d(p,\widehat{\mu_{C_i}}) < \gamma^{-1}\cdot\widehat{\widehat{\Delta_{C_i}}}\right\},
\end{equation}
the set of points that are $\gamma^{-1}$-close for $C_i$, where we recall Equation~\eqref{eq:Delta hat hat}
that puts
$$
\widehat{\widehat{\Delta_{C_i}}} = 5\cdot\widehat{\Delta_{C_i}} + 
\frac{\varepsilon\cdot (1+\gamma)^2}{20k}\cdot \frac{\widehat{\OPT}}{\left(\widehat{|C_i|}\right)^2}
\le (1+\gamma)\cdot \left(5\Delta_{C_i} + \frac{\varepsilon(1+\gamma)^2}{20k}\cdot \frac{\OPT}{|C_i|^2}\right)
$$
Ideally, the set $L_i$ only 
contains points of $C_i$ and no other points, but we cannot assume this to be the case 
in general. In particular, for two clusters $i,j$ it may be the case that 
$L_i\cap L_j\neq \emptyset$. Therefore, we first apply easy, local assignment steps 
similar to those in Section~\ref{sec:farassignment}. The remaining assignments cannot 
be resolved locally. For these we apply approximation schemes solving MAX $h$-CUT, 
for varying choices of $h$.

\paragraph{Local assignments.}
In the following, we use a parameter $\beta = \left(\frac{\varepsilon}{k}\right)^{O(1)}$ which 
we will specify later. Consider two clusters $C_i$ and $C_j$ with $L_i\cap L_j\neq \emptyset$.
The following lemma shows that if $\widehat{|C_j|}$ is significantly smaller than $\widehat{|C_i|}$, 
then we can assign all the points in $L_i\cap L_j\cap C_j$ to cluster $i$ without increasing 
the cost substantially.
\begin{lemma}
\label{lem:localassignsize}
For a cluster $C_i$, let 
$C'_i = C_i\bigcup \left(L_i\cap 
  \left(\bigcup_{j\colon\widehat{|C_j|} < \beta\cdot \widehat{|C_i|}} \{C_j\cap L_j\}\right)\right)$.
Then 
$$
\Cost(C_i') < \Cost(C_i) + 25\cdot k^2\cdot\beta\cdot \gamma^{-1}\cdot 
             \left(\Cost(C_i) + \frac{\varepsilon}{100k}\cdot \OPT\right).
$$
\end{lemma}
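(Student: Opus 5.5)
Let $X_i = C'_i\setminus C_i$ denote the set of added points, and write $\Lambda_i = \gamma^{-1}\cdot\widehat{\widehat{\Delta_{C_i}}}$. By definition of $L_i$, every $p\in X_i$ satisfies $d(p,\widehat{\mu_{C_i}})<\Lambda_i$. The plan is to expand the new cost exactly as in Equation~\eqref{eq:C to D}:
$$
\Cost(C'_i)\le \Cost(C_i) + 2\sum_{p\in X_i}\sum_{q\in C_i} d(p,q) + \sum_{p,q\in X_i} d(p,q),
$$
and to bound the two extra terms by routing every distance through the approximate center $\widehat{\mu_{C_i}}$. The triangle inequality gives $d(p,q)\le \|e_p-\widehat{\mu_{C_i}}\|_{\mathcal{W}_1}+\|e_q-\widehat{\mu_{C_i}}\|_{\mathcal{W}_1}$.

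\textbf{Step 1: bound $|X_i|$.} The points of $X_i$ come only from clusters $C_j$ with $\widehat{|C_j|}<\beta\cdot\widehat{|C_i|}$. By Equation~\eqref{eq:card} this gives $|C_j|\le\widehat{|C_j|}<\beta(1+\gamma)|C_i|$. Summing over at most $k-1$ such $j$ yields $|X_i|\le 2k\beta\,|C_i|$.

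\textbf{Step 2: bound the cross term.} Using the triangle inequality above,
$$
2\sum_{p\in X_i}\sum_{q\in C_i}d(p,q)\le 2|C_i|\,|X_i|\,\Lambda_i + 2|X_i|\sum_{q\in C_i}\|e_q-\widehat{\mu_{C_i}}\|_{\mathcal{W}_1}.
$$
The second sum is controlled by the $(\varepsilon,5)$-approximate center property (Definition~\ref{def:cover}):
$$
\sum_{q\in C_i}\|e_q-\widehat{\mu_{C_i}}\|_{\mathcal{W}_1}\le \frac{\varepsilon\OPT}{20k|C_i|}+\frac{5\Cost(C_i)}{|C_i|}.
$$
This is already of the form $|C_i|^{-1}(\Cost(C_i)+\varepsilon\OPT/(20k))$, so together with Step 1 the second part is at most $O(k\beta)(\Cost(C_i)+\varepsilon\OPT/(20k))$.

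\textbf{Step 3: bound the self term.} Similarly, $\sum_{p,q\in X_i}d(p,q)\le 2|X_i|^2\Lambda_i$. This is dominated by the cross-term bound, since $|X_i|\le|C_i|$ up to the factor from Step 1.

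\textbf{Step 4: convert $|C_i|^2\Lambda_i$ into cost units.} This is the step needing care. I use the displayed upper bound after Equation~\eqref{eq:L_i}, namely $\widehat{\widehat{\Delta_{C_i}}}\le(1+\gamma)\bigl(5\Delta_{C_i}+\frac{\varepsilon(1+\gamma)^2}{20k}\frac{\OPT}{|C_i|^2}\bigr)$, together with the fact that $|C_i|^2\Delta_{C_i}=\Cost(C_i)$ (Lemma~\ref{lm: center based}). One subtlety is the second branch of Definition~\ref{def:size}, where $\widehat{\Delta_{C_i}}$ may be as large as $(1+\gamma)\gamma\OPT/(k|C_i|^2)$. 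That contributes an additive $O(\gamma\OPT/k)$ per unit, which after multiplying by $\gamma^{-1}$ becomes $O(\OPT/k)$. This would break the claimed $\frac{\varepsilon}{100k}\OPT$ form unless $\gamma\ll\varepsilon$. Since the paper later takes $\gamma=(\varepsilon/k)^{O(1)}$, I would absorb it under the assumption $\gamma\le\varepsilon/(100\cdot 5)$, noting this dependence explicitly. This gives $|C_i|^2\Lambda_i\le 6\gamma^{-1}\bigl(\Cost(C_i)+\frac{\varepsilon}{100k}\OPT\bigr)$, with room to spare.

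\textbf{Step 5: combine.} Multiplying by $2|X_i|/|C_i|\le 4k\beta$ and adding the Step 2 and Step 3 bounds gives a total extra cost of at most $c\,k\beta\gamma^{-1}\bigl(\Cost(C_i)+\frac{\varepsilon}{100k}\OPT\bigr)$ for a small constant $c$. This is well within the claimed $25k^2\beta\gamma^{-1}$ bound. The main obstacle I expect is the bookkeeping of constants, especially the second-branch issue in Step 4.
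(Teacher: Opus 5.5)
Your proposal is correct and follows essentially the paper's proof: the same three-term expansion of $\Cost(C'_i)$, the same bound $|C'_i\setminus C_i|\le k\beta\widehat{|C_i|}$, and the same routing of distances through the center. The paper routes points of $C_i$ through $\mu_{C_i}$ plus Claim~\ref{cl:approx dist} instead of directly through $\widehat{\mu_{C_i}}$, which is equivalent. Your remark about the second branch of Definition~\ref{def:size} is well taken: the paper's displayed bound on $\widehat{\widehat{\Delta_{C_i}}}$ tacitly uses only the first branch. Your fix of taking $\gamma$ a small constant multiple of $\varepsilon$, which is consistent with $\gamma=(\varepsilon/k)^{O(1)}$, is what is needed to make that step rigorous.
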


\begin{proof}
The number of points assigned to $C_i$ from all clusters in $\mathcal{C}$ is at most 
$k\cdot \beta\cdot \widehat{|C_i|} < k\cdot\beta\cdot (1+\gamma)\cdot |C_i|$. Consider 
$$
\Cost(C_i') = \Cost(C_i)  + 2\cdot \sum_{p\in C_i} \sum_{q\in C_i'\setminus C_i}d(p,q)  
   + \sum_{p\in C_i'\setminus C_i} \sum_{q\in C_i'\setminus C_i}d(p,q).
$$
We will bound the latter two terms individually.
For the second term, we have 
\begin{eqnarray*}
       d(p,q) 
&\leq& \|e_p-\mu_{C_i}\|_{{\cal W}_1} + \|\widehat{\mu_{C_i}}-\mu_{C_i}\|_{{\cal W}_1} + d(q,\widehat{\mu_{C_i}}) \\
& <  & \|e_p-\mu_{C_i}\|_{{\cal W}_1} + 
       \left(1 + \frac{1+\gamma}{\gamma}\right)\cdot 
       \left(5\Delta_{C_i} + \frac{\varepsilon(1+\gamma)^2}{20k}\cdot \frac{\OPT}{|C_i|^2}\right),
\end{eqnarray*}
where we used Claim~\ref{cl:approx dist} and Equation~\eqref{eq:L_i}.
We get
\begin{eqnarray*}
       2 \sum_{p\in C_i} \sum_{q\in C_i'\setminus C_i}d(p,q) 
&\leq& 2\cdot |C_i'\setminus C_i|\cdot\left(\frac{\Cost(C_i)}{|C_i|} + 
       \left(1+\frac{1+\gamma}{\gamma}\right)\cdot\left(5\cdot\frac{\Cost(C_i)}{|C_i|} + 
       \frac{\varepsilon(1+\gamma)^2}{20k}\cdot \frac{\OPT}{|C_i|}\right)\right) \\
& <  & 14\cdot k\cdot\beta\cdot \gamma^{-1}\left(\Cost(C_i) + \frac{\varepsilon}{130k}\cdot \OPT\right).
\end{eqnarray*}

Now, for every $p,q\in C_i'\setminus C_i$, we have that 
$d(p,q)\le 2\cdot\frac{1+\gamma}{\gamma}\cdot \left(5\Delta_{C_i} + 
\frac{\varepsilon\cdot (1+\gamma)^2}{20k}\cdot \frac{\OPT}{|C_i|^2}\right)$, so for the third term,
\begin{eqnarray*}
       \sum_{p\in C_i'\setminus C_i} \sum_{q\in C_i'\setminus C_i}d(p,q) 
&\leq& |C_i'\setminus C_i|^2 \cdot 2\cdot \frac{1+\gamma}{\gamma}\cdot \left(5\Delta_{C_i} + 
       \frac{\varepsilon\cdot (1+\gamma)^2}{20k}\frac{\OPT}{|C_i|^2}\right) \\
& <  & 11\cdot k^2\cdot \beta^2\cdot \gamma^{-1}\cdot \left(\Cost(C_i)+ \frac{\varepsilon}{100k}\cdot \OPT\right).
\end{eqnarray*}
Summing up the excess error for both cases yields the claim.
\end{proof}

We now use Lemma~\ref{lem:localassignsize} to revise the sets $L_i$ as follows. Choose
$\beta = \frac{\varepsilon}{100k^3}\cdot \gamma$. For all pairs of clusters $i,j$ with 
$\widehat{|C_j|} < \beta\cdot\widehat{|C_i|}$, in lexicographic order of non-increasing 
size, remove $L_i\cap L_j$ from $L_j$. (We'll keep denoting the modified sets, during
and following this process, as $L_i$ etc.) Now define a conflict graph $K$ over the set
of clusters. We connect every pair $i,j$ of clusters by an edge in $K$ iff 
$L_i\cap L_j\ne\emptyset$. The following lemma shows some properties of $K$.
\begin{lemma}
\label{lem:uniform}
Consider a connected component $K'$ of $K$, and let $i,j$ be two clusters in $K'$.
Then, it holds that $|\widehat{C_i}|\leq \beta^{-k+1}\cdot |\widehat{C_j}|$. Moreover,
$$
d(\widehat{\mu_{C_i}},\widehat{\mu_{C_j}})\leq 7\cdot \gamma^{-1} \cdot \beta^{-4k+6}\cdot 
\frac{\widehat{\OPT}}{\left(\max\left\{\widehat{|C_i|},\widehat{|C_j|}\right\}\right)^2}.
$$
\end{lemma}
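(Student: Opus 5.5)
The proof will go edge by edge and then chain along a path. Any two clusters in a connected component $K'$ are joined by a path in $K$ with at most $k-1$ edges, so it suffices to prove two single-edge statements and compose them. The first is a size comparison: if $i,j$ are adjacent in $K$, then $\widehat{|C_j|}\ge\beta\cdot\widehat{|C_i|}$ and symmetrically. The second is a center-distance bound for adjacent $i,j$, namely
$$
d(\widehat{\mu_{C_i}},\widehat{\mu_{C_j}})\le 7\gamma^{-1}\cdot\widehat{\OPT}/\bigl(\min\{\widehat{|C_i|},\widehat{|C_j|}\}\bigr)^2 .
$$

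The size comparison comes straight from the pruning step. If $\widehat{|C_j|}<\beta\cdot\widehat{|C_i|}$, the process removed $L_i\cap L_j$ from $L_j$. Afterwards $L_i\cap L_j=\emptyset$, so no edge $\{i,j\}$ survives. One caveat: $L_i$ itself may shrink later, when it is pruned against still larger clusters. But sets only shrink, so the intersection stays empty, and lexicographic processing is not needed beyond this monotonicity.

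For the distance bound, take a point $p\in L_i\cap L_j$. By the triangle inequality,
$$
d(\widehat{\mu_{C_i}},\widehat{\mu_{C_j}})<\gamma^{-1}\bigl(\widehat{\widehat{\Delta_{C_i}}}+\widehat{\widehat{\Delta_{C_j}}}\bigr).
$$
Each term $\widehat{|C_i|}^2\cdot\widehat{\widehat{\Delta_{C_i}}}$ is at most a constant times $\widehat{\OPT}$. Concretely, $5\widehat{\Delta_{C_i}}\widehat{|C_i|}^2\le5(1+\gamma)^3\Cost(C_i)$, or the small-cost branch of Definition~\ref{def:size}, and the $\varepsilon$-term contributes at most $\varepsilon(1+\gamma)^2\widehat{\OPT}/(20k)$. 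Together these give $\widehat{\widehat{\Delta_{C_i}}}\le 6\,\widehat{\OPT}/\widehat{|C_i|}^2$, using $\Cost(C_i)\le\OPT\le\widehat{\OPT}$ (up to the near-optimality slack of the extended clustering, which is absorbed in the constant). Hence the edge distance is at most $12\gamma^{-1}\widehat{\OPT}/\min^2$. To match the stated constant I would bound more carefully, but any absolute constant is fine for the plan.

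Chaining is the step that produces the stated exponents. Along a path $i=i_0,i_1,\dots,i_t=j$ with $t\le k-1$, the size bound gives $\widehat{|C_i|}\le\beta^{-t}\widehat{|C_j|}$, which is the first claim. For the distance, sum the edge bounds. Every cluster on the path has size at least $\beta^{k-1}\max\{\widehat{|C_i|},\widehat{|C_j|}\}$, because all sizes in the component lie within factor $\beta^{-(k-1)}$ of each other. So each edge term is at most $c\gamma^{-1}\beta^{-2(k-1)}\widehat{\OPT}/\max^2$. Summing $t\le k-1$ terms and absorbing the factor $k\le\beta^{-1}$ (valid since $\beta\ll 1/k$) gives a bound of the form $c\gamma^{-1}\beta^{-2k+1}\widehat{\OPT}/\max^2$. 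Since $\beta<1$, this is weaker than, and hence implies, the stated $7\gamma^{-1}\beta^{-4k+6}$, at least for $k\ge 3$. For small $k$ the exponent comparison has to be checked directly; for $k\le2$ I would handle the single-edge case with the careful constant.

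I expect the main obstacle to be constant bookkeeping: obtaining exactly the constant $7$, and the edge cases where $-4k+6$ is nonnegative. The argument itself is only the triangle inequality plus path length.
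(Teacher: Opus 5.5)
Your proposal is correct and is essentially the paper's argument: a common point of $L_i\cap L_j$ plus the triangle inequality bounds each edge, the pruning rule gives the factor-$\beta$ size ratio across each edge, and you sum along a path of at most $k-1$ edges. The one thing you leave open is the constant when $k=2$. There is then a single edge and no spare power of $\beta$ to absorb your $12$. To get $7$, bound the two endpoint terms jointly rather than separately. Since $\Cost(C_i)+\Cost(C_j)$ is at most the cost of the whole (near-optimal) clustering, $\widehat{|C_i|}^2\,\widehat{\widehat{\Delta_{C_i}}}+\widehat{|C_j|}^2\,\widehat{\widehat{\Delta_{C_j}}}$ is about $5\widehat{\OPT}$ plus lower-order terms. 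This is exactly how the paper obtains $7$. Your chaining through $\widehat{|C_\ell|}\ge\beta^{k-1}\max\{\widehat{|C_i|},\widehat{|C_j|}\}$ is actually tighter than the paper's. It also accounts explicitly for the factor $k-1$ from summing over the path, which the paper's sketch leaves implicit.
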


\begin{proof}
Notice that if $i,j$ are in the same connected component of $K$, then they are connected
by a path of length at most $k-1$, and every edge along this path connects a pair of clusters
such that the smaller size is at least $\beta$ times the bigger size. This proves the first
claim. For the second claim, use induction on the length of the path. The base case is $i,j$
connected by an edge. Let $p\in L_i\cap L_j$ (the existence of the edge implies that the 
intersection is non-empty). We have
\begin{eqnarray}
\nonumber
      d(\widehat{\mu_i},\widehat{\mu_j})
&\le& d(\widehat{\mu_i},p) + d(p,\widehat{\mu_j}) \\
\nonumber
& < & \gamma^{-1}\cdot 
      \left(\frac{\varepsilon\cdot (1+\gamma)^2\cdot \widehat{\OPT}}{20k}\cdot
      \left(\frac{1}{\widehat{|C_i|}^2} + \frac{1}{\widehat{|C_j|}^2}\right)
      + 5\cdot\left(\widehat{\Delta_{C_i}} + \widehat{\Delta_{C_j}}\right)\right) \\
\label{eq:inductive}
&\le& \gamma^{-1}\cdot \frac{1}{\left(\min\left\{\widehat{|C_i|},\widehat{|C_j|}\right\}\right)^2}\cdot
       \left(\frac{\varepsilon\cdot \widehat{\OPT}}{9k} + 6\cdot\left(\Cost(C_i)+\Cost(C_j)\right) \right) \\
\nonumber
&\le& 7\cdot\gamma^{-1}\cdot \frac{\widehat{\OPT}}{\left(\min\left\{\widehat{|C_i|},\widehat{|C_j|}\right\}\right)^2} \\
\nonumber
&\le& 7\cdot\gamma^{-1}\cdot\beta^{-2k+2}\cdot
      \frac{\widehat{\OPT}}{\left(\max\left\{\widehat{|C_i|},\widehat{|C_j|}\right\}\right)^2}.
\end{eqnarray}
The induction sums up Equation~\eqref{eq:inductive} over at most $k-1$ pairs of adjacent nodes of $K$.
Using the fact that the minimum size cluster $C$ along this path satisfies
$\widehat{|C|}\ge \beta^{k-2}\cdot \min\left\{\widehat{|C_i|},\widehat{|C_j|}\right\}$
and the fact that 
$\max\left\{\widehat{|C_i|},\widehat{|C_j|}\right\}\le 
\beta^{-k+1}\cdot\min\left\{\widehat{|C_i|},\widehat{|C_j|}\right\}$
completes the proof.
\end{proof}

With this in place, we can now prove the main claim of this section.
\begin{proof}[Proof of Theorem~\ref{thm:finalguarantee}]
Firsly, let's sum the excess error due to Lemma~\ref{lem:localassignsize}. 
Specifically, for every cluster $C_i$, we incur an increase in cost of at most 
$25\cdot k^2\cdot\beta\cdot \gamma^{-1}\cdot 
             \left(\Cost(C_i) + \frac{\varepsilon}{100k}\cdot \OPT\right)$,
due to Lemma~\ref{lem:localassignsize}.
Summing over all clusters yields a total excess cost of
$25\cdot k^3\cdot\beta\cdot \gamma^{-1}\cdot \left(\frac{1}{k} + \frac{\varepsilon}{100k}\right)\cdot\OPT.$
Plugging in the aforementioned value of $\beta$ implies that the excess cost is at most 
$\frac{\varepsilon}{2k}\cdot \OPT$.

Adding this to the excess cost incurred by the partial assignment satisfying the
bound of Theorem~\ref{thm:farassignment}, which in particular assigns to clusters
all the points not in the union of all the $L_i$-s, yields that there exists a
a clustering $\mathcal{C}' = \{C'_1,C'_2,\dots,C'_k\}$ with the following
properties: ($i$) the cost of this clustering is at most 
$\left(1+\frac{\varepsilon}{k}\right)\cdot\OPT$; and, ($ii$) we can identify,
for every connected component $K'$ of the graph $K$, the set of points 
$A_{K'} = \bigcup_{i\in V(K')} C'_i$ (i.e., the points that belong
to the $\mathcal{C}'$ clusters in $K'$).

To complete the computation of the clustering, use the the algorithm of
Theorem~\ref{thm:maxcut} to compute for every connected component $K'$
a nearly-optimal MAX $|V(K')|$-CUT partition of $A_{K'}$. The approximation
parameter will be defined as we analyze the resulting clustering.

We first relate the cost of MAX $|V(K')|$-CUT to the cost of $|V(K')|$-minsum 
clustering of $A_{K'}$. The MAX $|V(K')|$-CUT cost is at most 
$\sum_{p\in A_{K'}}\sum_{q\in A_{K'}}d(p,q)$. 
Due to Lemma \ref{lem:uniform}, we have for every $i\in V(K')$, 
$|C'_i| \geq \frac{\beta^{k-1}}{k}\cdot |A_{K'}|$, as the largest
such cluster has size at least $\frac 1 k \cdot |A_{K'}|$.

Likewise, consider an arbitrary $\widehat{\mu_{C_i}}$, $i\in V(K')$. Then
\begin{eqnarray*}
        \sum_{p,q\in A_{K'}} d(p,q)
&\leq&  |A_{K'}|\cdot \sum_{p\in A_{K'}} d(p,\widehat{\mu_{C_i}}) \\
&\leq & |A_K'|\cdot\sum_{j\in V(K')}\sum_{p\in C_j'}d(p,\widehat{\mu_{C_j}}) +
        d(\widehat{\mu_{C_j}},\widehat{\mu_{C_i}})\\
&\leq & \gamma^{-1}\cdot |A_K'|\cdot\sum_{j\in V(K')} 
        \left(\frac{\varepsilon\cdot (1+\gamma)^2}{20}\cdot \frac{\widehat{\OPT}}{k\cdot \widehat{|C_j|}} +
        5\cdot\widehat{\Delta_C}\cdot \widehat{|C_j|} + 7\beta^{-4k+6}\cdot
        \frac{\widehat{\OPT}}{\widehat{|C_j|}}\right) \\
&\leq & 8\gamma^{-1} \cdot \beta^{-5k+7}\cdot k\cdot \widehat{OPT}.
\end{eqnarray*}
This is an upper bound on the optimal value of a MAX $|V(K')|$-CUT. If we now
set the approximation parameter for MAX $|V(K')|$-CUT to be
$\varepsilon' = \frac{\varepsilon\cdot\gamma\cdot\beta^{5k-7}}{k^2}$,
this yields a MAX $|V(K')|$-CUT that deviates from the optimum by at most
an $\frac{\varepsilon}{k}\cdot\OPT$ additive term. Thus, the complemet
$|V(K')|$-minsum clustering objective deviates from the optimum value
by at most the same additive term. Summing over at most $k$ connected
components yields a $k$-minsum clustering solution of cost at most
$\varepsilon\cdot\OPT$ more than the optimum, which is at most the
cost of the clustering $\mathcal{C}'$.

The running time is dominated the calls to the metric MAX $|V(K')|$-CUT
approximation scheme, for at most $k$ connected components $K'$, with
$\varepsilon' = \frac{\varepsilon\cdot\gamma\cdot\beta^{5k-7}}{k^2}$.
We can take $\gamma = \left(\frac{\varepsilon}{k}\right)^{O(1)}$,
so $\beta$ is just a higher constant power of $\frac{\varepsilon}{k}$.
With these parameters, and failure probability $\delta$,
we get a running time bounded by
$(k/\varepsilon)^{O(k)}\cdot n\cdot\log(n/\delta) + 2^{(k/\varepsilon)^{O(k)}}$.
\end{proof}

\subsection{A Sublinear Algorithm for Metric MAX \texorpdfstring{$k$}{k}-CUT}
\label{sec:maxcut}

In the metric MAX $k$-CUT problem we are given a finite metric space $(A,d)$, 
where $A$ is the set of points, and $d:A\times A\rightarrow \mathbb{R}_{\geq 0}$ 
is the distance function, and we must compute a partition 
$A = C_1\sqcup C_2\sqcup\cdots\sqcup C_k$ of $A$ into $k$ clusters that maximizes
the sum of inter-cluster distances. Thus, the objective function is precisely the
complement of the objective of the $k$-minsum clustering problem, and the two
problems on the same input have the same set of optimal partitions. However,
approximate solutions may be drastically different. In this section we will prove 
the following theorem.
\begin{theorem}
\label{thm:maxcut}
There exists a randomized algorithm that computes a $(1-\varepsilon)$-approximation 
to the optimal MAX $k$-CUT solution in time 
$O(\varepsilon^{-2}\cdot n\cdot\log(n/\delta)) + 
O\left(\varepsilon^{-3}\cdot n\cdot k^4\right) + k^{O(\varepsilon^{-2})}$,
where $\delta$ is the failure probability.
\end{theorem}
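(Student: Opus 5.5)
We follow the dense-graph sampling paradigm of Yaroslavtsev~\cite{Yaroslavtsev14} and Andoni et al.~\cite{ANOY14}, adapted to metrics through weighted importance sampling, in the spirit of Indyk's metric MAX CUT algorithm~\cite{Ind99b}. The key structural fact is that a metric is ``dense'' with respect to its marginals. Let $w(p)=\sum_{q\in A}d(p,q)$ and $W=\sum_p w(p)$. The triangle inequality gives $d(p,q)\le (w(p)+w(q))/n$. Moreover, for $k\ge 2$ the optimal MAX $k$-CUT value is at least $(1-1/k)\cdot W/2\ge W/4$, since a random partition achieves this. Hence an additive error of $\varepsilon W$ suffices, and the problem behaves like MAX $k$-CUT on a dense weighted graph once each point $p$ is ``split'' into roughly $w(p)\cdot n/W$ copies.

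The first step, which costs $O(\varepsilon^{-2}n\log(n/\delta))$, is to estimate the weights $w(p)$ for every $p$ to within a constant factor. We do this by sampling $O(\varepsilon^{-2}\log(n/\delta))$ random points per $p$. Averaging over random pairs in the spirit of Theorem~\ref{thm:costguarantee}, together with the bound $d(p,q)\le (w(p)+w(q))/n$, shows that each such estimate has bounded relative variance once one adds the total-mass term. A union bound over the $n$ points then gives all estimates simultaneously with probability $1-\delta$.

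The second step samples a multiset $S$ of $s=O(\varepsilon^{-2}\log k)$ points, drawn i.i.d.\ with probability proportional to the estimated $w(p)$. We then enumerate over all $k^{s}=k^{O(\varepsilon^{-2})}$ partitions of $S$ into $k$ labeled parts, which is the source of the additive $k^{O(\varepsilon^{-2})}$ term. Each candidate partition $(S_1,\dots,S_k)$ induces a greedy assignment of all other points: point $p$ goes to the part $j$ minimizing the estimated cost $\widehat{c}_j(p)=\frac{1}{s}\sum_{q\in S_j} d(p,q)\cdot \frac{W}{w(q)\,n}$. This costs $O(ns)$ distance evaluations per candidate. To keep the $n$-dependent work at $O(\varepsilon^{-3}nk^4)$ rather than $n\cdot k^{O(\varepsilon^{-2})}$, we do not run the greedy step for every candidate. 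Instead, following~\cite{Yaroslavtsev14}, we first evaluate each candidate on a smaller second sample of $\mathrm{poly}(k/\varepsilon)$ weighted points, where we also estimate the resulting cut value. Only the single best candidate is then extended greedily to all of $A$. This extension takes $O(n\cdot s\cdot k)$ time, which we bound by $O(\varepsilon^{-3}nk^4)$ after a suitable choice of sample sizes.

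The main obstacle is the correctness argument. We must show that for the ``right'' labeling of $S$, namely the one induced by an optimal partition, the greedy assignment loses only $O(\varepsilon W)$, and that the cut value of every candidate is estimated to within $\varepsilon W$. For this we would adapt the Goldreich--Goldwasser--Ron and~\cite{Yaroslavtsev14} argument: process the points in a fixed order, and compare the current hybrid solution with the one obtained by moving $p$ to its greedy part. The estimation error in $\widehat{c}_j(p)$ is a sum of bounded terms, because $d(p,q)\,W/(w(q)n)\le W/n+ w(p)W/(w(q)n)$. We would control the second, potentially heavy summand using the fact that $q$ is sampled proportionally to $w(q)$. After weighting by $w(p)$ and summing over $p$, the per-point errors should sum to $O(\varepsilon W)$ in expectation by a Chernoff/second-moment bound.

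We expect the delicate part to be the uniform estimation over all $k^{s}$ candidates. If a direct union bound fails, we would use the standard trick of splitting the sample into a ``seed'' part, which is enumerated, and an independent ``evaluation'' part, so that the union bound involves only $k^{O(\varepsilon^{-2})}$ events, each holding with failure probability $e^{-\Omega(\varepsilon^{2}\cdot\text{sample size})}$. Finally, because $\OPT\ge W/4$, rescaling $\varepsilon$ converts the additive guarantee into the multiplicative $(1-\varepsilon)$-approximation claimed in Theorem~\ref{thm:maxcut}.
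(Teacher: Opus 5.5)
Your Step~2 has a genuine gap. The one-shot greedy extension from a single seed $S$ is not correct, and the hybrid argument you sketch does not apply to it.

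When the hybrid argument moves $p$, it needs $p$'s chosen part to be near-optimal with respect to the \emph{current} hybrid clustering. But $\widehat c_j(p)$ estimates $\sum_{q\in C^*_j}d(p,q)$ for the optimal clustering $C^*$, whose labeling of $S$ is fixed once and for all. After other points have been moved, these two quantities differ by up to $\sum_{q\text{ already moved}}d(p,q)$, which is not small. Summing per-point errors to $O(\varepsilon W)$ only bounds the loss of moving each point \emph{alone} against the optimum. It says nothing about the pairwise interactions among all the simultaneously reassigned points.

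The failure is real, not an artifact of the analysis. On the uniform metric ($d(p,q)=1$ for $p\ne q$), every $p\notin S$ has exactly the same cost vector $\widehat c_j(p)=\frac1s\sum_{q\in S_j}\frac{W}{w(q)n}$. So for \emph{every} labeling of $S$, your rule puts all of $A\setminus S$ into one part, under any tie-breaking that depends only on the cost vector. Every candidate then has cut value at most $sn$, while $\OPT\approx(1-\frac1k)\frac{n^2}{2}$. Selecting the best candidate on a second sample cannot repair this.

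Goldreich--Goldwasser--Ron avoid the problem by splitting the points into $\Theta(1/\varepsilon)$ chunks, each assigned using its own seed labeled according to the current hybrid. However, enumerating joint labelings of all seeds costs $k^{O(\varepsilon^{-3}\log(k/\varepsilon))}$, which exceeds the $k^{O(\varepsilon^{-2})}$ term in the statement. Reaching that bound needs Yaroslavtsev's actual machinery, which your sketch does not reconstruct.

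Separately, if Step~1 samples $q$ uniformly for each $p$, the estimate of $w(p)$ does not have bounded relative variance. For example, take $n-1$ co-located points plus one far point. You need an importance distribution such as the one in Lemma~\ref{lem:estimatingCosts}.

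The paper sidesteps all of this by never reopening the dense-CSP algorithm. It uses your density fact $d(p,q)\le(w(p)+w(q))/n$ only to build a clone graph. After scaling so that $\sum_p\widehat{w(p)}=2n/\varepsilon$, each $p$ is replaced by $\lfloor\widehat{w(p)}\rfloor$ copies, and copies of $p,q$ are joined by an edge of weight $d(p,q)/(\widehat{w(p)}\widehat{w(q)})$. This graph has:
\begin{itemize}
\item $O(n/\varepsilon)$ vertices;
\item maximum edge weight $O(1)$ times its average edge weight;
\item cuts that keep copies together with value within a $(1\pm\varepsilon)^{O(1)}$ factor of the corresponding metric cuts.
\end{itemize}
Yaroslavtsev's theorem (Corollary~\ref{cor:maxcutdense}) is then applied as a black box, and the resulting cut is mapped back to $A$. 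The $O(\varepsilon^{-3}nk^4)$ term is exactly the $O(\eta^{-2}|A'|k^4)$ cost with $\eta=\Theta(\varepsilon)$ and $|A'|=O(n/\varepsilon)$.

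If you want to keep your direct importance-sampling route, you must replace the one-shot extension with an assignment scheme whose analysis controls the interactions among reassigned points, at no more than $k^{O(\varepsilon^{-2})}$ enumeration cost.
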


We will use the following result giving fast, sublinear algorithms for dense constraint 
satisfaction problems (CSP). An $r$-CSP over an alphabet of size $k$ is a collection of 
constraints, each acting on at most $r$ variables from a set of $n$ variables, such that 
each variable can assume $k$ values. For example, MAX CUT is a special case of a $2$-CSP 
over Boolean variables. 
\begin{theorem}[Theorem 1.2 \cite{Yaroslavtsev14}]
\label{thm:maxcutdense}
There is an algorithm which approximates any $2$-CSP problem over alphabet of size $k$ 
within additive error $\varepsilon\cdot n^2$ and runs in time 
$O\left(\varepsilon^{-2}\cdot n\cdot k^4\right) + k^{O(\varepsilon^{-2})}$.
\end{theorem}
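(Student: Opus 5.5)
The plan is to follow the sampling-and-enumeration paradigm for dense CSPs (Goldreich--Goldwasser--Ron, Mathieu--Schudy), arranged so that $n$ enters only in a final linear-time assignment pass. Normalize so that every pairwise constraint has weight in $[0,1]$. Then the objective of an assignment $y\in[k]^n$ is $\Val(y)=\sum_{u,v}w_{uv}(y_u,y_v)$. The goal is an assignment with $\Val(y)\ge \Val(x^*)-\varepsilon n^2$, where $x^*$ is optimal.

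\textbf{Step 1 (sample and enumerate).} Draw a uniform sample $S$ of $t=c\varepsilon^{-2}$ variables, for a suitable constant $c$. For each of the $k^t$ labelings $\sigma\in[k]^S$, define a candidate completion $y^\sigma$: every variable $v\notin S$ takes
\[
y^\sigma_v=\arg\max_{a\in[k]}\ \frac{n}{t}\sum_{u\in S}\bigl(w_{uv}(\sigma_u,a)+w_{vu}(a,\sigma_u)\bigr),
\]
which is the value maximizing its estimated gain against the sample. One labeling is $\sigma^*=x^*|_S$; we only need to show that $y^{\sigma^*}$ is good.

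\textbf{Step 2 (analysis of $y^{\sigma^*}$).} I would use the hybrid argument of Mathieu--Schudy. Fix a random order of $V\setminus S$ and switch variables from $x^*$ to $y^{\sigma^*}$ one at a time. When $v$ is switched, the true change in objective is governed by $v$'s gain against the current hybrid assignment, while $v$ chose its value using the sampled estimate of its gain against $x^*$.

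The resulting error splits into two parts:
\begin{itemize}
\item[(a)] The sampling error of the gain estimate. This has standard deviation $O(n/\sqrt t)=O(\varepsilon n)$ per variable by Chernoff/variance bounds.
\item[(b)] The drift between $x^*$ and the hybrid assignment on the neighbors of $v$.
\end{itemize}
Term (b) is the main obstacle, because greedy decisions are not made against the assignment actually present. I would handle it as Mathieu--Schudy do: treat the sample as the first $t$ variables of a random order, compare each decision with its counterpart under the hybrid, and bound the accumulated discrepancy in expectation by a martingale/variance computation. Each variable then contributes an expected loss of $O(n/\sqrt t)$, for a total of $O(n^2/\sqrt t)=O(\varepsilon n^2)$ for a suitable $c$. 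Markov's inequality plus a constant number of repetitions gives constant success probability.

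\textbf{Step 3 (choosing the best candidate in time independent of $n$).} Computing every $\Val(y^\sigma)$ exactly would cost $n\cdot k^t$, which is too much. Instead, draw a second independent sample $T$ of $O(\varepsilon^{-2}\log(k^t))=O(\varepsilon^{-4}\log k)$ variables. For each $\sigma$, compute $y^\sigma$ only on $T$, at cost $O(tk)$ per variable, and estimate $\Val(y^\sigma)$ from the pairs inside $T$, scaled by $n^2/|T|^2$. Each estimate then has additive error $\varepsilon n^2$ with high probability, and a union bound over the $k^t$ candidates still holds. The total cost of this step is $k^t\cdot\poly(1/\varepsilon)\cdot\poly(k)=k^{O(\varepsilon^{-2})}$.

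\textbf{Step 4 (output).} Take the best $\sigma$ and compute $y^\sigma$ on all $n$ variables. Each variable evaluates $k$ values against $t$ sampled variables, so this costs $O(n\cdot t\cdot k)$. Any extra overhead from arbitrary alphabet-pair constraint tables is absorbed in the stated $O(\varepsilon^{-2}nk^4)$ term. Reported values can likewise be estimated on a sample. Together with Step 3 this gives the claimed time bound $O(\varepsilon^{-2}nk^4)+k^{O(\varepsilon^{-2})}$ and additive error $O(\varepsilon n^2)$; rescaling $\varepsilon$ finishes the argument.
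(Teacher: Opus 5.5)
\textbf{There is a genuine gap: the completions of Step~1 can be bad for every $\sigma$, so Step~2's bound is false.} The paper gives no proof of this statement; it is quoted from \cite{Yaroslavtsev14}. So your sketch has to stand on its own.

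Take MAX CUT on $K_n$, i.e.\ $k=2$ and $w_{uv}(a,b)=1$ if $a\neq b$, $0$ otherwise. For $v\notin S$ the score of label $a$ is $\frac{2n}{t}\,|\{u\in S\colon \sigma_u\neq a\}|$, which does not depend on $v$. Under any fixed tie-breaking rule, all $n-t$ non-sample variables therefore receive the same label. Ties can even be removed without changing this: with weights $1+\delta(r_u+r_v)$ on the cut pairs, tiny $\delta>0$ and generic reals $r_u$, every $v\notin S$ still strictly prefers one common label. Hence $y^\sigma$ cuts at most $tn$ pairs for every $\sigma\in[k]^S$, while the optimum cuts about $n^2/4$. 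Neither the enumeration over $\sigma$ nor the selection in Step~3 can repair this.

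\textbf{Why your term (b) fails.} For $\sigma^*=x^*|_S$, each individual decision is only $O(n/\sqrt t)$-suboptimal against $x^*$, just as your (a) says. But all decisions are driven by the same sample, so they err in the same direction. Once a constant fraction of the variables has switched, gains against the hybrid differ from gains against $x^*$ by $\Theta(n)$, and the quadratic interaction term is $\Theta(n^2)$.

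The Mathieu--Schudy martingale argument does not transfer. In their algorithm each variable is placed greedily against \emph{all previously placed variables}, i.e.\ against the current hybrid itself. That is what makes their process self-correcting (on $K_n$ it keeps the cut balanced), and it is also why they spend quadratic time per labeling of the sample. Your Step~3 relies on $y^\sigma_v$ being computable from $(v,S,\sigma)$ alone, and that locality is exactly what has to be given up.

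Some coupling between decisions is unavoidable. Options include a sequential rule reacting to the hybrid; block hybrids with fresh samples per block, as in Goldreich--Goldwasser--Ron (with the standard analysis this costs an enumeration of size $k^{\tilde O(\varepsilon^{-3})}$); or a capacity-constrained or fractional assignment followed by rounding. Achieving such a coupling within $O(\varepsilon^{-2}nk^4)+k^{O(\varepsilon^{-2})}$ time is the real content of the cited theorem, and the proposal does not supply it.
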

As MAX $k$-CUT is a special case of $2$-CSP over an alphabet of size $k$, we immediately 
get the following corollary.
\begin{corollary}
\label{cor:maxcutdense}
Given a weighted undirected graph with maximum weight $M$, there exists an algorithm 
that computes MAX $k$-CUT with additive error $\varepsilon\cdot M\cdot n^2$ in time 
$O\left(\varepsilon^{-2}\cdot n\cdot k^4\right)+ k^{O(\varepsilon^{-2})}$.
\end{corollary}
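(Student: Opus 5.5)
The corollary follows by encoding weighted MAX $k$-CUT as a $2$-CSP over the alphabet $\{1,\dots,k\}$ and rescaling weights so that Theorem~\ref{thm:maxcutdense} applies. Given a weighted graph on $n$ vertices with weights $w(u,v)\in[0,M]$, introduce one variable $x_u\in\{1,\dots,k\}$ per vertex $u$, whose value is the index of the part containing $u$. For every pair $\{u,v\}$ introduce one binary constraint with payoff function
$$
\phi_{uv}(a,b) = \frac{w(u,v)}{M}\cdot \mathbf{1}[a\neq b]\in[0,1].
$$
An assignment of the variables is exactly a partition into $k$ parts, and its CSP value equals $\frac{1}{M}$ times the weight of the cut. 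Hence the optimal CSP value equals $\OPT_{\rm cut}/M$.

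Next we invoke Theorem~\ref{thm:maxcutdense}. The point to check is that its additive error $\varepsilon n^2$ is measured for constraints with payoffs normalized to $[0,1]$. This is the standard convention for dense CSPs, and it is what makes the error scale as $n^2$, since there are at most $\binom{n}{2}$ constraints. Our weighted payoffs $w(u,v)/M$ lie in $[0,1]$, so they fit this convention.

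The main obstacle is exactly this point. If the cited theorem only allows predicates, i.e.\ $\{0,1\}$-valued payoffs, then a reduction is needed first. One option is to round each $w(u,v)/M$ to a multiple of $\varepsilon/2$, which costs an additive $\frac{\varepsilon}{2}n^2$ after scaling. Each rounded weight is then expressed as a sum of $O(1/\varepsilon)$ predicate layers, or handled by an averaging argument over thresholds $\tau\in[0,1]$, writing $w/M=\int_0^1 \mathbf{1}[w/M>\tau]\,d\tau$. Either route changes the bounds only by constant factors in $\varepsilon$.

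With the normalization in place, the algorithm returns an assignment whose CSP value is within $\varepsilon n^2$ of optimal. Multiplying by $M$, the corresponding $k$-partition has cut weight at least $\OPT_{\rm cut}-\varepsilon M n^2$, which is the claimed additive error. The running time is inherited directly from Theorem~\ref{thm:maxcutdense}: it is $O(\varepsilon^{-2} n k^4)+k^{O(\varepsilon^{-2})}$, provided each payoff $w(u,v)/M$ is evaluated in $O(1)$ time by a single weight query. Building the encoding needs no explicit preprocessing, because constraints are generated on demand when the sampler queries pairs.
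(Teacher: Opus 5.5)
Your proposal is correct and follows the paper's route: the paper obtains the corollary in one line by viewing MAX $k$-CUT as a $2$-CSP over an alphabet of size $k$ and invoking Theorem~\ref{thm:maxcutdense}, and your rescaling of payoffs by $M$ makes explicit the normalization that turns the $\varepsilon n^2$ error into $\varepsilon M n^2$. Your predicate-only fallback is not needed here, and its threshold-averaging variant would not by itself give an algorithm, because all the thresholded instances must be cut by one common partition.
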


We ultimately wish to use Corollary~\ref{cor:maxcutdense} to solve metric MAX $k$-CUT. 
Unfortunately, merely interpreting the distances between nodes as weights is not sufficient, 
as $\max_{p,q} d(p,q)\cdot n^2$ may be significantly larger than the value of the optimum 
cut which is close to $\sum_{p,q\in A} d(p,q)$ (note that a random partition guarantees in
expectation a fraction of $\frac{k-1}{k}$ of the sum of all pairwise distances).

In what follows, we present a sublinear time reduction of the metric case to the dense 
weighted case, largely inspired by the reduction from~\cite{FK98} for MAX CUT.
We first outline the construction.
For every point $p\in A$, let $w(p) = \sum_{q\in A} d(p,q)$. We assume we are given 
estimates $\widehat{w(p)}\in (1\pm \varepsilon) w(p)$. We also assume that the distances
and estimates are scaled so that 
$W:=\sum_{p\in P} \widehat{w(p)} = n\cdot 2\varepsilon^{-1}$. Now, construct a graph
$G = (A',E)$ with edge weights $v:E\rightarrow\mathbb{R}_+$ as follows.
\begin{itemize}
    \item For each $p\in A$, add $\lfloor\widehat{w(p)}\rfloor$ copies of $p$ to $A'$.
    \item For each pair of points $p,q$ and for each copy $p'$ of $p$ and $q'$ of $q$, 
          add an edge of weight $v(p',q')=\frac{d(p,q)}{\widehat{w(p)}\cdot \widehat{w(q)}}$
          to $E$.
\end{itemize}
We show that: (i) $|A'| = O(n)$; (ii) the graph is dense, i.e. the average edge weight is 
within a constant factor of the maximum edge weight; (iii) the cost of any metric $k$-cut 
is approximated by the cost of the corresponding $k$-cut in $G$; and (iv) any $k$-cut in 
$G$ can be mapped to a metric $k$-cut without loss in approximation ratio.
We begin by analyzing the quality of the cut. We first discuss some properties of the
values $\widehat{w(p)}$ and their rounded versions.
\begin{lemma}
\label{lem:maxcutweights}
The following assertions are true.
\begin{itemize}
\item $\frac{1}{2}\cdot W < |A'| = \sum_{p\in A} \lfloor \widehat{w(p)}\rfloor \leq n\cdot 2\varepsilon^{-1}$.
\item For any $p\in A$, $W \leq \frac{1+\varepsilon}{1-\varepsilon}\cdot 2n \cdot \widehat{w(p)}$.
\item For any $p\in A$, $\lfloor\widehat{w(p)}\rfloor \geq 
      \left(\frac{1-\varepsilon}{1+\varepsilon}\right)^2\cdot \widehat{w(p)}$.
\end{itemize}
\end{lemma}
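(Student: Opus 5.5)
The plan is to derive all three assertions from two ingredients: the normalization $W=\sum_p \widehat{w(p)} = 2n\varepsilon^{-1}$, and a lower bound on each individual $w(p)$ relative to the total $\sum_q w(q)$, which comes from the triangle inequality. I would prove the second assertion first, since the other two follow from it quickly.

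\textbf{Second assertion.} For any $p,q\in A$ and any $r\in A$, the triangle inequality gives $d(q,r)\le d(q,p)+d(p,r)$. Summing over $r\in A$ yields $w(q)\le n\,d(p,q)+w(p)$. Summing next over $q\in A$ gives
\[
\sum_{q\in A} w(q)\;\le\; n\,w(p)+n\,w(p)\;=\;2n\,w(p).
\]
Now pass to the estimates using $\widehat{w(\cdot)}\in(1\pm\varepsilon)w(\cdot)$:
\[
W=\sum_q \widehat{w(q)}\le(1+\varepsilon)\sum_q w(q)\le(1+\varepsilon)\,2n\,w(p)\le\frac{1+\varepsilon}{1-\varepsilon}\,2n\,\widehat{w(p)}.
\]

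\textbf{Third assertion.} Rearranging the second assertion gives $\widehat{w(p)}\ge\frac{1-\varepsilon}{1+\varepsilon}\cdot\frac{W}{2n}=\frac{1-\varepsilon}{1+\varepsilon}\varepsilon^{-1}$. Every $\widehat{w(p)}$ is therefore large, at least about $1/\varepsilon$. Hence
\[
\lfloor\widehat{w(p)}\rfloor\ge\widehat{w(p)}-1\ge\Bigl(1-\tfrac{1+\varepsilon}{1-\varepsilon}\,\varepsilon\Bigr)\widehat{w(p)}.
\]
It remains to check the elementary inequality $1-\frac{(1+\varepsilon)\varepsilon}{1-\varepsilon}\ge\left(\frac{1-\varepsilon}{1+\varepsilon}\right)^2$ for small $\varepsilon$. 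The left side is $1-\varepsilon-O(\varepsilon^2)$ and the right side is $1-4\varepsilon+O(\varepsilon^2)$, so the inequality holds once $\varepsilon$ is below an absolute constant. This is the only place that needs care: I would verify the inequality for the full range of $\varepsilon$ assumed by the paper, and if necessary restrict to $\varepsilon\le 1/10$ or so.

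\textbf{First assertion.} The upper bound $\sum_p\lfloor\widehat{w(p)}\rfloor\le\sum_p\widehat{w(p)}=W=2n\varepsilon^{-1}$ is immediate. For the lower bound, the third assertion gives $|A'|\ge\left(\frac{1-\varepsilon}{1+\varepsilon}\right)^2 W$, and this exceeds $W/2$ for small $\varepsilon$. Alternatively, the bound $\widehat{w(p)}\ge 1/(2\varepsilon)>2$ gives $\lfloor x\rfloor>x/2$ directly for $x\ge 2$.
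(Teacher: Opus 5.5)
Your proposal is correct and follows the paper's proof essentially step for step. The paper uses the same bound $\sum_q w(q)\le 2n\,w(p)$ for the second assertion and deduces $\widehat{w(p)}\ge\varepsilon^{-1}\frac{1-\varepsilon}{1+\varepsilon}$ for the third, noting that $\varepsilon\le 1/3$ suffices, which agrees with your check. For $|A'|>W/2$, the paper's argument ($\lfloor\widehat{w(p)}\rfloor\ge 1$, hence $W<|A'|+n\le 2|A'|$) is your alternative pointwise argument in summed form. Your placement of the $(1+\varepsilon)$ and $(1-\varepsilon)^{-1}$ factors in the second assertion is the one that matches $\widehat{w(\cdot)}\in(1\pm\varepsilon)w(\cdot)$ literally; the paper writes them the other way round.
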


\begin{proof}
We begin with the second claim. Let $p\in A$.
$$
n\cdot 2\varepsilon^{-1} = W =\sum_{q\in A} \widehat{w(q)} \leq  
\frac{1}{1-\varepsilon}\cdot \sum_{q\in A}\sum_{r\in A} d(q,r) \leq 
\frac{1}{1-\varepsilon}\cdot 2n\cdot w(p) \leq  
\frac{1+\varepsilon}{1-\varepsilon}\cdot 2n \cdot \widehat{w(p)},
$$
where the penultimate inequality follows from the triangle inequality:
for all $q\in A$, $\sum_{r\in A} d(q,r)\le \sum_{r\in A} (d(p,q) + d(p,r))$.

As for the third claim, the above inequality implies that
$\widehat{w(p)}\geq \varepsilon^{-1}\cdot \frac{1-\varepsilon}{1+\varepsilon}$, 
which implies
$$
\lfloor\widehat{w(p)}\rfloor > \widehat{w(p)}-1\ge 
\left(1 - \varepsilon\cdot \frac{1+\varepsilon}{1-\varepsilon}\right)\cdot \widehat{w(p)}\ge
\left(\frac{1-\varepsilon}{1+\varepsilon}\right)^2\cdot \widehat{w(p)},
$$
where the last inequality assumes $\varepsilon$ is sufficiently small ($\le\frac 1 3$ would
suffice).

Finally, the first inequality follows from the above bound that implies that
$\lfloor\widehat{w(p)}\rfloor\ge 1$ for all $p$, so $|A'|\ge n$, and as
$\widehat{w(p)} < \lfloor\widehat{w(p)} + 1$, we get $W < |A'|+n\leq 2|A'|$.
\end{proof}

Let $\Val(C_1,C_2,\dots,C_k)$ denote the sum of inter-cluster pairwise distances
(or edge weights, in the case of a graph).
\begin{lemma}
\label{lem:cutapprox}
Let $A = C_1\sqcup C_2\sqcup\cdots\sqcup C_k$ be a partition of a metric space, and let 
$C_1',\ldots C_k'$ be the corresponding partition of $A'$, where $C'_i$ consists of all 
duplicates of $C_i$. Then,
$$
\Val(C_1,\ldots C_k) \geq \Val(C_1',\ldots C_k')\geq 
\left(\frac{1-\varepsilon}{1+\varepsilon}\right)^4\cdot \Val(C_1,\ldots C_k).
$$ 
\end{lemma}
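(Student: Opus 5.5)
The plan is to compare the two cut values pair by pair. Fix a pair of original points $p,q$ lying in different clusters $C_a,C_b$. All $\lfloor\widehat{w(p)}\rfloor\cdot\lfloor\widehat{w(q)}\rfloor$ edges between copies of $p$ and copies of $q$ cross the cut $(C'_1,\dots,C'_k)$. Pairs of points in the same cluster contribute nothing to either side, since all their copies land in the same $C'_i$. Copies of the same point are never separated, so no ``self'' edges cross the cut either.

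Hence
$$
\Val(C'_1,\dots,C'_k)=\sum_{a<b}\sum_{p\in C_a}\sum_{q\in C_b}\lfloor\widehat{w(p)}\rfloor\lfloor\widehat{w(q)}\rfloor\cdot\frac{d(p,q)}{\widehat{w(p)}\,\widehat{w(q)}},
$$
and the whole lemma reduces to bounding the per-pair factor $\rho(p,q)=\frac{\lfloor\widehat{w(p)}\rfloor\lfloor\widehat{w(q)}\rfloor}{\widehat{w(p)}\widehat{w(q)}}$ from both sides.

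For the upper bound, use $\lfloor x\rfloor\le x$. This gives $\rho(p,q)\le 1$, so each crossing pair contributes at most $d(p,q)$. Summing over crossing pairs gives $\Val(C'_1,\dots,C'_k)\le\Val(C_1,\dots,C_k)$.

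For the lower bound, invoke the third assertion of Lemma~\ref{lem:maxcutweights}: for every $p$, $\lfloor\widehat{w(p)}\rfloor\ge\left(\frac{1-\varepsilon}{1+\varepsilon}\right)^2\widehat{w(p)}$. Applying this to both $p$ and $q$ yields $\rho(p,q)\ge\left(\frac{1-\varepsilon}{1+\varepsilon}\right)^4$. Summing over crossing pairs gives the stated lower bound.

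The only subtle point is bookkeeping: $\Val$ must be summed over unordered (or consistently ordered) pairs in both the metric space and the graph, so that no factor of $2$ discrepancy appears. It is also worth confirming that the hypotheses of Lemma~\ref{lem:maxcutweights} hold, namely $\varepsilon\le 1/3$ and the normalization $W=2n\varepsilon^{-1}$. These guarantee $\lfloor\widehat{w(p)}\rfloor\ge1$, so no point disappears from $A'$. I expect no real obstacle beyond this.
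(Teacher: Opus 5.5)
Your proposal is correct and is essentially the paper's proof. The paper likewise writes the value of the induced cut on $A'$ as a sum over crossing pairs $p\in C_i$, $q\in C_j$ ($i\ne j$) of $\frac{\lfloor\widehat{w(p)}\rfloor}{\widehat{w(p)}}\cdot\frac{\lfloor\widehat{w(q)}\rfloor}{\widehat{w(q)}}\cdot d(p,q)$, bounds this factor above by $1$, and bounds it below by $\left(\frac{1-\varepsilon}{1+\varepsilon}\right)^4$ using the third assertion of Lemma~\ref{lem:maxcutweights}.
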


\begin{proof}
The upper bound follows immediately. For the lower bound, let $E_C\subset E$ be the 
the set of edges that are cut by the partition of $A'$. Then,
\begin{eqnarray*}
       \sum_{e\in E_C} v(e) 
& =  & \sum_{i\ne j}\sum_{p\in C_i\wedge j\in C_j} \frac{\lfloor \widehat{w(p)}\rfloor}{\widehat{w(p)}}\cdot 
       \frac{\lfloor \widehat{w(q)}\rfloor}{\widehat{w(q)}}\cdot d(p,q)\\
&\geq& \left(\frac{1-\varepsilon}{1+\varepsilon}\right)^4\cdot\sum_{i\ne j}\sum_{p\in C_i\wedge j\in C_j} d(p,q),
\end{eqnarray*}
where the inequality uses Lemma~\ref{lem:maxcutweights}.
\end{proof}

Next, we argue that any partition assigning duplicates of a point to different clusters can be improved by 
assigning all the duplicates of the same point to the same cluster.
\begin{lemma}
Let $A' = C'_1\sqcup C'_2\sqcup\cdots\sqcup C'_k$ be an arbitrary partition. Then, there exists 
a partition $A' = C''_1\sqcup C''_2\sqcup\cdots\sqcup C''_k$ that satisfies that for all $p\in A$,
all the duplicates $p'\in A'$ of $p$ are in the same cluster, and
$\Val(C''_1,\dots,C''_k)\le\Val(C'_1,\dots,C'_k)$.
\end{lemma}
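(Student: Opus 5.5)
The plan is a one-point-at-a-time exchange argument. The key structural fact is that the duplicates of a single point $p\in A$ never interact with each other in $G$. Edges are only added between copies $p'$ of $p$ and $q'$ of $q$ for distinct points $p,q$; even if one allowed $p=q$, the weight would be $d(p,p)=0$. So when the assignment of all vertices other than the copies of $p$ is held fixed, each copy of $p$ contributes to $\Val$ independently of where the other copies of $p$ are placed.

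Concretely, fix $p\in A$ and the current partition. For each cluster index $j\in\{1,\dots,k\}$, define
$$c_p(j)=\sum_{q\in A\setminus\{p\}}\ \sum_{q'\text{ copy of }q,\ q'\notin C'_j} \frac{d(p,q)}{\widehat{w(p)}\cdot\widehat{w(q)}}.$$
This is the cut weight contributed by a single copy of $p$ if that copy is placed in cluster $j$. It does not depend on the location of the other copies of $p$. Then $\Val(C'_1,\dots,C'_k)$ equals $\sum_{p'\text{ copy of }p} c_p(j(p'))$ plus a term that involves no copy of $p$ and is therefore unchanged when copies of $p$ are moved. Here $j(p')$ denotes the cluster containing $p'$.

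Choose $j^*\in\arg\min_j c_p(j)$ and move every copy of $p$ into cluster $j^*$. The contribution of the copies becomes $\lfloor\widehat{w(p)}\rfloor\cdot c_p(j^*)\le\sum_{p'}c_p(j(p'))$, so $\Val$ does not increase, and afterwards all duplicates of $p$ lie in one cluster.

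Now process the points of $A$ one at a time. Moving the copies of $p$ only changes the quantities $c_q(\cdot)$ for other points $q$. It never separates the copies of a point that has already been processed, since those copies are not moved again. Each step is non-increasing in $\Val$, so after $|A|$ steps we obtain $C''_1,\dots,C''_k$ with every point's duplicates in a single cluster and $\Val(C'')\le\Val(C')$.

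The only step that needs care is the linearity observation: that copies of the same point share no edge of positive weight, so the objective decomposes as a sum of per-copy terms with identical coefficients. Everything else is routine.

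The same argument with $\arg\max$ in place of $\arg\min$ gives the analogous non-decreasing statement. The version stated here uses the minimizer.
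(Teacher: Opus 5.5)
Your proof is correct and uses the same per-point exchange argument as the paper: copies of a point share no positive-weight edges and have identical neighbourhoods, so all copies of $p$ can be moved into one best cluster, point by point, without making the objective worse. The only difference is the direction: the paper moves the copies to the cluster of the duplicate with the largest adjacent cut weight and concludes $\Val(C'')\ge\Val(C')$, which is what the MAX $k$-CUT reduction actually needs and is exactly your closing $\arg\max$ remark (the printed $\le$ appears to be a typo, and read literally it is trivial, since putting all of $A'$ into one cluster gives value $0$).
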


\begin{proof}
Let $p\in A$. Consider a cluster $C'_i$ containing a duplicate $p'$ of $p$, such that the weight of the cut edges
adjacent to $p'$ is maximized among all duplicates. As there are no edges between duplicates, and all duplicates have the same neighbor set, moving all duplicates to $C'_i$ cannot decrease the value of the partition. Repeating this
change for all $p$ completes the proof.
\end{proof}
\begin{lemma}\label{lem:frac}
Let $P\colon V_H\to[k]$ be an arbitrary labeling of the clone graph, and for $x\in X$ and $\ell\in[k]$ let $p_x(\ell)=|\{a\in\lfloor\widehat{w(x)}\rfloor\colon P(x,a)=\ell\}|/\lfloor\widehat{w(x)}\rfloor$ be the fraction of copies of $x$ labeled $\ell$. Then
\begin{equation}\label{eq:frac}
\text{val}_h(P)\;=\;\sum_{\{x,y\}\subseteq X}d(x,y)\Big(1-\sum_{\ell\in[k]}p_x(\ell)\,p_y(\ell)\Big).
\end{equation}
Let $c_P\colon X\to[k]$ be the random labeling obtained by choosing, independently for every $x\in X$, a uniformly random index $a_x\in[m(x)]$ and setting $c_P(x)=P(x,a_x)$. Then
\[
\E\big[\text{cut}(c_P)\big]\;=\;\text{val}_h(P).
\]
\end{lemma}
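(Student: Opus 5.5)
The plan is to expand the cut value of the clone graph edge by edge, group the edges by the pair of original points they connect, and then read off the expectation statement from the resulting formula. Throughout, the notation is that of the statement: $X$ is the underlying point set (playing the role of $A$), the copies of $x$ are indexed by $a\in[m(x)]$ with $m(x)=\lfloor\widehat{w(x)}\rfloor$, and an edge between copies $(x,a)$ and $(y,b)$ with $x\ne y$ carries weight proportional to $d(x,y)$. I will take the normalization implicit in $\text{val}_h$ to be the one that makes the total weight of the edges between the copies of $x$ and the copies of $y$ equal to $d(x,y)$, namely weight $d(x,y)/(m(x)m(y))$ per edge. If the normalization is instead the $\widehat{w}$-based one from the construction above, the same computation goes through with the extra constant factor $\frac{\lfloor\widehat{w(x)}\rfloor\lfloor\widehat{w(y)}\rfloor}{\widehat{w(x)}\widehat{w(y)}}$ carried along. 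There are no edges between copies of the same point.

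For the identity \eqref{eq:frac}, write $\text{val}_h(P)$ as a sum over unordered pairs $\{x,y\}$ with $x\ne y$. For each pair, sum over the $m(x)m(y)$ edges $(x,a)(y,b)$ the quantity $\frac{d(x,y)}{m(x)m(y)}\,\mathbf 1[P(x,a)\ne P(y,b)]$. Now write $\mathbf 1[P(x,a)\ne P(y,b)]=1-\sum_{\ell}\mathbf 1[P(x,a)=\ell]\,\mathbf 1[P(y,b)=\ell]$. The double sum over $a,b$ of the product factorizes as $\big(\sum_a\mathbf 1[P(x,a)=\ell]\big)\big(\sum_b\mathbf 1[P(y,b)=\ell]\big)=m(x)p_x(\ell)\cdot m(y)p_y(\ell)$. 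Dividing by $m(x)m(y)$ gives exactly $d(x,y)\big(1-\sum_\ell p_x(\ell)p_y(\ell)\big)$.

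For the expectation, note that $c_P(x)$ takes the value $\ell$ with probability exactly $p_x(\ell)$, because $a_x$ is uniform over the copies of $x$. Since the indices $a_x$ are chosen independently, for $x\ne y$ we get $\Pr[c_P(x)=c_P(y)]=\sum_\ell p_x(\ell)p_y(\ell)$. By linearity of expectation, $\E[\text{cut}(c_P)]=\sum_{\{x,y\}}d(x,y)\Pr[c_P(x)\ne c_P(y)]$, and this matches \eqref{eq:frac}.

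The only step that needs care is the weight normalization: the factor $m(x)m(y)$ must cancel against the per-edge weight, and the right-hand side of \eqref{eq:frac} is stated in terms of $d(x,y)$ alone, which is only consistent with the per-edge weight $d(x,y)/(m(x)m(y))$. I expect this bookkeeping to be the main obstacle, and I will fix the normalization to match that right-hand side. The rest is linearity of expectation and independence across distinct points.
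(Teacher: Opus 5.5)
Your proposal is correct and matches the paper's proof. For each pair $x\ne y$, the paper counts the $m(x)m(y)-\sum_\ell |B_x^\ell||B_y^\ell|$ bichromatic clone pairs, which is the same computation as your indicator expansion, with per-edge weight $d(x,y)/(\lfloor\widehat{w(x)}\rfloor\lfloor\widehat{w(y)}\rfloor)$, and it then uses $\Pr[c_P(x)=\ell]=p_x(\ell)$, independence, and linearity of expectation. You are also right that this normalization differs from the $\widehat{w}$-based edge weights in the earlier construction, and fixing it to match the right-hand side of \eqref{eq:frac} is exactly what the paper's proof implicitly does.
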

\begin{proof}
For $x\neq y$ and $\ell\in[k]$ let $B_x^\ell=\{(x,a)\colon P(x,a)=\ell\}$, so $|B_x^\ell|=p_x(\ell)\,m(x)$. The number of pairs $(u,v)\in B_x\times B_y$ with $P(u)\neq P(v)$ is
\[
\lfloor\widehat{w(x)}\rfloor\lfloor\widehat{w(y)}\rfloor-\sum_{\ell}|B_x^\ell|\,|B_y^\ell|=\lfloor\widehat{w(x)}\rfloor\lfloor\widehat{w(y)}\rfloor\Big(1-\sum_\ell p_x(\ell)p_y(\ell)\Big),
\]
and each such pair has weight $d(x,y)/(\lfloor\widehat{w(x)}\rfloor\lfloor\widehat{w(y)}\rfloor)$; pairs of copies of the same point have weight $0$. Summing over $\{x,y\}$ gives~\eqref{eq:frac}.

For the rounding, $\Pr[c_P(x)=\ell]=p_x(\ell)$ for every $x$ and $\ell$, and the choices for different points are independent, so $\Pr[c_P(x)\neq c_P(y)]=1-\sum_\ell p_x(\ell)p_y(\ell)$ for $x\neq y$. Hence, by linearity of expectation, $\E[\text{cut}(c_P)]=\sum_{\{x,y\}}d(x,y)\Pr[c_P(x)\neq c_P(y)]$ equals the right-hand side of~\eqref{eq:frac}.
\end{proof}

Finally, we show a lemma that implies that the graph $G$ is dense.
\begin{lemma}
\label{lem:maxcutdense}
For any edge $e\in E$, we have 
$$
v(e)\leq 16\cdot\left(\frac{1+\varepsilon}{1-\varepsilon}\right)^6\cdot\frac{\sum_{e'\in E}v(e')}{|A'|^2}.
$$
\end{lemma}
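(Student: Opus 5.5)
The plan is to compare an arbitrary edge weight with the total weight, which we lower bound using the triangle inequality. Fix an edge $e=(p',q')$ between a copy $p'$ of $p$ and a copy $q'$ of $q$, so that $v(e)=\frac{d(p,q)}{\widehat{w(p)}\,\widehat{w(q)}}$. The triangle inequality gives $d(p,q)\le \frac{1}{n}\sum_{r\in A}(d(p,r)+d(r,q)) = \frac{w(p)+w(q)}{n}$. Combined with $w(\cdot)\le \frac{1}{1-\varepsilon}\widehat{w(\cdot)}$, this yields
$$
v(e)\le \frac{1}{(1-\varepsilon)n}\left(\frac{1}{\widehat{w(q)}}+\frac{1}{\widehat{w(p)}}\right).
$$
The second assertion of Lemma~\ref{lem:maxcutweights} gives $\frac{1}{\widehat{w(x)}}\le \frac{1+\varepsilon}{1-\varepsilon}\cdot\frac{2n}{W}$ for every $x$. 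Hence
$$
v(e)\le \frac{4}{1-\varepsilon}\cdot\frac{1+\varepsilon}{1-\varepsilon}\cdot\frac{1}{W}.
$$

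Next we lower bound the total weight. We have
$$
\sum_{e'\in E}v(e') = \sum_{\{x,y\}} \lfloor\widehat{w(x)}\rfloor\lfloor\widehat{w(y)}\rfloor\frac{d(x,y)}{\widehat{w(x)}\widehat{w(y)}}.
$$
The third assertion of Lemma~\ref{lem:maxcutweights} bounds each floor ratio below by $\left(\frac{1-\varepsilon}{1+\varepsilon}\right)^2$, so
$$
\sum_{e'}v(e')\ge\left(\frac{1-\varepsilon}{1+\varepsilon}\right)^4\sum_{\{x,y\}}\frac{d(x,y)}{\widehat{w(x)}\widehat{w(y)}}.
$$
To bound the remaining sum we need an upper bound on $\widehat{w(x)}$. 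Here we use the scaling: $\sum_x w(x)\ge (1-\varepsilon)\cdot\frac{W}{1+\varepsilon}$, but we also need each $\widehat{w(x)}$ to be at most a constant times $W/n$. This is where I expect the main obstacle, since an individual $w(x)$ can be as large as $\sim\frac{1}{2}\sum_x w(x)$ (one far outlier).

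To get around this, we avoid a pointwise upper bound. We use the trick $\frac{1}{ab}\ge\frac{4}{(a+b)^2}$ together with $d(x,y)\ge$ something, or more cleanly Cauchy–Schwarz. Write
$$
S:=\sum_{x,y}\frac{d(x,y)}{\widehat{w(x)}\widehat{w(y)}}.
$$
Define $t_x=\sum_y \frac{d(x,y)}{\widehat{w(y)}}$. Because $\widehat{w(y)}\le\frac{1+\varepsilon}{1-\varepsilon}\,w(y)$ is comparable to $w(y)$, and $w(y)\le w(x)+n\,d(x,y)$, we have $\frac{d(x,y)}{w(y)}\ge \frac{d(x,y)}{w(x)+nd(x,y)}$. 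Summing over $y$ and using concavity, $t_x\gtrsim\frac{w(x)}{2w(x)}$ in the typical regime, i.e.\ $t_x=\Omega(1)$. The regime split is: for $y$ with $nd(x,y)\le w(x)$ the term is $\ge\frac{d(x,y)}{2w(x)}$; for the other $y$ the term is $\ge\frac{1}{2n}$. At least one of these two groups carries half of $w(x)$ or half of the $n$ points, giving $t_x\ge\frac14$ up to $(1\pm\varepsilon)$ factors. Then $S=\sum_x \frac{t_x}{\widehat{w(x)}}$. By Cauchy–Schwarz, $\sum_x\frac{1}{\widehat{w(x)}}\ge\frac{n^2}{W}$, so
$$
S\ge \frac{1}{4}\cdot\frac{1-\varepsilon}{1+\varepsilon}\cdot\frac{n^2}{W}.
$$
Counting unordered pairs halves this.

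Finally we combine. By the first assertion of Lemma~\ref{lem:maxcutweights}, $|A'|\le 2n\varepsilon^{-1}=W$, so $\frac{n^2}{W}\ge \frac{|A'|^2\,n^2}{W^3}=\frac{|A'|^2}{W}\cdot\frac{\varepsilon^2}{4}$. This loses $\varepsilon^2$, so instead I would compare directly using $W=2n/\varepsilon$ and $|A'|\le W$. The target is then $v(e)\le \frac{c}{W}$ together with $\sum v\ge c'\frac{|A'|^2}{W\cdot W}\cdot W\ldots$, and I would recheck this bookkeeping: the natural normalization is $\sum v\gtrsim \frac{n^2}{W}\cdot\frac{W^2}{n^2}\cdot\frac1W$. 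If the $\varepsilon$ factors do not cancel, I would replace the Cauchy–Schwarz step by the sharper bound $S\ge\frac{1}{W^2}\sum_{x,y}d(x,y)\cdot\frac{W^2}{\widehat{w(x)}\widehat{w(y)}}$ paired with $\sum_{x,y}d(x,y)\approx W$, which gives $\sum v\gtrsim\frac{1}{W}$. With $|A'|^2\le W^2$ this makes the ratio $\frac{|A'|^2 v(e)}{\sum v}$ an absolute constant times powers of $\frac{1+\varepsilon}{1-\varepsilon}$, and I would track these constants to reach $16\left(\frac{1+\varepsilon}{1-\varepsilon}\right)^6$.
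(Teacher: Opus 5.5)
Your upper bound $v(e)\le\frac{4(1+\varepsilon)}{(1-\varepsilon)^2}\cdot\frac{1}{W}$ is correct, and it is the paper's first step. The proof breaks in the lower bound on $\sum_{e'\in E}v(e')$. The third assertion of Lemma~\ref{lem:maxcutweights} gives $\lfloor\widehat{w(x)}\rfloor/\widehat{w(x)}\ge\left(\frac{1-\varepsilon}{1+\varepsilon}\right)^2$. So the number $\lfloor\widehat{w(x)}\rfloor\lfloor\widehat{w(y)}\rfloor$ of copy-pairs cancels the normalization $\widehat{w(x)}\widehat{w(y)}$ in the edge weight. Hence the pair $\{x,y\}$ contributes at least $\left(\frac{1-\varepsilon}{1+\varepsilon}\right)^4 d(x,y)$, not $\left(\frac{1-\varepsilon}{1+\varepsilon}\right)^4 d(x,y)/(\widehat{w(x)}\widehat{w(y)})$ as you wrote. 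By keeping the denominator you discard a factor $\widehat{w(x)}\widehat{w(y)}$, which is at least of order $\varepsilon^{-2}$ and can be much larger. This is what creates your ``main obstacle'' (no pointwise upper bound on $\widehat{w(x)}$ is needed) and the $\varepsilon^2$ loss you hit at the end.

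The missing step is a single line, and it is how the paper argues: $\sum_{e'}v(e')\ge\left(\frac{1-\varepsilon}{1+\varepsilon}\right)^4\sum_{\{x,y\}}d(x,y)=\frac12\left(\frac{1-\varepsilon}{1+\varepsilon}\right)^4\sum_x w(x)\ge\frac{1}{2(1+\varepsilon)}\left(\frac{1-\varepsilon}{1+\varepsilon}\right)^4W$. Since $|A'|\le W$, the average edge weight is at least $\frac{1}{2(1+\varepsilon)}\left(\frac{1-\varepsilon}{1+\varepsilon}\right)^4\cdot\frac1W$. Dividing your upper bound by this gives the lemma with constant $8\left(\frac{1+\varepsilon}{1-\varepsilon}\right)^6$.

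The detour you substitute does not work either. The dichotomy behind $t_x\ge\frac14$ is not exhaustive. Suppose $x$ coincides with $n-2$ other points and a single point $z$ lies at distance $D$. Then $\sum_{y\in Y_1}d(x,y)=0$ and $|Y_2|=1$, so $t_x=D/\widehat{w(z)}\approx\frac{1}{n-1}$. Correspondingly $S=\Theta(1/D)$ while $n^2/W=\Theta(n/D)$, so $S\ge\frac14\cdot\frac{1-\varepsilon}{1+\varepsilon}\cdot\frac{n^2}{W}$ is false.

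Your closing bookkeeping is also off by a factor of order $W^2$. From $v(e)=O(1/W)$, $\sum_{e'}v(e')=\Omega(1/W)$ and $|A'|\le W$, you only get $|A'|^2v(e)/\sum_{e'}v(e')=O(W^2)=O(n^2/\varepsilon^2)$, not a constant. What is needed, and true, is $\sum_{e'}v(e')=\Omega(W)$, which is exactly the cancellation above.
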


\begin{proof}
Consider an edge $\{p',q'\}\in E$, where $p'$ and $q'$, respectively, are duplicates of
$p$ and $q$, respectively. We have
\begin{align*}
    v(p',q') &= \frac{d(p,q)}{\widehat{w(p)}\cdot \widehat{w(q)}} \leq \frac{\frac{1}{n}\sum_{r\in A} (d(p,r) + d(q,r))}{\widehat{w(p)}\cdot \widehat{w(q)}} \leq 
    (1+\varepsilon)\frac{\widehat{w(p)} + \widehat{w(q)}}{n\cdot \widehat{w(p)}\cdot \widehat{w(q)}} \\
    &\leq \left(\frac{1+\varepsilon}{1-\varepsilon} \right)^2 \cdot \frac{4}{W} \leq 
    \left(\frac{1+\varepsilon}{1-\varepsilon} \right)^2 \cdot \frac{8}{|A'|},
\end{align*}
where the penultimate inequality uses the second bullet point of Lemma~\ref{lem:maxcutweights},
and the last inequality uses the first bullet point of the same lemma.
Now, the average weight of an edge is
\begin{align*}
    \frac{\sum_{e'\in E}v(e')}{|A'|^2} &=    
    \frac{\sum_{p,q\in A}\frac{\lfloor\widehat{w(p)}\rfloor\cdot\lfloor\widehat{w(q)}\rfloor\cdot d(p,q)}{\widehat{w(p)}\cdot \widehat{w(q)}}}{|A'|^2} \ge
    \left(\frac{1-\varepsilon}{1+\varepsilon}\right)^4 \cdot \frac{W}{|A'|^2} \geq 
    \frac{1}{2}\cdot \left(\frac{1-\varepsilon}{1+\varepsilon}\right)^4\cdot \frac{1}{|A'|},
\end{align*}
where we used the first bullet point of Lemma~\ref{lem:maxcutweights} in the final inequality.
\end{proof}

We now are ready to prove Theorem~\ref{thm:maxcut}.
\begin{proof}[Proof of Theorem~\ref{thm:maxcut}]
We first compute the values $\widehat{w(p)}$ for all $p$, using Lemma~\ref{lem:estimatingCosts}.
This takes $O(n\cdot \varepsilon^{-2}\cdot \log(n/\delta))$ time, and succeeds with probability
at least $1 - \delta$.

Subsequently, we consider the graph $G$ (we don't need to construct it explicitly). Due to
Lemma~\ref{lem:maxcutweights}, $G$ has $O(n\cdot \varepsilon^{-1})$ nodes. Due to 
Lemma~\ref{lem:maxcutdense}, the graph is dense with maximum edge weight 
$8\cdot\left(\frac{1+\varepsilon}{1-\varepsilon}\right)^6\frac{\sum_{e\in A'} v(e)}{|A'|^2}$.
Thus, letting $\eta$ be the parameter controlling the error term (to be determined
shortly), we may use Corollary~\ref{cor:maxcutdense} to compute, in time 
$O\left(\eta^{-2}\cdot |A'|\cdot k^4\right) + k^{O(\eta^{-2})}$, a $k$-cut 
$A' = C'_1\sqcup C'_2\sqcup\cdots\sqcup C'_k$ with
\begin{eqnarray*}
       \Val(C'_1,\dots,C'_k)
&\geq& \OPT - 16\cdot\eta\cdot \left(\frac{1+\varepsilon}{1-\varepsilon}\right)^6\cdot 
              \frac{\sum_{e'\in E}v(e')}{|A'|^2}\cdot |A'|^2 \\
&\geq& \left(1-16\cdot\eta\cdot \left(\frac{1+\varepsilon}{1-\varepsilon}\right)^6\right) \cdot \OPT,
\end{eqnarray*}
as $\OPT\le\sum_{e'\in E}v(e')$.
Now, by Lemma~\ref{lem:cutapprox}, we have for the corresponding metric $k$-cut
$A = C_1\sqcup C_2\sqcup\cdots\sqcup C_k$ that
$$
\Val(C_1,C_2,\dots,C_k)\ge 
\left(\left(\frac{1-\varepsilon}{1+\varepsilon}\right)^4 - 
16\cdot\eta\cdot \left(\frac{1+\varepsilon}{1-\varepsilon}\right)^{2}\right) \cdot \OPT,
$$
where $\OPT$ here indicates the metric optimum.
For sufficiently small $\varepsilon$, it suffices to set $\eta \leq \frac{\varepsilon}{32}$
to yield additive error less than $10\cdot\varepsilon\cdot\OPT$. 
Using Lemma \ref{lem:frac} allows us to round the fractional cut into a true cut, where we may use repeated rounding instances to obtain a cut with desired error with high probability. 
The afore-mentioned running
time amounts to
$O\left(\varepsilon^{-3}\cdot n\cdot k^4\right) + k^{O(\varepsilon^{-2})}$.
Adding the time complexity of computing the $\widehat{w(p)}$-s gives the theorem's
stated running time.
\end{proof}

\subsection{Computing the Cost of a Cluster}
\label{sec:cluster:cost}

In this section, we will prove Theorem~\ref{thm:costguarantee}. We first require the following 
lemma, which is also used in Section~\ref{sec:maxcut}.
\begin{lemma}
\label{lem:estimatingCosts}
Let $A,B$ be two sets of points in some metric space and let $\varepsilon, \delta > 0$.
We can compute, using a randomized $O(n\cdot\log \delta^{-1})$-time algorithm, a set 
$S\subset A$ endowed with weights $v:S\rightarrow \mathbb{R}_{\geq 0}$ such that 
$|S| = O(\varepsilon^{-2}\cdot \log(|B|/\delta))$ and, with probability at least
$1-\delta$, for every $p\in B$,
$$
(1-\varepsilon)\cdot \sum_{q\in A} d(p,q) \le \sum_{q\in S}v(q)\cdot d(p,q) \leq 
(1+\varepsilon)\cdot \sum_{q\in A} d(p,q).
$$
\end{lemma}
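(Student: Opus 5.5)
The plan is importance sampling from $A$ with probabilities proportional to distances to a fixed anchor point. First I fix a single point $c\in A$ that is close to the $1$-median of $A$ in an averaged sense. Choosing $c$ uniformly at random from $A$ gives $\E\bigl[\sum_{q\in A} d(q,c)\bigr] = \frac{1}{|A|}\sum_{q,q'\in A} d(q,q')$, so by Markov, with constant probability, $D:=\sum_{q\in A} d(q,c)\le 2\cdot\frac{1}{|A|}\sum_{q,q'}d(q,q')\le 4\min_{p\in A}\sum_q d(p,q)$. Computing $D$ and all $d(q,c)$ takes $O(|A|)$ operations. The failure of this step is handled at the end by repetition.

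Next I sample $s=O(\varepsilon^{-2}\log(|B|/\delta))$ points of $A$ i.i.d. from the distribution
\[
\pi(q)=\frac12\cdot\frac{1}{|A|}+\frac12\cdot\frac{d(q,c)}{D}.
\]
A sampled $q$ gets weight $1/(s\,\pi(q))$, and $S$ is the set of samples with the accumulated weights $v$. Unbiasedness for every fixed $p\in B$ is immediate. For concentration I bound each summand $X_q=d(p,q)/\pi(q)$ by the triangle inequality:
\[
d(p,q)\le d(p,c)+d(q,c)\le 2|A|\,\pi(q)\,d(p,c)+2D\,\pi(q).
\]
Hence $0\le X_q\le 2|A|\,d(p,c)+2D$. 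On the other hand, $T_p:=\sum_{q\in A}d(p,q)\ge |A|\,d(p,c)-D$ and also $T_p\ge \min_{p'}\sum_q d(p',q)\ge D/4$. Combining these, $|A|\,d(p,c)+D\le T_p+2D\le 9T_p$, so every summand is at most $18\,T_p$, a constant multiple of the mean.

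A Chernoff/Hoeffding bound for bounded nonnegative variables then gives relative error $\varepsilon$ with failure probability $\exp(-\Omega(\varepsilon^2 s))$. Setting $s$ as above and taking a union bound over $p\in B$ yields failure probability $\delta$.

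The main obstacle is that the anchor $c$ is only good with constant probability, and its quality cannot be certified exactly. I would resolve this by choosing $O(\log(1/\delta))$ independent candidate anchors and computing $D$ for each in $O(|A|\log(1/\delta))$ time. I keep the one minimizing $D$; it is at least as good as any good candidate, and with probability at least $1-\delta$ some candidate is good. Building the sampling distribution costs $O(|A|)$, and drawing the samples costs $O(|A|+s)$ via alias tables or prefix sums. Since $s$ is dominated by $n$ in the regimes used (otherwise we simply take $S=A$ with unit weights), the total is $O(n\log\delta^{-1})$, after adjusting $\delta$ by a factor $2$.
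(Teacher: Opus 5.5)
Your proposal is correct and follows essentially the same route as the paper: keep the best of $O(\log\delta^{-1})$ uniformly random anchors, sample from the half-uniform/half-distance-proportional mixture, bound every summand $d(p,q)/\pi(q)$ by a constant multiple of $\sum_{q\in A}d(p,q)$ via the triangle inequality, and finish with Hoeffding plus a union bound over $B$. Your constant $18$ (versus the paper's $26$), your accumulated weights for repeated samples, and your accounting of the sampling time are minor improvements.

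One step needs tightening. The lemma does not assume $B\subseteq A$; the paper applies it with $B=\{\mu_X\}$. For $p\notin A$ the inequality $T_p\ge\min_{p'\in A}\sum_{q}d(p',q)$ can fail. Your averaging argument, however, directly gives $D\le \frac{2}{|A|}\sum_{q,q'\in A}d(q,q')\le 4\sum_{q\in A}d(q,p)$ for every point $p$ of the ambient space, and that is all you need. The paper instead handles this case by routing through the point of $A$ nearest to $p$.
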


\begin{proof}
Denote $D(p) =\sum_{q\in A} d(p,q)$. We need to estimate the values $D(p)$
for all $p\in B$.

Let $c = \arg\min_{r\in A} \sum_{q\in A} d(q,r)$ be a median point of $A$.
We first argue that we can compute quickly a suitable proxy for $c$. Let
$c'\in A$ be a point sampled uniformly at random from $A$. By the triangle
inequality, any point $c'\in A$ satisfies
$\sum_{q\in A} d(q,c') \leq \sum_{q\in A} d(q,c) + |A|\cdot d(c,c')$.
By Markov's Inequality, $\Pr[d(c,c') > \frac{2}{n}\cdot \sum_{q\in A} d(q,c)] < \frac 1 2$.
If $d(c,c') \le \frac{2}{n}\cdot \sum_{q\in A} d(q,c)$, we get that
$\sum_{q\in A} d(q,c') \leq 3\cdot \sum_{q\in A} d(q,c)$. 
Sampling a set of $\log \delta^{-1}$ points from $A$ uniformly 
and independently at random guarantees that one of the sampled
points satisfies this inequality with probability at least $1-\delta$.

Next, we construct a set $S$ by sampling points in $A$ independently 
at random from the probability distribution $\mathbb{P}$, defined by 
$\mathbb{P}[q]=\frac{1}{2}\cdot \left(\frac{d(q,c')}{\sum_{q'\in A} d(q',c')} + \frac{1}{|A|}\right)$.
For $q\in S$, we set $v(q) = \frac{1}{|S|\cdot \mathbb{P}[q]}$.
Our estimator of $D(p)$ is therefore
$$
\widehat{D(p)}=\frac{1}{|S|}\sum_{q\in S} \frac{1}{\mathbb{P}[q]}d(p,q).
$$
Clearly, $\mathbb{E}[\widehat{D(p)}] = \sum_{q\in A} d(p,q) = D(p)$.
To choose $|S|$ and to complete the proof, we use concentration bounds.
First, we bound the maximum contribution of any sampled point to the value the 
estimator. We have 
\begin{eqnarray*}
       \frac{d(p,q)}{\mathbb{P}[q]} 
&\leq& \frac{\left(d(p,c') + d(q,c')\right)}{\mathbb{P}[q]} \\
& =  & 2\cdot \frac{\left(d(p,c') + d(q,c')\right)\cdot|A|\cdot \sum_{q'\in A}d (q',c')}
                   {|A|\cdot d(q,c') + \sum_{q'\in A} d(q',c')} \\
&\leq& 2 \cdot \left(|A|\cdot d(p,c') + \sum_{q'\in A} d(q',c') \right) \\
&\leq& 2 \cdot \left(\sum_{q\in A} (d(p,q) + d(q,c')) + \sum_{q'\in A} d(q',c')\right).
\end{eqnarray*}
Now, by definition, $2\cdot \sum_{q\in A} d(p,q) = 2\cdot D(p)$. Letting
$c'' = \arg\min_{q\in A} d(p,q)$, we have
$$
4\cdot \sum_{q\in A} d(q,c') \leq 12\cdot \sum_{q\in A} d(q,c)\leq
12\cdot \sum_{q\in A} d(q,c'') \le 12\cdot\left(\sum_{q\in A} d(p,q) + |A|\cdot d(p,c'')\right)\le
24\cdot D(p).
$$
Combining everything, we get that $\frac{d(p,q)}{\mathbb{P}[q]}\le 26\cdot D(p)$.
Thus, using Hoeffding's inequality, we have
$$
\mathbb{P}\left[|\widehat{D(p)} - D(p)| > \varepsilon\cdot D(p)\right] \leq 
2\cdot\exp\left(-\frac{2\cdot\varepsilon^2\cdot D(p)^2\cdot |S|}{(26\cdot D(p))^2}\right) \leq 
\exp\left(-\frac{\varepsilon^2\cdot |S|}{26^2}\right).
$$
Setting $|S|\geq 26^2 \varepsilon^{-2} \log(|B|/\delta)$ and applying a union bound for all $p\in B$ 
yields the claim.
\end{proof}

\begin{proof}[Proof of Theorem \ref{thm:costguarantee}]
Due to Lemma \ref{lm: center based}, it is sufficient to estimate 
$\sum_{p\in X} \|e_p-\mu_X\|_{{\cal W}_1}$ in order to estimate the sum of pairwise 
distances of the cluster. Using $B=\{\mu_X\}$ and $A=X$, we may invoke 
Lemma~\ref{lem:estimatingCosts} and identify, in time $O(|X|\cdot\log \delta^{-1})$, a 
weighted set of at most $O(\varepsilon^{-2}\cdot\log \delta^{-1})$ points $S$ such that 
evaluating $\sum_{p\in S} v_p \|e_p-\mu_X\|_{{\cal W}_1}$ yields a 
$(1\pm \varepsilon)$-approximate estimate of $\sum_{p\in X} \|e_p-\mu_X\|_{{\cal W}_1}$, 
with probability at least $1-\delta$. Evaluating $\|e_p-\mu_X\|$ takes time $O(|X|)$ 
for every $p\in S$, so the total running time is $O(|X|\cdot \varepsilon^{-2}\cdot\log \delta^{-1})$.
\end{proof}

\section{Conditional Lower Bounds}\label{sec:lb}

In this section, we prove conditional lower bounds for the parameterized $k$-minsum and balanced 2-median problems. In Section~\ref{sec:lbprelims}, we list the relevant hypotheses and results in literature that we will use. In Section~\ref{sec:lbSETH}, we prove the conditional lower bounds for balanced 2-median (i.e, formal version of Theorem~\ref{thm:SETHinformal}). Finally, in Section~\ref{sec:lbgapeth}, we prove the conditional lower bounds for the parameterized   $k$-minsum problem.

\subsection{Preliminaries}\label{sec:lbprelims} In this subsection, we first list the hypotheses on the SAT problem that are used to prove our lower bounds.

\begin{hypothesis}[Strong Exponential Time Hypothesis (\SETH)~\cite{IPZ98,IP01}] \label{hyp:seth}
For every $\eps > 0$, there exists $k = k(\eps) \in \N$ such that  no $2^{(1-\eps) n}$-time algorithm that takes as input a k-SAT formula on $n$ variables can decide if it is satisfiable. 
\end{hypothesis}

The below is a version of Gap-ETH, where we have already applied the ``expander replacement'' transformation \cite{PY88} to the starting instance.

\begin{hypothesis}[Gap Exponential Time Hypothesis (Gap-ETH)~\cite{D16,MR17}] \label{hyp:gap-eth}
There exists some $\eps,\delta>0$ such that no $2^{\eps n}$-time algorithm that takes as input a 3-SAT formula on $n$ variables  where each variable appears in at most $O(1)$ clauses, can decide if it is satisfiable or if every assignment violates at least $\delta$ fraction of the clauses. 
\end{hypothesis}

Next, we recall the \maxcover problem introduced in \cite{CCKLMNT17} which turned out to be  the centerpiece of many results in parameterized inapproximability. 

\paragraph{\boldmath $k$-\textsf{MaxCover} problem.}
A $k$-\maxcover instance $\Gamma$ consists of a bipartite graph $(V\dot\cup W, E)$ such that $V$ is partitioned into $V=V_1\dot\cup \cdots \dot\cup V_k$ and $W$ is partitioned into $W=W_1\dot\cup \cdots \dot\cup W_\ell$. We sometimes refer to $V_{i}$'s and $W_j$'s as {\em left super-nodes} and {\em right super-nodes} of $\Gamma$, respectively. 

A solution to $k$-\maxcover is called a {\em labeling},
which is a subset of vertices $v_1\in V_1,\dots v_k\in V_k$.
We say that a labeling $v_1,\dots v_k$ {\em covers} a right super-node $W_{i}$, if there exists a vertex $w_{i} \in W_{i}$ which is a joint neighbor of all $v_1,\dots v_k$, i.e., $(v_j,w_i)\in E$ for every $j\in[k]$.
We denote by $\maxcover(\Gamma)$ the maximal fraction of right super-nodes that can be simultaneously covered, i.e.,
\begin{align*}
\maxcover(\Gamma) = \frac{1}{\ell} \left(\max_{\text{labeling } v_1,\dots v_k} \bigl\lvert\bigl\{i \in [\ell] \mid W_i \text{ is covered by } v_1,\dots v_k\bigr\}\bigr\rvert\right). 
\end{align*}

The following inapproximability for \maxcover result was implicitly proved by Rubinstein~\cite{Rub18} (however, in terms of the bichromatic maximum inner product problem).

\begin{theorem}[Essentially from~\cite{Rub18}]\label{thm:maxcover}
   Assuming SETH, for every $\eps>0$ there exist $\delta=\delta(\eps)>0$, $c=c(\eps)>1$, and $D=D(\eps)>1$ such that no $n^{2-\eps}$-time algorithm that takes as input a $2$-$\maxcover$ instance $\Gamma=(V\dot\cup W,E)$ with $V=V_1\dot\cup V_2$, $W=W_1\dot\cup\cdots\dot\cup W_\ell$, $|V_1|=|V_2|=n$, $\ell\le c\cdot \log n$, and $|W_1|=\cdots = |W_\ell|\le D$ for all $i\in[\ell]$, can distinguish whether $\maxcover(\Gamma)=1$ or $\maxcover(\Gamma)\le 1-\delta$.
\end{theorem}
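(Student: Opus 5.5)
The plan is to derive Theorem~\ref{thm:maxcover} from the SETH-hardness of approximate bichromatic Maximum Inner Product (or Orthogonal Vectors) in the distributed PCP framework of Abboud et al.~\cite{ARW17} and Rubinstein~\cite{Rub18}, reinterpreting their output as a $2$-\maxcover instance in the set-system language of~\cite{KLM18}. Start from a $k$-SAT formula $\varphi$ on $N$ variables, with $k=k(\eps)$ given by \SETH. Split the variables into two halves $X_1,X_2$ of $N/2$ variables each. Then $V_1$ is the set of $n=2^{N/2}$ assignments to $X_1$ and $V_2$ the set of assignments to $X_2$, so $|V_1|=|V_2|=n$. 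By sparsification we may assume $\varphi$ has $O(N)$ clauses.

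The next step is the distributed PCP itself. Following~\cite{ARW17,Rub18}, encode the satisfaction of the clauses as a low-degree test over a field: Alice holds $\alpha\in V_1$ and Bob holds $\beta\in V_2$. Take Rubinstein's Merlin--Arthur protocol based on algebraic-geometry codes, in which Merlin sends a proof of $O(N/\log N)$ bits. Arthur tosses $O(\log N)$ random coins and asks Alice and Bob for $O(1)$ symbols each over a constant-size alphabet. Completeness is perfect, and soundness is $1/2$, which is amplified to any constant by repetition.

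The reduction to \maxcover goes as follows. Enumerate the $2^{O(N/\log N)}=n^{o(1)}$ Merlin advice strings $\pi$; we will build one instance per $\pi$, or fold $\pi$ into the left vertices as Rubinstein does. For each randomness string $r$ create a right super-node $W_r$. The number of super-nodes is $\ell=2^{O(\log N)}=\poly(N)$. We need $\ell\le c\log n=c\,N/2$, which forces linear randomness size $\log\ell=\log N+O(1)$. This is exactly why the AG-code based protocol of~\cite{Rub18}, with $\ell=O(N)$ and constant query and answer size, is needed. The vertices of $W_r$ are the accepting pairs of answer patterns $(a_A,a_B)$ of Arthur on coins $r$, so $|W_r|\le D$ for a constant $D$ depending on the alphabet and number of queries. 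Connect $\alpha\in V_1$ to $(a_A,a_B)\in W_r$ if Alice's answers on $(\alpha,\pi,r)$ equal $a_A$, and connect $\beta\in V_2$ symmetrically. A labeling $(\alpha,\beta)$ then covers $W_r$ iff Arthur accepts on $r$. Completeness is immediate: if $\varphi$ is satisfiable then for the correct $\pi$ every $W_r$ is covered, so $\maxcover=1$. Soundness also follows: if $\varphi$ is unsatisfiable then for every $\pi$ and every pair, at most a $1-\delta$ fraction of the super-nodes are covered. An $n^{2-\eps}$-time distinguisher, run over the $n^{o(1)}$ choices of $\pi$ with $n=2^{N/2}$, would decide $\varphi$ in time $2^{(1-\eps/2+o(1))N}$, contradicting \SETH.

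I expect the main obstacle to be the parameter bookkeeping. We need simultaneously that the right super-nodes number only $O(\log n)$, that each has size $O(1)$, and that Merlin's advice is only $n^{o(1)}$. The naive Reed--Solomon protocol of~\cite{ARW17} gives polylogarithmic answer sizes, so one must invoke the AG-code refinement of~\cite{Rub18}. There the advice length and the soundness $\delta$ trade off against $\eps$, which produces the dependence $\delta(\eps),c(\eps),D(\eps)$. If folding $\pi$ into a single instance is awkward, I would instead state the reduction as $n^{o(1)}$ separate instances, which suffices for the lower bound.
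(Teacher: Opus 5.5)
Your route is the same as the paper's. The paper plugs the AG-code protocol of \cite{KLM18} (Theorem~\ref{thm:CC}: advice $m/\alpha$, $\log_2 m$ coins, $\hat\ell(\alpha)$-bit messages, soundness $1/2$) into the protocol-to-\maxcover reduction of Theorem~\ref{cor:labSETH}. That reduction builds one instance per advice string, with one right super-node per coin sequence whose vertices are the accepting message pairs. This is exactly your per-$\pi$ construction, which you re-derive inline from sparsified $k$-SAT viewed as disjointness of clause-violation vectors. However, one parameter claim in your proposal is false, and your running-time accounting depends on it.

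\textbf{The advice-length error.} You posit that Merlin sends $O(N/\log N)$ bits while Arthur uses $\log N+O(1)$ coins and each player answers with $O(1)$ symbols over a constant-size alphabet. From this you conclude there are only $n^{o(1)}$ advice strings. No protocol achieves these parameters:
\begin{itemize}
\item In the AG-code protocol, advice $m/\alpha$ comes with answers of $\hat\ell(\alpha)$ bits. With constant-size answers the advice must be linear in $m$, since Klauck's MA lower bound for disjointness forces advice length times communication to be $\Omega(m)$.
\item Sublinear advice is the regime of the Reed--Solomon protocol of \cite{ARW17}, whose super-constant answers would destroy $|W_r|\le D$.
\end{itemize}
So, in your notation and with $m\le c_kN$ clauses after sparsification, you get up to $2^{c_kN/\alpha}=n^{2c_k/\alpha}$ instances. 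That is a genuine polynomial, not $n^{o(1)}$.

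\textbf{The repair.} Fix $\alpha=\alpha(\eps)\ge 4c_k/\eps$. Then $2^{m/\alpha}\le n^{\eps/2}$, and the total time is $n^{\eps/2}\cdot n^{2-\eps}\cdot\poly(N)=2^{(1-\eps/4)N}\poly(N)$, which still contradicts \SETH. This choice of $\alpha$ is the real source of the dependence $D(\eps)=2^{2\hat\ell(\alpha)}$. The soundness stays $1/2$, so $\delta=1/2$ works for every $\eps$; advice trades off against answer length, not against $\delta$.

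\textbf{Smaller points.}
\begin{itemize}
\item Drop the amplification by repetition. Taking $t$ independent repetitions gives $m^t$ super-nodes, which violates $\ell\le c\log n$, and soundness $1/2$ already suffices.
\item Pad each $W_r$ with isolated vertices so that all super-nodes have equal size, as the statement requires.
\item Keep separate per-$\pi$ instances rather than folding $\pi$ into one side, which would make $|V_1|\ne|V_2|$.
\end{itemize}
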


However, for the sake of completeness, we include the  proof outline of the above theorem.  
We find it more convenient to prove Theorem~\ref{thm:maxcover} which is implicit in~\cite{Rub18}, using the notations and framework provided in  \cite{KLM18}. 

We refer the reader to~\cite{KLM18} for the definitions of the ``$(w,r,\ell,s)$-efficient protocol'' in the  
Simultaneous Message Passing Model. 
We recall below the parameters of an efficient protocol for the set-disjointness function given in~\cite{KLM18}.

\begin{theorem}[Karthik, Laekhanukit, Manurangsi; Rewording of Theorem 6.1 in~\cite{KLM18}]\label{thm:CC} 
There is a polynomial function $\hat{\ell}:\mathbb N\times \mathbb N\to [1,\infty)$ such that 
for every  $\alpha\in\mathbb N$, there is a protocol  in the Simultaneous Message Passing Model for two players to compute if the two input strings are disjoint, which is an $( \nicefrac{m}{\alpha},\log_2 m,\hat{\ell}(\alpha),\nicefrac{1}{2})$-efficient protocol, where each player is given $m$ bits as input. 
\end{theorem}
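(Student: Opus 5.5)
This theorem is cited from \cite{KLM18} as a rewording of their Theorem~6.1, so the plan is to recall the construction of that protocol rather than invent a new argument. It is built from Reed--Solomon (or algebraic-geometric) codes over a field of size about $\log m$, in the style of Aaronson--Wigderson.

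\emph{Step 1: reshape the inputs.} Identify each player's $m$-bit input with a function on a grid $[\sqrt{m}]^{2}$, or more generally $[m/\alpha]\times[\alpha]$.

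\emph{Step 2: arithmetize disjointness.} Choose a prime $q\approx 2\alpha$, so that $q$ exceeds twice the degree. Alice extends her $\alpha$ columns to a low-degree univariate polynomial $A_j(t)$ over $\mathbb{F}_q$ for each row index $j\in[m/\alpha]$; Bob does the same to obtain $B_j(t)$. Disjointness is then equivalent to
\[
\sum_{j} A_j(t)\,B_j(t)=0
\]
on the evaluation points in $[\alpha]$. Over $\mathbb{F}_q$ one only has a sum, not an OR, so to avoid cancellations I would work over an extension field or use a larger characteristic, $q>m$.

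\emph{Step 3: the advice.} Merlin's advice is the polynomial $P(t)=\sum_j A_j(t)B_j(t)$. It has degree at most $2\alpha$, so it takes $O(\alpha\log q)$ bits; this is the proof length $\ell=\hat{\ell}(\alpha)$.

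\emph{Step 4: the verification.} The referee picks a random $r\in\mathbb{F}_q$ using $\log_2 m$ random bits. Alice sends the vector $(A_j(r))_j$ and Bob sends $(B_j(r))_j$. Each message has $m/\alpha$ symbols, which is the parameter $w=m/\alpha$ counted in symbols. The referee accepts iff $\sum_j A_j(r)B_j(r)=P(r)$ and $P$ vanishes appropriately on $[\alpha]$.

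\emph{Step 5: soundness.} Soundness $1/2$ follows from Schwartz--Zippel, since two distinct polynomials of degree $2\alpha$ agree on at most $2\alpha/q\le 1/2$ of the field.

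\emph{Step 6: parameters.} Finally, check the parameter bookkeeping against the $(w,r,\ell,s)$ definition of \cite{KLM18}: $w=m/\alpha$, $r=\log_2 m$, $\ell=\hat{\ell}(\alpha)$ polynomial in $\alpha$, and $s=1/2$.

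The main obstacle is exactly the field-size versus randomness tradeoff: making the randomness $\log_2 m$ while keeping the message length counted as $m/\alpha$ symbols over an alphabet of size depending only on $\alpha$. In practice I would defer to \cite{KLM18} for this accounting.
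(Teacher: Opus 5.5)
The paper does not prove this statement. It imports Theorem~6.1 of~\cite{KLM18}, which is built on Rubinstein's algebraic-geometry-code protocol for disjointness. Your reconstruction has a genuine gap, and it starts with the parameter bookkeeping.

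\textbf{The parameters are swapped.} In the $(w,r,\ell,s)$ language, $w$ is the length of Merlin's advice and $\ell$ is the number of bits each player sends. You can see this from how Theorem~\ref{cor:labSETH} uses it: the advice must satisfy $w\le\delta m$, and the right super-nodes have size $2^{2\ell}$. So the statement asks for advice of $m/\alpha$ bits and messages of $\hat\ell(\alpha)$ bits, \emph{independent of $m$}. That independence is exactly what yields $|W_i|\le D$ in Theorem~\ref{thm:maxcover}.

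Your protocol is the opposite Aaronson--Wigderson orientation. Alice sends $(A_j(r))_{j\in[m/\alpha]}$, i.e.\ $m/\alpha$ field elements, while Merlin sends only $O(\alpha)$ coefficients. Calling the message ``$w$ counted in symbols'' does not help. Under the correct convention your protocol has $\ell\ge m/\alpha$, which violates $r+\ell=o(m)$ and would give right super-nodes of size $2^{\Omega(m/\alpha)}$.

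\textbf{The arithmetization also fails as written.} The sum $\sum_j A_j(t)B_j(t)$ adds up $m/\alpha$ zero/one terms, so with $q\approx 2\alpha$ it can vanish mod $q$ on intersecting inputs. An extension field does not help, since the characteristic is unchanged. Taking $q>m$ makes every transmitted symbol cost $\log m$ bits, so nothing is bounded by a function of $\alpha$ alone. Two smaller points: Schwartz--Zippel at degree $2\alpha$ needs $q\ge 4\alpha$ for soundness $1/2$, and with $q\approx 2\alpha$ the referee uses $\log_2 q$ random bits, not $\log_2 m$. The obstacle you flag at the end is not accounting that can be deferred; it is the heart of the construction.

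\textbf{The missing idea.} The three requirements together force a specific kind of code. You need randomness $\log_2 m$ (a random coordinate of a length-$O(m)$ codeword), messages whose length depends only on $\alpha$, and advice of $m/\alpha$ bits. That requires a code $\mathcal{C}$ of \emph{linear} length over an alphabet whose size depends only on $\alpha$, whose coordinatewise square $\mathcal{C}^{\star 2}=\mathrm{span}\{c\star c'\colon c,c'\in\mathcal{C}\}$ still has relative distance at least $1/2$. Reed--Solomon codes cannot do this, because the field must be at least as large as the block length. Algebraic-geometry codes can, and they are the key ingredient.

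Concretely, the protocol runs as follows.
\begin{enumerate}
\item Split $x,y$ into $T=\poly(\alpha)$ blocks $x^{(t)},y^{(t)}\in\{0,1\}^{m/T}$.
\item Encode each block with a systematic AG code $\mathcal{C}\colon\mathbb{F}_q^{m/T}\to\mathbb{F}_q^{n}$, $n=O(m/T)$. Take the characteristic larger than $T$, so that $\sum_t x^{(t)}_j y^{(t)}_j$ cannot wrap around.
\item Merlin sends a codeword $\mu\in\mathcal{C}^{\star 2}$, described in $O((m/T)\log q)\le m/\alpha$ bits. It is claimed to equal $\nu=\sum_t\mathcal{C}(x^{(t)})\star\mathcal{C}(y^{(t)})$.
\item The referee rejects unless $\mu$ vanishes on the systematic coordinates. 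It then picks a uniformly random $i\in[n]$, using at most $\log_2 m$ bits.
\item Alice sends the $T$ symbols $\mathcal{C}(x^{(t)})_i$ and Bob sends the $T$ symbols $\mathcal{C}(y^{(t)})_i$, i.e.\ $T\log q=\poly(\alpha)$ bits each. The referee accepts iff $\sum_t\mathcal{C}(x^{(t)})_i\,\mathcal{C}(y^{(t)})_i=\mu_i$.
\end{enumerate}
Completeness holds with $\mu=\nu$. If the inputs intersect, $\nu$ is nonzero on some systematic coordinate. Any $\mu$ passing the first test is then a different codeword of $\mathcal{C}^{\star 2}$, so it is caught with probability at least $1/2$.
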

 
By plugging in the above protocol parameters to the below theorem we obtain Theorem~\ref{thm:maxcover}.

  \begin{theorem}[Karthik, Laekhanukit, Manurangsi; Rewording of Corollary 5.3 in~\cite{KLM18}]\label{cor:labSETH}
For every $\delta>0$, suppose there exists a $(w,r,\ell,s)$-efficient protocol in the Simultaneous Message Passing Model for two players to compute if the two input strings are disjoint  for every input size $m\in\mathbb N$, such that $w\le \delta m$ and $r+\ell =o(m)$.
Then, assuming \SETH, for every $\eps>0$,   no $O(N^{ 2 - \eps})$-time algorithm  can distinguish between $\maxcover(\Gamma) = 1$
and $\maxcover(\Gamma) \leq s$ for any 2-\maxcover instance  $\Gamma=(V\dot\cup W,E)$ with $V=V_1\dot\cup V_2$, $W=W_1\dot\cup\cdots\dot\cup W_t$, $|V_1|=|V_2|=N$, $t=2^r$, and $|W_1|=\cdots =|W_t|=2^{2\ell}$,  for all $N \in \N$.
\end{theorem}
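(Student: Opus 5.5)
The plan is the standard SETH-to-\maxcover reduction through the Simultaneous Message Passing (SMP) protocol, following the template of~\cite{ARW17,KLM18}. Recall that in a $(w,r,\ell,s)$-efficient protocol, Merlin sends an advice string of $w$ bits, and Alice and Bob share $r$ random bits. Each of them sends a message of $\ell$ bits depending on their input, the advice and the randomness. The referee accepts with probability $1$ on some advice when the inputs are disjoint, and with probability at most $s$ on every advice otherwise.

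Fix $\eps>0$ and a $q$-SAT formula $\varphi$ on $n$ variables. First apply the sparsification lemma, so that (at the cost of $2^{\eps' n}$ subinstances for arbitrarily small $\eps'$) $\varphi$ has $m=O_{q,\eps'}(n)$ clauses. Split the variables into two halves $X_1,X_2$, so there are $N=2^{n/2}$ partial assignments to each half. For a partial assignment $a$ to $X_1$, let $x_a\in\{0,1\}^m$ be the indicator of the clauses \emph{not} satisfied by $a$, and define $y_b$ similarly for $b$ on $X_2$. Then $a\cup b$ satisfies $\varphi$ iff $x_a$ and $y_b$ are disjoint. Thus satisfiability reduces to the question of whether some pair (one vector from Alice's list, one from Bob's) is disjoint.

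Next, for each advice string $\mu\in\{0,1\}^w$, build a 2-\maxcover instance $\Gamma_\mu$. Set $V_1=\{a\}$ and $V_2=\{b\}$, each of size $N$. There is one right super-node $W_\gamma$ for each random string $\gamma\in\{0,1\}^r$, so $t=2^r$. The vertices of $W_\gamma$ are all pairs $(m_1,m_2)\in\{0,1\}^{\ell}\times\{0,1\}^{\ell}$ that the referee accepts on randomness $\gamma$ and advice $\mu$. Pad $W_\gamma$ with isolated dummy vertices so that $|W_\gamma|=2^{2\ell}$ exactly. Put an edge from $a$ to $(m_1,m_2)\in W_\gamma$ iff Alice's message on $(x_a,\mu,\gamma)$ is $m_1$, and symmetrically for $b$ and Bob.

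A labeling $(a,b)$ covers $W_\gamma$ iff the referee accepts on $\gamma$, so the fraction of super-nodes covered by $(a,b)$ equals the acceptance probability of the protocol on $(x_a,y_b)$ with advice $\mu$. This gives both directions:
\begin{itemize}
\item Completeness: if $\varphi$ is satisfiable, take the disjoint pair $(x_a,y_b)$ and the advice $\mu$ under which the protocol accepts with probability $1$; then $\maxcover(\Gamma_\mu)=1$.
\item Soundness: if $\varphi$ is unsatisfiable, then for every $\mu$ and every pair $(a,b)$ the acceptance probability is at most $s$, so $\maxcover(\Gamma_\mu)\le s$ for all $\mu$.
\end{itemize}
Hence a distinguisher run on all $2^w$ instances decides $\varphi$.

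It remains to account for the running time, which is the step needing care. Each $\Gamma_\mu$ has size $N\cdot 2^{r+2\ell}\cdot\poly(m)$. Since $r+\ell=o(m)=o(n)$, it can be built in time $2^{n/2+o(n)}$, and evaluating the protocol's messages costs $2^{o(n)}$ per vertex. An $O(N^{2-\eps})$-time distinguisher then decides $\varphi$ in time
$$
2^{\eps' n}\cdot 2^{w}\cdot\left(2^{(1-\eps/2)n}+2^{n/2+o(n)}\right).
$$
Using $w\le\delta m=\delta\cdot O(n)$, we choose $\delta$ and $\eps'$ small enough relative to $\eps$ (and the sparsification constant, which depends on $q$) that the exponent is at most $(1-\eps/4)n$ for every $q$. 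This contradicts \SETH. The main subtlety is exactly this quantifier order: the sparsification constant grows with $q$, so the protocol must be available with $w\le\delta m$ for every $\delta$, which is how the hypothesis is stated.
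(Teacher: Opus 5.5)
Your argument is correct and is essentially the proof of the result the paper imports from \cite{KLM18} without proof. That proof builds one 2-\maxcover instance per advice string, with right super-nodes indexed by the shared randomness and populated by the referee's accepting message pairs, and combines this with the sparsification/split-and-list reduction from \SETH, choosing $\delta$ only after $q$ (and hence the sparsification constant) is fixed, exactly as you do. The only thing you use beyond the stated hypotheses is that the players' messages and the referee's predicate are computable in time $2^{o(m)}$. This is implicit in the framework of \cite{KLM18} and holds for the protocol of Theorem~\ref{thm:CC}, but you should state it explicitly, since the bounds on $w,r,\ell$ alone do not bound the running time of the reduction.
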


\paragraph{Triangle Packing problem.} In the Triangle Packing problem, we are given a graph $G=(V,E)$, and the goal is to determine if there are $|V|/3$ vertex disjoint triangles in $G$.

The following is an immediate consequence from Patrank \cite{Patrank94} (although the step from 3-dimensional matching to triangle packing is not written down). Thus, for the sake of completeness we refer here to the proof in \cite{GuptaLLM019}.

\begin{theorem}[Essentially Lemma 4.4. in \cite{GuptaLLM019}]\label{thm:gapethVC}
  Assuming   Gap-ETH, 
  there exist some $\eps,\delta>0$ and $B>2$, such that no $2^{\eps n}$-time algorithm that takes as input a triangle packing instance $G=(V,E)$ on $n$ vertices with maximum degree $B$ can distinguish between the following:
   \begin{description}
      \item[Completeness:] There are $|V|/3$ vertex disjoint triangles in $G$. 
      \item[Soundness:] $G$ has at most $(1-\delta)|V|/3$ vertex
disjoint triangles.
  \end{description}
\end{theorem}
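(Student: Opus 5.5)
This is Theorem~\ref{thm:gapethVC}, which the paper attributes to Patrank and to Lemma 4.4 of~\cite{GuptaLLM019}. The plan is a chain of linear-size gap-preserving reductions starting from the bounded-occurrence Gap-ETH instance of Hypothesis~\ref{hyp:gap-eth}:
\[
\text{Gap 3-SAT (each variable in } O(1) \text{ clauses)} \;\to\; \text{gap 3-dimensional matching (bounded degree)} \;\to\; \text{gap triangle packing (bounded degree)}.
\]
The only quantitative requirement is that every step keeps the instance size linear in $n$, so that a $2^{\eps' N}$-time algorithm for the final problem would give a $2^{\eps n}$-time algorithm for Gap 3-SAT. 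Each step must also keep a constant multiplicative gap and a constant degree bound.

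\textbf{Step 1: 3-SAT to 3DM.} Start from a formula $\varphi$ on $n$ variables with $m=O(n)$ clauses, each variable occurring at most $O(1)$ times. Apply Papadimitriou's (Kann's bounded-version) variable-ring and clause gadget reduction to 3-dimensional matching.
\begin{itemize}
\item Each variable of degree $d_x$ becomes a ring of $2d_x$ triples, which can be perfectly covered in exactly two ways (``true'' and ``false''). Each clause becomes a constant-size gadget attached to the free ring tips of its literals. Garbage-collection elements are handled by bounded-degree local gadgets rather than global ones, so the instance stays bounded-degree.
\item The instance has $O(n)$ elements and triples, and every element lies in $O(1)$ triples.
\item \emph{Completeness:} a satisfying assignment yields a perfect matching.
\item \emph{Soundness:} take any matching missing at most $\delta'$ of the elements. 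Each variable ring that is not cleanly in one of its two states loses at least one element. Decoding each ring to its majority state therefore violates at most $O(\delta')m$ clauses, because of bounded occurrence. So if every assignment violates a $\delta$ fraction of clauses, every matching misses an $\Omega(\delta)$ fraction of elements.
\end{itemize}

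\textbf{Step 2: 3DM to triangle packing.} This is the standard reduction that replaces each triple by a constant-size gadget graph.
\begin{itemize}
\item For each triple $t=(a,b,c)$, create private vertices. Connect them to the element vertices $a,b,c$ so that the gadget admits exactly two perfect triangle packings: one covering $a,b,c$ (triple ``chosen'') and one covering only its private vertices (triple ``unchosen''). The gadget must also contain no triangle using two element vertices except through the intended structure.
\item Element vertices are 3-partite and pairwise non-adjacent. Since each element is in $O(1)$ triples, the maximum degree is a constant $B$, and $|V|=O(n)$.
\item \emph{Completeness:} a perfect matching gives $|V|/3$ disjoint triangles.
\item \emph{Soundness:} a packing of $(1-\delta'')|V|/3$ triangles leaves at most $O(\delta'')|V|$ vertices uncovered. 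Charge each uncovered vertex to its gadget. A gadget that is fully covered and covers an element decodes to a chosen triple, and the chosen triples are disjoint. The number of elements left unmatched is then $O(\delta'')|V|=O(\delta'')\cdot O(\#\text{elements})$, which contradicts the 3DM gap for small $\delta''$.
\end{itemize}

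\textbf{Main obstacle.} The delicate part is the soundness accounting in both steps. Partially covered gadgets and rings must cost only a constant number of uncovered vertices or elements each, so that a constant gap in packing translates back to a constant fraction of violated clauses. Bounded degree is essential here, since it makes every local defect affect only $O(1)$ clauses.

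I would follow the gadget analysis in~\cite{GuptaLLM019} (Lemma 4.4) and Patrank's 3DM hardness. The linear size bound then converts the $2^{\eps n}$ lower bound directly into the stated one, with $\eps$ shrunk by the constant blow-up factor.
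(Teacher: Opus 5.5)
Your proposal is correct, but it follows a different chain from the one the paper relies on. The paper gives no argument of its own here. It notes that the statement is essentially Petrank's~\cite{Patrank94} gap hardness for bounded 3-dimensional matching, followed by a 3DM-to-triangle-packing step that is ``not written down.'' It then defers to \cite{GuptaLLM019}, whose linear-size, gap-preserving chain is bounded-degree 3-SAT $\to$ Max 1-in-3SAT $\to$ Partition into Triangles.

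You take the Petrank route and supply the missing step yourself, using the classical Garey--Johnson triple gadget. Your soundness accounting for that step is right. A gadget in neither clean state leaves a private vertex uncovered and touches at most three elements. A fully covered gadget that touches one element must cover all three. So the decoded triples are disjoint, and the unmatched elements number $O(\delta'')|V|$.

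Your closing plan to ``follow the gadget analysis in \cite{GuptaLLM019}'' mixes the two routes, since by the paper's description that reference goes through Max 1-in-3SAT, not 3DM. Your route gives a self-contained argument from textbook gadgets. The cited route gives a reference in which linear size, bounded degree and the gap are all verified in print.

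The one thin spot in your sketch is the bounded-degree garbage collection in Step~1, which you can make concrete per clause. When every ring is clean, exactly one of the two tips of each literal occurrence is free. The clause gadget consumes one of these free tips. So a clause with $\ell_c$ literals leaves exactly $\ell_c-1$ free tips among its own $2\ell_c$ tips. Adding $\ell_c-1$ garbage pairs attached only to those tips gives perfect completeness, with every element in $O(1)$ triples.
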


The theorem above holds because in \cite{GuptaLLM019}, the authors provide a linear-sized, gap-preserving reduction chain from bounded-degree 3-SAT to Max 1-in-3SAT and then to Partitioning into Triangles (PIT), i.e., the triangle packing problem.

\subsection{SETH Based Lower Bound for Balanced 2-Median}\label{sec:lbSETH}

In this section, we prove the formal version of Theorem~\ref{thm:SETHinformal}. 

First, we prove a conditional lower bound on the maxcoverage problem when the size of the optimum solution is just two sets. This is obtained by combining Theorem~\ref{thm:maxcover} with the gadgets introduced by Feige in~\cite{Fei98}. Later we will use this result to prove the lower bound for balanced 2-median.

\begin{theorem}[Subquadratic lower bound for $k$-maxcoverage problem when $k=2$]\label{thm:maxcoverage}
  Assuming SETH, for every $\eps>0$ there exist $\delta=\delta(\eps)>0$ and $c:=c(\eps)$ such that no $n^{2-\eps}$-time algorithm that takes as input a collection of sets $S_1,\ldots ,S_n$ over a universe $U$ of size $c\cdot \log n$ can distinguish between the following two cases:
  \begin{description}
      \item[Completeness:] There are some $i,j\in [n]$ and an equipartition of $U=U_1\dot\cup U_2$ such that $U_1\subseteq S_i$ and $U_2\subseteq S_j$. 
      \item[Soundness:] For every $i,j\in [n]$ we have $|S_i\cup S_j|\le (1-\delta)\cdot |U|$.  
  \end{description}
\end{theorem}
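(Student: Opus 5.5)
We reduce from the $2$-\maxcover instances of Theorem~\ref{thm:maxcover}, composing with Feige's hypercube partition gadgets~\cite{Fei98}. Let $\Gamma=(V_1\dot\cup V_2\dot\cup W,E)$ have $|V_1|=|V_2|=n$, $\ell\le c\log n$ right super-nodes, and $|W_i|= D'\le D$ for every $i$. For each super-node $W_i$ fix a copy of the hypercube $Q_i=[2]^{W_i}$, of size $2^{D'}=O_\eps(1)$. For $w\in W_i$ and $b\in\{1,2\}$ let $H_{i,w,b}=\{x\in Q_i\colon x_w=b\}$. The universe is $U=\bigsqcup_{i\in[\ell]}Q_i$, so $|U|=\ell\cdot 2^{D'}=O_\eps(\log n)$. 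For $v\in V_1$ define $S_v=\bigcup_i\bigcup_{w\in W_i\colon (v,w)\in E}H_{i,w,1}$, and for $u\in V_2$ define $S_u=\bigcup_i\bigcup_{w\in W_i\colon (u,w)\in E}H_{i,w,2}$. This gives $2n$ sets. If $|U|$ does not exactly equal $c'\log(2n)$, we pad $U$ with extra coverable hypercube copies; this only rescales constants.

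\textbf{Completeness.} Let $(v,u)$ cover every super-node, witnessed by $w_i\in W_i$. Set $U_1=\bigsqcup_i H_{i,w_i,1}\subseteq S_v$ and $U_2=\bigsqcup_i H_{i,w_i,2}\subseteq S_u$. Each pair $H_{i,w_i,1},H_{i,w_i,2}$ is an equipartition of $Q_i$, so $U=U_1\dot\cup U_2$ is an equipartition. Pairs of two sets from the same side are handled in the soundness analysis below.

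\textbf{Soundness.} This is the main step. Suppose $\maxcover(\Gamma)\le 1-\delta$, and fix a pair $(v,u)$ with $v\in V_1$, $u\in V_2$. For each super-node $W_i$ not covered by $(v,u)$, let $N_v=\{w\in W_i\colon (v,w)\in E\}$ and $N_u=\{w\in W_i\colon (u,w)\in E\}$. These are disjoint, since a common neighbor would cover $W_i$. Take any point $x\in Q_i$ with $x_w=2$ for all $w\in N_v$ and $x_w=1$ for all $w\in N_u$. This point is missed by both $S_v$ and $S_u$, and such points make up a $2^{-|N_v\cup N_u|}\ge 2^{-D}$ fraction of $Q_i$. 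Hence at least $\delta\ell\cdot 2^{D'-D}$ elements are uncovered, i.e.\ at least a $\delta 2^{-D}$ fraction of $U$.

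Pairs from the same side must also be handled. Two sets $S_v,S_{v'}$ from $V_1$ use only coordinate value $1$, so neither covers the point $x\equiv 2$ of any $Q_i$. To make this a constant-fraction loss, we replace each $H_{i,w,1}$ by the subcube $\{x\colon x_w=1\}$ inside a product gadget. Alternatively, we add a fixed constant-size ``side gadget'' block $T$ of size $\Theta(|U|)$ that is split evenly between all $V_1$-sets and all $V_2$-sets, i.e.\ $T=T_1\dot\cup T_2$ with $T_1\subseteq S_v$ for all $v\in V_1$ and $T_2\subseteq S_u$ for all $u\in V_2$. Then any same-side pair misses all of $T_2$ or all of $T_1$, a constant fraction of $U$. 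Completeness still gives an equipartition, taking $U_1\cup T_1$ and $U_2\cup T_2$.

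Setting $\delta'=\min(\delta 2^{-D},\Theta(1))/2$ and counting the reduction time as $O(n\log n)$ completes the proof: an $n^{2-\eps}$-time distinguisher for the set instances would contradict Theorem~\ref{thm:maxcover}. The main obstacle I expect is the bookkeeping of exact equipartition and a universe size of exactly $c\log n$ while handling same-side pairs, which the side gadget $T$ is meant to resolve.
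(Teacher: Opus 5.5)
Your proposal is correct and is essentially the paper's proof: the same Feige hypercube universe $\{(i,f)\colon f:W_i\to\{1,2\}\}$ with value-$j$ half-cubes for $V_j$, the same completeness equipartition via $f(w_i)$, and the same cross-side soundness argument (the paper exhibits one uncovered point per uncovered super-node, giving a $\tilde\delta/2^{\tilde D}$ fraction). The side gadget $T$ is unnecessary, and so is the ``to make this a constant-fraction loss'' step: as you observed, a same-side pair misses the constant point $x\equiv 2$ (or $x\equiv 1$) of every $Q_i$, which is already $\ell$ of the $\ell\cdot 2^{D'}$ elements, i.e.\ a $2^{-D'}\ge 2^{-D}$ fraction --- a constant, just like your cross-side bound --- and this is exactly how the paper handles that case.
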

\begin{proof} 
For every $\eps>0$, let $c_{\eps}, D_\eps, $ and $\delta_\eps$ be the constants specified by Theorem~\ref{thm:maxcover}. 
Suppose that for some constant $\eps_0>0$, there is an algorithm $\mathcal{A}$ running in time $n^{2-\eps_0}$ that takes as input $n$ sets over a universe of size $2^{D_{{\eps'}}}\cdot c_{\eps'}\log n$ and distinguishes the completeness and soundness cases in the theorem statement where $\delta$ in the soundness case is equal to $\delta_{\eps'}/2^{D_{\eps'}}$, where $\eps':=\eps_0/2$. We will then be relying on Theorem~\ref{thm:maxcover} to arrive at a contradiction.

  Let   $\Gamma=(V\dot\cup W,E)$ be a $2$-$\maxcover$ instance with $V=V_1\dot\cup V_2$, $W=W_1\dot\cup\cdots\dot\cup W_\ell$, $\tilde{D}:=|W_1|=\cdots =|W_{\ell}|$. We will construct below in $O(|V|\cdot |W|\cdot 2^{\tilde D}\cdot \tilde D)$ time  a collection of sets $S_1,\ldots ,S_{|V|}$ over a universe $U$ of size $\ell\cdot 2^{\tilde D}$  such that if $\maxcover(\Gamma)=1$ then there are some $i,j\in [|V|]$ and an equipartition of $U=U_1\dot\cup U_2$ such that $U_1\subseteq S_i$ and $U_2\subseteq S_j$, and instead if $\maxcover(\Gamma)\le 1-\tilde{\delta}$ (for some $\tilde\delta>0$) then for every $i,j\in [|V|]$ we have $|S_i\cup S_j|\le \left(1-\frac{\tilde{\delta}}{2^{\tilde D}}\right)\cdot |U|$.  

We now show how to decide $\Gamma=(V\dot\cup W,E)$ using $\mathcal{A}$ in time $n^{2-\eps'}$, where $n=|V|/2$. Given $\Gamma$, we first construct the set system in time $O(|V|\cdot |W|\cdot 2^{\tilde D}\cdot \tilde D)=O(n\cdot \log n)$, where we have used the assumption that $\eps_0$ is a fixed constant. Then, we run $\mathcal{A}$ and decide if $\maxcover(\Gamma)=1$ or $\maxcover(\Gamma)<1-\delta_{\eps'}$ in time $O(n^{2-\eps_0})$ (where we set $\tilde\delta$ to be $\delta_{\eps'}$). The total runtime is thus $O(n^{2-\eps_0})<n^{2-\eps'}$, for large $n$, and this contradicts Theorem~\ref{thm:maxcover}.
  
\paragraph{Construction of Set System.}
 Given an instance $\Gamma=(V\dot\cup W,E)$ of $2$-$\maxcover$   with $V=V_1\dot\cup V_2$ and $W=W_1\dot\cup\cdots\dot\cup W_\ell$ with $|V_1|=|V_2|$ and $|W_1|=\cdots = |W_\ell| =\tilde{D}$, we construct the universe $U=\{(i,f): i\in [\ell]\text{ and }f:W_i\to \{1,2\}\}$. The size of this universe is exactly $\ell \cdot 2^{\tilde D}$. For every $j\in\{1,2\}$, and every $v\in V_j$, we construct a set $S_{j,v}$. Thus we have $|V|$ many sets. We have the universe element $(i,f)$ in $S_{j,v}$ if and only if there is some $w\in W_i$ such that $f(w)=j$ and $(v,w)\in E$. The time needed for the construction of a set is $O(|U|\cdot \tilde D)$, and thus the set system can be constructed in time $O(|V|\cdot |U|\cdot \tilde D)$.

\paragraph{Completeness Analysis.}
  Suppose $\maxcover(\Gamma)=1$, i.e., there exists some $(v_1,v_2)\in V_1\times V_2$ such that for all $i\in [\ell]$ there is a node $w_i\in W_i$ such that $(v_1,w_i)$ and $(v_2,w_i)$ are both in $E$. We will now define a partition of $U$ to $U_1$ and $U_2$ as follows. For every $(i,f)\in U$ we assign $(i,f)$ to $U_{f(w_i)}$. 
  
Next we claim that $|U_1|=|U_2|$. For any fixed $i \in [\ell]$, consider the $2^{\tilde{D}}$ functions $f:W_i \to \{1,2\}$. The assignment of $(i,f)$ to either $U_1$ or $U_2$ depends only on the value $f(w_i)$. Exactly half of these functions will have $f(w_i)=1$, and the other half will have $f(w_i)=2$. Thus, for each $i$, we add $2^{\tilde{D}-1}$ elements to $U_1$ and $2^{\tilde{D}-1}$ elements to $U_2$. Summing over all $i \in [\ell]$ gives $|U_1|=|U_2| = \ell \cdot 2^{\tilde{D}-1}$.

Finally, we claim that $U_1\subseteq S_{1,v_1}$ and $U_2\subseteq S_{2,v_2}$. To see this, fix $j\in\{1,2\}$. Let $(i,f)\in U_j$. Then, we know that $f(w_i)=j$. Moreover, we have that $(v_j,w_i)\in E$. Thus, $(i,f)\in S_{j,v_j}$.
  
\paragraph{Soundness Analysis.}
  Suppose that $\maxcover(\Gamma)\le 1-\tilde{\delta}$. Let $S_{j_1,v_1}$ and $S_{j_2,v_2}$ be two sets. 
  Here we consider two cases. 

  \noindent\textbf{Case 1: $j_1=j_2$.} Let $\bar{j}\neq j_1$, i.e., $\bar{j}:=3-j_1=3-j_2$. For every $i\in [\ell]$, consider the function $f_i:W_i\to\{1,2\}$, where $f_i(w):=\bar{j}$ for all $w\in W_i$. By construction of the sets, we have that for all $i\in[\ell]$, we have $(i,f_i)\notin S_{j_1,v_1}\cup S_{j_2,v_2}$, as for all $w\in W_i$ we have $f_i(w)\neq j_1$ and $f_i(w)\neq j_2$. Thus, $|S_{j_1,v_1}\cup S_{j_2,v_2}|\le |U|-\ell= \left(1-\frac{1}{2^{\tilde D}}\right)\cdot |U|$. 

  \noindent\textbf{Case 2: $j_1\neq j_2$.} Without loss of generality, we assume $j_1=1$ and $j_2=2$. Then $\{v_1,v_2\}$ is a labeling for $\Gamma$. By the soundness assumption, there is a subset $X\subseteq [\ell]$ of size $\tilde\delta \cdot \ell$ such that for all $i\in X$ and for all $w\in W_i$, there is some $j_{i,w}\in \{1,2\}$ such that $(v_{j_{i,w}},w)\notin E$. Thus, for every $i\in X$, we construct the function $f_i:W_i\to\{1,2\}$ by defining $f_i(w)=j_{i,w}$ for all $w\in W_i$. It is easy to then see that for all $i\in X$, we have $(i,f_i)$ is not covered by $S_{j_1,v_1}$ or $S_{j_2,v_2}$. Therefore, we have $|S_{j_1,v_1}\cup S_{j_2,v_2}|\le |U|-\tilde\delta\cdot \ell=\left(1-\frac{\tilde{\delta}}{2^{\tilde D}}\right)\cdot |U|$. 
\end{proof}

Now, we are ready to prove our conditional lower bound for balanced 2-median, which also applies to the standard 2-median problem (and even the 2-means problem which follows imemdiately from the proof of the below theorem).

\begin{theorem}[Subquadratic lower bound for $2$-median and balanced 2-median problems]\label{thm:median}
  Assuming SETH, for every $\eps>0$ there exists $\delta=\delta(\eps)>0$ such that no $n^{2-\eps}$-time algorithm that takes as input a collection of facilities $F:=\{f_1,\ldots ,f_n\}$, a collection of clients $C=\{c_1,\ldots ,c_m\}$ (where $m=O_{\eps}(\log n)$), and a distance function $\Delta:C\times F\to \mathbb{R}_{\ge 0}$, can distinguish between the following two cases:
  \begin{description}
      \item[Completeness:] There are some $i,j\in [n]$ and an equipartition of $C=C_1\dot\cup C_2$ such that, $$\left(\sum_{c\in C_1}\Delta(c,f_i)\right)+\left(\sum_{c\in C_2}\Delta(c,f_j)\right)=|C|.$$ 
      \item[Soundness:] For every $i,j\in [n]$ we have, $$ \sum_{c\in C}\ \min\{\Delta(c,f_i),\Delta(c,f_j)\}\ge (1+\delta)\cdot |C|.$$ 
  \end{description}
\end{theorem}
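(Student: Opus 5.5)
We reduce from Theorem~\ref{thm:maxcoverage}, following the standard Guha--Khuller reduction from max coverage to $k$-median. Given the set system $S_1,\ldots,S_n$ over $U$ with $|U|=c\log n$, we create one facility $f_i$ per set $S_i$ and one client $c_u$ per universe element $u\in U$, so $m=|U|=O_\eps(\log n)$. We set $\Delta(c_u,f_i)=1$ if $u\in S_i$ and $\Delta(c_u,f_i)=3$ otherwise. This bipartite $\{1,3\}$-valued distance extends to a metric on $F\cup C$: take the shortest-path closure, with facility--facility and client--client distances equal to $2$. The triangle inequality holds because any two points are at distance at least $1$ and at most $3$, and a two-hop path costs at least $2$. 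The construction takes $O(nm)=O(n\log n)$ time, which is negligible against $n^{2-\eps}$.

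For completeness, suppose $U=U_1\dot\cup U_2$ is an equipartition with $U_1\subseteq S_i$ and $U_2\subseteq S_j$. Define $C_\ell=\{c_u:u\in U_\ell\}$. Every client in $C_1$ is at distance $1$ from $f_i$, and every client in $C_2$ is at distance $1$ from $f_j$. The cost is therefore exactly $|C|$, and the clusters are balanced.

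For soundness, fix any pair $i,j$. Each client $c_u$ has $\min\{\Delta(c_u,f_i),\Delta(c_u,f_j)\}=1$ if $u\in S_i\cup S_j$ and $3$ otherwise. The cost is thus $|C|+2\,|U\setminus(S_i\cup S_j)|\ge |C|+2\delta'|C|$, where $\delta'$ is the soundness constant of Theorem~\ref{thm:maxcoverage}. Setting $\delta=2\delta'$ proves the bound. The soundness inequality is stated against the unconstrained $2$-median cost, so it applies verbatim to the balanced version too, since balancing only raises the cost.

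The theorem then follows by contradiction. An $n^{2-\eps}$-time distinguisher for the clustering instances, composed with the near-linear-time reduction, would distinguish the two cases of Theorem~\ref{thm:maxcoverage} in time $n^{2-\eps}+O(n\log n)$, contradicting SETH. The parameter $\delta(\eps)$ is inherited from Theorem~\ref{thm:maxcoverage}.

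The only delicate points are bookkeeping. The first is making $\Delta$ a genuine metric if the formal statement requires one, which the choice of values $1$ and $3$ handles. The second is making sure the completeness equipartition translates into balanced clusters, which is immediate because clients correspond bijectively to universe elements. I expect no substantive obstacle beyond these. The same instance also gives the $2$-means bound, since squaring the values $1$ and $3$ preserves the gap.
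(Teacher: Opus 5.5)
Your proposal is correct and follows the paper's proof essentially step for step. It uses the same Guha--Khuller reduction from Theorem~\ref{thm:maxcoverage} with client--facility distances $1$ and $3$, the same completeness argument, and the same soundness count $|C|+2|U\setminus(S_i\cup S_j)|$ giving $\delta=2\delta'$; the paper takes $\delta'$ at $\eps/2$ to absorb the reduction time.

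Your metric-extension remark goes beyond what the formal statement requires. Its one-line justification also skips the only nontrivial case, a client--facility pair at distance $3$. That case holds because any two-hop path between them includes a same-side hop of length $2$.
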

\begin{proof}
    For every $\eps>0$, let  $\delta_\eps$ and $c_\eps$ be the constants specified by Theorem~\ref{thm:maxcoverage}. 
Suppose that for some constant $\eps_0>0$, there is an algorithm $\mathcal{A}$ running in time $n^{2-\eps_0}$ that takes as input $n$ facilities and $c_{\eps'}\log n$ clients and distinguishes the completeness and soundness cases in the theorem statement where $\delta$ in the soundness case is equal to $2\cdot \delta_{\eps'}$, where $\eps'=\eps_0/2$. We will then be relying on Theorem~\ref{thm:maxcoverage} to arrive at a contradiction.

  Let  $S_1,\ldots ,S_n$ over a universe $U$   be a maxcoverage instance. We will construct below in $O(n\cdot |U|)$ time  a collection of facilities $F:=\{f_1,\ldots ,f_n\}$, a collection of clients $C=\{c_1,\ldots ,c_{|U|}\}$ such that if   there are some $i,j\in [n]$ and an equipartition of $U=U_1\dot\cup U_2$ such that $U_1\subseteq S_i$ and $U_2\subseteq S_j$ then there are some $i,j\in [n]$ and an equipartition of $C=C_1\dot\cup C_2$ such that, $\left(\sum_{c\in C_1}\Delta(c,f_i)\right)+\left(\sum_{c\in C_2}\Delta(c,f_j)\right)=|U|$, and instead if   for every $i,j\in [n]$ we have $|S_i\cup S_j|\le \left(1-\tilde\delta\right)\cdot |U|$ then for every $i,j\in [n]$ we have, $ \sum_{c\in C}\ \min\{\Delta(c,f_i),\Delta(c,f_j)\}\ge (1+2\tilde\delta)\cdot |U|$.  

We now show how to decide the maxcoverage instance $S_1,\ldots ,S_n$ over a universe $U$ using $\mathcal{A}$ in time $n^{2-\eps'}$. Given the set system, we first construct the collection of facilities and clients in time $O(n\cdot |U|)=O(n\cdot \log n)$, where we have used the assumption that $\eps_0$ is a fixed constant. Then, we run $\mathcal{A}$ and decide if  there are some $i,j\in [n]$ and an equipartition of $U=U_1\dot\cup U_2$ such that $U_1\subseteq S_i$ and $U_2\subseteq S_j$ or if for every $i,j\in [n]$ we have $|S_i\cup S_j|\le \left(1-\delta_{\eps'}\right)\cdot |U|$  in time $O(n^{2-\eps_0})$ (where we set $\tilde\delta$ to be $\delta_{\eps'}$). The total runtime is thus $O(n^{2-\eps_0})<n^{2-\eps'}$, for large $n$, and this contradicts Theorem~\ref{thm:maxcoverage}.
  
\paragraph{Construction of Facilities and Clients.}
 Given an instance $S_1,\ldots ,S_n$ over a universe $U=\{u_1,\ldots ,u_m\}$, we construct a facility $f_i$ for every set $S_i$ (where $i\in[n]$) and a client $c_j$ for every universe element $u_j$. We construct the distance function $\Delta$ as follows:
 $$\forall j\in[m],i\in[n],\quad \Delta(c_j,f_i)=\begin{cases}1&\text{ if }u_j\in S_i\\ 3&\text{ if }u_j\notin S_i\end{cases}.$$
 The time needed for the construction of the distance function is   $O(mn)$.

\paragraph{Completeness Analysis.}
  Suppose there are some $i^*,j^*\in [n]$ and an equipartition of $U=U_1\dot\cup U_2$ such that $U_1\subseteq S_{i^*}$ and $U_2\subseteq S_{j^*}$, then consider the equipartition of $C=C_1\dot\cup C_2$ such that for all $j\in [m]$ and $u_j\in U_1$ if and only if $c_j\in C_1$. Then, we have that for all $c_j\in C_1$ (resp.\ $c_j\in C_2$), we have $\Delta(c_j,f_{i^*})=1$ (resp.\ $\Delta(c_j,f_{j^*})=1$) as $u_j\in S_{i^*}$ (resp.\ $u_j\in S_{j^*}$). Thus, we have:
  \begin{align*}
      \left(\sum_{c\in C_1}\Delta(c,f_{i^*})\right)+\left(\sum_{c\in C_2}\Delta(c,f_{j^*})\right) = |C_1|+|C_2| = |U|.
  \end{align*}

\paragraph{Soundness Analysis.}
  Suppose that for every $i,j\in [n]$ we have $|S_i\cup S_j|\le \left(1-\tilde\delta\right)\cdot |U|$. Let $f_{i'}$ and $f_{j'}$ be any two facilities. By the soundness assumption, there is a subset $X\subseteq U$ of size $\tilde\delta \cdot |U|$ such that  $X\cap (S_{i'}\cup S_{j'})=\emptyset$. Thus, we have that for all $u_j\in X$, we have $\Delta(c_j,f_{i'})=3$ (as $u_j\notin S_{i'}$) and $\Delta(c_j,f_{j'})=3$ (as $u_j\notin S_{j'}$). Thus, we have:
  \begin{align*}
      \sum_{c\in C}\ \min\{\Delta(c,f_{i'}),\Delta(c,f_{j'})\} &= \left(\sum_{c\in X}\ \min\{\Delta(c,f_{i'}),\Delta(c,f_{j'})\}\right)+\left(\sum_{c\in C\setminus X}\ \min\{\Delta(c,f_{i'}),\Delta(c,f_{j'})\}\right)\\ &\ge 3\cdot |X|+ 1\cdot (|U|-|X|) =|U|+2\cdot |X|\ge (1+2\tilde{\delta})\cdot |U|.\qedhere
  \end{align*}
\end{proof}

\subsection{Parameterized Lower Bound from Gap-ETH for Min-Sum}\label{sec:lbgapeth}

In this section, we prove Theorem~\ref{thm:lbminsum}. The proof relies on Theorem~\ref{thm:gapethVC} as a starting point and follows along the lines of the proof given by \cite{GI03}. We include it here for the sake of completeness.

\begin{theorem}\label{thm:lbminsum}
Suppose there is a $2^{o(k/\eps)}\cdot 2^{o(n)}$ time $(1+\eps)$-approximation algorithm
for metric $k$-minsum (for every $n, k, \eps$ and a metric $d$ on $n$ points). Then Gap-ETH is false.
\qed
\end{theorem}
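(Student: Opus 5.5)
We will reduce from the gap triangle packing problem of Theorem~\ref{thm:gapethVC}, following the Guruswami--Indyk construction~\cite{GI03}. Given a triangle packing instance $G=(V,E)$ on $n$ vertices with maximum degree $B$, we build a metric on $V$ by setting $d(u,v)=1$ if $\{u,v\}\in E$ and $d(u,v)=2$ otherwise; this is a metric, since all distances lie in $\{1,2\}$. We set $k=n/3$. Because $G$ has bounded degree, every $k$-minsum solution must essentially use clusters of size about $3$: a cluster of size $s$ costs at least $2\binom{s}{2}\cdot 2 - O(Bs)$, so large clusters are expensive. The precise accounting for this is the point that needs care.

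\textbf{Completeness.} If $G$ has $n/3$ vertex-disjoint triangles, clustering by triangles gives cost $\sum_{i}\sum_{x,y\in C_i} d(x,y) = k\cdot 6$, where ordered pairs each contribute distance $1$. Call this value $6k = 2n$.

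\textbf{Soundness.} Suppose every packing has at most $(1-\delta)n/3$ triangles, and take any partition into $k=n/3$ clusters. I would write the cost as $\sum_i 2\big(2\binom{|C_i|}{2}-e(C_i)\big)$, where $e(C_i)$ is the number of edges inside $C_i$. By convexity, with sizes summing to $n$ and $k=n/3$ parts, $\sum_i 2\binom{|C_i|}{2}\ge 2k\cdot 3 = 2n$, with slack that grows whenever sizes deviate from $3$. Meanwhile $e(C_i)\le 3$ when $|C_i|=3$, with equality iff $C_i$ is a triangle. For clusters of size $s\ne3$, I would bound $e(C_i)\le\min\{\binom{s}{2},Bs/2\}$ and show that the convexity excess, which is quadratic in $s$, pays for the edges. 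This requires a short case analysis over small $s$ (sizes $1,2,4,5$) together with a crude bound for large $s$. The conclusion is that cost $\ge 2n+\Omega(\text{number of non-triangle clusters})$. Since at most $(1-\delta)k$ clusters can be triangles (disjoint triangle clusters form a packing), the cost is at least $2n+\Omega(\delta k)=(1+\Omega(\delta))\cdot 2n$. This charging argument, which guarantees a constant loss per non-triangle cluster, is the main obstacle. I expect the delicate cases to be a size-$4$ cluster next to a size-$2$ cluster, and I would handle them by pairing deviations, since the number of clusters of size above $3$ is balanced by those below $3$.

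\textbf{Conclusion.} Fix $\eps=c\delta$ to be a small constant below the gap, and run the assumed algorithm with $k=n/3$. Its running time is $2^{o(k/\eps)}\cdot 2^{o(n)}=2^{o(n)}$, and it distinguishes the two cases of Theorem~\ref{thm:gapethVC}. Building the metric takes $O(n^2)$ time, so Gap-ETH is refuted.
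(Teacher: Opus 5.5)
Your reduction, completeness computation, and choice of $\eps$ match the paper's. Your value $2n$ versus the paper's $n$ is only ordered versus unordered pairs, and yours agrees with the paper's own definition of cost.

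The difference is soundness. The paper does not argue it; it invokes Guruswami--Indyk~\cite{GI03}, which gives cost more than $(1+\delta^2)n$ provided $\delta<1/55$. You instead propose a self-contained convexity-plus-balancing count. That count does go through. You are also right that no purely per-cluster bound works: a cluster inducing $K_4$ has more internal edges than its convexity slack absorbs, so deviations must be balanced against one another.

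A clean way to make your ``pairing deviations'' step rigorous is to use $\sum_i(|C_i|-3)=n-3k=0$. Write $s_i=|C_i|$ and $e_i=e(C_i)$. Then the unordered cost minus $n$ equals
\[
\sum_{i=1}^{k}\Big(s_i(s_i-2)-e_i-\tfrac32(s_i-3)\Big).
\]
Each summand is $0$ for a triangle and at least $\tfrac12$ otherwise, using only $e_i\le\binom{s_i}{2}$:
\begin{itemize}
\item sizes $0,1,2$ give at least $\tfrac92,2,\tfrac12$ (the size-$0$ case covers solutions with fewer than $k$ nonempty clusters);
\item a non-triangle of size $3$ gives at least $1$;
\item size $s\ge 4$ gives at least $(s-3)^2/2$, with $K_4$ giving exactly $\tfrac12$.
\end{itemize}
In the soundness case at least $\delta k$ clusters are not triangles, so the cost is at least $n+\tfrac12\delta k=(1+\delta/6)n$. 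Taking $\eps<\delta/6$ then finishes as you describe.

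Compared with the citation, your route is self-contained. It gives a gap linear rather than quadratic in $\delta$, with no upper restriction on $\delta$. It also never uses the degree bound $B$, so the $Bs/2$ edge bound is unnecessary, and the ``clusters of size about $3$'' intuition is only heuristic. The paper's route is merely shorter on the page.
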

\begin{proof}
Let $\eps_0,\delta_0, B_0$ be the constants from Theorem~\ref{thm:gapethVC}. We assume without loss of generality that $\delta_0<1/55$.
        Suppose there is an algorithm $\mathcal{A}$ running in time $2^{o(k/\eps)}\cdot 2^{o(N)}$ for $(1+\eps)$-approximating the $k$-Minsum problem on $N$ points for every $k,\eps, N$. We will use $\mathcal{A}$ to contradict Theorem~\ref{thm:gapethVC}. 

        Let $G=(V,E)$ be a triangle packing instance on $n$ vertices and $m$ edges with maximum degree $B$.  
        We construct in $O(n^2)$ time a collection of $n$ points $P$ in a metric space such that if $G$ has $n/3$ vertex disjoint triangles then for $k:=n/3$, the cost of $k$-minsum clustering of $P$ is at most $n$, and instead if there are at most $(1-\delta)\cdot n/3$ vertex disjoint triangles in $G$ then the cost of $k$-minsum clustering of $P$ is more than $(1+\delta^{2})n$.

We now show how to decide the gap triangle packing instance $G=(V,E)$  using $\mathcal{A}$ in time $2^{o(n)}$. Given the graph, we first construct the collection of points in time $O(n^2)$. Then, we run $\mathcal{A}$ and decide if  $G$ has $n/3$ vertex disjoint triangles or if there are at most $(1-\delta)\cdot n/3$ vertex disjoint triangles in $G$, in time $2^{o(n/3\delta_0^2)}\cdot 2^{o(n)}=2^{o(n)}$ (where we used that $\delta_0$  is a positive constant). The total runtime is thus $2^{o(n)}$ and this contradicts Theorem~\ref{thm:gapethVC}.

\paragraph{Construction of Point-set.}
Given an instance $G=(V,E)$ of triangle packing, we construct  a $k$-Minsum instance $P$ of points, where $P = \{p_v \mid v \in V\}$ and $k=|V|/3$. Moreover, the metric $d$ induced over $P$ is given as follows:  For $u \neq v$, $d(p_u, p_{v}) = 1$ if $\{u,v\}\in E$, and $2$ otherwise. Finally, we define for all $v\in V$, $d(p_v, p_v) = 0$. 
The construction time is dominated by computing $O(n^2)$ pairwise distances.

\paragraph{Completeness Analysis.}
If $G$ has a triangle packing of $n/3$ vertex-
disjoint triangles, using these triangles as the clusters gives a
$k$-clustering of total cost $n$.

\paragraph{Soundness Analysis.}
Suppose there is a $k$-clustering $C$ with cost at
most $(1+\delta^{2})n$, then in \cite{GI03}, the authors prove that  as long as $\delta<1/55$ one reaches a contradiction.
\end{proof}


\newpage
\bibliographystyle{plain}
\bibliography{ref} 

\end{document}